\title{    
\Large{
Masterarbeit im Studiengang Computer Science\\
\vspace{1cm}
}
\LARGE{Virtual Network Embedding via Decomposable LP Formulations:}\\
\Large{Orientations of Small Extraction Width and Beyond}   
}
\author{
\vspace{2cm}\\
\Large{Elias Döhne}\\
\vspace{0.5cm}\\
Technische Universität Berlin\\
\vspace{0.1cm}\\
\includegraphics[height=1.5cm]{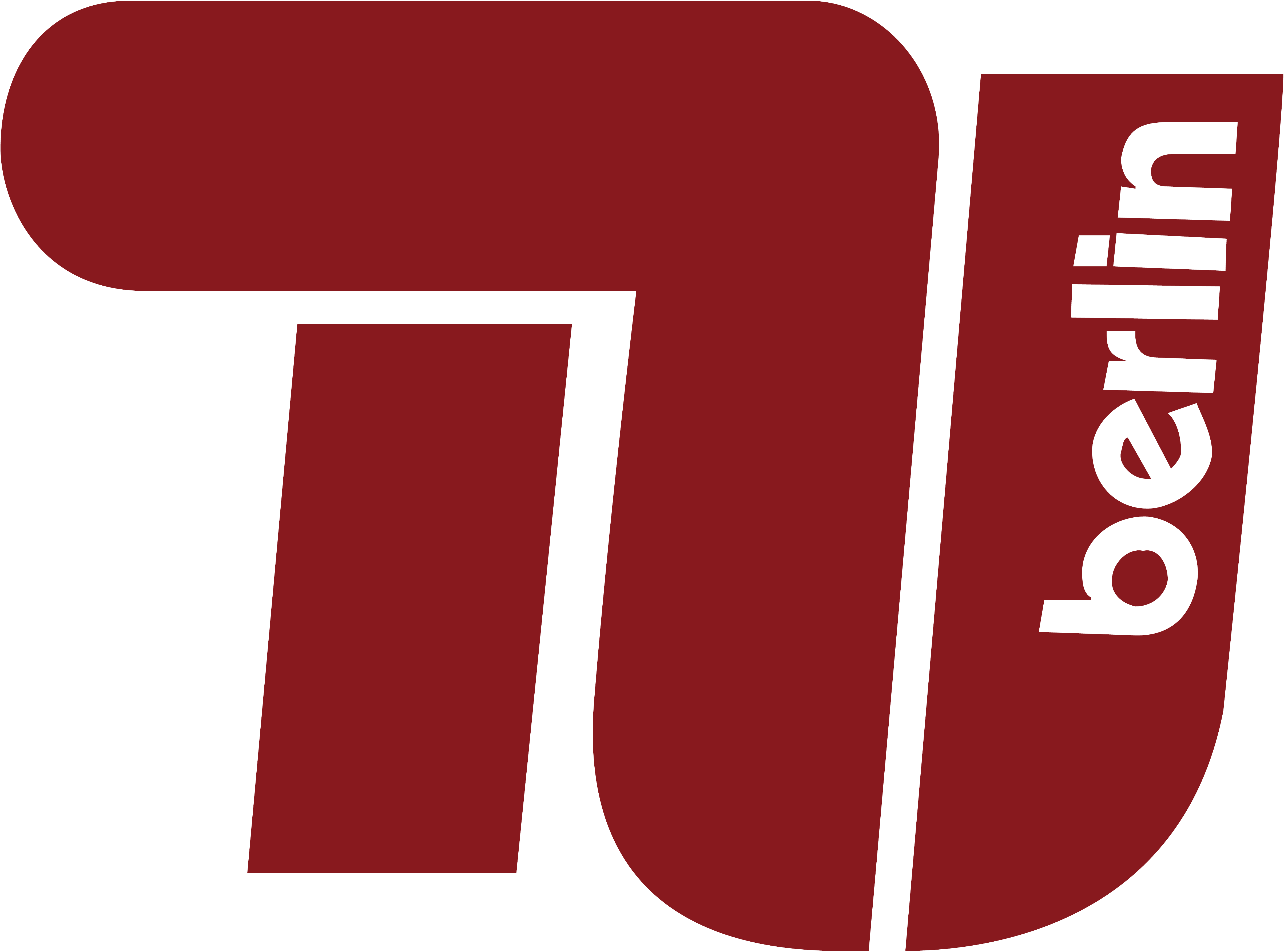}
\vspace{2cm} \\
Gutachter:\\
Prof. Anja Feldmann, Ph.D. \\
Univ.-Prof. Dr.sc. Stefan Schmid \\
\vspace{0.5cm}\\
Betreuer:\\
Matthias Rost, M.Sc.
\vspace{1cm}
}
\date{12. Juni 2018}
\newcommand{\substrateTopology}{\ensuremath{G_S}}
\newcommand{\substrateNodes}{\ensuremath{V_S}}
\newcommand{\substrateEdges}{\ensuremath{E_S}}
\newcommand{\substrateCapacity}{\ensuremath{c_{S}}}
\newcommand{\SRV}{\ensuremath{R^V_{S}}}   % Substrate node resources
\newcommand{\SR}{\ensuremath{R_{S}}}  % Substrate resources
\newcommand{\substrateResources}{\ensuremath{\SR}}  % Substrate resources
\newcommand{\substrateTopologyDef}{\ensuremath{\substrateTopology = (\substrateNodes, \substrateEdges)}}
\newcommand{\pathSet}{\ensuremath{\mathcal{P}}}
\newcommand{\substratePathSet}{\ensuremath{\pathSet_S}}
\newcommand{\PowerSet}{\ensuremath{\mathbb{P}}}
\newcommand{\conv}{\ensuremath{\text{conv}}}
\newcommand{\complexityNP}{\ensuremath{\mathcal{NP}}}
\newcommand{\bigO}{\ensuremath{\mathcal{O}}}
\newcommand{\VNEPInstance}{\ensuremath{(\reqTopology, \substrateTopology)}}
\newcommand{\lpvars}{\ensuremath{(\vec{y}, \vec{z}, \vec{\gamma}, \vec{a})}}
\newcommand{\nodeTypeSet}{\ensuremath{\mathcal{T}}}
\newcommand{\nodeType}{\ensuremath{\tau}}
\newcommand{\reqNodeType}{\ensuremath{\nodeType_R}}
\newcommand{\subNodeType}{\ensuremath{\nodeType_S}}
\newcommand{\allocationFunction}{\ensuremath{A}}
\DeclareDocumentCommand{\substrateNodesByType}{O{\nodeType}}{\ensuremath{\substrateNodes^{#1}}}
\newcommand{\hypergraph}{\ensuremath{\mathcal{H}}}
\newcommand{\hypergraphNodes}{\ensuremath{V}}
\newcommand{\hypergraphEdges}{\ensuremath{\mathcal{E}}}
\newcommand{\hypergraphDef}{\ensuremath{\hypergraph = (\hypergraphNodes, \hypergraphEdges)}}
\newcommand{\reqTopology}{\ensuremath{G_R}}
\newcommand{\reqNodes}{\ensuremath{V_R}}
\newcommand{\reqEdges}{\ensuremath{E_R}}
\newcommand{\reqTopologyDef}{\ensuremath{\reqTopology = (\reqNodes, \reqEdges)}}
\newcommand{\reqDemand}{\ensuremath{c_{R}}}
\DeclareDocumentCommand{\undirected}{O{G}}{\ensuremath{\overline{#1}}}
\newcommand{\undirectedGraph}{\ensuremath{\undirected[G]}}
\newcommand{\undirectedEdges}{\ensuremath{\undirected[E]}}
\newcommand{\reqUndirectedTopology}{\ensuremath{\undirectedGraph_R}}
\newcommand{\inEdgesNoArg}{\ensuremath{\delta^- }}
\newcommand{\outEdgesNoArg}{\ensuremath{\delta^+ }}
\newcommand{\inEdges}[1]{\ensuremath{\inEdgesNoArg(#1) }}
\newcommand{\outEdges}[1]{\ensuremath{\outEdgesNoArg(#1) }}
\newcommand{\extractionOrderCharacter}{\ensuremath{\mathcal{X}}}
\newcommand{\reqExtractionOrder}{\ensuremath{ {\reqTopology^\extractionOrderCharacter} }}
\newcommand{\reqExtractionOrderEdges}{\ensuremath{ {\reqEdges^\extractionOrderCharacter} }}
\newcommand{\reqExtractionOrderRoot}{\ensuremath{{r^\extractionOrderCharacter}}}
\newcommand{\reqExtractionOrderDef}{\ensuremath{\reqExtractionOrder = (\reqNodes, \reqExtractionOrderEdges, \reqExtractionOrderRoot)}}
\DeclareDocumentCommand{\reqEOLabelSubgraph}{O{k}}{\ensuremath{\reqExtractionOrder(#1)}}
\DeclareDocumentCommand{\reqEOLabelSubgraphNodes}{O{k}}{\ensuremath{\reqNodes^\extractionOrderCharacter(#1)}}
\DeclareDocumentCommand{\reqEOLabelSubgraphEdges}{O{k}}{\ensuremath{\reqExtractionOrderEdges (#1)}}
\DeclareDocumentCommand{\reqEOLabelSubgraphRoot}{O{k}}{\ensuremath{r^{#1}}}
\DeclareDocumentCommand{\reqEOLabelSubgraphDef}{O{k}}{\ensuremath{\reqEOLabelSubgraph[#1] = \left(\reqEOLabelSubgraphNodes[#1], \reqEOLabelSubgraphEdges[#1] \right)}}
\newcommand{\genExtractionOrderChar}{\ensuremath{{\mathcal{A}}}}
\newcommand{\reqDAGOrientation}{\ensuremath{\reqTopology^\genExtractionOrderChar}}
\newcommand{\reqDAGOrientationEdges}{\ensuremath{\reqEdges^\genExtractionOrderChar}}
\newcommand{\reqDAGOrientationDef}{\ensuremath{\reqDAGOrientation = (\reqNodes, \reqDAGOrientationEdges)}}
\DeclareDocumentCommand{\reqInducedEO}{O{\reqDAGOrientation}}{\ensuremath{\reqTopology^\extractionOrderCharacter (#1)}}
\DeclareDocumentCommand{\reqDAGAncestorSet}{O{V}}{\ensuremath{V_A(#1)}}
\DeclareDocumentCommand{\reqDAGAncestorNodes}{O{V}}{\reqDAGAncestorSet[#1]}
\DeclareDocumentCommand{\reqDAGAncestorGraph}{O{V}}{\ensuremath{G_A^{\extractionOrderCharacter}(#1)}}
\DeclareDocumentCommand{\reqDAGAncestorEdges}{O{V}}{\ensuremath{E_A^{\extractionOrderCharacter}(#1)}}
\DeclareDocumentCommand{\reqDAGAncestorGraphDef}{O{V}}{\ensuremath{\reqDAGAncestorGraph[#1] = (\reqDAGAncestorSet[#1], \reqDAGAncestorEdges[#1])}}
\newcommand{\superroot}{\ensuremath{{s^\extractionOrderCharacter}}}
\newcommand{\rootRegionSet}{\ensuremath{\mathcal{R}^\genExtractionOrderChar}}
\newcommand{\rootSet}{\ensuremath{R^\genExtractionOrderChar}}
\DeclareDocumentCommand{\labelsetEdgeExtended}{O{e}}{\ensuremath{\singleLabelset^\genExtractionOrderChar_{#1}}}
\DeclareDocumentCommand{\labelsetIncomingExtended}{O{i}}{\ensuremath{\labelsetIndexed{#1}^-}}
\DeclareDocumentCommand{\rootRegionSubgraphExtended}{O{r}}{\ensuremath{{G^{\extractionOrderCharacter, \text{ext}}_{ #1}}}}
\DeclareDocumentCommand{\rootRegionExtended}{O{r}}{\ensuremath{E^{\genExtractionOrderChar, \text{ext}}_{ #1}}}
\DeclareDocumentCommand{\rootRegionSubgraph}{O{r}}{\ensuremath{{G^\extractionOrderCharacter_{ #1}}}}
\DeclareDocumentCommand{\reqTopologyRootRegionSubgraph}{O{r}}{\ensuremath{{\reqTopology}( #1)}}
\DeclareDocumentCommand{\rootRegionNodes}{O{r}}{\ensuremath{V_{ #1}}}
\DeclareDocumentCommand{\rootRegion}{O{r}}{\ensuremath{{E^\genExtractionOrderChar_{ #1}}}}
\DeclareDocumentCommand{\rootRegionSubgraphDef}{O{r}}{\ensuremath{\rootRegionSubgraph[#1] = (\rootRegionNodes[#1], \rootRegion[#1], #1)}}
\DeclareDocumentCommand{\rootRegionSubgraphReq}{O{r}}{\ensuremath{{\reqTopology(#1)}}}
\DeclareDocumentCommand{\rootRegionSubgraphEdgesReq}{O{r}}{\ensuremath{{\reqEdges^{#1}}}}
\DeclareDocumentCommand{\rootRegionSubgraphReqDef}{O{r}}{\ensuremath{\rootRegionSubgraphReq = (\rootRegionNodes[#1], \rootRegionSubgraphEdgesReq[#1])}}
\DeclareDocumentCommand{\rootRegionBoundary}{O{r}}{\ensuremath{B_{#1}}}
\DeclareDocumentCommand{\rootRegionBoundaryIncoming}{O{r}}{\ensuremath{B^-_{#1}}}
\DeclareDocumentCommand{\rootRegionBoundaryPair}{O{r_1}O{r_2}}{\ensuremath{B_{#1, #2}}}
\DeclareDocumentCommand{\rootRegionBoundaryPairOrdering}{O{r_1}O{r_2}}{\ensuremath{O^B_{#1, #2}}}
\DeclareDocumentCommand{\boundaryAncestorSet}{O{r_1}O{r_2}O{n}}{\ensuremath{A_{#1, #2}^{#3}}}
\DeclareDocumentCommand{\incomingRRBoundary}{O{r}}{\rootRegionBoundary[#1]^{-}}
\DeclareDocumentCommand{\outgoingRRBoundary}{O{r}}{\rootRegionBoundary[#1]^{+}}
\DeclareDocumentCommand{\topdownMappingNodes}{O{i}}{\ensuremath{m_{V, #1}^\downarrow}}
\DeclareDocumentCommand{\topdownMappingEdges}{O{i}}{\ensuremath{m_{E, #1}^\downarrow}}
\DeclareDocumentCommand{\topdownMappingDef}{O{i}}{\ensuremath{m_{#1}^\downarrow = \topdownMappingEdges[#1], \topdownMappingEdges[#1]}}
\newcommand{\RRExtractionOrder}{\ensuremath{G^\mathcal{G}}}
\newcommand{\RRExtractionOrderEdges}{\ensuremath{E^{\mathcal{G}}_\genExtractionOrderChar}}
\newcommand{\RRExtractionOrderRoot}{\ensuremath{r^{\mathcal{G}}}}
\newcommand{\RRExtractionOrderDef}{\ensuremath{\RRExtractionOrder = (\rootSet, \RRExtractionOrderEdges, \RRExtractionOrderRoot)}}
\newcommand{\inEdgesRR}[1]{\ensuremath{\inEdgesNoArg_\mathcal{G}(#1) }}
\newcommand{\outEdgesRR}[1]{\ensuremath{\outEdgesNoArg_\mathcal{G}(#1) }}
\newcommand{\handledRREdges}[1]{\ensuremath{ E^{RR} }}
\newcommand{\inEdgesDAGOrderNoArg}{\ensuremath{\inEdgesNoArg_\genExtractionOrderChar }}
\newcommand{\outEdgesDAGOrderNoArg}{\ensuremath{\outEdgesNoArg_\genExtractionOrderChar }}
\newcommand{\inEdgesDAGOrder}[1]{\ensuremath{\inEdgesDAGOrderNoArg(#1) }}
\newcommand{\outEdgesDAGOrder}[1]{\ensuremath{\outEdgesDAGOrderNoArg(#1) }}
\DeclareDocumentCommand{\VGe}{O{e}}{\ensuremath{G_{e}}}
\DeclareDocumentCommand{\VVe}{O{e}}{\ensuremath{V_{e}}}
\DeclareDocumentCommand{\VEe}{O{e}}{\ensuremath{E_{e}}}
\newcommand{\inEdgesExtractionOrderNoArg}{\ensuremath{\inEdgesNoArg_\extractionOrderCharacter }}
\newcommand{\outEdgesExtractionOrderNoArg}{\ensuremath{\outEdgesNoArg_\extractionOrderCharacter }}
\newcommand{\EOEdgeToOriginal}{\ensuremath{\vec{E}_\extractionOrderCharacter}}
\newcommand{\inEdgesExtractionOrder}[1]{\ensuremath{\inEdgesExtractionOrderNoArg(#1) }}
\newcommand{\outEdgesExtractionOrder}[1]{\ensuremath{\outEdgesExtractionOrderNoArg(#1) }}
\DeclareDocumentCommand{\mappingChar}{O{k}}{m}
\newcommand{\mappingRequest}{\ensuremath{\mappingChar_R}}
\newcommand{\mappingNodes}{\ensuremath{\mappingChar_V}}
\newcommand{\mappingEdges}{\ensuremath{\mappingChar_E}}
\newcommand{\mappingRequestDef}{\ensuremath{\mappingRequest = (\mappingNodes, \mappingEdges)}}
\newcommand{\mappingRequestTilde}{\ensuremath{\mappingChar_R'}}
\newcommand{\mappingNodesTilde}{\ensuremath{\mappingChar_V'}}
\newcommand{\mappingEdgesTilde}{\ensuremath{\mappingChar_E'}}
\DeclareDocumentCommand{\RRmappingRequestIteration}{O{r}}{\ensuremath{\mappingRequest^{#1}}}
\DeclareDocumentCommand{\RRmappingNodesIteration}{O{r}}{\ensuremath{\mappingNodes^{#1}}}
\DeclareDocumentCommand{\RRmappingEdgesIteration}{O{r}}{\ensuremath{\mappingEdges^{#1}}}
\DeclareDocumentCommand{\mappingRequestIteration}{O{k}}{\ensuremath{\mappingRequest^{#1}}}
\DeclareDocumentCommand{\mappingNodesIteration}{O{k}}{\ensuremath{\mappingNodes^{#1}}}
\DeclareDocumentCommand{\mappingEdgesIteration}{O{k}}{\ensuremath{\mappingEdges^{#1}}}
\DeclareDocumentCommand{\mappingRequestIterationDef}{O{k}}{\ensuremath{\mappingRequestIteration[#1] = (\mappingNodesIteration[#1], \mappingEdgesIteration[k])}}
\DeclareDocumentCommand{\probSubscript}{O{k}}{\ensuremath{f_{#1}}}
\DeclareDocumentCommand{\mappingIteration}{O{k}}{\ensuremath{m_{#1}}}
\DeclareDocumentCommand{\mappingNodesIterationNoR}{O{k}}{\ensuremath{\mappingNodes^{#1}}}
\DeclareDocumentCommand{\mappingEdgesIterationNoR}{O{k}}{\ensuremath{\mappingEdges^{#1}}}
\DeclareDocumentCommand{\mappingEdgeTilde}{O{k}}{\widetilde{m}_e}
\DeclareDocumentCommand{\VEbfsBags}{O{i} }{\ensuremath{\mathcal{B}^{+}_{#1}}}
\DeclareDocumentCommand{\VEbfsLabelsEdgeCapBag}{O{b} O{e}}{\ensuremath{\singleLabelset^{\extractionOrderCharacter}_{#1 \cap #2}}}
\newcommand{\BagIPlus}{\ensuremath{\mathcal{B}^+_i}}
\DeclareDocumentCommand{\outEdgeBag}{ O{i} O{b}}{\ensuremath{B_{#1, #2}}}
\DeclareDocumentCommand{\bagLabelSets}{ O{i}}{\ensuremath{\mathcal{L}^B_{#1}}}
\DeclareDocumentCommand{\bagLabelSet}{ O{i} O{b}}{\ensuremath{\labelsetIndexed{#1, #2}}}
\newcommand{\labelsets}{\ensuremath{\mathcal{L}}}
\newcommand{\labelsetOrder}{\ensuremath{\omega}}
\newcommand{\labelsetOrderSet}{\ensuremath{\Omega}}
\newcommand{\labelsetOrderSetSet}{\ensuremath{\mathcal{O}}}
\newcommand{\bagLabelsetOrderSet}{\ensuremath{\Omega^B}}
\newcommand{\labelsetOrderTilde}{\ensuremath{\widetilde{\omega}}}
\newcommand{\labelsetOrderSetTilde}{\ensuremath{\widetilde{\Omega}}}
\newcommand{\labelsetPredecessors}{\ensuremath{L^p}}
\newcommand{\labelsetPredecessorFunction}{\ensuremath{\pi}}
\DeclareDocumentCommand{\mapVedge}{O{e}}{\ensuremath{m_{#1}}}
\DeclareDocumentCommand{\mapVbag}{O{\alpha}}{\ensuremath{m_{#1}}}
\DeclareDocumentCommand{\mapVinter}{O{e} O{a}}{\ensuremath{m_{#1 \cap #2}}}
\newcommand{\mappingPredecessors}{\ensuremath{m^p}}
\DeclareDocumentCommand{\labelsetEdge}{O{e}}{\ensuremath{{\singleLabelset^\extractionOrderCharacter_{#1}}}}
\DeclareDocumentCommand{\labelsetIncoming}{O{i}}{\ensuremath{\labelsetIndexed{#1}^-}}
\newcommand{\labelsetsEdgesExtractionOrder}{\ensuremath{{\mathcal{L}^\extractionOrderCharacter_E}}}
\DeclareDocumentCommand{\labelsetsEdges}{ O{E} }{\ensuremath{\mathcal{L}_{#1}}}
\DeclareDocumentCommand{\labelsetsEdgesRootRegion}{O{r}}{\ensuremath{\mathcal{L}^\extractionOrderCharacter_{#1}}}
\DeclareDocumentCommand{\labelsetsEdgesTilde}{ O{E} }{\ensuremath{\widetilde{\mathcal{L}}_{#1}}}
\DeclareDocumentCommand{\labelsetEdgeTilde}{ O{e} }{\ensuremath{{\widetilde{\singleLabelset}^\extractionOrderCharacter_{#1}}}}
\newcommand{\singleLabelset}{\ensuremath{L}}
\newcommand{\labelSetIndex}{\ensuremath{a}}
\newcommand{\labelSetIndexTwo}{\ensuremath{b}}
\newcommand{\labelSetIndexThree}{\ensuremath{c}}
\newcommand{\labelsetIndexed}[1]{\ensuremath{\singleLabelset_{#1}}}
\DeclareDocumentCommand{\labelsetRepresentative}{O{e}}{\ensuremath{\singleLabelset^\text{rep}_{ #1}}}
\newcommand{\labelsetmappingIndex}{\ensuremath{\alpha}}
\newcommand{\labelsetmappingIndexTwo}{\ensuremath{\beta}}
\newcommand{\mappedRequestNodes}{\ensuremath{\reqNodes^{m}}}
\newcommand{\ytilde}{\ensuremath{\tilde{y}}}
\newcommand{\gammatilde}{\ensuremath{\tilde{\gamma}}}
\newcommand{\ztilde}{\ensuremath{\tilde{z}}}
\newcommand{\atilde}{\ensuremath{\tilde{a}}}
\newcommand{\lpvarstilde}{\ensuremath{(\vec{\ytilde}, \vec{\ztilde}, \vec{\gammatilde}, \vec{\atilde})}}
\DeclareDocumentCommand{\prob}{O{k}}{\ensuremath{f^{#1}}}
\DeclareDocumentCommand{\confluence}{O{i} O{j}}{\ensuremath{C^\extractionOrderCharacter_{#1, #2} }}
\newcommand{\confluenceSet}{\ensuremath{\mathcal{C}^\extractionOrderCharacter}}
\DeclareDocumentCommand{\path}{O{i} O{j}}{\ensuremath{P_{#1, #2}}}
\DeclareDocumentCommand{\MappingSpace}{d[]}{\ensuremath{\mathcal{M}(#1)}}
\DeclareDocumentCommand{\restrict}{d[] d[]}{\ensuremath{\langle #1 \vert #2 \rangle}}
\DeclareDocumentCommand{\subLP}{d[] d[]}{\ensuremath{#1 \llbracket #2 \rrbracket}}
\DeclareDocumentCommand{\labelSetGraph}{O{\labelsets}}{\ensuremath{G_\labelsets(#1)}}
\DeclareDocumentCommand{\labelSetGraphNodes}{O{\labelsets}}{\ensuremath{V_{#1}}}
\DeclareDocumentCommand{\labelSetGraphEdges}{O{\labelsets}}{\ensuremath{E_{#1}}}
\DeclareDocumentCommand{\labelSetGraphDef}{O{\labelsets}}{\ensuremath{\labelSetGraph[#1] = (\labelSetGraphNodes[#1], \labelSetGraphEdges[#1])}}
\DeclareDocumentCommand{\labelsetsIncident}{O{i}}{\ensuremath{\labelsets^+_{ #1 }}}
\newcommand{\genericGraph}{\ensuremath{G}}
\newcommand{\genericGraphNodes}{\ensuremath{V}}
\newcommand{\genericGraphEdges}{\ensuremath{E}}
\newcommand{\genericGraphDef}{\ensuremath{\genericGraph = (\genericGraphNodes, \genericGraphEdges)}}
\newcommand{\halfwheelGraph}{\ensuremath{G_w}}
\newcommand{\halfwheelGraphNodes}{\ensuremath{V_w}}
\newcommand{\halfwheelGraphEdges}{\ensuremath{E_w}}
\newcommand{\halfwheelGraphDef}{\ensuremath{\halfwheelGraph = (\halfwheelGraphNodes, \halfwheelGraphEdges)}}
\newcommand{\setFamily}{\ensuremath{\mathcal{F}}}
\newcommand{\setFamilySet}{\ensuremath{S}}
\newcommand{\setFamilyDef}{\ensuremath{\setFamily = \{\setFamilySet_1, \ldots \setFamilySet_n\}}}
\newcommand{\hgPrimalGraph}{\ensuremath{G_P}}
\newcommand{\hgPrimalGraphNodes}{\ensuremath{V}}
\newcommand{\hgPrimalGraphEdges}{\ensuremath{E_P}}
\newcommand{\hgPrimalGraphDef}{\ensuremath{\hgPrimalGraph = (\hgPrimalGraphNodes, \hgPrimalGraphEdges)}}
\newcommand{\hgIncidenceGraph}{\ensuremath{G_I}}
\newcommand{\hgIncidenceGraphNodes}{\ensuremath{V_I}}
\newcommand{\hgIncidenceGraphEdges}{\ensuremath{E_I}}
\newcommand{\hgIncidenceGraphDef}{\ensuremath{\hgIncidenceGraph = (\hgIncidenceGraphNodes, \hgIncidenceGraphEdges)}}
\newcommand{\treeDecompTree}{\ensuremath{G_T}}
\newcommand{\treeDecompTreeNodes}{\ensuremath{V_T}}
\newcommand{\treeDecompTreeEdges}{\ensuremath{E_T}}
\newcommand{\treeDecompTreeDef}{\ensuremath{\treeDecompTree = (\treeDecompTreeNodes, \treeDecompTreeEdges)}}
\newcommand{\treeDecomp}{\ensuremath{T}}
\newcommand{\treeDecompSets}{\ensuremath{\setFamily}}
\newcommand{\treeDecompSet}{\ensuremath{\setFamilySet}}
\newcommand{\treeDecompDef}{\ensuremath{\treeDecomp = (\treeDecompSets, \treeDecompTree)}} % TODO FIXME DECIDE HOW THIS SHOULD BE DONE
\newcommand{\setOfTreeDecomps}{\ensuremath{\mathcal{T}}}
\newcommand{\joinTree}{\ensuremath{T}}
\DeclareDocumentCommand{\joinTreeNodes}{O{\hypergraphEdges}}{\ensuremath{#1}}
\newcommand{\joinTreeEdges}{\ensuremath{E_J}}
\DeclareDocumentCommand{\joinTreeDef}{O{\hypergraphEdges}}{\ensuremath{\joinTree = (\joinTreeNodes[#1], \joinTreeEdges)}} % TODO FIXME DECIDE HOW THIS SHOULD BE DONE
\newcommand{\treewidth}{\ensuremath{\textnormal{tw}}}
\newcommand{\decompwidth}{\ensuremath{\treewidth_\treeDecomp}}
\newcommand{\labelWidth}{\ensuremath{\textnormal{lw}}}
\newcommand{\labelWidthEQ}{\ensuremath{\labelWidth_\extractionOrderCharacter}}
\newcommand{\extractionWidthReq}{\ensuremath{\textnormal{ew}}}
\newcommand{\extractionWidth}{\ensuremath{\extractionWidthReq_\extractionOrderCharacter}}
\newcommand{\intersectionGraph}{\ensuremath{G_I}}
\newcommand{\intersectionGraphNodes}{\ensuremath{\setFamily}}
\newcommand{\intersectionGraphEdges}{\ensuremath{E_I}}
\DeclareDocumentCommand{\intersectionGraphDef}{O{\intersectionGraphNodes}}{\ensuremath{\intersectionGraph = (#1, \intersectionGraphEdges)}}
\newcommand{\labelsetOrderSetMR}{\ensuremath{\labelsetOrderSet^{\genExtractionOrderChar}}}
\DeclareDocumentCommand{\labelsetOrderRegion}{O{r}}{\ensuremath{\labelsetOrder_{#1}}}
\DeclareDocumentCommand{\labelsetOrderSetRegion}{O{r}}{\ensuremath{\labelsetOrderSet_{#1}}}
\newcommand{\req}{r}
\newcommand{\type}{\ensuremath{\tau}}
\NewDocumentCommand{\VGP}{O{\req} O{i} O{j}}{\ensuremath{G^{#2,#3}_{#1}}}
\newcommand{\Vsource}[1][\req]{\ensuremath{o^+_{\req}}}
\newcommand{\Vsink}[1][\req]{\ensuremath{o^-_{\req}}}
\newcommand{\Vtype}[1][\req]{\ensuremath{\tau_{#1}}}
\newcommand{\SV}{\ensuremath{V_S}}
\newcommand{\SE}{\ensuremath{E_S}}
\newcommand{\Scap}{\ensuremath{d_{S}}}
\DeclareDocumentCommand{\NodeG}{O{\req} O{\pi}}{\ensuremath{G^N_{#1,#2}}}
\DeclareDocumentCommand{\NodeV}{O{\req} O{\pi}}{\ensuremath{V^N_{#1,#2}}}
\DeclareDocumentCommand{\NodeE}{O{\req} O{\pi}}{\ensuremath{E^N_{#1,#2}}}
\DeclareDocumentCommand{\EdgeG}{O{\req} O{i} O{j} O{u}}{\ensuremath{G^E_{#1,#2,#3,#4}}}
\DeclareDocumentCommand{\EdgeV}{O{\req} O{i} O{j} O{u}}{\ensuremath{V^E_{#1,#2,#3,#4}}}
\DeclareDocumentCommand{\EdgeE}{O{\req} O{i} O{j} O{u}}{\ensuremath{E^E_{#1,#2,#3,#4}}}
\DeclareDocumentCommand{\VESD}{O{\req}}{\ensuremath{\overrightarrow{E}_{#1}}}
\DeclareDocumentCommand{\VEOD}{O{\req}}{\ensuremath{\overleftarrow{E}_{#1}}}
\DeclareDocumentCommand{\VESigmaD}{O{\req} O{\sigma}}{\ensuremath{E_{#1,#2}}}
\DeclareDocumentCommand{\NodeVRange}{O{\req} O{i} O{j}}{\ensuremath{V^N_{#1,#2,#3}}}
\DeclareDocumentCommand{\NodeVRangeRange}{O{\req} O{i} O{j} O{u} O{v}}{\ensuremath{V^N_{#1,#2,#3,#4,#5}}}
\DeclareDocumentCommand{\cycle}{O{k }}{\ensuremath{C_{#1}}}
\DeclareDocumentCommand{\NodePaths}{O{\req}}{\ensuremath{\mathcal{P}_{#1}}}
\DeclareDocumentCommand{\loadV}{O{\req} O{u}}{\ensuremath{l_{#1,#2}}}
\DeclareDocumentCommand{\loadE}{O{\req} O{u} O{v}}{\ensuremath{l_{#1,#2,#3}}}
\DeclareDocumentCommand{\loadX}{O{\req} O{x}}{\ensuremath{l_{#1,#2}}}
\DeclareDocumentCommand{\decomp}{O{k}}{\ensuremath{D^{#1}}}
\DeclareDocumentCommand{\decompHat}{O{\req} O{k}}{\ensuremath{{\hat{D}}_{#1}^{#2}}}
\DeclareDocumentCommand{\load}{O{\req} O{k}}{\ensuremath{l_{#1}^{#2}}}
\DeclareDocumentCommand{\loadHat}{O{\req} O{k}}{\ensuremath{\hat{l}_{#1}^{#2}}}
\DeclareDocumentCommand{\probHat}{O{\req} O{k}}{\ensuremath{\hat{f}_{#1}^{#2}}}
\DeclareDocumentCommand{\mappingHat}{O{\req} O{k}}{\ensuremath{\hat{m}_{#1}^{#2}}}
\DeclareDocumentCommand{\loadHat}{O{\req} O{k}}{\ensuremath{\hat{l}_{#1}^{#2}}}
\DeclareDocumentCommand{\probHat}{O{\req} O{k}}{\ensuremath{\hat{f}_{#1}^{#2}}}
\DeclareDocumentCommand{\mappingHat}{O{\req} O{k}}{\ensuremath{\hat{m}_{#1}^{#2}}}
\DeclareDocumentCommand{\randVarX}{O{\req} O{k}}{\ensuremath{X_{#1}^{#2}}}
\DeclareDocumentCommand{\randVarY}{O{\req}}{\ensuremath{Y_{#1}}}
\DeclareDocumentCommand{\randVarZ}{O{\req}}{\ensuremath{Z_{#1}}}
\DeclareDocumentCommand{\randVarL}{O{x}}{\ensuremath{L_{x}}}
\DeclareDocumentCommand{\randVarLX}{O{\req} O{x} O{y}}{\ensuremath{L_{#1,#2,#3}}}
\DeclareDocumentCommand{\randVarLNode}{O{\req} O{\type} O{u}}{\ensuremath{L_{#1,#2,#3}}}
\DeclareDocumentCommand{\randVarLEdge}{O{\req} O{u} O{v}}{\ensuremath{L_{#1,#2,#3}}}
\DeclareDocumentCommand{\randVarM}{O{\req}}{\ensuremath{M_{#1}}}
\DeclareDocumentCommand{\randVarC}{O{\req}}{\ensuremath{C_{#1}}}
\DeclareDocumentCommand{\ProbVarX}{O{1}}{\ensuremath{\mathbb{P}(\randVarX = #1)}}
\DeclareDocumentCommand{\ProbVarY}{O{1}}{\ensuremath{\mathbb{P}(\randVarY = #1)}}
\DeclareDocumentCommand{\ProbVarZ}{O{1}}{\ensuremath{\mathbb{P}(\randVarZ = #1)}}
\DeclareDocumentCommand{\ProbVarL}{O{1}}{\ensuremath{\mathbb{P}(\randVarL = #1)}}
\DeclareDocumentCommand{\ProbVarM}{O{1}}{\ensuremath{\mathbb{P}(\randVarM = #1)}}
\DeclareDocumentCommand{\ProbVarC}{O{1}}{\ensuremath{\mathbb{P}(\randVarC = #1)}}
\DeclareDocumentCommand{\WAC}{O{\req}}{\ensuremath{\textnormal{WC}_{\req}}}
\DeclareDocumentCommand{\PotEmbeddings}{}{\ensuremath{\mathcal{D}}}
\DeclareDocumentCommand{\PotEmbeddingsHat}{O{\req}}{\ensuremath{\hat{\mathcal{D}}_{#1}}}
\DeclareDocumentCommand{\maxLoadX}{O{x}}{\ensuremath{\textnormal{max}^{L,\sum}_{#1}}}
\DeclareDocumentCommand{\maxLoadV}{O{\req} O{\type} O{u}}{\ensuremath{\textnormal{max}^{L}_{#1,#2,#3}}}
\DeclareDocumentCommand{\maxLoadE}{O{\req} O{u} O{v}}{\ensuremath{\textnormal{max}^{L}_{#1,#2,#3}}}
\DeclareDocumentCommand{\maxLoadVSum}{O{\req} O{\type} O{u}}{\ensuremath{\textnormal{max}^{L,\Sigma}_{#1,#2,#3}}}
\DeclareDocumentCommand{\maxLoadESum}{O{u} O{v}}{\ensuremath{\textnormal{max}^{L,\Sigma}_{#1,#2}}}
\DeclareDocumentCommand{\VVroot}{O{\req}}{\ensuremath{s_{#1}}}
\DeclareDocumentCommand{\VVpred}{O{\req}}{\ensuremath{\pi_{#1}}}
\DeclareDocumentCommand{\Cycles}{O{\req}}{\ensuremath{\mathcal{C}_{#1}}}
\DeclareDocumentCommand{\Paths}{O{\req}}{\ensuremath{\mathcal{P}_{#1}}}
\DeclareDocumentCommand{\VVcycleSource}{O{\req} O{k}}{\ensuremath{s^{C_{#2}}_{#1}}}
\DeclareDocumentCommand{\VVcycleTarget}{O{\req} O{k}}{\ensuremath{t^{C_{#2}}_{#1}}}
\DeclareDocumentCommand{\VVpathSource}{O{\req} O{k}}{\ensuremath{s^{P_{#2}}_{#1}}}
\DeclareDocumentCommand{\VVpathTarget}{O{\req} O{k}}{\ensuremath{t^{P_{#2}}_{#1}}}
\DeclareDocumentCommand{\VGcycle}{O{\req} O{k}}{\ensuremath{G^{\extractionOrderCharacter,C_{#2}}_{#1}}}
\DeclareDocumentCommand{\VVcycle}{O{\req} O{k}}{\ensuremath{V^{\extractionOrderCharacter,C_{#2}}_{#1}}}
\DeclareDocumentCommand{\VEcycle}{O{\req} O{k}}{\ensuremath{E^{\extractionOrderCharacter,C_{#2}}_{#1}}}
\DeclareDocumentCommand{\VEcycleSame}{O{\req} O{k}}{\ensuremath{\overrightarrow{E}^{C_{#2}}_{#1}}}
\DeclareDocumentCommand{\VEcycleDiff}{O{\req} O{k}}{\ensuremath{\overleftarrow{E}^{C_{#2}}_{#1}}}
\DeclareDocumentCommand{\VGcycleOrig}{O{\req} O{k}}{\ensuremath{G^{C_{#2}}_{#1}}}
\DeclareDocumentCommand{\VVcycleOrig}{O{\req} O{k}}{\ensuremath{V^{C_{#2}}_{#1}}}
\DeclareDocumentCommand{\VEcycleOrig}{O{\req} O{k}}{\ensuremath{E^{C_{#2}}_{#1}}}
\DeclareDocumentCommand{\VGpath}{O{\req} O{k}}{\ensuremath{G^{P_{#2}}_{#1}}}
\DeclareDocumentCommand{\VVpath}{O{\req} O{k}}{\ensuremath{V^{P_{#2}}_{#1}}}
\DeclareDocumentCommand{\VEpath}{O{\req} O{k}}{\ensuremath{E^{P_{#2}}_{#1}}}
\DeclareDocumentCommand{\VEpathSame}{O{\req} O{k}}{\ensuremath{\overrightarrow{E}^{P_{#2}}_{#1}}}
\DeclareDocumentCommand{\VEpathDiff}{O{\req} O{k}}{\ensuremath{\overleftarrow{E}^{P_{#2}}_{#1}}}
\DeclareDocumentCommand{\VEDiff}{O{\req}}{\ensuremath{\overleftarrow{E}^{\extractionOrderCharacter}_{#1}}}
\DeclareDocumentCommand{\VEcycles}{O{\req}}{\ensuremath{E^{\mathcal{C}}_{#1}}}
\DeclareDocumentCommand{\VEpaths}{O{\req}}{\ensuremath{E^{\mathcal{P}}_{#1}}}
\DeclareDocumentCommand{\VVcycleSourcesTargets}{O{\req}}{\ensuremath{V^{\mathcal{C},\pm}_{#1}}}
\DeclareDocumentCommand{\VVpathSourcesTargets}{O{\req}}{\ensuremath{V^{\mathcal{P},\pm}_{#1}}}
\DeclareDocumentCommand{\VVSourcesTargets}{O{\req}}{\ensuremath{V^{\pm}_{#1}}}
\DeclareDocumentCommand{\VVcycleSources}{O{\req}}{\ensuremath{V^{\mathcal{C},+}_{#1}}}
\DeclareDocumentCommand{\VVpathSources}{O{\req}}{\ensuremath{V^{\mathcal{P},+}_{#1}}}
\DeclareDocumentCommand{\VVcycleTargets}{O{\req}}{\ensuremath{V^{\mathcal{C},-}_{#1}}}
\DeclareDocumentCommand{\VVpathTargets}{O{\req}}{\ensuremath{V^{\mathcal{P},-}_{#1}}}
\DeclareDocumentCommand{\VGcycleBranchR}{O{\req} O{k}}{\ensuremath{G^{C_{#2}, B_1}_{#1}}}
\DeclareDocumentCommand{\VVcycleBranchR}{O{\req} O{k}}{\ensuremath{V^{C_{#2}, B_1}_{#1}}}
\DeclareDocumentCommand{\VEcycleBranchR}{O{\req} O{k}}{\ensuremath{E^{C_{#2}, B_1}_{#1}}}
\DeclareDocumentCommand{\VGcycleBranchL}{O{\req} O{k}}{\ensuremath{G^{C_{#2}, B_2}_{#1}}}
\DeclareDocumentCommand{\VVcycleBranchL}{O{\req} O{k}}{\ensuremath{V^{C_{#2}, B_2}_{#1}}}
\DeclareDocumentCommand{\VEcycleBranchL}{O{\req} O{k}}{\ensuremath{E^{C_{#2}, B_2}_{#1}}}
\DeclareDocumentCommand{\VGdecomp}{O{\req} O{k}}{\ensuremath{G^{\mathcal{D}}_{#1}}}
\DeclareDocumentCommand{\VVdecomp}{O{\req} O{k}}{\ensuremath{V^{\mathcal{D}}_{#1}}}
\DeclareDocumentCommand{\VEdecomp}{O{\req} O{k}}{\ensuremath{E^{\mathcal{D}}_{#1}}}
\DeclareDocumentCommand{\VVbranching}{O{\req} }{\ensuremath{\mathcal{B}_{#1}}}
\DeclareDocumentCommand{\VVbranchingcycle}{O{\req} O{k}}{\ensuremath{\mathcal{B}^{C_{#2}}_{#1}}}
\DeclareDocumentCommand{\VVbranchingpath}{O{\req} O{k}}{\ensuremath{\mathcal{B}^{P_{k}}_{#1}}}
\DeclareDocumentCommand{\VVjoin}{O{\req} }{\ensuremath{\mathcal{J}_{#1}}}
\DeclareDocumentCommand{\VVaggregation}{O{\req} }{\ensuremath{\mathcal{A}_{#1}}}
\DeclareDocumentCommand{\VGextcycle}{O{\req} O{k}}{\ensuremath{G^{C_{#2}}_{#1,\textnormal{ext}}}}
\DeclareDocumentCommand{\VVextcycle}{O{\req} O{k}}{\ensuremath{V^{C_{#2}}_{#1,\textnormal{ext}}}}
\DeclareDocumentCommand{\VVextcycleSources}{O{\req} O{k}}{\ensuremath{V^{C_{#2}}_{#1,+}}}
\DeclareDocumentCommand{\VVextcycleTargets}{O{\req} O{k}}{\ensuremath{V^{C_{#2}}_{#1,-}}}
\DeclareDocumentCommand{\VVextcycleSubstrate}{O{\req} O{k}}{\ensuremath{V^{C_{#2}}_{#1,S}}}
\DeclareDocumentCommand{\VEextcycle}{O{\req} O{k}}{\ensuremath{E^{C_{#2}}_{#1,\textnormal{ext}}}}
\DeclareDocumentCommand{\VEextcycleSources}{O{\req} O{k}}{\ensuremath{E^{C_{#2}}_{#1,+}}}
\DeclareDocumentCommand{\VEextcycleTargets}{O{\req} O{k}}{\ensuremath{E^{C_{#2}}_{#1,-}}}
\DeclareDocumentCommand{\VEextcycleSubstrate}{O{\req} O{k}}{\ensuremath{E^{C_{#2}}_{#1,S}}}
\DeclareDocumentCommand{\VEextcycleF}{O{\req} O{k}}{\ensuremath{E^{C_{#2}}_{#1,F}}}
\DeclareDocumentCommand{\VGextpath}{O{\req} O{k}}{\ensuremath{G^{{P_{#2}}}_{#1,\textnormal{ext}}}}
\DeclareDocumentCommand{\VVextpath}{O{\req} O{k}}{\ensuremath{V^{P_{#2}}_{#1,\textnormal{ext}}}}
\DeclareDocumentCommand{\VVextpathSources}{O{\req} O{k}}{\ensuremath{V^{P_{#2}}_{#1,+}}}
\DeclareDocumentCommand{\VVextpathTargets}{O{\req} O{k}}{\ensuremath{V^{P_{#2}}_{#1,-}}}
\DeclareDocumentCommand{\VVextpathSubstrate}{O{\req} O{k}}{\ensuremath{V^{P_{#2}}_{#1,S}}}
\DeclareDocumentCommand{\VEextpath}{O{\req} O{k}}{\ensuremath{E^{P_{#2}}_{#1,\textnormal{ext}}}}
\DeclareDocumentCommand{\VEextpathSources}{O{\req} O{k}}{\ensuremath{E^{P_{#2}}_{#1,+}}}
\DeclareDocumentCommand{\VEextpathTargets}{O{\req} O{k}}{\ensuremath{E^{P_{#2}}_{#1,-}}}
\DeclareDocumentCommand{\VEextpathSubstrate}{O{\req} O{k}}{\ensuremath{E^{P_{#2}}_{#1,S}}}
\DeclareDocumentCommand{\VEextpathF}{O{\req} O{k}}{\ensuremath{E^{P_{#2}}_{#1,F}}}
\DeclareDocumentCommand{\forest}{O{\req}}{\ensuremath{\mathcal{F}_{#1}}}
\DeclareDocumentCommand{\VGforest}{O{\req}}{\ensuremath{G^{\mathcal{A},\mathcal{F}}_{#1}}}
\DeclareDocumentCommand{\VVforest}{O{\req}}{\ensuremath{V^{\mathcal{A},\mathcal{F}}_{#1}}}
\DeclareDocumentCommand{\VEforest}{O{\req}}{\ensuremath{E^{\mathcal{A},\mathcal{F}}_{#1}}}
\DeclareDocumentCommand{\VGforestOrig}{O{\req}}{\ensuremath{G^{\mathcal{F}}_{#1}}}
\DeclareDocumentCommand{\VVforestOrig}{O{\req}}{\ensuremath{V^{\mathcal{F}}_{#1}}}
\DeclareDocumentCommand{\VEforestOrig}{O{\req}}{\ensuremath{E^{\mathcal{F}}_{#1}}}
\DeclareDocumentCommand{\varFlowInput}{O{\req} O{i} O{u}}{\ensuremath{f^+_{#1,#2,#3}}}
\DeclareDocumentCommand{\varFlowOutput}{O{\req} O{i} O{u}}{\ensuremath{f^+_{#1,#2,#3}}}
\DeclareDocumentCommand{\VEextcycleHorizontal}{O{\req} O{k} O{u} O{v}}{\ensuremath{E^{C_{#2}}_{#1,\textnormal{ext},#3,#4}}}
\DeclareDocumentCommand{\VEextpathHorizontal}{O{\req} O{k} O{u} O{v}}{\ensuremath{E^{P_{#2}}_{#1,\textnormal{ext},#3,#4}}}
\DeclareDocumentCommand{\VEextcycleVertical}{O{\req} O{k} O{\type} O{u}}{\ensuremath{E^{C_{#2}}_{#1,\textnormal{ext},#3,#4}}}
\DeclareDocumentCommand{\VEextpathVertical}{O{\req} O{k} O{\type} O{u}}{\ensuremath{E^{P_{#2}}_{#1,\textnormal{ext},#3,#4}}}
\DeclareDocumentCommand{\VEextCGHorizontal}{O{\req} O{u} O{v}}{\ensuremath{E^{\textnormal{ext,SCG}}_{#1,#2,#3}}}
\DeclareDocumentCommand{\VEextCGVertical}{O{\req} O{\type} O{u}}{\ensuremath{E^{\textnormal{ext,SCG}}_{#1,#2,#3}}}
\DeclareDocumentCommand{\VVextCGFlowNodes}{O{\req}}{\ensuremath{V^{\textnormal{ext,SCG}}_{#1,\textnormal{flow}}}}
\DeclareDocumentCommand{\VEextCGFlowEdges}{O{\req}}{\ensuremath{E^{\textnormal{ext,SCG}}_{#1,\textnormal{flow}}}}
\DeclareDocumentCommand{\Queue}{}{\ensuremath{\mathcal{Q}}}
\DeclareDocumentCommand{\Variables}{}{\ensuremath{\mathcal{V}}}
\DeclareDocumentCommand{\VGextcycleFlow}{O{\req} O{k}}{\ensuremath{G^{C_{#2}}_{#1,\textnormal{ext},f}}}
\DeclareDocumentCommand{\VGextcycleFlowBranchR}{O{\req} O{k}}{\ensuremath{G^{C_{#2},{B}_1}_{#1,\textnormal{ext},f}}}
\DeclareDocumentCommand{\VVextcycleFlowBranchR}{O{\req} O{k}}{\ensuremath{V^{C_{#2},{B}_1}_{#1,\textnormal{ext},f}}}
\DeclareDocumentCommand{\VEextcycleFlowBranchR}{O{\req} O{k}}{\ensuremath{E^{C_{#2},{B}_1}_{#1,\textnormal{ext},f}}}
\DeclareDocumentCommand{\VGextcycleFlowBranchL}{O{\req} O{k}}{\ensuremath{G^{C_{#2},{B}_2}_{#1,\textnormal{ext},f}}}
\DeclareDocumentCommand{\VVextcycleFlowBranchL}{O{\req} O{k}}{\ensuremath{V^{C_{#2},{B}_2}_{#1,\textnormal{ext},f}}}
\DeclareDocumentCommand{\VEextcycleFlowBranchL}{O{\req} O{k}}{\ensuremath{E^{C_{#2},{B}_2}_{#1,\textnormal{ext},f}}}
\DeclareDocumentCommand{\VGextcycleBranchR}{O{\req} O{k}}{\ensuremath{G^{C_{#2},{B}_1}_{#1,\textnormal{ext}}}}
\DeclareDocumentCommand{\VVextcycleBranchR}{O{\req} O{k}}{\ensuremath{V^{C_{#2},{B}_1}_{#1,\textnormal{ext}}}}
\DeclareDocumentCommand{\VEextcycleBranchR}{O{\req} O{k}}{\ensuremath{E^{C_{#2},{B}_1}_{#1,\textnormal{ext}}}}
\DeclareDocumentCommand{\VGextcycleBranchL}{O{\req} O{k}}{\ensuremath{G^{C_{#2},{B}_2}_{#1,\textnormal{ext}}}}
\DeclareDocumentCommand{\VVextcycleBranchL}{O{\req} O{k}}{\ensuremath{V^{C_{#2},{B}_2}_{#1,\textnormal{ext}}}}
\DeclareDocumentCommand{\VEextcycleBranchL}{O{\req} O{k}}{\ensuremath{E^{C_{#2},{B}_2}_{#1,\textnormal{ext}}}}
\DeclareDocumentCommand{\VGextpathFlow}{O{\req} O{k}}{\ensuremath{G^{P_{#2}}_{#1,\textnormal{ext},f}}}
\DeclareDocumentCommand{\VVextpathFlow}{O{\req} O{k}}{\ensuremath{V^{P_{#2}}_{#1,\textnormal{ext},f}}}
\DeclareDocumentCommand{\VEextpathFlow}{O{\req} O{k}}{\ensuremath{E^{P_{#2}}_{#1,\textnormal{ext},f}}}
\DeclareDocumentCommand{\VVKSource}{O{\req} O{K}}{\ensuremath{s^{K}_{#1}}}
\DeclareDocumentCommand{\VVKTarget}{O{\req} O{K}}{\ensuremath{t^{K}_{#1}}}
\DeclareDocumentCommand{\VVKSourcesTargets}{O{\req}}{\ensuremath{V^{K,\pm}_{#1}}}
\DeclareDocumentCommand{\VEbfsPre}{O{j} O{\req}}{\ensuremath{E^{\extractionOrderCharacter,\mathrm{pre}}_{#2,#1}}}
\DeclareDocumentCommand{\VEbfsSuc}{O{i} O{\req}}{\ensuremath{E^{\extractionOrderCharacter,\mathrm{suc}}_{#2,#1}}}
\DeclareDocumentCommand{\VEbfsInter}{O{i} O{j} O{\req}}{\ensuremath{E^{\extractionOrderCharacter}_{#3,#1\leadsto #2}}}
\DeclareDocumentCommand{\VEbfsLabels}{O{e} O{\req} }{\ensuremath{\mathcal{L}^{\extractionOrderCharacter}_{#2,#1}}}
\DeclareDocumentCommand{\VEbfsLabelsOrig}{O{e} O{\req} }{\ensuremath{\mathcal{L}_{#2,#1}}}
\DeclareDocumentCommand{\VEbfsAC}{O{i} O{j} O{\req}}{\ensuremath{C^{\extractionOrderCharacter}_{#1,#2}}}
\DeclareDocumentCommand{\VEbfsACL}{O{i} O{j} O{\req} }{\ensuremath{P^{1}_{#1,#2}}}
\DeclareDocumentCommand{\VEbfsACR}{O{i} O{j} O{\req} }{\ensuremath{P^{2}_{#1,#2}}}
\DeclareDocumentCommand{\VEbfsBagIterator}{}{\ensuremath{b}}
\DeclareDocumentCommand{\VEbfsBag}{O{\VEbfsBagIterator}}{\ensuremath{B^{\extractionOrderCharacter}_{#1}}}
\DeclareDocumentCommand{\deltaMinusA}{O{i}}{\ensuremath{\delta^-_{\extractionOrderCharacter}(#1)}}
\DeclareDocumentCommand{\deltaPlusA}{O{i}}{\ensuremath{\delta^+_{\extractionOrderCharacter}(#1)}}
\DeclareDocumentCommand{\VEbfsLabelsBag}{O{b}}{\ensuremath{\mathcal{L}^{\extractionOrderCharacter}_{#1}}}
\DeclareDocumentCommand{\mapVedge}{O{e}}{\ensuremath{m^{\mathcal{L}}_{#1}}}
\DeclareDocumentCommand{\mapVbag}{O{a}}{\ensuremath{m^{\mathcal{L}}_{#1}}}
\DeclareDocumentCommand{\origEdge}{O{e}}{\ensuremath{\overrightarrow{E}_{\hspace{-1pt}\req}\hspace{-1pt} (#1)}}
\DeclareDocumentCommand{\extractionEdge}{O{e}}{\ensuremath{\overrightarrow{E}^{\hspace{-1pt}\extractionOrderCharacter}_{\hspace{-1pt}\req}\hspace{-2pt}(#1)}}
\theoremstyle{plain}
\newtheorem{theorem}{Theorem}
\newtheorem{lemma}[theorem]{Lemma}
\newtheorem{corollary}[theorem]{Corollary}
\theoremstyle{definition}
\newtheorem{problem}{Problem Statement}
\newtheorem{definition}[theorem]{Definition}
\newtheorem{remark}[theorem]{Remark}
\declaretheorem[style=definition,qed=$\square$,sibling=theorem]{definition}
\newcommand{\blankpage}{\newpage~ \thispagestyle{empty} \newpage}
\newcolumntype{F}{>{$\displaystyle}r<{$}@{\hspace{0.0em}}}
\newcolumntype{C}{>{$\displaystyle\,}c<{$}@{\hspace{0.0em}}}
\newcolumntype{B}{>{$\displaystyle\,}r<{$}@{\hspace{0.0em}}}
\newcolumntype{R}{>{$\displaystyle}r<{$}@{\hspace{0.2em}}}
\newcolumntype{S}{>{$\displaystyle}r<{$}@{\hspace{0.2em}}}
\newcolumntype{L}{>{$\displaystyle}l<{$}@{\hspace{0.2em}}}
\newcolumntype{Q}{>{$\displaystyle}l<{$}@{\hspace{0.3em}}}
\newcounter{ipCounter}
\NewDocumentEnvironment{IPFormulation}{m}{%
\refstepcounter{ipCounter}
\begin{algorithm}[#1]%

% other code
}{%
\end{algorithm}
\addtocounter{algocf}{-1}
}
\NewDocumentEnvironment{IPFormulationStar}{m}{%
\refstepcounter{ipCounter}
\begin{algorithm*}[#1]%

% other code
}{%
\end{algorithm*}
\addtocounter{algocf}{-1}
}
\newcommand{\removelatexerror}{\let\@latex@error\@gobble}
\newcommand{\nosemic}{\renewcommand{\@endalgocfline}{\relax}}% Drop semi-colon ;
\newcommand{\dosemic}{\renewcommand{\@endalgocfline}{\algocf@endline}}% Reinstate semi-colon ;
\newcommand{\pushline}{\Indp}% Indent
\newcommand{\popline}{\Indm\dosemic}% Undent
\let\oldnl\nl% Store \nl in \oldnl
\newcommand{\nonl}{\renewcommand{\nl}{\let\nl\oldnl}}% Remove line number for one line
\newcommand{\defWhitespace}{\vspace{-4ex}}
\newcommand{\tagIt}[1]{\refstepcounter{equation}\textnormal{({\theequation})} \label{#1}}
\newcommand{\eqnum}{\leavevmode\hfill\refstepcounter{equation}\textup{\tagform@{\theequation}}}
\begin{document}
\thispagestyle{empty}
\maketitle

\thispagestyle{empty}

\setcounter{page}{1}
\pagenumbering{roman} 
\thispagestyle{empty}
\noindent
This work is a minor revision of the author's master's thesis at Technische Universität Berlin.

%\vspace{3cm}
%\section*{Eidesstattliche Erklärung}
%\vspace{3cm}
%
%\noindent
%Hiermit erkläre ich, dass ich die vorliegende Arbeit selbstständig und eigenhändig sowie ohne
%unerlaubte fremde Hilfe und ausschließlich unter Verwendung der aufgeführten Quellen und Hilfsmittel
%angefertigt habe.
%~\vspace{1cm}\\
%\noindent
%Die selbstständige und eigenständige Anfertigung versichert an Eides statt:
%
%\vspace{1cm}
%
%\noindent
%Berlin, den 12. Juni 2018
%\vspace{3cm}
%
%\noindent
%...........................................
%
%\noindent
%Elias Döhne
%
%
%
%\vspace{5cm}

\thispagestyle{empty}
\begin{abstract}
The Virtual Network Embedding Problem (VNEP) considers the efficient allocation of resources distributed in a substrate network to a set of request networks. Many existing works discuss either heuristics or exact algorithms for the VNEP, resulting in a choice between quick runtimes and quality guarantees for the solutions. 

Recently, the first fixed-parameter tractable approximation algorithm for the VNEP with arbitrary request and substrate topologies has been published by Rost and Schmid. This algorithm is fixed-parameter tractable with respect to the \emph{extraction width}, a graph parameter which is newly introduced by Rost and Schmid. It therefore combines positive traits of heuristics and exact solutions: The runtime is polynomial for instances with bounded extraction width, and the algorithm finds solutions with probabilistic quality guarantees. 

In this thesis, we propose two extensions of this \emph{base algorithm} in order to optimize the algorithm's runtime. The execution of the base algorithm is based on a traversal of an \emph{extraction order}, which is a rooted, directed acyclic graph representing the request topology. The base algorithm assigns to each edge a set of nodes, the \emph{label set}. Each request node's out-edges are partitioned into a set of \emph{edge bags}, such that the label sets associated with the edge bags are disjoint. The size of the largest edge bag label set determines the extraction width parameter. The complexity of the algorithm depends exponentially on the extraction width.

The first extension of the base algorithm allows partitioning the edges in such a way that overlap between the label sets is permissible. This results in smaller label sets, since fewer edges are collected in a single bag.  We represent this alternative partition through the introduction of \emph{extraction label set orderings}. The extraction label set orderings uses the well-known \emph{running intersection property}, which is closely related to the notion of \emph{tree decompositions}. 

We show that the resulting algorithm is fixed-parameter tractable with respect to a new graph parameter, the \emph{extraction label width}, which is shown to be bounded above by the base algorithm's extraction width parameter. We further generalize several findings by Rost and Schmid related to the extraction width parameter. For the class of \emph{half-wheel} graphs with central root placement, which exhibit a linear scaling of the extraction width parameter in terms of graph size, we show that the extraction label width is a constant. Rost and Schmid show that adding parallel paths to existing edges increases the extraction width at most by the maximal degree. We demonstrate that the extraction label width increases by at most one. Lastly, we discuss how the extraction label width relates to the very well-known treewidth parameter, and based on that connection show that finding optimal extraction label set orderings minimizing the runtime of the extended algorithm is $\complexityNP$-hard.

As a second extension of the base algorithm, we generalize the notion of an extraction order by removing the requirement that the extraction order must be rooted. This multi-root algorithm relies on a partitioning of the generalized extraction order into a set of rooted subgraphs. Each subgraph is processed independently, and the resulting solutions are combined to an overall solution of the VNEP instance. We introduce a method of ensuring that this step of combining embeddings for rooted subgraphs succeeds. Lastly, we consider the complexity of the multi-root algorithm and show that it is also fixed-parameter tractable with respect to the extraction label width parameter. We further investigate the impact of placing additional root nodes and show that the increase of the extraction label width is bounded by the size of the boundary region of the resulting rooted subgraph.

\end{abstract}
\pagebreak

\renewcommand{\abstractname}{Zusammenfassung}

\newcommand{\germanEO}{Extraktionsordnung}
\newcommand{\germanEW}{Extraktionsweite}
\newcommand{\germanELW}{Extraktionsmengenweite}
\newcommand{\germanELSO}{Extraktionsmengenordnung}
\newcommand{\germanELSOs}{Extraktionsmengenordnungen}
\newcommand{\germanLabelset}{Label-Menge}
\newcommand{\germanLabelsets}{Label-Mengen}
\newcommand{\germanMultirootAlg}{Multirootalgorithmus}
\newcommand{\germanEdgeBags}{Kantentaschen}
\newcommand{\germanEdgeBag}{Kantentasche}
\newcommand{\germanHW}{Halbrad}

\begin{abstract}
Das Virtual Network Embedding Problem (VNEP) befasst sich mit der effizienten Allokation von Ressourcen, die über eine Netzwerkinfrastruktur verteilt sind. In der Fachliteratur wurden bisher heuristische Ansätze und exakte Algorithmen zur Lösung des VNEP untersucht, welche entweder schnelle Laufzeiten, oder Lösungen mit Qualitätsgarantien bieten.

Vor kurzem wurde von Rost und Schmid der erste FPT-Approximations\-algorithmus für das VNEP veröffentlicht. Dieser Algorithmus ist parametrisiert bezüglich des Parameters \emph{\germanEW{}}, der von Rost und Schmid eingeführt wird. Somit vereint der Algorithmus positivie Eigenschaften von Heuristiken und von exakten Algorithmen: Für Instanzen mit beschränkter \germanEW{}~ist die Laufzeit polynomial, und zugleich bietet der Algorithmus Lösungen mit probabilistischen Qualitätsgarantien.

In dieser Masterarbeit werden zwei Erweiterungen dieses Basisalgorithmus entwickelt um bessere Laufzeiten zu erzielen. Die Ausführung des Algorithmus basiert auf einer Durchquerung  der \emph{\germanEO{}}, welche ein gewurzelter gerichteter azyklischer Graph ist, der die Anfragetopologie repräsentiert. Der Basisalgorithmus weist jeder Kante eine Knotenmenge zu, die \emph{\germanLabelsets{}}. Der Algorithmus partitioniert für jeden Knoten die ausgehenden Kanten in sogenannte \emph{\germanEdgeBags{}}, sodass die mit den \germanEdgeBags{} assoziierten \germanLabelsets{} disjunkt sind. Die Größe der größten solchen \germanLabelset{} bestimmt die \germanEW{}, und die Komplexität des Basisalgorithmus hängt exponentiell von der \germanEW{} ab.

Die erste Erweiterung des Basisalgorithmus ermöglicht es, statt der disjunkten Partitionierung der ausgehenden Kanten eine Partitionierung zu wählen, bei der ein Überlapp zwischen den \germanLabelsets{} erlaubt ist. Diese Partitionierung wird durch sogenannte \germanELSOs{} repräsentiert. Dadurch müssen weniger Kanten in eine \germanEdgeBag{} zusammengefasst werden, was zu kleineren \germanLabelsets{} führt. Die \germanELSO{} basiert auf der bekannten \emph{running intersection property}, welche mit dem Konzept einer \emph{Baumzerlegung} zusammenhängt.

Wir zeigen, dass der erweiterte Algorithmus parametrisiert bezüglich eines neuen Graphparameters ist, der \emph{\germanELW{}}, welche nach oben durch die \germanEW{} beschränkt ist. Wir verallgemeinern mehrere Resultate von Rost und Schmid: Für die Graphklasse der \germanHW{}-Graphen mit zentral platzierter Wurzel skaliert die \germanEW{} linear mit der Größe des Graphen. Wir zeigen, dass die \germanELW{} hingegen konstant ist. Rost und Schmid zeigen außerdem, dass das Hinzufügen von Pfaden parallel zu einer existierenden Kante die \germanELW{} höchstens um den maximalen Knotengrad erhöht. Wir zeigen, dass die \germanELW{} hierbei höchstens um eins steigt. Schließlich untersuchen wir den Bezug der \germanELW{} zum Graphparameter Baumweite und zeigen, dass die optimale Wahl einer \germanELSO{} ein $\complexityNP$-schweres Problem ist.

Als zweite Erweiterung des Basisalgorithmus  verallgemeinern wir das Konzept einer \germanEO{}, indem die Voraussetzung, dass die \germanEO{} gewurzelt sein muss, entfernt wird. Dieser  \emph{\germanMultirootAlg{}} basiert auf einer Partitionierung der verallgemeinerten \germanEO{} in mehrere gewurzelte Teilgraphen. Jeder Teilgraph wird separat bearbeitet, und die Teillösungen werden in einem zweiten Schritt zu einer vollständigen Lösung kombiniert. Wir entwickeln eine Methode, mit der diese Kombination von Teillösungen möglich ist. Schließlich betrachten wir die Komplexität des \germanMultirootAlg{} und zeigen, dass der Algorithmus ebenfalls mit der \germanELW{} parametrisiert ist. Wir untersuchen außerdem den Effekt, den die Platzierung zusätzlicher Wurzelknoten hat und zeigen, dass die Erhöhung der \germanELW{} durch Einführung eines zusätzlichen Wurzelknoten durch die Größe der Grenzregion des resultierenden gewurzelten Teilgraph beschränkt ist.

\end{abstract}
\thispagestyle{empty}

\begin{minipage}{0.96\textwidth}
\setcounter{tocdepth}{2}
\tableofcontents
\end{minipage}
\thispagestyle{empty}
\blankpage

\setcounter{page}{1}
\pagenumbering{arabic} 
\pagestyle{plain}

\section{Introduction} \label{sec:introduction}

With the rising popularity of virtualization technologies in cloud computing, and the ever-increasing complexity of distributed systems, the concept of network virtualization has been of great interest. Virtualization is appealing to clients as it removes the overhead of building and maintaining physical infrastructure. Today, the supply for virtualized computational resources is abundant, with multiple providers enabling deployment of virtual servers of varying capabilities at a moment's notice.

However, allocating virtual servers, each placed in a single location in the provider's network topology, is insufficient for some workloads. A client may instead require a set of \emph{networked} resources, allowing for direct communication between the localized computational resources. In particular, bandwidth guarantees between multiple computing nodes are necessary for certain workloads. This requirement increases the complexity of the allocation problem: Firstly, the resources now have to be placed in the physical network in a coordinated way, such that a fixed amount of bandwidth can be reserved along the routes between the nodes. Secondly, the limited bandwidth of the physical infrastructure becomes a contested resource between clients and has to be allocated, much like the computational resources themselves.

The Virtual Network Embedding Problem (VNEP) formalizes this question of how to best allocate resources that are distributed across a physical infrastructure, such as an ISP network or a datacenter. This \emph{substrate network} is specified as a network topology consisting of a set of nodes and edges, which represent the resources of some physical substrate network. Clients request resources by providing a request network topology, represented as a directed graph. An overview and classification of many existing VNEP algorithms is given by Fischer, Botero, Beck, de Meer and Hesselbach in the survey \cite{fischer2013virtual}. These algorithms can be grouped in the following two categories: 
\begin{itemize}
\item Heuristic approaches, obtaining solutions quickly with no optimality guarantees
\item Exact approaches, obtaining optimal solutions at the cost of exponential runtimes
\end{itemize}
Recently, the first VNEP approximation algorithms were published by Rost and Schmid in \cite{rost:vne-approx-leveraging-rand-round-2018, rostSchmidFPTApproximations}, bridging the gap between these two classes of approaches: Unlike heuristics, the approximation algorithm gives some guarantees regarding the quality of the obtained solutions, and has better runtimes than exact approaches. 

Rost and Schmid consider an offline setting, where a batch of requests are processed simultaneously. In \cite{rostSchmidFPTApproximations}, two optimization objectives are considered. Firstly, a profit maximization objective is considered. A positive profit value is assigned to each request and only a subset of requests from the batch are embedded, while other requests may be rejected. Secondly, a cost minimization objective is considered, assigning a cost to each allocation of a substrate resource. In this setting, each submitted request topology must be embedded.

In another recent paper \cite{rost:charting-complexity-of-vne-2018}, Rost and Schmid show that the VNEP is $\complexityNP$-complete. This result is shown for many VNEP variants, including the one discussed in this thesis. Additionally, it was shown that even deriving embeddings which may violate resource capacity constraints by a certain factor, is $\complexityNP$-complete. This result holds for restrictive classes of request topologies, such as planar or degree-bounded graphs. % Accordingly, any polynomial-time approximation algorithm produces solutions with \emph{resource augmentations}, i.e. solutions which exceed the substrate's resource capacities by some amount. 

However, the approximation algorithm allows the derivation of embeddings in polynomial time for certain classes of request topologies. A new graph parameter, the \emph{extraction width}, was introduced in \cite{rostSchmidFPTApproximations} to parameterize the runtime of the approximation algorithm. Rost and Schmid show that the VNEP approximation algorithm runs in polynomial time when restricted to request topologies with bounded extraction width, i.e. the algorithm is \emph{fixed-parameter tractable} with respect to the extraction width parameter. In \cite{rost:vne-approx-leveraging-rand-round-2018}, Rost and Schmid give a detailed discussion of their algorithm applied to \emph{cactus graphs}. Cactus graphs are defined as the class of graphs, where no edge is simultaneously contained in two different cycles, and are shown to have a bounded extraction width in \cite{rostSchmidFPTApproximations}.

\subsection{Contributions}

In this thesis, we pursue two approaches to extend the VNEP approximation algorithm presented in \cite{rostSchmidFPTApproximations}, which we therefore call the \emph{base algorithm}. Both extensions aim to reduce the runtime of the algorithm. The base algorithm uses a representation of the request topology in the form of a rooted, acyclic graph, which is called the \emph{extraction order}.

Firstly, a modification of the base algorithm relying on \emph{extraction label set orderings} is introduced in Section~\ref{sec:hierarchical-bags}. The base algorithm relies on a partitioning of each node's out-edges in the extraction order. The runtime of the base algorithm and the size of the associated linear program depend exponentially on the size of the sets in this partitioning. We propose an extension of the base algorithm exploiting the well-known running intersection property, which allows a partitioning of the edges into significantly smaller sets. Due to the exponential impact of the size of these sets, the extension is a significant improvement in terms of algorithm runtime.

In particular, we introduce the new graph parameter \emph{extraction label width}. We show that the extended algorithm is fixed-parameter tractable with respect to the extraction label width of the request topology. We further show that the extraction label width is bounded from above by the extraction width parameter of the base algorithm, thereby proving that the extended algorithm is an improvement over base algorithm in terms of runtime. 

We further generalize several results which were derived by Rost and Schmid for the extraction width to the new extraction label width parameter. In particular, Rost and Schmid introduce the class of half-wheel graphs and show that when placing the extraction order's root suboptimally, the extraction width scales linearly with the size of the half-wheel graph, and is therefore unbounded for arbitrarily large half-wheel graphs. We show that the extraction label width for half-wheel graphs is a constant even for the suboptimal root placement. 

Another result by Rost and Schmid states that the extraction width of a graph can increase by at most the maximal degree, when a path is added parallel to an existing edge. We show an analogous result for the extraction label width with a much tighter bound, namely that the extraction label width increases by at most one.

Lastly, we discuss how the extraction label width is related to the well-established graph parameter treewidth. We show how the $\complexityNP$-complete treewidth decision problem for arbitrary graphs can be represented in terms of calculating the width parameter of a particular extraction order. We thus show that finding an optimal extraction label set ordering for a given extraction order is $\complexityNP$-hard.

In the second extension of the base algorithm in Section~\ref{sec:multiroot}, we consider the possibility of relaxing the requirement that the extraction order must be a rooted graph, generalizing to more arbitrary acyclic orientations of the request topology. The main result of Section~\ref{sec:multiroot} is an algorithm for generalized extraction orders, which are not required to be rooted. This algorithm partitions the extraction order in several rooted subgraph, the \emph{root regions}. Each such root region is processed independently, and we show that any solution for one root region can incrementally be extended to a solution for the overall request.

Lastly, we consider the complexity of the multi-root algorithm and show that it is also fixed-parameter tractable with respect to the extraction label width parameter. We further investigate the impact of placing additional root nodes and show that the increase of the extraction label width is bounded by the size of the boundary region.

\subsection{The Structure of this Thesis}

In the remainder of Section~\ref{sec:introduction}, we give a brief introduction to some prerequisite algorithmic concepts, such as linear and integer programming and randomized rounding.
In Section~\ref{sec:the-vnep}, we define the VNEP formally and introduce some related terminology. 
The algorithms presented in this work are extensions of the VNEP approximation algorithm introduced by Rost and Schmid in \cite{rost:vne-approx-leveraging-rand-round-2018}. This \emph{base algorithm} is discussed in Section~\ref{sec:solving-the-vnep}. Section~\ref{sec:background} gives a brief introduction to the closely interrelated concepts of tree decompositions, running intersection orderings, and hypergraph acyclicity, as they are relevant to the first extension of the base algorithm.

In Section~\ref{sec:hierarchical-bags}, we extend the base algorithm through the introduction of \emph{extraction label set orderings}, which allows for significant improvements over the base algorithm in terms of the complexity. In Section~\ref{sec:multiroot}, we introduce an algorithm which allows for extraction orders which are not rooted.
Finally, Section~\ref{sec:conclusion} provides a conclusion of this thesis' findings.

\subsection{Preliminaries: Linear Programming}\label{sec:01:linear-programming}
The VNEP approximation algorithms discussed in this thesis are based on linear programming. A linear program (LP) is defined as a set of constraints in the form of linear inequalities over a set of variables. The linear program is usually accompanied by some linear objective function. The objective of linear programming is to find a variable assignment that optimizes the objective function, while satisfying each of the constraints. 

Linear programs are commonly expressed in the canonical form
\begin{align}
\text{maximize } && c^T x \label{def:lp-objective}\\
\text{subject to } && \mathbf{A} x \leq b \label{def:lp-constraints}\\
\text{and } && x \geq 0 \label{def:var-def}
\end{align}
where $b$ and $c$ are vectors of coefficients specifying the bounds of the constraints and the objective function, respectively. $\mathbf{A}$ is a matrix of coefficients specifying the constraints and $x$ is a vector containing the variables.

LP formulations are solvable in polynomial time with respect to their size, a result first proven by the ellipsoid algorithm proposed by Khachiyan in \cite{khachiyan1980polynomial}. Several algorithms are used for solving linear programs, such as the well-known Simplex algorithm by Dantzig \cite{dantzig1951maximization}, which has no polynomial bound on its runtime but performs very well for many linear programs. While Khachiyan's ellipsoid algorithm performs poorly in practice, other so-called interior point methods with polynomial runtimes have since been developed, including Karmarkar's algorithm \cite{karmarkar1984new}, and barrier methods \cite{gill1986projected}. Today, the performance of such interior point methods is considered on par with the Simplex algorithm.

Closely related to linear programming is the notion of mixed-integer programming. A mixed-integer program (MIP) has a similar structure to a linear program, but it additionally requires some of the variables to be restricted to integer values. Unlike linear programming, the problem of solving arbitrary MIP formulations is $\complexityNP$-complete. Given some MIP formulation, we refer to the linear program obtained by removing the integrality constraint on the variables as its \emph{LP relaxation}. We also refer to solutions of a MIP formulation's LP relaxation as \emph{fractional solutions} of the MIP.

\subsection{Preliminaries: Randomized Rounding}\label{sec:01:rand-rounding}

The approach of \emph{randomized rounding}, introduced by Raghavan and Thompson in \cite{raghavan1987randomized}, is a widely used technique to design approximation algorithms for $\complexityNP$-hard problems. Randomized rounding seeks to exploit the fact that solving the LP relaxation is much faster than solving the corresponding MIP formulation. The general approach of randomized rounding is as follows:
\begin{enumerate}
\item State the problem as a MIP formulation.
\item Solve the LP relaxation of this MIP formulation.
\item Probabilistically round variables to integer values.
\end{enumerate}

In the context of the VNEP, the randomized rounding framework is used by Rost and Schmid in the offline setting, where multiple requests are considered simultaneously in batches. As we will discuss in more detail in Section~\ref{sec:solving-the-vnep}, fractional solutions to the LP formulation underlying the base algorithm can be  converted to a \emph{convex combination} of integer solutions, i.e. a set of integer solutions, each of which is assigned some value. The algorithm used for this conversion process is referred to as the \emph{decomposition algorithm}. 

The decomposition algorithm yields for each request in the batch a separate convex combination of integral sub-solutions, in the sense that each of these sub-solutions encodes an embedding of a single request. Due to the immense number of possible combinations, Rost and Schmid use randomized rounding to combine embeddings for individual requests to a single, overall solution of the problem instance. The value associated with each integral solution is interpreted as the probability that this solution should be selected in the overall solution for the entire batch. 

Since the runtime of the rounding is relatively small compared to solving the LP and processing the LP solution, many rounding iterations are performed, and the best overall solution is selected according to some heuristic. In \cite{rost:vne-approx-leveraging-rand-round-2018}, Rost and Schmid discuss three such rounding heuristics for cactus graphs in the profit maximization variant of the VNEP. 
The first heuristic selects the rounded solution with the smallest resource augmentations, using the achieved profit as a tie-breaker. The second heuristic instead selects the rounded solution with the maximal profit, using the solution's maximal resource augmentation to resolve ties. Finally, the third heuristic explicitly forbids resource augmentations by modifying the rounding procedure: In each rounding iteration, any randomly selected mappings that violate some resource capacity are rejected outright. From all rounding iterations, the solution with the greatest profit is selected. The order in which requests are considered is shuffled between iterations.

We can therefore summarize the heuristic approach presented by Rost and Schmid as follows:
\begin{enumerate}
\item Solve the decomposable LP formulation. 
\item Decomposition of the LP solution, yielding a convex combination of valid mappings.
\item  Perform a (large) number of randomized rounding iterations.
\item Select the best overall solution obtained in Step 3 according to some heuristic.
\end{enumerate}
The VNEP approximation algorithm performs similarly, however, instead of performing a fixed number of iterations and using a heuristic selection in the last step, the first suitable solution is returned.

The extensions proposed in this thesis are focused on the first two steps and are independent of the randomized rounding framework. Therefore, a simpler variant of the VNEP is discussed in this thesis, which only considers a single request. In this setting, randomized rounding is not necessary, as the best solution according to some metric can be selected directly by evaluating all options. However, the algorithms presented in this thesis generalize to the offline setting in a straightforward way. The single-request LP formulation can be used for each individual request in the input, and the capacity constraints can be defined in a shared way by summing over the set of requests.

\cleardoublepage
\section{The Virtual Network Embedding Problem} \label{sec:the-vnep}

In this section, we first introduce some prerequisite definitions and then give a formal statement of the VNEP. Since the thesis is based on the VNEP approximation algorithm by Rost and Schmid, we closely follow the notation used in  \cite{rost:vne-approx-leveraging-rand-round-2018,rostSchmidFPTApproximations} with some notational adaptations.

\subsection{Request and Substrate Topologies}

Request and substrate topologies are both given as directed graphs. For any directed graph $G = (V, E)$, we write $\undirectedGraph = (V, \undirectedEdges)$ to refer to its undirected version, i.e. $\undirectedEdges := \{\{x, y\}  |  (x, y) \in E\}$.

\begin{definition} (Request Topology) \label{def:introvnep:reqtop}\\
The request topology $\reqTopologyDef$ is a directed graph. Nodes of the request topology are usually named $i, j, k,$ etc. We assume that $\reqTopology$ contains no loop edges and antiparallel edges, i.e. if $(i, j) \in \reqEdges$, then $(j, i) \not\in \reqEdges$. We usually also assume that the request topology is a connected graph.
\end{definition}

\begin{definition} (Substrate Topology) \\
The substrate topology $\substrateTopologyDef$ is also a directed graph. Substrate nodes are usually named $u, v, w,$ etc. We again assume that the substrate contains no loop edges, although antiparallel edges may be included.
We write $\pathSet$ to denote the set of all simple paths in the substrate topology.
\end{definition}

\subsection{Request Mappings}

The allocation of resources from the substrate to some request network is represented as a mapping of each element of the request network to some subset of the substrate network. Since substrate nodes represent localized resources, each request node should be mapped to a single location in the substrate network. An example of a valid mapping is shown in Figure~\ref{fig:valid_mapping_example}.

\begin{definition}(Valid Mapping)\label{def:valid-mapping}\\
A valid request mapping $\mappingRequest := (\mappingNodes, \mappingEdges)$ is a tuple of a node mapping function $\mappingNodes: \reqNodes \rightarrow \substrateNodes$ and an edge mapping function $\mappingEdges: \reqEdges \rightarrow \pathSet(\substrateEdges)$.
The node mapping function maps each of the request's nodes to a single substrate node. The edge mapping function maps each request edge to a path in the substrate. Further, given an edge $e=(i,j) \in E_R$, the path $m_E(e)$ must start at the mapped location of its tail node $m_V(i)$ and end in the mapped location of its head node $m_V(j)$.
\end{definition}
\begin{definition}(Mapping Notation)\\
Given a node mapping $\mappingNodes$, a request node $i \in \reqNodes$ and a substrate node $u \in \substrateNodes$, we introduce the notation $i \mapsto u$ to mean $\mappingNodes(i) = u$. Analogously for edge mappings $\mappingEdges$, a request edge $e \in \reqEdges$ and a substrate path $p \in \pathSet$ we write $e \mapsto p$ to indicate that $\mappingEdges(e) = p$.
\end{definition}

\begin{figure}[tbph]
	\centering
	\includegraphics[width=0.5\textwidth]{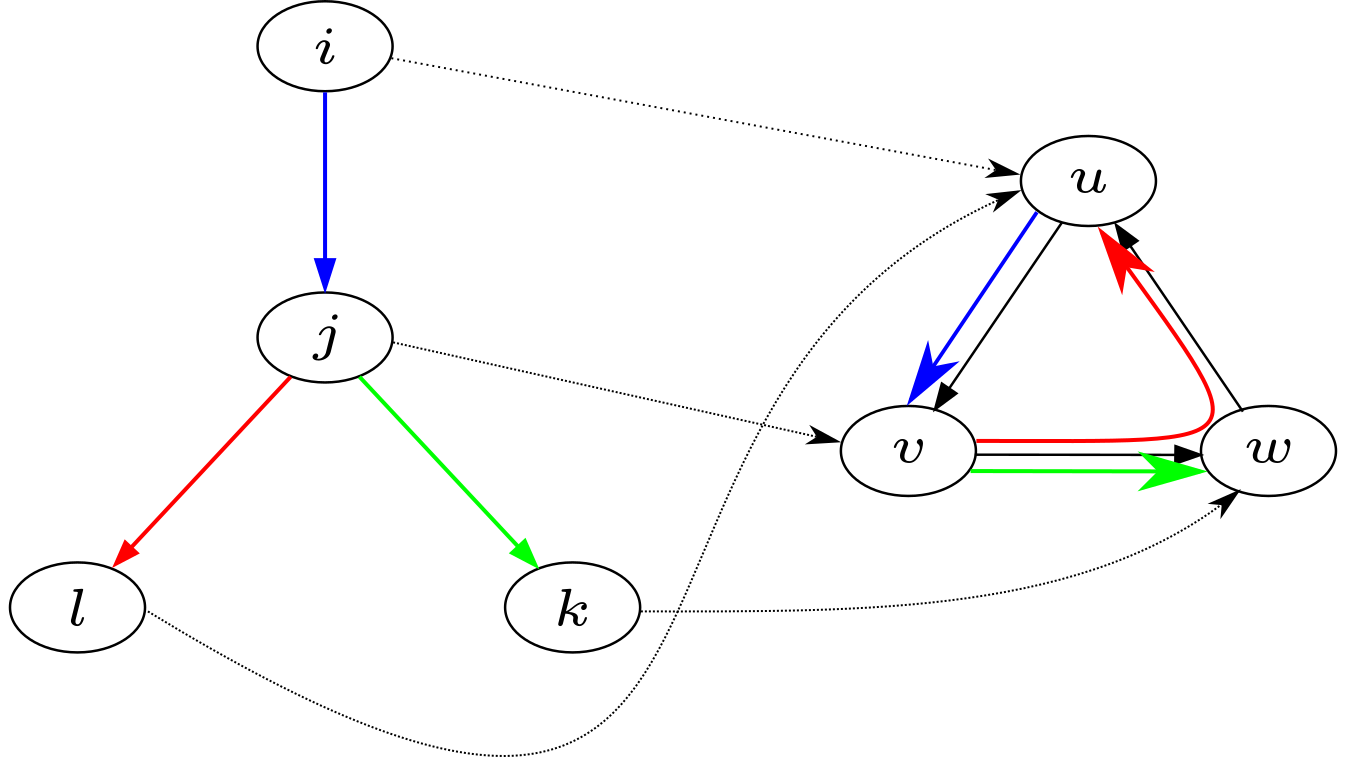}
	\caption{Example of a valid mapping of the request (nodes $i$, $j$, $k$ on the left) onto a substrate network (nodes $u$, $v$, $w$ on the right). The node mapping, is represented by the dotted arrows, and the edge mapping by the colored arrows in the substrate network. Explicitly stated, the mapping shown is given by $\mappingRequestDef$ with $\mappingNodes = \{i \mapsto u, j \mapsto v, k \mapsto w, l \mapsto u\}$ and \newline $\mappingEdges = \{(i,j) \mapsto ((u, v)), (j, k) \mapsto ((v, w)), (j, l) \mapsto ((v, w), (w, u)) \}$}
	\label{fig:valid_mapping_example}
\end{figure}

\subsection{Substrate Resources and Request Demands}

We distinguish between \emph{node resources} and \emph{edge resources}. Node resources are located at a single node in the substrate network. In practically relevant applications such as network function virtualization or service chaining, many different kinds of node resources are considered, such as x86 servers, hardware firewalls or other network appliances. This variety is represented through the introduction of \emph{node resource types} \cite{mehraghdam2014specifying,rost:vne-approx-leveraging-rand-round-2018}. Since the substrate network may provide multiple services at a single location, a single substrate node may concurrently and independently support an arbitrary number of resource types.

\begin{definition}(Substrate Resource Types)\\
We define $\nodeTypeSet$ as the set of all node types present in the substrate. The type assignment function $\subNodeType: \substrateNodes \rightarrow \mathcal{P}(\nodeTypeSet)$ maps each substrate node $u \in \substrateNodes$ to a set of types it supports. $\substrateNodesByType[\nodeType] \subset V_S$ denotes the set containing all substrate nodes supporting type $\nodeType$.
\end{definition}

Since a substrate node may support multiple resource types, a node resource is defined as a tuple of a substrate node and node type. 
Edge resources are considered as the bandwidth between pairs of node resources and otherwise have no associated resource type. They can therefore directly be identified with the substrate edges.
\begin{definition}(Substrate Resources)\\ 
The substrate node resources are given by $\SRV := \{ (u, \nodeType) ~|~ u \in \substrateNodes, \nodeType \in \subNodeType(u)\}$. The substrate edge resources are directly identified with the substrate edges.
The set of substrate resources is then defined as $\substrateResources := \SRV \cup \substrateEdges$.
\end{definition}

Node and edge resources in the substrate have a limited capacity. To encode this limitation, we introduce the \emph{substrate capacity function}, which assigns a non-zero capacity value to each substrate resource.
\begin{definition}(Substrate Capacity Function)\\ 
The capacity of each substrate resource is given by the substrate capacity function \mbox{$\substrateCapacity: \substrateResources \rightarrow \mathbb{R}^+$}.
\end{definition}

Each request node is assigned a single node type by the \emph{request node type function}.
\begin{definition}(Request Node Type Function)\\
Each request node is assigned a node type via the request node type function \mbox{$\reqNodeType: \reqNodes \rightarrow \nodeTypeSet$}.
\end{definition}

We quantify the amount of resources required by the request by assigning a demand value to each request node and edge. To this end, we define the \emph{request demand function}.
\begin{definition}(Request Demand Function)\\
We assign a demand value to each request node and edge with the request demand function \linebreak$\reqDemand: (\reqNodes \cup \reqEdges) \rightarrow \mathbb{R}_{\geq 0}$. This demand value gives the amount that must be allocated to the request node by any substrate resource that it is embedded in.
\end{definition}

Lastly, the \emph{mapping allocation function} computes the resources allocated to a single substrate resource by a given mapping.
\begin{definition}(Mapping Allocation Function) \label{def:mapping-allocation-function}\\
Given a mapping $\mappingRequestDef$, the mapping allocation function for node resources $(u, \nodeType) \in \SRV$ is defined as 
\begin{align}
\allocationFunction(\mappingRequest, u, \nodeType) := \sum_{\begin{subarray}{c}
i \in \reqNodes, \reqNodeType(i) = \nodeType,\\
\mappingNodes(i) = u
\end{subarray}}
\reqDemand(i, \nodeType),
\end{align}
and analogously for edge resources $(u, v) \in \substrateEdges$ as 
\begin{align}
\allocationFunction(\mappingRequest, u, v) := \sum_{\begin{subarray}{c}
(i, j) \in \reqNodes, \\
(u, v) \in \mappingEdges(i, j)
\end{subarray}}
\reqDemand(i, j).
\end{align}
\end{definition}

\subsection{The VNEP Problem Statement}
We now define the Virtual Network Embedding Problem.

\begin{problem} (Virtual Network Embedding Problem, VNEP) \label{def:VNEP}\\
\textbf{Input:} An instance of the VNEP $\VNEPInstance$ consists of a substrate and a request:
\begin{itemize}
\item A substrate, specified by a topology $\substrateTopology$, a set of substrate resources $\SR$ and a substrate capacity function $\substrateCapacity$.
\item A request, specified by a topology $\reqTopology$, a node type assignment $\reqNodeType$ and a request demand function $\reqDemand$.
\end{itemize}

\noindent \textbf{Output:}
A valid mapping of the request onto the substrate which does not exceed the capacities of the substrate resources, while optimizing some linear objective function.
\end{problem}

For the work presented in this thesis, the choice of objective function is arbitrary. We will assume that it is some linear function assigning a cost or benefit to each possible mapping decision.

The VNEP as defined above deviates from the VNEP as discussed in \cite{rostSchmidFPTApproximations} in two key aspects.
Firstly, unlike the offline setting discussed in \cite{rostSchmidFPTApproximations,rost:vne-approx-leveraging-rand-round-2018}, where batches of requests are considered simultaneously, we restrict the VNEP to a single request, which simplifies the notation. As discussed in Section~\ref{sec:01:rand-rounding}, randomized rounding is not necessary as a framework for the approximation algorithm under this restriction.
Secondly, we restrict the VNEP to only enforce substrate capacity constraints. Additional constraints, such as edge latency requirements, or node placement or routing restrictions, are not taken into account. Following the taxonomy introduced in \cite{rost:charting-complexity-of-vne-2018}, we can therefore classify the VNEP as defined above as $\langle VE \vert - \rangle$.

Many other variants of the VNEP are considered in the existing literature. A VNEP taxonomy is introduced in the survey by Fischer et al. \cite{fischer2013virtual}. The VNEP considered in this work can accordingly be classified as Centralized, Static and Concise. This classification means that decisions about the embedding of a request are made by a single entity, that embeddings are not subject to change, and that redundancies for fault-tolerance are not taken into account.

\cleardoublepage
\section{The Base VNEP Approximation Algorithm} \label{sec:solving-the-vnep}

In this section, we discuss two existing algorithms for solving the Virtual Network Embedding Problem, including the first VNEP approximation algorithm for general request topologies, which was introduced by Rost and Schmid in \cite{rostSchmidFPTApproximations}.

We first discuss the multi-commodity flow algorithm (MCF), which forms the basis of the base algorithm. We will discuss the algorithm in detail, and introduce an approximation algorithm based on LP relaxations of the MCF algorithm's underlying integer program. We discuss the limitations of this LP relaxation, which the base algorithm intends to resolve. 

\subsection{The Multi-Commodity Flow Algorithm} \label{sec:02:the-mcf-alg}
In this subsection, we discuss the multi-commodity flow (MCF) algorithm, which models the embedding decision for each request edge as a flow through the substrate network. The MCF algorithm is is based on a mixed-integer program (MIP), which is given in Mixed-Integer~Program~\ref{mip:matthias:MCF-formulation}. Variations of the MCF algorithm have been discussed in many publications, including \cite{chowdhury2009virtual, mehraghdam2014specifying, rost2014:its_about_time, rost:vne-approx-leveraging-rand-round-2018, rostSchmidFPTApproximations}. 

In Section~\ref{sec:mcf-algorithm-exact}, we will first introduce the MCF algorithm as an exact algorithm based on integer solutions of Mixed-Integer Program~\ref{mip:matthias:MCF-formulation}. In Sections~\ref{sec:mcf-algorithm-relaxation} and~\ref{sec:mcf-decomposition}, we discuss an approximation algorithm based on LP relaxations of the Mixed-Integer Program~\ref{mip:matthias:MCF-formulation}. Finally, in Section~\ref{sec:02:limitations-of-mcf}, the limitations of this approximation algorithm are discussed.

%TODO % % % % % % % % % % % % % % % % % % % MCF-MIP-FORMULATION

\begin{figure}[htbp]
\LinesNotNumbered
\renewcommand{\arraystretch}{1.1}
\removelatexerror

\begin{IPFormulation}{H}
\popline
\SetAlgorithmName{Mixed-Integer Program}{}{{}}

%   \sum \limits_{u \in \VVloc} y^u_{\req, i} & = & x_{\req} & \forall \req \in \requests, i \in \VV   &  \tagIt{alg:VNEP-old:node-embedding} \\
\begin{tabular}{FRLQB}
    \multicolumn{5}{r}{\parbox{0.975\textwidth}{~}} \\[-11pt]
    % OBJ:
    \left|\begin{array}{c} \max \\ \min \end{array}\right| ~ \text{obj}(\vec{y}, \vec{z}, \vec{l}) & & & & \tagIt{eq:classic_mcf:arbitrary_obj}\\
    % CONSTR:
    \sum_{u \in \substrateNodesByType[\reqNodeType(i)]} y_i^u & = & 1  & \forall i \in \reqNodes &\tagIt{eq:classic_mcf:constr_force_emb}\\
    \sum_{(u, v) \in \outEdges{u}} z_{i,j}^{u,v} - \sum_{(v, u) \in \inEdges{u}} z_{i,j}^{v, u} &=& y_i^u - y_j^u ~~~~~~& \forall (i, j) \in \reqEdges; (u, v) \in \substrateEdges & \tagIt{eq:classic_mcf:constr_flow} \\
    y_i^u &=& 0 & \forall i \in \reqNodes, u \in \substrateNodesByType[\reqNodeType(i)]: \reqDemand(i) > \substrateCapacity(u, \reqNodeType(i)) &\tagIt{eq:classic_mcf:constr_forbid_small_nodes}\\
    z_{i,j}^{u,v} &=& 0 & \forall (i,j) \in \reqNodes, (u, v) \in \substrateEdges: 
\reqDemand(i, j) > \substrateCapacity(u, v) &\tagIt{eq:classic_mcf:constr_forbid_small_edges}\\
    \sum_{i \in V_R,~ \tau_R(i)=\tau} \reqDemand(i) \cdot y_i^u &=& a^{(u, \tau)} & \forall (u, \tau) \in \SR &\tagIt{eq:classic_mcf:constr_node_load}\\
    \sum_{(i, j) \in \reqEdges} \reqDemand(i, j) \cdot z_{i,j}^{u,v} &=& a^{(u, v)} & \forall (u, v) \in \substrateEdges &\tagIt{eq:classic_mcf:constr_edge_load}\\
    % VARS:
    a^{x} \in [0, \substrateCapacity(x)] & & & \forall x \in R_S \\
    y_i^u \in \{0, 1\} & & & \forall i \in V_R, u \in V_S \\
    z_{i, j}^{u, v} \in \{0, 1\}  & & & \forall (i, j) \in \reqEdges, \quad \forall(u, v) \in \substrateEdges
\end{tabular}
\caption{MCF Formulation for VNEP, reproduced with notational adaptations from \cite{rost:vne-approx-leveraging-rand-round-2018}}
\label{mip:matthias:MCF-formulation}
\end{IPFormulation}
%\VSPACE{-12pt}
\end{figure}

%TODO % % % % % % % % % % % % % % % % % % % MCF-MIP-FORMULATION

\subsubsection{The Mixed-Integer Program of the MCF Algorithm} \label{sec:mcf-algorithm-exact}

We will now discuss the variables and constraints of the MCF algorithm's MIP formulation.

\paragraph{Variables}
Given a VNEP instance $\VNEPInstance$, the MCF formulation defines the following classes of variables:
\begin{itemize}
\item Node mapping variables $y$: For each possible node mapping decision of a request node $i \in \reqNodes$ onto some substrate node $u \in \substrateNodes$, a binary node mapping variable $y_i^u$ is defined. Setting $y_i^u = 1$ is interpreted as mapping $i$ onto $u$.
\item Edge mapping variables $z$: For each possible edge mapping decision, a binary  edge mapping variable $z_{i, j}^{u, v}$ is defined. Setting $z_{i, j}^{u, v} = 1$ is interpreted as mapping $(i, j)$ onto $(u, v)$. Note that for any request edge $(i, j) \in \reqEdges$, multiple $z$-variables may be set to one, as each request edge may be mapped onto a path of substrate edges of arbitrary length. In the case that $\mappingNodes(i) = \mappingNodes(j)$, all variables associated with $(i, j)$ may be set to zero.
\item Allocation variables $a$: For each substrate resource $x \in \SR$, a continuous variable $a^x$ is introduced. $a^x$ is bounded above by the associated substrate resource's capacity. The value of $a^x$ represents the amount of resource $x$ that is allocated to the VNEP solution corresponding to the variable assignment.
\end{itemize}

\paragraph{Constraints} 
The MCF formulation consists of the following constraints:

Constraint (\ref{eq:classic_mcf:constr_force_emb}) enforces that each request node is mapped onto a single substrate node. Due to the integrality of the node mapping variables, this constraint can only be satisfied by mapping the request node to exactly one substrate node.

Constraint (\ref{eq:classic_mcf:constr_flow}) enforces flow conservation at each substrate node and for each request edge $(i, j)$. The flow is induced by the variables $y^u_i$ and $y^u_j$ on the right side of the equation. The flow therefore originates in the substrate node to which $i$ is mapped and terminates in the substrate node to which $j$ is mapped. Since the edge mapping variables are binary, this flow is not splittable. 

Constraints (\ref{eq:classic_mcf:constr_forbid_small_nodes}) and (\ref{eq:classic_mcf:constr_forbid_small_edges}) explicitly forbid mapping any request nodes or edges onto substrate resources with insufficient capacity. While these constraints are redundant in the context of the exact algorithm based on integer solutions of Mixed-Integer Program~\ref{mip:matthias:MCF-formulation}, they are important when considering the LP relaxation for the MCF approximation algorithm.

Constraints (\ref{eq:classic_mcf:constr_node_load}) and (\ref{eq:classic_mcf:constr_edge_load}) track the resource allocations incurred by the node and edge mapping decisions, respectively. Since the allocation variables $\vec{a}$ are bounded by the corresponding substrate capacities, the constraints also implicitly enforce that the substrate capacities are not exceeded.

Therefore, constraints (\ref{eq:classic_mcf:constr_force_emb}) and (\ref{eq:classic_mcf:constr_flow}) ensure validity mapping, while constraints (\ref{eq:classic_mcf:constr_node_load}) and (\ref{eq:classic_mcf:constr_edge_load}), combined with the bounded allocation variables enforce the substrate capacity restrictions.

\begin{remark} (Existence of Circular Flows) \label{remark:eddy-flows}\\
Note that circular ``eddy currents'' in the edge mapping variables can exist which do not correspond to a mapping decision. That is, variable sets of the form $\{z^{u,v}_{i,j} ~|~ (i,j) \in \reqEdges, (u, v) \in C \}$ where $C$ is some cycle in the substrate topology, can be set to one. However, these circular currents are avoided by performing a graph search to extract the edge mapping.
\end{remark}

\subsubsection{Processing Solutions of the Mixed-Integer Program}

Any solution of Integer Program~\ref{mip:matthias:MCF-formulation} directly corresponds to a solution of the corresponding VNEP. In this section, we discuss how a solution for a VNEP instance is obtained from a solution of Integer Program~\ref{mip:matthias:MCF-formulation}.

We first introduce the \emph{local connectivity property} proposed by Rost and Schmid in \cite{rost:vne-approx-leveraging-rand-round-2018,rostSchmidFPTApproximations}. This property states the following: Let $y^u_i$ be some non-zero node variable representing the decision to map $i \mapsto u$. Let $j$ be some other request node, such that an edge between $i$ and $j$ exists. Then, there exists a set of non-zero edge mapping variables and a single non-zero node mapping variable $y^v_j$, such that the substrate edges associated with the edge mapping variables form a path connecting $u$ and $v$. If $i$ and $j$ are mapped to the same substrate node, no non-zero edge variables are required.

Importantly for later discussion, the local connectivity property is defined in terms of a \emph{fractional} solution, i.e. the LP relaxation of Mixed-Integer Program~\ref{mip:matthias:MCF-formulation}.

\begin{lemma}(Local Connectivity Property, cf. \cite{rost:vne-approx-leveraging-rand-round-2018, rostSchmidFPTApproximations}) \label{lem:local-connectivity-property} \\
Given a fractional solution $(\vec{y}, \vec{z}, \vec{a})$ to the MCF formulation, let $i \in \reqNodes$, $u \in \substrateNodes$ be request and substrate nodes such that $y^u_i > 0$. Then for any incoming edge $(k, i) \in \inEdges{i}$ and any outgoing edge $(i, j) \in \outEdges{i}$, there exist paths $P^{v, u}_{k, i}$ and $P^{u, w}_{i, j}$ such that 
\begin{enumerate}
    \item $P^{v, u}_{k, i}$ is a path from $v$ to $u$, such that $y^v_k > 0$ and $z^e_{k, i} >0$ for all $e \in P^{v, u}_{k, i}$ 
    \item $P^{u, w}_{i, j}$ is a path from $v$ to $u$, such that $y^w_j > 0$ and $z^e_{k, i} >0$ for all $e \in P^{u, w}_{i, j}$
\end{enumerate}
These paths can be found in $\mathcal{O}(|\substrateEdges|)$ time with a graph search.
\end{lemma}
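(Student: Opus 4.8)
The plan is to prove both statements by a cut (flow-decomposition) argument applied separately to the outgoing edge $(i,j) \in \outEdges{i}$ and the incoming edge $(k,i) \in \inEdges{i}$. The key observation is that, for a \emph{fixed} request edge $(i,j)$, the variables $\{z^{u,v}_{i,j}\}$ satisfy constraint~(\ref{eq:classic_mcf:constr_flow}), and hence form a (possibly fractional, possibly non-acyclic) flow on $\substrateTopology$ whose net out-balance at every substrate node $a$ equals $y^a_i - y^a_j$; by~(\ref{eq:classic_mcf:constr_force_emb}) the aggregate supply $\sum_a y^a_i$ and aggregate demand $\sum_a y^a_j$ are both $1$, so this is a balanced transportation-type flow. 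The whole argument rests on aggregating~(\ref{eq:classic_mcf:constr_flow}) over a suitable vertex set and exploiting the cancellation of internal edges.

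For the outgoing edge, I would first let $S \subseteq \substrateNodes$ be the set of substrate nodes reachable from $u$ along directed edges $e$ with $z^e_{i,j} > 0$; a BFS or DFS on this positive-flow subgraph computes $S$, together with a simple path from $u$ to any member of $S$, in $\bigO(|\substrateEdges|)$ time. The claim is that $S$ must contain some $w$ with $y^w_j > 0$: if not, then by construction no positive-flow edge leaves $S$, so summing~(\ref{eq:classic_mcf:constr_flow}) over all $a \in S$ makes the internal-edge terms cancel and leaves only non-positive contributions from edges entering $S$, giving $\sum_{a \in S}(y^a_i - y^a_j) \le 0$; but $u \in S$ with $y^u_i > 0$ and $\sum_{a \in S} y^a_j = 0$ by assumption, so this sum is strictly positive, a contradiction. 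Taking the path produced by the search (and shortening it to a simple path so it lies in $\pathSet$) yields $P^{u,w}_{i,j}$ with $z^e_{i,j} > 0$ on every edge and $y^w_j > 0$.

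The incoming edge is handled symmetrically on the reverse substrate graph: let $T \subseteq \substrateNodes$ be the set of nodes that \emph{reach} $u$ along positive-flow edges of $\{z^{u,v}_{k,i}\}$. If no $v \in T$ had $y^v_k > 0$, then no positive-flow edge enters $T$, and summing~(\ref{eq:classic_mcf:constr_flow}) over $a \in T$ gives $\sum_{a \in T}(y^a_k - y^a_i) \ge 0$, contradicting $\sum_{a \in T} y^a_k = 0$ and $y^u_i > 0$ with $u \in T$. Hence some $v \in T$ satisfies $y^v_k > 0$, and the reverse search returns the simple path $P^{v,u}_{k,i}$ with $z^e_{k,i} > 0$ throughout. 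The degenerate case $\mappingNodes(i) = \mappingNodes(k)$ (resp. $\mappingNodes(i) = \mappingNodes(j)$), where an empty path suffices, is subsumed since then $u$ itself is the required endpoint $v$ (resp. $w$).

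The only delicate point I anticipate is the sign bookkeeping in the cut sums — correctly tracking which boundary terms survive and with which sign after the internal-edge cancellation. A secondary thing worth remarking is that the circular ``eddy'' currents of Remark~\ref{remark:eddy-flows} do not interfere: the argument never invokes acyclicity of the flow, only reachability in the positive-flow subgraph and the aggregated conservation identity~(\ref{eq:classic_mcf:constr_flow}).
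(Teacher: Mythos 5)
Your argument is correct and is essentially the same reachability-plus-cut argument used in the proof this paper defers to (Lemma~4 of the cited Rost--Schmid work): take the set of substrate nodes reachable from $u$ (resp.\ reaching $u$) along positive-flow edges of the commodity $z_{i,j}$ (resp.\ $z_{k,i}$), aggregate the flow-conservation constraint~(\ref{eq:classic_mcf:constr_flow}) over that set so internal edges cancel, and derive a sign contradiction unless some node with positive $y^w_j$ (resp.\ $y^v_k$) lies inside. The cut bookkeeping and the observation that eddy currents (Remark~\ref{remark:eddy-flows}) are irrelevant are both handled correctly, so nothing further is needed.
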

\begin{proof}
See Lemma 4 in \cite{rostSchmidFPTApproximations}.
\end{proof}

The VNEP solution corresponding to a valid solution of Mixed-Integer Program~\ref{mip:matthias:MCF-formulation} can therefore be extracted as follows:
\begin{enumerate}
\item For each $i \in \reqNodes$ find a substrate node $u \in \substrateNodes$ such that the $y^u_i = 1$. Map $i$ onto $u$. $u$ is uniquely defined due to constraint~\ref{eq:classic_mcf:constr_force_emb} and integrality of the $\vec{y}$-variables.
\item For each $(i, j) \in E_R$, find a path $P^{u, v}_{i, j} \subseteq \substrateEdges$, such that $\mappingNodes(i) = u$ and $\mappingNodes(j) = v$, and $z^{u, v}_{i, j} = 1$ for all $(u, v) \in P^{u, v}_{i, j}$. Map $(i, j)$ onto $P^{u, v}_{i, j}$. Existence of such a path is guaranteed by Lemma~\ref{lem:local-connectivity-property}.
\end{enumerate}
Therefore, any solution of  Mixed-Integer Program~\ref{mip:matthias:MCF-formulation} corresponds directly to a valid mapping for the request.

A visualization of a variable assignment for a request consisting of a single edge is shown in Figure~\ref{fig:integral_flow_single_edge}, illustration how the notion of a flow in the variables $z^{\cdot, \cdot}_{i,j}$ corresponds to an edge mapping decision. Figure~\ref{fig:mcf-construction-tree} shows the construction of the MCF formulation for a request topology containing multiple edges. Unlike Figure~\ref{fig:integral_flow_single_edge}, no variable assignment is shown in the diagram. Rather, the diagram is intended to illustrate how a flow diagram representing the formulation can be constructed for requests containing multiple edges.

\begin{figure}[tbh]
	\centering
	\includegraphics[width=0.5\textwidth]{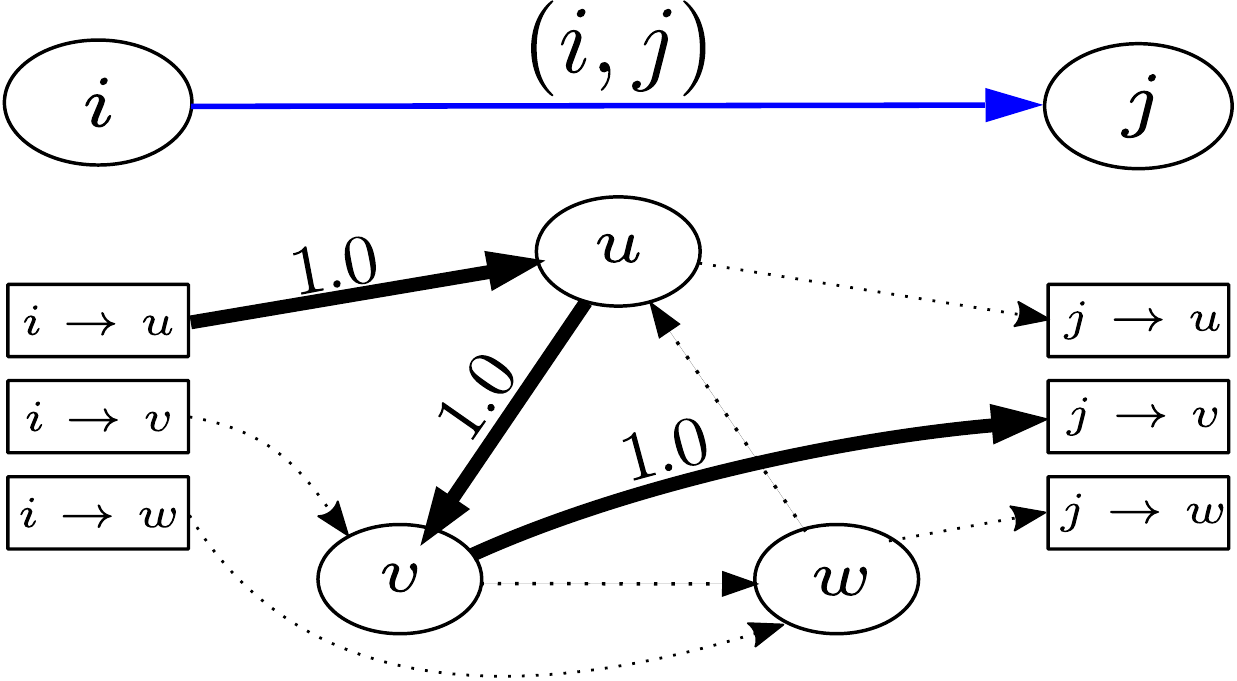}
	\caption{
		A valid variable assignment for Mixed-Integer Program~\ref{mip:matthias:MCF-formulation} for a request consisting of a single edge $(i,j)$, and the substrate introduced in Figure \ref{fig:valid_mapping_example}. The rectangular boxes on the left and right side represent the node mapping decision for $i$ and $j$, respectively. 
		The arrows between these boxes and the substrate nodes in the center represent the node mapping variables $\vec{y}$, which induce a flow represented by the arrows pointing to the substrate nodes corresponding to the node mapping decision. The substrate in the center represents the edge mapping decisions.        
		The flow induction, flow conservation and integrality constraints enforce that there is a path through the network carrying an integer flow.
		The mapping corresponding to this example flow is $\mappingNodes = \{i \mapsto u$, $j \mapsto v\}$ and $\mappingEdges = \{(i, j) \mapsto ((u, v))\}$, i.e. $(i, j)$ is mapped to the path containing the single edge $(u, v)$.
	}
	\label{fig:integral_flow_single_edge}
\end{figure}

\begin{figure}[tbph]
    \centering
    \includegraphics[width=0.75\textwidth]{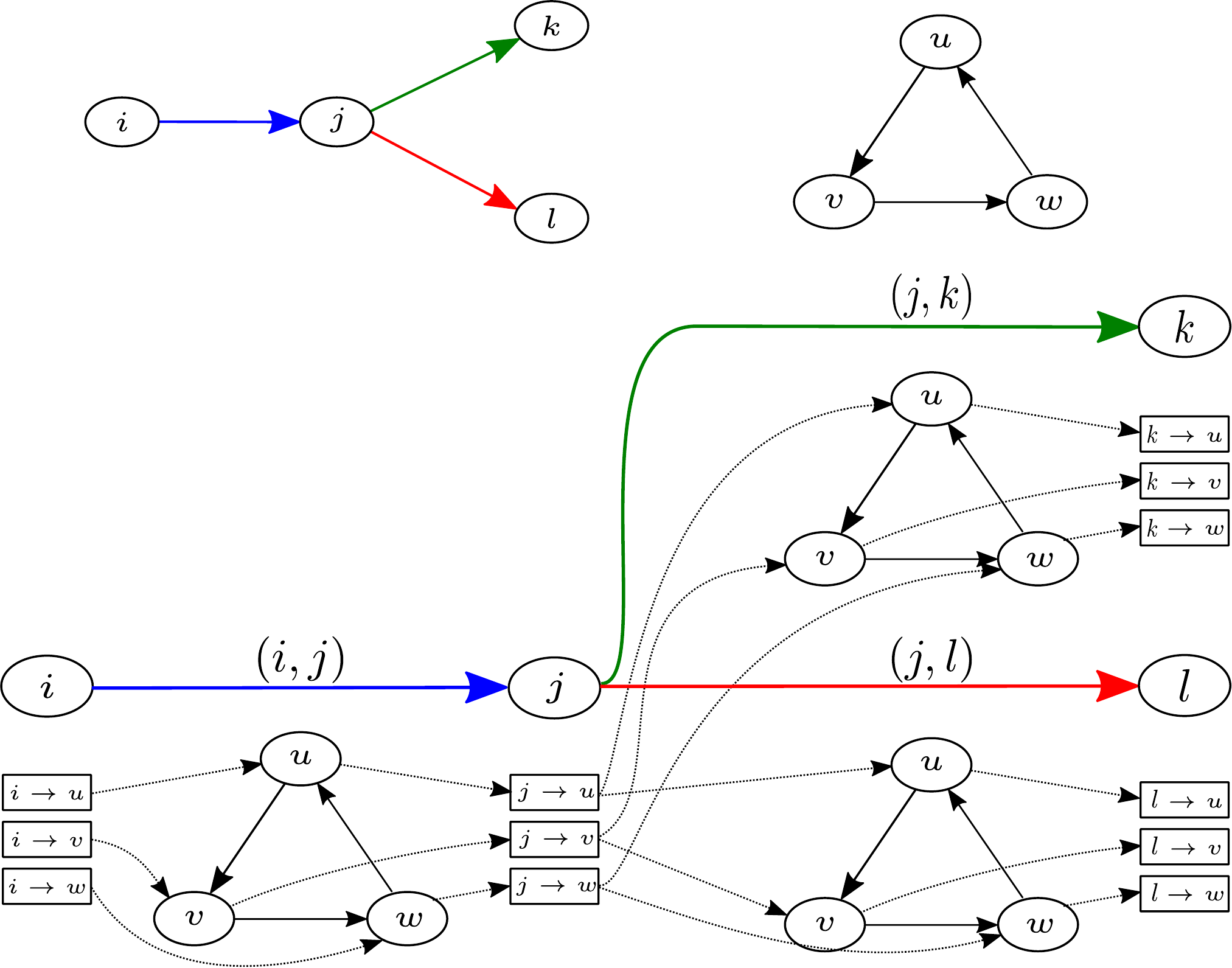}
    \caption{
An illustration of how the MCF formulation is constructed for a request with multiple edges.\newline
Top: Left: The request topology. Right: The substrate topology.\newline
Bottom: The MCF formulation, visualized in a style similar to Figure~\ref{fig:integral_flow_single_edge}. No variable assignment is shown in this diagram. The  flow diagram constructions for each edge are drawn under the edge's representation. Note that the node mapping variables for $j$ induce a flow in both edges $(j, k)$ and $(j, l)$.
    }
    \label{fig:mcf-construction-tree}
\end{figure}
\clearpage

\subsection{LP Relaxations of the MCF Formulation}\label{sec:mcf-algorithm-relaxation}

We now consider the Linear Program obtained by relaxing the integrality constraints on the node and edge mapping variables in the Multi-Commodity Flow formulation. As the example shown in Figure~\ref{fig:fractional_flow_single_edge} illustrates, due to the non-integrality, the variable assignment no longer corresponds to a single valid mapping of the request. However, as we will now discuss, the solutions for the LP relaxation may still contain valuable information about potential integer solutions, which in turn correspond to valid mappings of the request (see  Definition~\ref{def:valid-mapping}). 

\begin{figure}[htbp]
    \centering
    \includegraphics[width=0.5\textwidth]{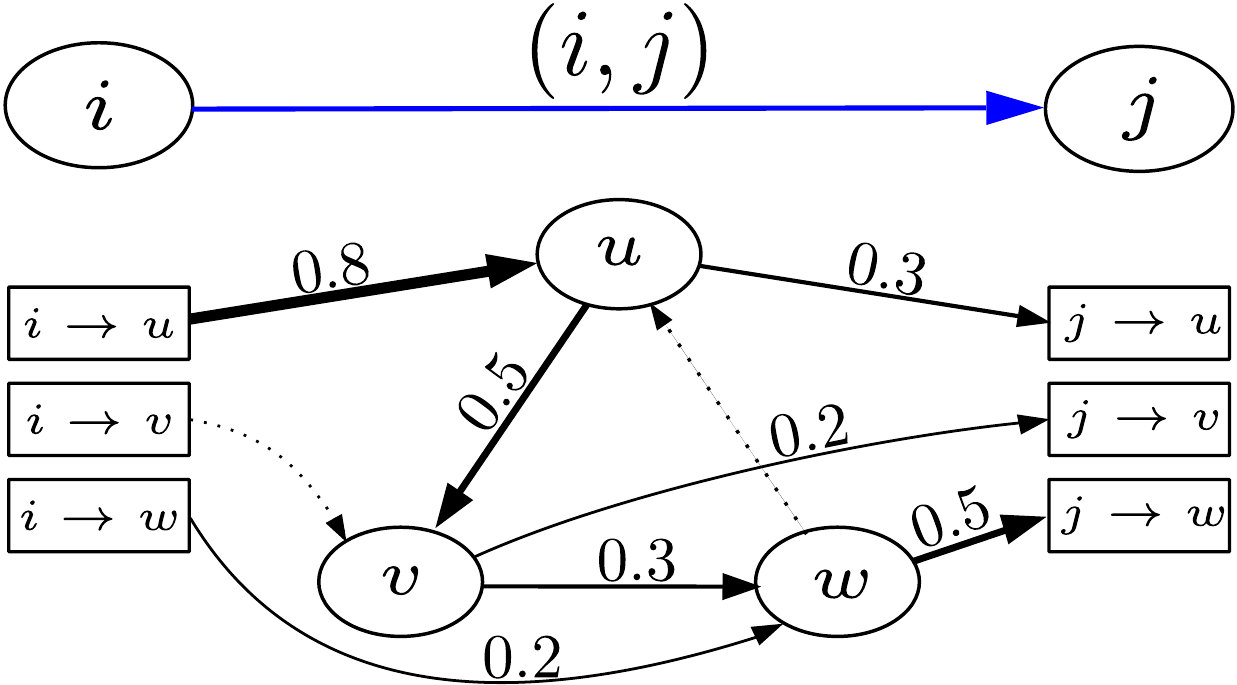}
    \caption{
        A valid variable assignment for the LP-relaxation of  Mixed-Integer Program~\ref{mip:matthias:MCF-formulation} for a request consisting of a single edge. As in the integral case, the flow induction and flow conservation constraints are satisfied, but the flow does not represent a definitive mapping decision.
    }
    \label{fig:fractional_flow_single_edge}
\end{figure}

Rost and Schmid show in \cite{rost:vne-approx-leveraging-rand-round-2018,rostSchmidFPTApproximations} that for tree-like requests, the relaxed solution of Linear Program~\ref{mip:matthias:MCF-formulation} consists of a \emph{superposition} of multiple valid mappings. More precisely, the LP solution can be converted to a \emph{convex combination} of a set of integer variable assignments, each of which corresponds to a valid mapping of the request. We now define the notion of a convex combination.
\begin{definition}(Convex Hull, Convex Combination) \label{def:conv-combi}  \\
Given a set of vectors $S = \{v_1, \ldots v_n \} \in \mathbb{R}^n$, the convex hull of $S$ is defined as  
\begin{align}
\conv(S) := \left\{\sum_{i=1}^{n} a_i \cdot v_i ~\middle|~ \vec{a} \in \mathbb{R}^n: \sum_{i=1}^{n} a_i = 1, \forall i \in \{1, ... n\}: a_i \geq 0 \right\}.
\end{align}
Any element of the convex hull is called a convex combination of $S$.
\end{definition}

In the context of the LP relaxation, each element of the set $S$ in Definition~\ref{def:conv-combi} can be considered as the vector obtained by concatenating the binary node and edge mapping variables corresponding to a valid mapping with potential capacity violations, i.e. \mbox{$(\vec{y}, \vec{z})^T \in \{0, 1\}^{|\vec{y}| + |\vec{z}|}$}. Each of these binary vectors then describes a valid integer solution, albeit with potential capacity violations, and therefore a valid mapping. 

Solutions of the LP-relaxation then correspond to a convex combination of these binary vectors. Once such a convex combination of valid mappings is obtained, some mechanism can be used to select a specific mapping. In the offline setting considered by Rost and Schmid, randomized rounding can be used (see Section~\ref{sec:01:rand-rounding}).

This property, by which solutions to the LP relaxation are convex combinations of integer solutions, does not hold for arbitrary integer programs. We call such formulations \emph{decomposable}, meaning that solutions to the linear program can be transformed to valid solutions for the original problem. We call algorithms which perform this transformation for some LP formulation \emph{decomposition algorithms}. Given a solution to the LP formulation, we refer to the corresponding convex combination of valid mappings as the \emph{decomposition} of the LP solution.

Figure~\ref{fig:fractional_flow_single_edge_decomposed} shows a decomposition for the LP solution shown in Figure~\ref{fig:fractional_flow_single_edge} into four valid mappings of varying strength. Note that the value of each variable in the LP solution for each variable is the sum of the variables in the individual mappings.

\begin{figure}[tbph]
    \centering
    \includegraphics[width=\textwidth]{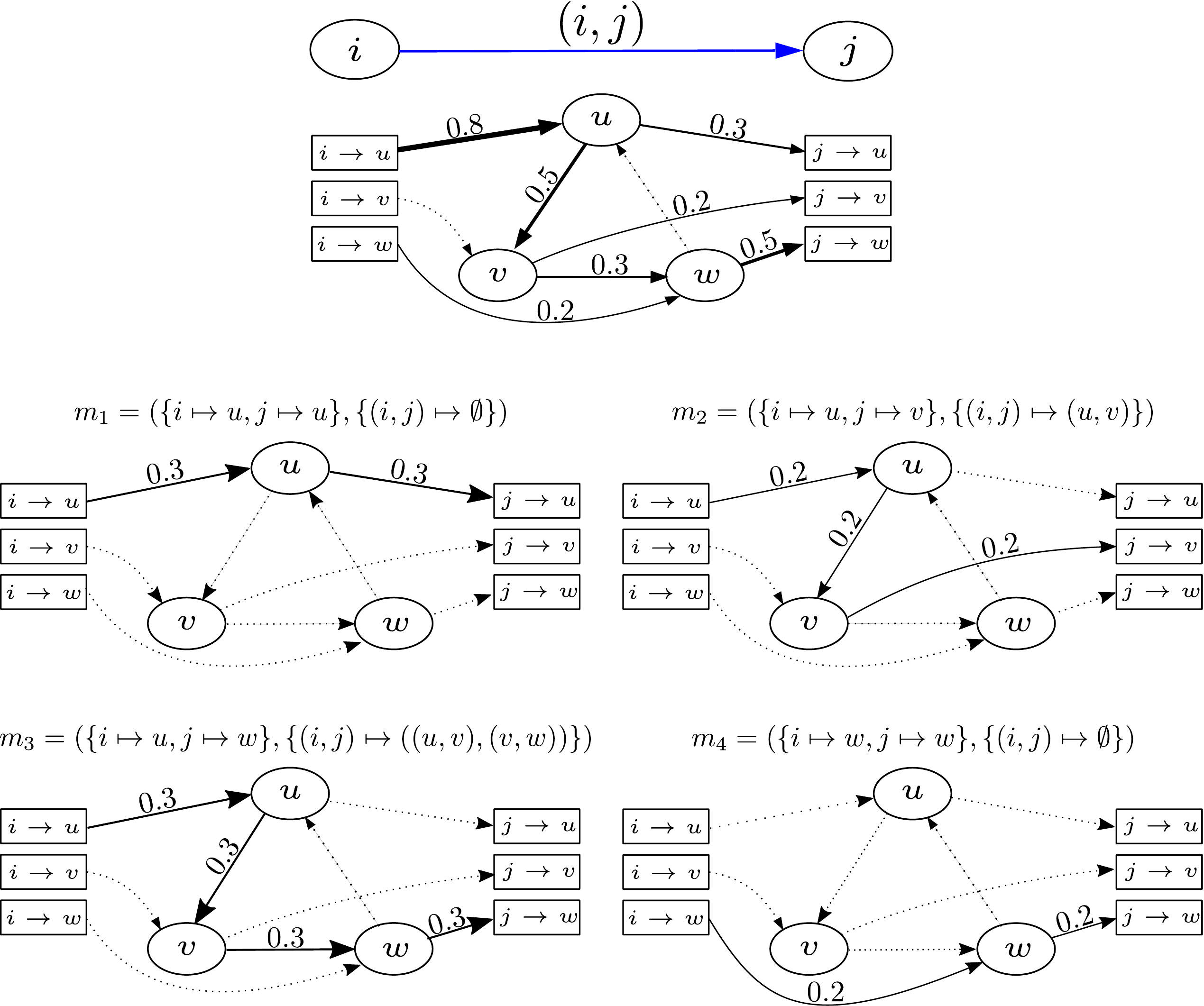}
    \caption{
A decomposition of the LP variable assignment shown in Figure~\ref{fig:fractional_flow_single_edge} to four valid mappings, illustrating that the non-integral flow of the LP solution corresponds to a superposition of several valid mappings. \newline
Top: The fractional solution.\newline
Center and Bottom: Four valid mappings $m_1$, $m_2$, $m_3$ and $m_4$ are extracted, with mapping values 0.3, 0.2, 0.3 and 0.2, respectively. Note that the sum of the four variable assignments yields the variable assignment for the LP solution.
    }
    \label{fig:fractional_flow_single_edge_decomposed}
\end{figure}

\subsubsection{Decomposition of the MCF Formulation for Tree Requests} \label{sec:mcf-decomposition}

In this section, we discuss the decomposition for LP solutions of the MCF formulation in more detail. The decomposition algorithms presented by Rost and Schmid in \cite{rost:vne-approx-leveraging-rand-round-2018, rostSchmidFPTApproximations} all rely on a top-down traversal of a rooted directed acyclic graph. In the case of tree-like request topologies, such a graph is also known as an arborescence. Since we allow for more general request topologies, which may not be acyclic or rooted, we introduce the notion of an \emph{extraction order} as an abstraction of the original request's orientation.
\begin{definition}(Extraction Order)\label{def:extraction-order}\\
Given a request topology $\reqTopologyDef$, an extraction order $\reqExtractionOrderDef$ is a directed acyclic graph, such that:
\begin{enumerate}
    \item $\reqExtractionOrder$ is a rooted in $\reqExtractionOrderRoot$, i.e. every node is reachable from $\reqExtractionOrderRoot$. 
    \item For every edge $(i, j) \in \reqEdges$, either $(i, j) \in \reqExtractionOrderEdges$ or $(j, i) \in \reqExtractionOrderEdges$, and vice versa. That is, $\undirected[\reqTopology] = \undirected[\reqExtractionOrder]$. Note that (anti-)parallel edges are not allowed in the request topology by Definition~\ref{def:introvnep:reqtop}.
\end{enumerate}
We further introduce the function $\EOEdgeToOriginal: \reqExtractionOrderEdges \rightarrow \reqEdges$, which maps each edge of the extraction order edge set to its counterpart in the original request edge set:
\begin{align}
    \EOEdgeToOriginal(i, j) = \begin{cases}
     (i, j) & \text{ if } (i, j) \in \reqEdges \\
     (j, i) & \text{ otherwise}
    \end{cases}
\end{align}
\end{definition}

The decomposition algorithm for the MCF formulation with tree-like request topologies is given as pseudocode in Algorithm~\ref{alg:matthias:decomposition:Algorithm-MCF-Tree}. We will now briefly describe how the algorithm operates and defer the reader to \cite{rostSchmidFPTApproximations, rost:vne-approx-leveraging-rand-round-2018} for a formal proof of the algorithm's correctness. In particular, Rost and Schmid prove that the algorithm returns a complete convex combination of valid mappings, and additionally, that the amount of substrate resources used for this convex combination of valid mappings is bounded by the values of the allocation variables, i.e. given the convex combination  $\PotEmbeddings$ returned by Algorithm~\ref{alg:matthias:decomposition:Algorithm-MCF-Tree},  $\sum_{(\prob, \mappingRequestIteration) \in \PotEmbeddings} \prob \cdot \allocationFunction(\mappingRequestIteration, x, y) \leq a^{x,y}$ holds for each resource $x,y \in \substrateResources$.

In each iteration of the loop starting in Line~\ref{algline:mcf-decomp:while-x-start}, the algorithm extracts a single valid mapping from the LP solution. First, a mapping for the root node is chosen and added to the mapping in Lines~\ref{algline:mcf-decomp:choose-root-mapping-node} and~\ref{algline:mcf-decomp:map-root}. The only requirement for this root node mapping is a non-zero value of the corresponding node mapping variable $y^u_\reqExtractionOrderRoot$.

The algorithm then iteratively extends the mapping along a traversal of the extraction order: For each node $i$ that is added to the mapping, the out-edges $\outEdgesExtractionOrder{i}$ are processed. Edges are mapped by performing the graph search mentioned in the local connectivity property, Lemma~\ref{lem:local-connectivity-property}. Note that if an edge is reversed in the extraction order with respect to the original request, the path $P^{v, u}_{j, i}$ extracted in Line~\ref{algline:mcf-decomp:choose-edge-path-reversed} is reversed to $P^{u, v}_{j, i}$ when assigning the edge mapping in Line~\ref{algline:mcf-decomp:assign-edge-mapping-reversed}.

Note that any node visited in this traversal is already mapped to some substrate node, as it can only be added to the queue in Line~\ref{algline:tree-decomp-add-node-to-q} once its in-edge has been mapped. The minimal value of all LP-variables that were used to decide node and edge mappings is calculated and subtracted from each used variable in Lines~\ref{alg:decomposition:compute-Vk} to \ref{algline:tree-decomp-adapt-variables}.

The algorithm terminates in polynomial time with respect to the number of LP variables, since in each iteration, at least one LP variable is decreased to zero in Line~\ref{algline:tree-decomp-adapt-variables}.

%TODO %%%%%%%%%%%%%%%%%%%%%%%%%%%%%%%%%%% TREE MCF DECOMPOSITION
\begin{figure}[h!]

%\scalebox{0.95}{
\begin{minipage}{1.01\textwidth}

%\begingroup
\removelatexerror

\begin{algorithm*}[H]

\SetKwInOut{Input}{Input}
\SetKwInOut{Output}{Output}
\SetKwFunction{ProcessPath}{ProcessPath}{}{}
\SetKwFunction{reverse}{reverse}{}{}
\SetKwFunction{LP}{LP}
\SetKwFunction{LP}{LP}

\newcommand{\SET}{\textbf{set~}}
\newcommand{\ADD}{\textbf{add~}}
\newcommand{\DEFINE}{\textbf{define~}}
\newcommand{\AND}{\textbf{and~}}
\newcommand{\LET}{\textbf{let~}}
\newcommand{\WITH}{\textbf{with~}}
\newcommand{\COMPUTE}{\textbf{compute~}}
\newcommand{\FIND}{\textbf{find~}}
\newcommand{\CHOOSE}{\textbf{choose~}}
\newcommand{\DECOMPOSE}{\textbf{decompose~}}
\newcommand{\FORALL}{\textbf{for all~}}
\newcommand{\OBTAIN}{\textbf{obtain~}}
\newcommand{\WITHPROBABILITY}{\textbf{with probability~}}

\caption{Decomposition algorithm of fractional solutions of MIP \ref{mip:matthias:MCF-formulation} for Tree Requests, reproduced with notational adaptations from \cite{rost:vne-approx-leveraging-rand-round-2018}}
\label{alg:matthias:decomposition:Algorithm-MCF-Tree}

\Input{VNEP instance $\VNEPInstance$ such that $\reqUndirectedTopology$ is a tree,
    Extraction order $\reqExtractionOrderDef$, solution to the relaxation of IP~\ref{mip:matthias:MCF-formulation}~$(\vec{y},\vec{z},\vec{l})$ }
\Output{Convex combination~$\PotEmbeddings = \{\decomp = (\prob,\mappingRequestIteration)\}_k$ of valid mappings}
\BlankLine

%  \COMPUTE {arborescence }$\VGbfs = (\VV,\VEbfs, \VVroot)$ of the undirected representation of $\VG$\\

  \SET $\PotEmbeddings  \gets \emptyset$ \AND $k \gets 1$ \AND $x \gets 1$\\
  \While{$x > 0$ \label{algline:mcf-decomp:while-x-start}}
  {
    
    \SET $\mappingRequestIteration \gets (\mappingNodesIteration,\mappingEdgesIteration)~\gets (\emptyset,\emptyset)$ \label{alg:sc-decomposition:init-maps}\\
    
    \SET $\Queue \gets \{\reqExtractionOrderRoot \}$\\
    
    \CHOOSE $u \in \substrateNodes$ \WITH $y^u_{\reqExtractionOrderRoot} > 0$ \label{algline:mcf-decomp:choose-root-mapping-node}\\ 
     \SET $\mappingNodesIteration(\reqExtractionOrderRoot)~\gets u$ \label{algline:mcf-decomp:map-root}\\
    
    \While{$|\Queue| > 0$}{  \label{alg:decomposition:begin-while-q}
      \CHOOSE $i \in \Queue$ \AND \SET$\Queue \gets \Queue \setminus \{i\}$\\
      \ForEach{$(i,j) \in \outEdgesExtractionOrder{\reqExtractionOrderEdges}(i)$}{
        \eIf{$(i,j) = \EOEdgeToOriginal(i,j)$}{
          \COMPUTE path ${P}^{u,v}_{i,j}$ from $\mappingNodesIteration(i)=u$ to  $v \in \substrateNodes$ according to Lemma~\ref{lem:local-connectivity-property}\\
        \pushline\pushline\pushline \nonl such that $y^v_{j} > 0$ and 
          $z^{u',v'}_{i,j} > 0$ hold for $(u',v') \in {P}^{u,v}_{i,j}$ \label{algline:mcf-decomp:choose-edge-path}\\
                    \vspace{2pt}
          \popline\popline\popline
        \SET $\mappingNodesIteration(j) \gets v$ \AND $\mappingEdgesIteration(i,j) \gets {P}^{u,v}_{i,j}$ \label{algline:mcf-decomp:assign-edge-mapping}\\
        }{
          \COMPUTE path ${P}^{v,u}_{j,i}$ from $v \in \substrateNodes$ to $\mappingNodesIteration(i)=u$ according to Lemma~\ref{lem:local-connectivity-property}\\  \pushline\pushline\pushline \nonl such that $y^v_{j} > 0$ and 
                    $z^{u',v'}_{j,i} > 0$ hold for $(u',v') \in {P}^{v,u}_{j,i}$ \label{algline:mcf-decomp:choose-edge-path-reversed}\\
          \vspace{2pt}
          \popline\popline\popline
          \SET $\mappingNodesIteration(j) \gets v$ \AND  $\mappingEdgesIteration(\EOEdgeToOriginal(i,j)) \gets {P}^{u,v}_{j,i} $ \label{algline:mcf-decomp:assign-edge-mapping-reversed}\\

        }
        \SET $\mathcal{Q} \gets \mathcal{Q} \cup \{j\}$ \label{algline:tree-decomp-add-node-to-q}\\
      }
    }
    \SET $\Variables_k \gets \{x\} \cup \{y^{\mappingNodesIteration(i)}_{i} | i \in \reqNodes\} \cup \{z^{u,v}_{i,j} | (i,j) \in \reqEdges, (u,v) \in \mappingEdgesIteration(i,j)\}$ \label{alg:decomposition:compute-Vk}\\
    \SET $\prob \gets \min \Variables_k$ \label{alg:decomposition:computing-prob} \\
    \SET $v \gets v - \prob$ \FORALL $v \in \Variables_k$ \AND \SET $a^{x,y} \gets a^{x,y} - \prob \cdot \allocationFunction(\mappingRequestIteration,x,y)$ \FORALL $(x,y) \in \SR$ \label{algline:tree-decomp-adapt-variables}\\  
    \ADD $\decomp = (\prob,m^k)$ to $\PotEmbeddings$ \AND \SET $k \gets k + 1$\\
  }

\KwRet{$\PotEmbeddings$}
\end{algorithm*}
%\endgroup
\end{minipage}
%}

\end{figure}

%TODO %%%%%%%%%%%%%%%%%%%%%%%%%%%%%%%%%%% TREE MCF DECOMPOSITION
\clearpage

\subsubsection{Limitations of the MCF Formulation} \label{sec:02:limitations-of-mcf}

Rost and Schmid show in \cite{rost:vne-approx-leveraging-rand-round-2018,rostSchmidFPTApproximations} that the LP relaxation of Integer Program~\ref{mip:matthias:MCF-formulation} is not decomposable for request topologies containing ``undirected cycles'', i.e. requests where the undirected topology contains a cycle. 

Formally, this limitation is stated as the following theorem.
\begin{theorem} (Non-Decomposability of the MCF Formulation for Cyclical Requests, cf. \cite{rostSchmidFPTApproximations}) \\
Solutions to the LP relaxation of Integer Program~\ref{mip:matthias:MCF-formulation} can in general not be decomposed into convex combinations of valid mappings, if the request graph contains cycles. Accordingly, the integrality gap of the LP relaxation is unbounded for such request graphs.
\end{theorem}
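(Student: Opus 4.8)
The plan is to prove both parts by exhibiting a single explicit VNEP instance whose request graph contains an undirected cycle together with a feasible fractional point of the LP relaxation of Integer~Program~\ref{mip:matthias:MCF-formulation} that provably cannot occur in any convex combination of valid mappings, and then to amplify this into an unbounded integrality gap. The structural reason such a counterexample must exist is that the only constraints of the MCF formulation coupling node-mapping variables of distinct request nodes are the per-edge flow-conservation constraints~(\ref{eq:classic_mcf:constr_flow}), which involve only the two endpoints of a single request edge; nothing forces a globally consistent choice of node images around a cycle. This is precisely the mechanism by which the half-integral point assigning $\tfrac12$ to every edge of an odd cycle lies in the fractional matching polytope but not in the integral matching polytope, and I would transplant that obstruction into the VNEP.

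Concretely, let $\reqTopology$ be the directed triangle $i \to j$, $j \to k$, $k \to i$, which Definition~\ref{def:introvnep:reqtop} permits (no loops, no antiparallel edges) and whose undirected version is $C_3$. Give $i$, $j$, $k$ three pairwise distinct node types, with candidate substrate nodes $\{a_1, a_2\}$, $\{b_1, b_2\}$, $\{c_1, c_2\}$ respectively, and choose generous capacities and small demands so that constraints~(\ref{eq:classic_mcf:constr_forbid_small_nodes}) and~(\ref{eq:classic_mcf:constr_forbid_small_edges}) do not force any mapping variable to zero. Take as substrate edges the two ``untwisted'' layers $\{(a_1,b_1),(a_2,b_2)\}$ for edge $(i,j)$ and $\{(b_1,c_1),(b_2,c_2)\}$ for edge $(j,k)$, together with the ``twisted'' layer $\{(c_1,a_2),(c_2,a_1)\}$ for edge $(k,i)$. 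The fractional point sets $y_i^{a_1}=y_i^{a_2}=\tfrac12$ and likewise for $j$ and $k$, and routes each request edge as two half-flows along its layer, e.g.\ $z_{i,j}^{a_1,b_1}=z_{i,j}^{a_2,b_2}=\tfrac12$; checking~(\ref{eq:classic_mcf:constr_force_emb}) and~(\ref{eq:classic_mcf:constr_flow}) node by node shows it is LP-feasible.

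The heart of the argument is that this point is not decomposable. If it were a convex combination $\sum_\ell f_\ell (\vec{y}^\ell,\vec{z}^\ell)$ with $f_\ell>0$ and each $(\vec{y}^\ell,\vec{z}^\ell)$ the $0/1$ indicator of a valid mapping $m_\ell$, then each $m_\ell$ must have its support inside the support of the fractional point (a convex combination with strictly positive coefficients can only involve points whose support is contained in that of the combination). For such an $m_\ell$, $m_V(i)=a_\epsilon$ for some $\epsilon\in\{1,2\}$; since $m_V(i)\neq m_V(j)$, the request edge $(i,j)$ maps to a nonempty substrate path whose edges lie in $\{(a_1,b_1),(a_2,b_2)\}$, forcing this path to be the single edge $(a_\epsilon,b_\epsilon)$ and hence $m_V(j)=b_\epsilon$; the same reasoning along $(j,k)$ gives $m_V(k)=c_\epsilon$; but then $(k,i)$ must map to a path from $c_\epsilon$ to $a_\epsilon$ inside $\{(c_1,a_2),(c_2,a_1)\}$, and the unique such path leaves $c_\epsilon$ and arrives at $a_{3-\epsilon}\neq a_\epsilon$ --- a contradiction. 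So no valid mapping has its support inside that of the fractional point, and the point is therefore not in the convex hull of valid mappings: the MCF formulation is not decomposable on this instance, and the obstruction plainly embeds into any request whose undirected graph contains a cycle.

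The step I expect to be the main obstacle is upgrading this to the claimed \emph{unbounded} integrality gap, because the forbid-small constraints~(\ref{eq:classic_mcf:constr_forbid_small_nodes})--(\ref{eq:classic_mcf:constr_forbid_small_edges}) rule out the crude trick of simply shrinking a needed resource below its demand. I would instead amplify in the offline, many-request setting: take many copies of a cyclic request and a shared resource that every \emph{integral} valid mapping of a copy must load fully, but that the fractional superposition loads only a vanishing fraction of --- again exploiting that a cyclic request lets the LP spread a node's image over several substrate locations at once --- so that the LP accommodates far more copies (or, with a profit objective, even achieves $\optLP>0=\optIP$), making $\optLP/\optIP$ grow without bound. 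Engineering these capacities while keeping the fractional solution feasible and every integral mapping blocked is the routine-but-delicate part, and follows the construction of Rost and Schmid in~\cite{rostSchmidFPTApproximations}.
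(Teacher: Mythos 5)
Your non-decomposability argument is correct and rests on exactly the mechanism the paper (and the cited Theorem~7 of Rost and Schmid) uses: a half-integral LP solution on a cyclic request whose per-edge flows live in two disjoint ``layers'' that cannot be reconciled into one consistent node mapping. The paper's counterexample (Figure~\ref{fig:inconsistent_mapping_example}) realizes the undirected cycle as a confluence with two directed paths from $i$ to $k$, whereas you use a directed triangle; this difference is immaterial. Your support argument --- every mapping appearing with positive coefficient in a convex combination must have its support contained in the support of the fractional point, and no valid mapping can close the twisted cycle within that support --- is the right formalization of the paper's informal ``the two flows never intersect''.

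The genuine gap is the second claim of the theorem, the unbounded integrality gap, which your proposal only sketches and defers to the literature. Note that your instance, as constructed with ``generous capacities'', actually \emph{does} admit a valid integral mapping (map everything to the index-$1$ layer and route $(k,i)$ over the long path $c_1 \to a_2 \to b_2 \to c_2 \to a_1$ through the full substrate cycle), so on that instance the gap is finite and your counterexample only rules out decomposition, not integral feasibility. The standard completion, which works in this thesis's single-request setting and avoids your detour through many requests (which sits awkwardly with Problem Statement~\ref{def:VNEP}), is to choose the demand $\reqDemand(k,i)$ strictly larger than the capacities of $(a_1,b_1),(a_2,b_2),(b_1,c_1),(b_2,c_2)$ while keeping $\reqDemand(i,j)$ and $\reqDemand(j,k)$ below them: Constraint~(\ref{eq:classic_mcf:constr_forbid_small_edges}) then zeroes only the $z_{k,i}$-variables on those four edges, so the half-integral point survives unchanged, but every integral mapping must route $(k,i)$ inside the twisted layer $\{(c_1,a_2),(c_2,a_1)\}$ and is therefore infeasible. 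This yields $\optLP > 0 = \optIP$ under a profit objective and hence an unbounded gap; without this (or an equivalent) capacity engineering step, the second half of the theorem remains unproven.
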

\begin{proof}
See Theorem 7 in \cite{rostSchmidFPTApproximations}.
\end{proof}

In the following, we discuss this result using an example of a variable assignment satisfying all constraints of the MCF formulation for a cyclical request. The flow shown in Figure~\ref{fig:inconsistent_mapping_example} satisfies all constraints of the MCF formulation, i.e. flow induction (\ref{eq:classic_mcf:constr_force_emb}), flow conservation (\ref{eq:classic_mcf:constr_flow}) and substrate capacity (\ref{eq:classic_mcf:constr_node_load}), (\ref{eq:classic_mcf:constr_edge_load}) constraints. 

By following the differently colored flows in Figure~\ref{fig:inconsistent_mapping_example}, each of which originates at a specific mapping decision of the root node $i$, no consistent mapping of node $k$ in both branches can be reached. Since the flows do not intersect at any point, the decomposition is unambiguous. 

This counterexample shows that fractional solutions to the MCF MIP formulation can in general not be decomposed to valid mappings when considering request topologies containing cycles.

\begin{figure}[tbph]
    \centering
    \includegraphics[width=0.7\textwidth]{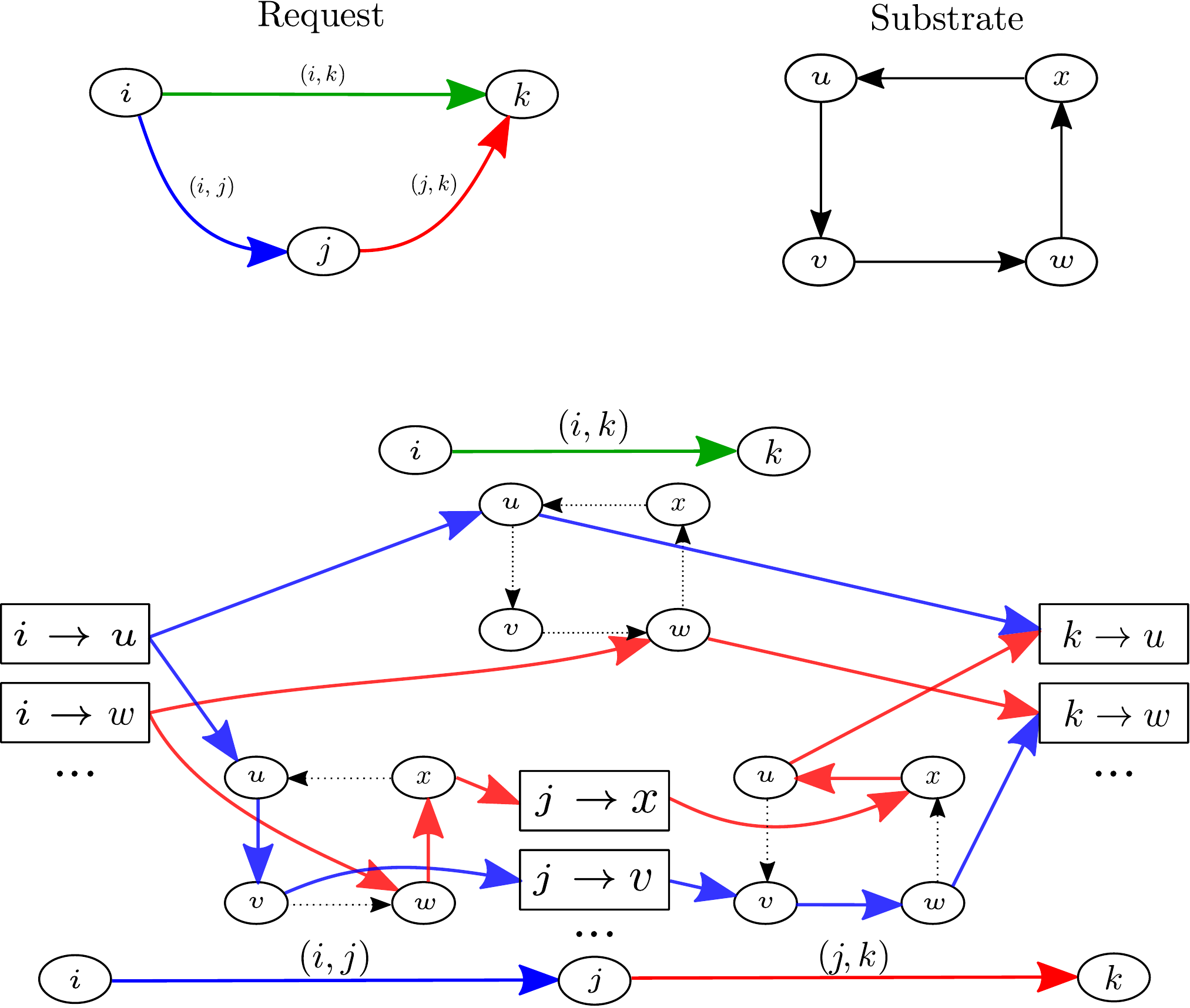}
    \caption{
A valid variable assignment for the relaxation of Integer Program~\ref{mip:matthias:MCF-formulation}, from which no valid mapping can be extracted. \newline
Top: The request and substrate graphs. \newline
Bottom: A valid, but non-decomposable variable assignment of the LP-relaxation of IP~\ref{mip:matthias:MCF-formulation}. Every non-zero flow variable, indicated by a red or blue arrow, has value 0.5. The flow corresponding to the mapping $i \mapsto u$ is shown in blue, and the flow for $i \mapsto w$ in red. Note that the two flows do not intersect, which implies that the edge mapping is unambiguous. The flows starting from either mapping decision for $i$ always lead to different mapping decisions for $k$. Therefore, no consistent mapping for the paths $(i, k)$ and $(i, j, k))$ can be extracted.
    }
    \label{fig:inconsistent_mapping_example}
\end{figure}

\subsection{The Base VNEP Approximation Algorithm} \label{sec:02:the-base-vnep-alg}

In this section, we will discuss the VNEP approximation algorithm for general request topologies, which was presented by Rost and Schmid in \cite{rostSchmidFPTApproximations}. Since this approximation algorithm forms the basis for the extensions presented in this thesis, we refer to it as the \emph{base algorithm}.

The base algorithm is structured very similarly to the algorithm presented in Section~\ref{sec:02:the-mcf-alg} for tree-like requests. 
Given a VNEP instance $\VNEPInstance$, where $\reqTopology$ is tree-like, the base algorithm proceeds as follows:
\begin{enumerate}
\item Define an extraction order $\reqExtractionOrder$, e.g. by performing a breadth-first search of the undirected topology $\undirected[\reqTopology]$.
\item Solve the decomposable linear programming formulation. 
\item Decomposition of the LP solution, yielding convex combination of valid mappings.
\item Select a mapping.
\end{enumerate}
The extensions proposed in this thesis are focused on the second and third step. 
We call the LP formulation associated with the base algorithm the \emph{base LP formulation}, and the decomposition algorithm the \emph{base decomposition algorithm}.

\subsubsection{The Approach of the Base Algorithm}

We first formalize the notion of an undirected cycle referenced in Section~\ref{sec:02:limitations-of-mcf} by introducing the concept of a \emph{confluence}.
\begin{definition}(Confluence) \label{def:confluence} \\
Given an extraction order $\reqExtractionOrderDef$, a confluence $\confluence = \path^1 \cup \path^2$ is a pair of paths $\path^1, \path^2 \subset \reqExtractionOrderEdges$ that both start in node $i \in \reqNodes$ and end in node $j\in \reqNodes$, and are otherwise node-disjoint. We call $i$ the start node and $j$ the target or end node of $\confluence$. 
We further denote by $\confluenceSet$ the set containing all confluences in $\reqExtractionOrder$.
\end{definition}

When an extraction order contains a confluence, a decomposition algorithm cannot rely on the local connectivity enforced by flow conservation, since the flows on both paths of the confluence must be reconciled when they merge at the confluence end node. Therefore, an additional mechanism is necessary, by which consistency can be enforced with respect to the confluence start and end node mappings.

The base algorithm introduces such a mechanism by introducing multiple \emph{LP subformulations} for each edge in the confluence. The subformulations consist of the variables and constraints of the MCF formulation for the subgraph defined by a single edge. 

Each subformulation is associated with a specific mapping of the confluence's end node. If an edge $e$ ends in a confluence end node, special constraints are added for each subformulation associated with $e$, which explicitly forbid any node mapping that is inconsistent with the mapping associated with the subformulation. 

Figure~\ref{fig:base_LP_cycle_example} illustrates the concept of LP subformulations. Note that any flow entering a subformulation associated with some end node mapping may only leave the subformulation through a single node mapping variable. Therefore, consistency of the confluence start node's mapping is enforced by flow induction in both branches of the confluence, while consistency of the confluence start node's mapping is enforced through the choice of the LP subformulation.

\begin{figure}[tbph]
    \centering
    \includegraphics[width=1.0\textwidth]{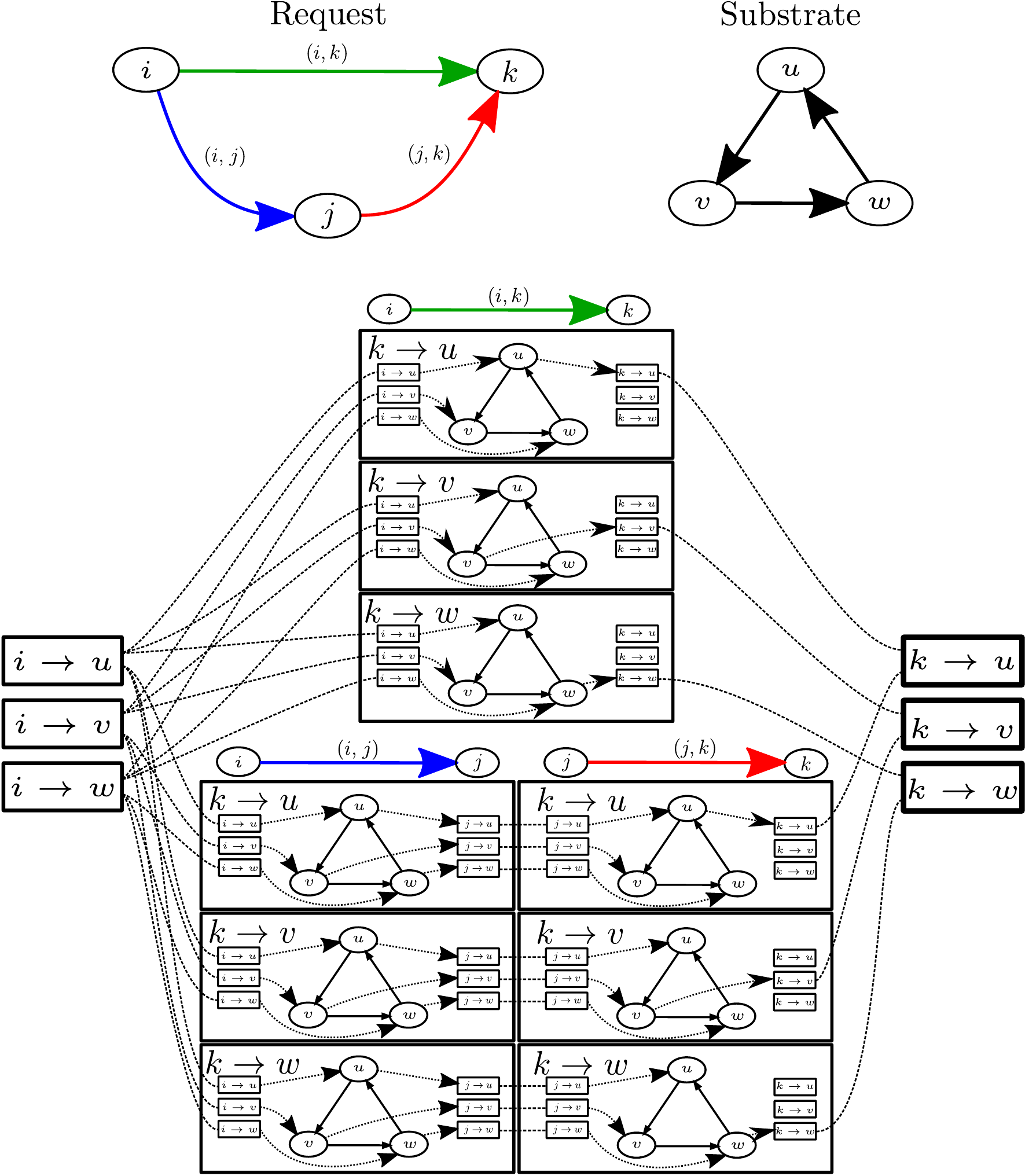}
    \caption{
An illustration of how the use of subformulations in Linear Program~\ref{IP:novel-AC} forces consistent mappings of confluence end nodes. \newline
Top: Left: The request topology, a confluence ending in $k$. Right: The substrate topology.\newline
Bottom: The construction of Linear Program~\ref{IP:novel-AC}, shown in a style similar to Figure~\ref{fig:fractional_flow_single_edge}. Each subformulation is enclosed in a rectangle annotated with the associated mapping of confluence end node $k$. 
On the left side, each global node mapping variable for $i$ is connected to its counterpart in each subformulation. On the right side, only one node mapping variable for $k$ is connected in each subformulation. All other mappings for $k$ are explicitly forbidden in the subformulations by Constraint (\ref{alg:lp:novel:forbidding-nodes-in-sub-lps}).
The bag variables $\vec{\gamma}$ and the global node variables for node $j$ are omitted in this diagram for clarity. In this example, by transitivity, $\subLP[y^x_j][(i, j), \{k \mapsto y\} ] = \subLP[y^x_j][(j, k), \{k \mapsto y\} ]$ holds for all substrate nodes $x, y \in \substrateNodes$ as indicated.
    }
    \label{fig:base_LP_cycle_example}
\end{figure}

A request edge may lie on multiple overlapping confluences. In this case, the base algorithm must ensure that the edge mapping is reconcilable with respect to the mapping of \emph{all} of the confluence end nodes. Therefore, LP-subformulations are introduced for every possible combination of target node mappings. Figure~\ref{fig:LP_cycle_nested_example} shows an example of such a case with two overlapping confluences. Only subformulations for the edge $(i,j)$ are shown. 

As Figure~\ref{fig:LP_cycle_nested_example} shows, the confluences to which any given edge belongs are indicated by labeling the edge with the set of the corresponding confluences' end nodes.

\begin{figure}[p]
    \centering
    \includegraphics[width=1.0\textwidth]{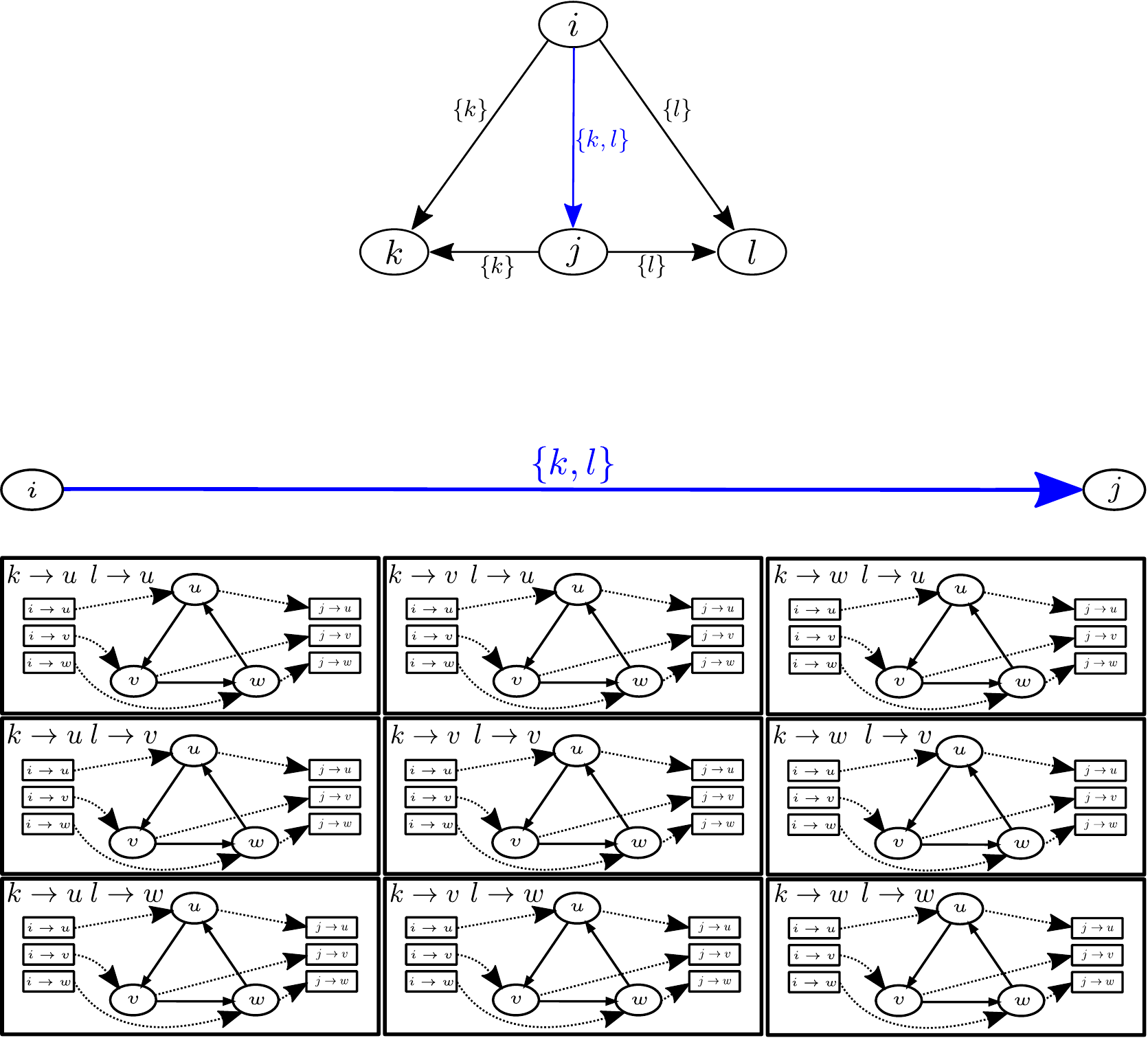}
    \caption{
Construction of LP subformulations for a request with overlapping confluences. The edge $(i, j)$ lies on two confluences, one ending in $k$ and one ending in $l$. \newline
Top: The request topology. Edges are labeled with the end nodes of confluences they belong to. The substrate topology is the same as in Figure~\ref{fig:base_LP_cycle_example}. \newline
Bottom: The LP subformulations for the edge $(i, j)$. Each of the nine possible combinations of mapping decisions for $k$ and $l$ is represented by a subformulation. Note that the diagram arranges the subformulations in a grid due to space limitations. Each of the subformulations must be connected to the global node mapping variables, similar to Figure~\ref{fig:base_LP_cycle_example}.
    }
    \label{fig:LP_cycle_nested_example}
\end{figure}

\subsubsection{Definitions}
In this subsection, we introduce the prerequisite concepts for the base algorithm more formally, following the terminology used in \cite{rostSchmidFPTApproximations}.

We first describe the \emph{confluence edge label assignment}, which assigns to each edge the set of end nodes of confluences containing the edge.
\begin{definition}(Confluence Edge Label Assignment) \label{def:confluence-edge-labels} \\
Given an extraction order $\reqExtractionOrderDef$, we define for each edge $e \in \reqExtractionOrderEdges$ a set of edge labels $\labelsetEdge \subset \reqNodes$, such that for each confluence $\confluence[i][j] \in \confluenceSet$ containing $e$, the end node is contained in the label assignment for $e$, i.e. $j \in \labelsetEdge$. 
Combining the edge label sets for each edge in the extraction order, we write $\labelsetsEdgesExtractionOrder$ for the confluence edge label assignment as a whole.
\end{definition}

The confluence edge label assignment is referred to as \emph{extraction edge labels} in \cite{rostSchmidFPTApproximations}. Since the extension of the base algorithm proposed in Section~\ref{sec:multiroot} requires different edge label assignments, we use the more specific term.

We now briefly discuss how to practically identify confluences and calculate edge labels given an extraction order $\reqExtractionOrderDef$. By Definition~\ref{def:confluence}, any confluence end node has at least two incoming edges. Conversely, any node $j$ with multiple in-edges must be a confluence end node. The request topology does not admit parallel edges by Definition~\ref{def:introvnep:reqtop} and since the extraction order is a rooted graph, any of $j$'s in-neighbors must be reachable from the root $\reqExtractionOrderRoot$, which implies the existence of at least one common ancestor. The confluence start node is the common ancestor of $j$'s parent nodes which is farthest from the root. To find it, we traverse the extraction order's edges in reverse order towards the root. Any edge that lies on a path from the common ancestor to $j$ is labeled with $j$.

The following property of the edge label assignment is important for proving the correctness of the base decomposition algorithm.
\begin{lemma}(Equality of Incoming Label Sets, cf. \cite{rostSchmidFPTApproximations}) \label{lemma:incomingLabelsUnique}\\
Given an extraction order $\reqExtractionOrderDef$, a node $i \in \reqNodes$ and any pair $e, e' \in \inEdgesExtractionOrder{i}$,   $\labelsetEdge[e] = \labelsetEdge[e']$ holds, where $\labelsetEdge[e]$ is the label assignment for $e$ in the confluence edge label assignment $\labelsetsEdges$.
\end{lemma}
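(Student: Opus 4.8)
The plan is to argue directly from the defining property of the confluence edge label assignment (Definition~\ref{def:confluence-edge-labels}): taking $\labelsetEdge[e]$ to be exactly what the labeling procedure described afterwards produces, $j \in \labelsetEdge[e]$ holds precisely when there is a confluence in $\confluenceSet$ with end node $j$ that contains $e$. By symmetry it suffices to establish the one-sided inclusion $\labelsetEdge[e] \subseteq \labelsetEdge[e']$ for two distinct incoming edges $e = (a,i)$ and $e' = (a',i)$ of a node $i$ (a node with a unique in-edge being vacuous; note $a \neq a'$, since Definition~\ref{def:introvnep:reqtop} forbids parallel edges). So fix $j \in \labelsetEdge[e]$, witnessed by a confluence $\confluence = P \cup Q$ ending at $j$, with $P$ and $Q$ both running from a common start $s$ to $j$, internally node-disjoint, and $e \in P$. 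Since $e = (a,i) \in P$, the node $i$ lies on $P$ and is not its first vertex, so $P$ decomposes as $P = P[s..a] \cdot (a,i) \cdot P_{\mathrm{suf}}$ with $P_{\mathrm{suf}}$ an $i$--$j$ subpath; also $i$ is an ancestor of $j$, or $i = j$. The goal is to build, in each case, a new confluence ending at $j$ containing $e'$, which yields $j \in \labelsetEdge[e']$.

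First I would dispatch the cases where $a'$ already lies on $\confluence$, using acyclicity to locate it: since $(a',i)$ is an edge, $a'$ is not a strict descendant of $i$, and $i \notin Q$ whenever $i \neq j$. (a) If $a' \in P$, then $a'$ occurs strictly before $i$; replacing the portion of $P$ up to $i$ by the shortcut through $e'$, i.e. $\widetilde P := P[s..a'] \cdot (a',i) \cdot P_{\mathrm{suf}}$, gives a simple $s$--$j$ path whose vertex set is contained in that of $P$, so $\widetilde P \cup Q$ is again a confluence ending at $j$ and contains $e'$. (b) If $a' \in Q$, I would re-root at $a'$: the paths $(a',i) \cdot P_{\mathrm{suf}}$ and $Q[a'..j]$ both go from $a'$ to $j$, and they share only their endpoints, because $i \notin Q$ and the interior vertices of $P_{\mathrm{suf}}$ (interior vertices of $P$) are disjoint from those of $Q[a'..j]$ (interior vertices of $Q$) by the confluence property; this confluence ends at $j$ and contains $e'$. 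The subcase $i = j$ is handled analogously, using that then $a$ and $a'$ are both parents of $j$ and $P$ ends with $(a,j) = e$. Degenerate situations such as $a' = s$ (where a path collapses to a single edge) only make internal disjointness easier and need no separate treatment.

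The remaining case, $a' \notin V(\confluence)$, is where I expect the real work to lie. For it I would invoke an auxiliary structural fact about rooted DAGs, proved by a standard shortest-pair / exchange argument: if two nodes share a common ancestor, then either one is an ancestor of the other, or there is a common ancestor $w$ joined to both by internally node-disjoint directed paths; consequently every node with at least two parents is the end node of some confluence, and one may choose it so that its two paths end with two prescribed in-edges. Applying this to the parents $a, a'$ of $i$, extending $Q$ backwards to the relevant common ancestor $w$ of $a'$ and $s$, and splicing in $P_{\mathrm{suf}}$ (which $Q$ already avoids), one assembles a $w$--$j$ confluence through $e'$. The delicate point is that the backward extension of $Q$ and the $w$--$a'$ path may a priori meet the reused segments in unwanted interior vertices; I would control this by choosing the whole configuration minimally --- e.g. minimizing the total number of edges, or the number of vertices shared by the two assembled paths --- and showing that any such intersection lets one splice out a strictly smaller configuration, contradicting minimality. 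This minimality bookkeeping is the only genuinely technical part; the rest is routine path surgery.
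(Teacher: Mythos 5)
The paper does not prove this lemma itself --- it defers to Lemma~15 of Rost and Schmid --- so there is no in-paper argument to compare yours against, and I can only assess the proposal on its own terms. Cases (a) and (b) are correct as written: the shortcut $P[s..a']\cdot(a',i)\cdot P_{\mathrm{suf}}$ has vertex set contained in that of $P$, so it inherits internal disjointness from $Q$, and in (b) the interiors of the two branches sit inside the interiors of $P$ and $Q$ respectively. Your auxiliary fact about rooted DAGs (a common ancestor joined by internally disjoint paths unless one node is an ancestor of the other, hence a confluence into any node of in-degree at least two whose branches end with two prescribed in-edges) is also correct via the minimality argument you indicate, and it settles the subcase $i=j$ completely.

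The one place where the plan, taken literally, would stall is the minimality bookkeeping in case (c). If you minimize total edge count over pairs $(A,B)$ of paths from a common $w$ to $j$ with $(a',i)\in A$, a shared internal vertex $v$ lying on the reused segment $\{i\}\cup V(P_{\mathrm{suf}})$ cannot be spliced out: truncating both paths at $v$ destroys the edge $(a',i)$, and the alternative reroutings do not obviously decrease your measure. The rescue is an observation you have already half-made: every vertex on a root-to-$a'$ path $W$ or a root-to-$s$ path $U$ is an ancestor of $i$, so by acyclicity none of them can lie in $\{i\}\cup V(P_{\mathrm{suf}})$, nor can $j$; likewise $U$ meets $Q$ only in $s$. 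Hence $X := W\cdot(a',i)\cdot P_{\mathrm{suf}}$ and $Y := U\cdot Q$ are both simple paths from the root to $j$ with distinct last edges, and $Y$ meets $\{i\}\cup V(P_{\mathrm{suf}})$ only in $j$. Consequently the last vertex of $X$ before $j$ that also lies on $Y$ must lie in the prefix $W$ (at or before $a'$), so truncating both paths at that vertex preserves $(a',i)$ and the standard last-common-vertex argument yields the desired confluence --- no minimization over configurations is needed, and this uniform construction in fact subsumes your cases (a) and (b). With that substitution your argument is complete and, as far as I can tell, correct.
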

\begin{proof}
See Lemma 15 in \cite{rostSchmidFPTApproximations}.
\end{proof}

Since a node's out-edges may have different, but overlapping label sets, the base algorithm must ensure that the flows induced in each edge's subformulations are consistent. To this end, another layer of variables is introduced, aggregating edges with overlapping label assignments. Rost and Schmid introduce the notion of \emph{edge bags}, which partition a request node's out-edges according to their edge labels, such that any of a node's edges whose label sets overlap are included in the same edge bag.
\begin{definition}(Edge Bag)\label{def:edge-bags}\\
Given an extraction order $\reqExtractionOrderDef$ and some node $i \in \reqNodes$, the set of edge bags $\BagIPlus := \{\outEdgeBag | \outEdgeBag \subseteq \outEdgesExtractionOrder{i}\}$ is defined as a disjoint partition of $i$'s outgoing edges, i.e. the following properties hold:
\begin{enumerate}
\item $\bigcup \BagIPlus = \outEdgesExtractionOrder{i}$
\item For all $e \in \outEdgeBag$ and $e' \in \outEdgeBag[i][c]$ with $b \neq c$, $\labelsetEdge \cap \labelsetEdge[e'] = \emptyset$.
\item $\BagIPlus$ is the partition minimizing the size of the associated label set, i.e. minimizing \\
$\max_{\outEdgeBag \in \BagIPlus} \left|\bigcup_{e \in \outEdgeBag} \labelsetEdge\right|$
\end{enumerate}
\end{definition}
While the first two properties of the definition of edge bags are sufficient to ensure the correctness of the base algorithm, the third property is added to uniquely identify the partition.

For each edge bag, the associated \emph{bag label set} is defined as the union of the contained edges' label sets.
\begin{definition}(Bag Label Set)\label{def:bag-label-set}\\
Given an extraction order $\reqExtractionOrderDef$, a request node $i \in \reqNodes$ and an edge bag $\outEdgeBag \in \BagIPlus$, we define the bag label set $\bagLabelSet$ as $\bagLabelSet := \bigcup_{e \in B_b} \labelsetEdge$.
We further denote by $\bagLabelSets := \{\bagLabelSet | \outEdgeBag \in \BagIPlus \}$ the set containing a node's bag label sets.
\end{definition}

We will now define \emph{mapping spaces}. Given a set of nodes, the corresponding mapping space is the set containing all possible combinations of mappings of the individual nodes.
\begin{definition}(Mapping Space) \\
Given an extraction order $\reqExtractionOrderDef$ and some set of request nodes $X \subseteq \reqNodes$, the mapping space $\MappingSpace[X]$ is defined as the set containing each combination of possible node mappings for the nodes contained in $X$. Formally, we define the mapping space as the set of all functions from $X$ to $\substrateNodes$.
\end{definition}
For example, for the VNEP instance from Figure~\ref{fig:base_LP_cycle_example}, the mapping space for the node set $\{i, j\}$ is given by
\begin{align}
\MappingSpace[\{i, j\}] =
\left\{
\begin{array}{ccc}
\{ i \mapsto u, j\mapsto u \}, & \{ i \mapsto v, j\mapsto u \}, & \{ i \mapsto w, j\mapsto u \}, \\ 
\{ i \mapsto u, j\mapsto v \}, & \{ i \mapsto v, j\mapsto v \}, & \{ i \mapsto w, j\mapsto v \}, \\ 
\{ i \mapsto u, j\mapsto w \}, & \{ i \mapsto v, j\mapsto w \}, & \{ i \mapsto w, j\mapsto w \}
\end{array} 
\right\}.
\end{align}

The next definition of a \emph{mapping projection} gives a method to restrict request mappings to subsets of their domain.
\begin{definition}(Mapping Projection) \\
Let  $V, V' \subseteq \reqNodes$ be sets of request nodes and $\mappingNodes: V \rightarrow \substrateNodes$ be a node mapping defined over $V$. The mapping projection of $\mappingNodes$ onto $V'$ is a function $\restrict[\mappingNodes][V']: V \cap V' \rightarrow \substrateNodes$, with
\begin{align}
\restrict[\mappingNodes][V'] (i) := \mappingNodes(i) \qquad \forall i \in V \cap V'.
\end{align}
We analogously define the projection for edge mappings. Given $E, E' \subseteq \reqEdges$, and an edge mapping $m_E$, we define $\restrict[\mappingEdges][E']: E \cap E' \rightarrow \substratePathSet$ as
\begin{align}
\restrict[\mappingEdges][E'] (e) := \mappingEdges(e) \qquad \forall e \in E \cap E'.
\end{align}
Finally, we extend the notation to request mappings: Let $G, G' := (V', E')$ be two subgraphs of the request. Let $\mappingRequestDef$ be a mapping for $G$. We define 
\begin{align}
\restrict[\mappingRequest][G'] := (\restrict[\mappingNodes][V'], \restrict[\mappingEdges][E']).
\end{align}
\end{definition}
Mapping projections are very useful to compare different mappings: Consider two node sets $V, V' \subseteq \reqNodes$ and two node mappings $m$ and $m'$ defined over $V$ and $V'$, respectively. We can now use $\restrict[m][V'] = \restrict[m'][V]$ to express that the mappings agree on the intersection of $V$ and $V'$. However, we will usually use the more explicit notation $\restrict[m][V' \cap V]$ to emphasize that the mapping projection is defined for the intersection of the domain of $m$ with the set $V'$ to which the mapping is projected.

We introduce \emph{Edge Subgraphs}, which are used to define the LP subformulations. An edge subgraph is a subgraph of the request induced by the tail and head node of a single edge.
\begin{definition}(Edge Subgraphs) \\
Given a request topology $\reqTopologyDef$ and an edge $e \in \reqEdges$, the edge subgraph $\VGe = (\VVe, \VEe)$ is the subgraph induced by the head and tail node of $e$, i.e. for $(i, j) \in \reqEdges$, we define $\VVe[(i, j)] := \{i, j\}$ and $\VEe[(i, j)] := \{(i, j)\}$.
\end{definition}
Note that the edge subgraph is uniquely defined for each edge, since we explicitly forbid parallel edges in the request topology (see Definition~\ref{def:introvnep:reqtop}).

\subsection{The Base LP Formulation}

The base algorithm's LP formulation is given in Linear Program~\ref{IP:novel-AC}. We will now discuss the variables and constraints that are introduced by the base LP formulation.

%TODO % % % % % % % % % % % BASE LP FORMULATION

\begin{figure}[bh!]
  
  {
    \LinesNotNumbered
    \renewcommand{\arraystretch}{1.4}
    
    \removelatexerror

    \begin{IPFormulation}{H}
      
      \SetAlgorithmName{Linear Program}{}{{}}
      
      \caption{Decomposable Base Formulation for VNEP, adapted from \cite{rostSchmidFPTApproximations}}
      \label{IP:novel-AC}
      \popline

      \newcommand{\spaceIt}{\qquad\quad\quad}
      \newcommand{\miniSpace}{\hspace{1.5pt}}
      
\scalebox{0.88}{
\begin{minipage}{1.08\columnwidth}
      
      \begin{tabular}{FRLQB}
        \multicolumn{5}{r}{\parbox{0.975\textwidth}{~}} \\[-16pt]        
        
        \multicolumn{3}{L}{\textnormal{  \hspace{-10pt}(\ref{eq:classic_mcf:constr_flow}) -  (\ref{eq:classic_mcf:constr_edge_load}) for $\VGe$ on variables } \subLP[(\vec{y},\vec{z},\vec{a})][e,\mappingChar_e]} ~~& \forall e \in \reqEdges, \mappingChar_e \in \MappingSpace[\labelsetEdge]   & \tagIt{alg:lp:novel:subformulations}
        \\[6pt]
        
        \sum \limits_{u \in \substrateNodesByType[\reqNodeType(i)] } y^u_{i}  & ~=~ & 1 &  \forall i \in \reqNodes &  \tagIt{alg:lp:novel:node-embedding} \\
        
        y^u_{i} & ~=~ & 	\sum_{\mappingChar_e \in \MappingSpace[\labelsetEdge]} \hspace{-12pt} \subLP[y^u_{i}][e,\mappingChar_e] & \forall i \in \reqNodes,  u \in \substrateNodesByType[\reqNodeType(i)], e \in \reqEdges: i \in e &  \tagIt{alg:lp:novel:node-to-sub-node-mapping} \\
        
        \subLP[y^u_{i}][\EOEdgeToOriginal(e),\mappingChar_e]  & ~=~ & \sum_{\begin{subarray}{c}
            \mappingChar_\labelsetmappingIndexTwo \in \MappingSpace[\labelsetIndexed{\labelSetIndexTwo}]: \\
            \restrict[\mappingChar_\labelsetmappingIndexTwo][\bagLabelSet[i][b] \cap \labelsetEdge] = \mappingChar_e
        \end{subarray}} \gamma^u_{i,b,\labelsetmappingIndexTwo}  & \hspace{-5pt}\begin{array}{l}
          \forall i \in \reqNodes, u \in \substrateNodesByType[\reqNodeType(i)], \VEbfsBag \in \BagIPlus, \\ 
          \quad e \in \VEbfsBag, \mappingChar_e \in \MappingSpace[\labelsetEdge]
        \end{array} & \tagIt{alg:lp:novel:gamma-to-outgoing-edges}\\
        
        \sum_{\begin{subarray}{c}
            \mappingChar_e \in \MappingSpace[\labelsetEdge]: \\
            \restrict[\mappingChar_e][\bagLabelSet[i][b] \cap \labelsetEdge] =  \mapVinter[b][e]
        \end{subarray}} \hspace{-16pt} \subLP[y^u_{i}][\EOEdgeToOriginal(e),\mappingChar_e] & ~=~ & \sum_{\begin{subarray}{c}
            \mappingChar_\labelsetmappingIndexTwo \in \MappingSpace[\labelsetIndexed{\labelSetIndexTwo}]: \\
            \restrict[\mappingChar_\labelsetmappingIndexTwo][\bagLabelSet[i][b] \cap \labelsetEdge] =  \mapVinter[b][e]
        \end{subarray}} \hspace{-16pt} \gamma^u_{i,b,\labelsetmappingIndexTwo}  & \hspace{-5pt}\begin{array}{l}
          \forall  i \in \reqNodes, u \in \substrateNodesByType[\reqNodeType(i)], e \in \inEdgesExtractionOrder{i}, \\ 
          \quad \VEbfsBag \in \BagIPlus, \mapVinter[b][e] \in \MappingSpace[\bagLabelSet[i][b] \cap \labelsetEdge]
        \end{array}  & \tagIt{alg:lp:novel:incoming-edges-to-gamma-variables}\\

        \subLP[y^u_{i}][\EOEdgeToOriginal(e),\mappingChar_e] & ~=~ & 0  & \hspace{-5pt}\begin{array}{l}
          \forall i \in \reqNodes, e \in \inEdgesExtractionOrder{i}: i \in \labelsetEdge,  \\ 
          \quad \mappingChar_e \in \MappingSpace[\labelsetEdge], u \in  \substrateNodesByType[\reqNodeType(i)] \setminus \{\mappingChar_e(i)\}
        \end{array}  & \tagIt{alg:lp:novel:forbidding-nodes-in-sub-lps}\\
        
        a^{\type,u}  & ~=~  & 	 	\sum \limits_{i \in \reqNodes, \reqNodeType(i) = \type}  \hspace{-12pt}  \reqDemand(i) \cdot y^u_{i} & \forall (\type,u) \in  \SRV &  \tagIt{alg:lp:novel:node-load}\\
        
        a^{u,v} & ~=~ & \sum_{\begin{subarray}{c}
            e \in \reqEdges \\ \mappingChar_e \in \MappingSpace[\VEbfsLabelsOrig]
        \end{subarray}} \hspace{-12pt} \subLP[a^{u,v}][e,\mappingChar_e]  \quad & \forall (u,v) \in  \SE&  \tagIt{alg:lp:novel:edge-load}\\
        
     %   a^{x,y}  & ~\leq~ & \Scap(x,y) & \forall (x,y) \in  \SR &  \tagIt{alg:lp:novel:capacities}  \\[12pt]

        \multicolumn{4}{C}{
          \hspace{-10pt}
          \begin{array}{c}
            y^u_{i} \in [0,1],~\forall i \in \reqNodes, u \in  \substrateNodesByType[\reqNodeType(i)]; \hspace{18pt} a^{x,y} \in [0, \Scap(x,y)],~\forall (x,y) \in \SR \\
            \gamma^u_{i,b,\labelsetmappingIndexTwo} \in [0,1],~\forall  i \in \reqNodes, u \in  \substrateNodesByType[\reqNodeType(i)], \VEbfsBag \in \BagIPlus, \mappingChar_\labelsetmappingIndexTwo \in \MappingSpace[\labelsetIndexed{\labelSetIndexTwo}]
          \end{array}

        } & \tagIt{alg:lp:novel:variables}
      \end{tabular}
    
\end{minipage}}
    \end{IPFormulation}
  }

\end{figure}
%TODO % % % % % % % % % % % BASE LP FORMULATION

\subsubsection{Base LP Variables} 
We first describe the different classes of variables in Linear Program~\ref{IP:novel-AC} and their meaning within the LP formulation.

The subformulation variables $\subLP[\vec{y}][e,\mappingChar_e]$, $\subLP[\vec{z}][e,\mappingChar_e]$ and $\subLP[\vec{a}][e,\mappingChar_e]$ have the same functions as their counterparts $\vec{y}$, $\vec{z}$ and $\vec{a}$ in Integer Program~\ref{mip:matthias:MCF-formulation}. The subformulation variables are each associated with a request edge $e \in \reqEdges$, through the edge subgraph of $e$. Subformulations are introduced for each mapping in the mapping space $\MappingSpace[\labelsetEdge[e]]$. The edge and the label set mapping, with which a subformulation variable is associated, are indicated in the brackets $\subLP[][\cdot]$.

The global node variables $\vec{y}$ aggregate the values of node variables $\subLP[\vec{y}][e, \mappingChar_e]$ in different subformulations. Similarly, the allocation aggregation variables $\vec{a}$ aggregate the resource footprints of all subformulations.

\pagebreak
The node bag variables $\vec{\gamma}$ are required to enforce the consistency of a mapping with respect to the label sets of a node's out-edges. A variable $\gamma^{u}_{i, \labelSetIndexTwo, \labelsetmappingIndexTwo}$ is defined for each possible combination of the following, as indicated by the indices:
\begin{description}[align=right,labelwidth=5cm]
\item[$i \in \reqNodes$:] a request node 
\item[$u \in \substrateNodes^{\reqNodeType(i)}$:] a substrate node, onto which $i$ can be mapped
\item[$\outEdgeBag \in \BagIPlus$:] an edge bag
\item[$\mappingChar_\labelsetmappingIndexTwo \in \mathcal{M}(L_{i,b})$:] a mapping of the edge bag's label set 
\end{description}

\subsubsection{Base LP Constraints}
We now discuss the constraints of the base LP formulation, beginning with the constraints related to the LP subformulations.
\begin{itemize}
\item Constraint (\ref{alg:lp:novel:subformulations}) applies constraints of Integer Program~\ref{mip:matthias:MCF-formulation} to each subformulation, with the exception of Constraint (\ref{eq:classic_mcf:constr_force_emb}). The omission of Constraint~(\ref{eq:classic_mcf:constr_force_emb}) allows any given subformulation to only partially embed each node / edge.
\end{itemize}
Next, we describe the constraints which connect the subformulations' variables to the global variables of the LP formulation.
\begin{itemize}
\item Constraint (\ref{alg:lp:novel:node-embedding}) enforces that each request node $i$ is completely mapped.
\item Constraint (\ref{alg:lp:novel:node-to-sub-node-mapping}) distributes the node mapping over the subformulations, ensuring that each node is mapped in some set of subformulations.
\item Constraint (\ref{alg:lp:novel:forbidding-nodes-in-sub-lps}) applies to any LP subformulation $\subLP[][e, \mappingChar_e]$, whose associated edge $e$ ends in a confluence end node $k$. This implies that $k \in \labelsetEdge[e]$. The constraint enforces that any node mapping for $k$ in the subformulation is consistent with $\mappingChar_e$. Since the mapping decision for $k$ is fixed in $\mappingChar_e$, consistency is achieved by explicitly forbidding any other mapping of the end node $k$.
\item Constraints (\ref{alg:lp:novel:node-load}) and (\ref{alg:lp:novel:edge-load}) track the resource allocations of the node and edge mapping, respectively. For the node mappings, the allocated resources can directly be determined through the global node mapping variables. For computing the edge allocations, the edge mapping variables of the LP subformulations are used.
\end{itemize}
Finally, Constraints (\ref{alg:lp:novel:gamma-to-outgoing-edges}) and (\ref{alg:lp:novel:incoming-edges-to-gamma-variables}) ensure that subformulations of different edge bags are consistent:
\begin{itemize}
\item Constraint (\ref{alg:lp:novel:gamma-to-outgoing-edges}) induces a flow in each of a node's out-edges' LP subformulations, that is consistent with the aggregated flow of node bag variables $\gamma$.
\item Constraint (\ref{alg:lp:novel:incoming-edges-to-gamma-variables}) enforces that the node mapping variable assignment of a node's incoming edges' subformulations is appropriately propagated to each of the edge bags.
\end{itemize}

\subsection{The Base Decomposition Algorithm} \label{sec:base_decomp}

The base decomposition algorithm is given as pseudocode in Algorithm~\ref{alg:decomposition:Algorithm-Novel-AC}. Similarly to Algorithm~\ref{alg:matthias:decomposition:Algorithm-MCF-Tree}, the base decomposition algorithm initially chooses a mapping for the extraction order root node $\reqExtractionOrderRoot$ according to some non-zero node mapping variable $y^{u}_{\reqExtractionOrderRoot}$, and subsequently traverses the extraction order. 

This traversal is realized with the queue $\mathcal{Q}$, to which a request node is only added in Line~\ref{algline:base-decomposition:add-j-to-q} after all in-edges are validly mapped. Therefore, in each iteration of the while-loop in Lines~\ref{algline:base-decomposition:begin-while-q} to \ref{algline:base-decomposition:add-j-to-q}, a mapping for request node $i$ and all of its ancestor nodes is already set according to the partial mapping $\mappingNodesIteration$.

The algorithm ensures that confluences are consistently mapped: Whenever node $i$ is a start node of some confluence, the confluence end node $t$ must occur as a label in one of the bag label sets $\bagLabelSet[i][b] \in \bagLabelSets[i]$, and is therefore mapped in Line~\ref{algline:base-decomposition:map-bag-label-nodes}.

Since Algorithm~\ref{alg:decomposition:Algorithm-Novel-AC} is besides slight notational differences the same as the decomposition algorithm by Rost and Schmid, we refer to \cite{rostSchmidFPTApproximations} for a correctness proof for Algorithm~\ref{alg:decomposition:Algorithm-Novel-AC}. In Section~\ref{sec:hierarchical-bags}, we will introduce an adapted LP formulation and decomposition algorithm of which the base LP formulation and decomposition algorithm can be considered a special case. For this extended algorithm, a correctness proof is provided.

Analogously to the decomposition algorithm for tree-like requests, Rost and Schmid show in \cite{rostSchmidFPTApproximations} that the resource allocations of a convex combination $\PotEmbeddings$ returned by Algorithm~\ref{alg:decomposition:Algorithm-Novel-AC} are bounded by the corresponding allocation variables, i.e. $\sum_{(\prob, \mappingRequestIteration) \in \PotEmbeddings} \prob \cdot \allocationFunction(\mappingRequestIteration, x, y) \leq a^{x, y}$ holds for each resource $(x, y) \in \substrateResources$.

\subsection{Complexity of the Base Algorithm} \label{sec:performance-base-approx}

As described in Section~\ref{sec:01:linear-programming}, Linear Program~\ref{IP:novel-AC} can be solved in polynomial time with respect to its size. Rost and Schmid show in \cite{rostSchmidFPTApproximations} that Algorithm~\ref{alg:decomposition:Algorithm-Novel-AC} terminates in polynomial time with respect to the size of the LP: In each iteration of the outer while-loop, at least one LP variable is set to zero in Line~\ref{algline:base-decomposition:adapt-variables-one}. 

Therefore, the size of the LP formulation determines the overall runtime of the base algorithm. We now discuss the contribution of the different variables to the LP size.

First, consider the subformulation variables. The number of variables required by each subformulations scales linearly with the size of the substrate topology. However, the number of subformulations required for a single request edge increases exponentially with the size of the edge's label set.

Similarly to the subformulations, the number of $\gamma$-variables associated with an edge bag $\outEdgeBag$ grows exponentially with the size of the bag label set $\bagLabelSet$. According to Definition~\ref{def:bag-label-set}, an edge bag's label set is the union of the contained edges' label sets. The largest bag label set in any extraction order is therefore at least as large as the size of the largest edge label set, and may in general be significantly larger than any individual edge label set. 

It follows that the number of $\gamma$-variables are the dominating factor in the LP size.

%TODO%%%%%%%%%%%%%%%%%%%%%%% BASE DECOMP ALG

\begin{figure}[tbh!]
\scalebox{0.88}{
\begin{minipage}{1.08\columnwidth}

%\begingroup
\removelatexerror
\begin{algorithm*}[H]

\SetKwInOut{Input}{Input}\SetKwInOut{Output}{Output}
\SetKwFunction{ProcessPath}{ProcessPath}{}{}
\SetKwFunction{reverse}{reverse}{}{}
\SetKwFunction{LP}{LP}
\SetKwFunction{LP}{LP}

\newcommand{\SET}{\textbf{set~}}
\newcommand{\ADD}{\textbf{add~}}
\newcommand{\EACH}{\textbf{each~}}
\newcommand{\DEFINE}{\textbf{define~}}
\newcommand{\AND}{\textbf{and~}}
\newcommand{\LET}{\textbf{let~}}
\newcommand{\WITH}{\textbf{with~}}
\newcommand{\COMPUTE}{\textbf{compute~}}
\newcommand{\FIND}{\textbf{find~}}
\newcommand{\CHOOSE}{\textbf{choose~}}
\newcommand{\DECOMPOSE}{\textbf{decompose~}}
\newcommand{\FORALL}{\textbf{for all~}}
\newcommand{\OBTAIN}{\textbf{obtain~}}
\newcommand{\WITHPROBABILITY}{\textbf{with probability~}}

\caption{Decomposition algorithm for solutions to the base formulation's LP \ref{IP:novel-AC}, reproduced with notational adaptations from \cite{rostSchmidFPTApproximations}}
\label{alg:decomposition:Algorithm-Novel-AC}

\Input{VNEP-instance (\substrateTopology, \reqTopology), request extraction order $\reqExtractionOrderDef$, solution to the relaxation of IP~\ref{IP:novel-AC}}
\Output{Convex combination~$\PotEmbeddings = \{\decomp = (\prob,\mappingRequestIteration)\}_k$ of valid mappings}
  \SET $x \gets 1$ \label{algline:base-decomposition:setRemainingFlowVar}\\
  \SET $\PotEmbeddings  \gets \emptyset$ \AND $k \gets 1$\\
  \While{$x > 0$ }
  {
    
    \SET $\mappingRequestIteration = (\mappingNodesIteration,\mappingEdgesIteration)~\gets (\emptyset,\emptyset)$ \label{algline:base-decomposition:init-maps}\\
    
    \SET $\Queue \gets \{\reqExtractionOrderRoot \}$ \\
    
    \CHOOSE $u \in \substrateNodes$ \WITH $y^u_{\reqExtractionOrderRoot} > 0$ \AND \SET $\mappingNodesIteration(\reqExtractionOrderRoot)~\gets u$\\
    
    \While{$|\Queue| > 0$}{  \label{algline:base-decomposition:begin-while-q}
      \CHOOSE $i \in \Queue$ \AND \SET$\Queue \gets \Queue \setminus \{i\}$\label{algline:base-decomposition:get-i-from-q}\\
      \ForEach{$\VEbfsBag \in \VEbfsBags$}{
        \LET $\mappedRequestNodes = (\mappingNodesIteration)^{-1}(\SV)$ denote the already mapped nodes\\
        \CHOOSE $m_\labelsetmappingIndexTwo \in \MappingSpace[\bagLabelSet[i][\labelSetIndexTwo]]$, s.t. $\gamma^{\mappingNodesIteration(i)}_{i,\labelSetIndexTwo,\labelsetmappingIndexTwo} > 0$ \AND $\restrict[m_\labelsetmappingIndexTwo][\bagLabelSet[i][\labelSetIndexTwo] \cap \mappedRequestNodes] = \restrict[\mappingNodesIteration][\bagLabelSet[i][\labelSetIndexTwo] \cap \mappedRequestNodes]$ \\
        \SET $\mappingNodesIteration(j) \gets m_\labelsetmappingIndexTwo(j)$ \FORALL $j \in \bagLabelSet[i][\labelSetIndexTwo] \setminus \mappedRequestNodes$ \label{algline:base-decomposition:map-bag-label-nodes}\\
        \ForEach{ $e=(i,j) \in \VEbfsBag$ \label{algline:base-decomposition:foreach-bag-label-set}}{
          \eIf{$(i,j) = \EOEdgeToOriginal(i, j)$}{
            \COMPUTE path ${P}^{u,v}_{i,j}$ from $\mappingNodesIteration(i)=u$ to  $v \in \substrateNodes$ according to Lemma~\ref{lem:local-connectivity-property}\\
          \pushline\pushline  \nonl s.t. 
          $\begin{array}{rl}
          \subLP[y^v_{j}][(i,j),\restrict[\mappingNodesIteration][\labelsetIndexed{e}]] > 0 & \textnormal{\AND} \\[2pt]
          \subLP[z^{u',v'}_{i,j}][(i,j),\restrict[\mappingNodesIteration][\labelsetIndexed{e}]] > 0 & \textnormal{\FORALL } (u',v') \in {P}^{u,v}_{i,j}
          \end{array}$\\
                      \vspace{2pt}
            \popline\popline
            
            \SET $\mappingEdgesIteration(i,j) \gets {P}^{u,v}_{i,j}$ \AND \textbf{if} $\mappingNodesIteration(j) = \emptyset$ \textbf{then} $\mappingNodesIteration(j) \gets v$  \label{algline:map-edge-if-not-reversed}\\
          }{
            \COMPUTE path ${P}^{v,u}_{j,i}$ from $v \in \substrateNodes$ to $\mappingNodesIteration(i)=u$ according to Lemma~\ref{lem:local-connectivity-property}\\
            \pushline \pushline \nonl s.t.
            $\begin{array}{rl}
              \subLP[y^v_{j}][\EOEdgeToOriginal(i, j),\restrict[\mappingNodesIteration][\labelsetIndexed{e}]] > 0 & \textnormal{\AND} \\[2pt]
              \subLP[z^{u',v'}_{j,i}][\EOEdgeToOriginal(i, j),\restrict[\mappingNodesIteration][\labelsetIndexed{e}]] > 0 & \textnormal{\FORALL } (u',v') \in {P}^{u,v}_{j,i}
            \end{array}$\\
            \vspace{2pt}
            \popline\popline

            \SET $\mappingEdgesIteration(\EOEdgeToOriginal(i,j)) \gets {P}^{u,v}_{j,i}$ \AND \textbf{if} $\mappingNodesIteration(j) = \emptyset$ \textbf{then} $\mappingNodesIteration(j) \gets v$ \label{algline:map-edge-if-reversed} \\
          
          }
          \If{$\mappingEdgesIteration(\EOEdgeToOriginal(e)) \neq \emptyset$ \textnormal{\FORALL}$e \in \inEdgesExtractionOrder{j}$}{
            \SET $\mathcal{Q} \gets \mathcal{Q} \cup \{j\}$ \label{algline:base-decomposition:add-j-to-q}\\
          }
        }
      }
    }
    \SET $\Variables_k \gets 
    \left( \begin{array}{ll}
            & \{x\}  \cup  \{y^u_{i} ~|~ i \in \reqNodes, u=\mappingNodesIteration(i)\} \\
        \cup   & \{~~\,\subLP[x][e,\restrict[\mappingNodesIteration][\labelsetIndexed{e}]] ~|~ e \in \reqEdges \} \\
        \cup   & \{~\,\subLP[y^u_{i}][e,\restrict[\mappingNodesIteration][\labelsetIndexed{e}]] ~|~ e \in \reqEdges, i \in e, u=\mappingNodesIteration(i) \}\\
        \cup    & \{\subLP[z^{u,v}_{i,j}][e,\restrict[\mappingNodesIteration][\labelsetIndexed{e}]]~|~ e=(i,j) \in \reqEdges, (u,v) \in \mappingEdgesIteration(i,j)\} \\
        \cup   & \{\gamma^{u}_{i,\labelSetIndexTwo,\labelsetmappingIndexTwo}~|~ i \in \reqNodes, u = \mappingNodesIteration(i), \VEbfsBag \in \VEbfsBags, m_\labelsetmappingIndexTwo=\restrict[\mappingNodesIteration][\bagLabelSet[i][\labelSetIndexTwo]]\} 
        \end{array}  \right)$ \label{algline:base-decomposition:compute-Vk}\\
    \SET $\prob \gets \min \Variables_k$ \label{algline:base-decomposition:computing-prob} \\
    \SET $v \gets v - \prob$ \FORALL $v \in \Variables$ \label{algline:base-decomposition:adapt-variables-one}\\
    \SET $a^{x,y} \gets a^{x,y} - \prob \cdot \allocationFunction(\mappingRequestIteration,x,y)$ \FORALL $(x,y) \in \SR$ \label{algline:base-decomposition:adapt-load-one}\\ 
    \ForEach{$(i,j) \in \reqEdges$ \textnormal{\AND \EACH} $(x,y) \in \{(\Vtype(i),i), (\Vtype(j),j), (i,j)\}$}{
      \SET $\subLP[a^{x,y}][(i,j),\restrict[\mappingNodes][\labelsetIndexed{e}]] \gets \subLP[a^{x,y}][\restrict[\mappingNodes][\labelsetIndexed{e}]] - \prob \cdot \allocationFunction(\mappingRequestIteration,x,y)$ \label{algline:base-decomposition:adapt-load}\\
    }
    \ADD $\decomp = (\prob,\mappingRequestIteration)$ to $\PotEmbeddings$ \AND \SET $k \gets k + 1$\\
  }

\KwRet{$\PotEmbeddings$}
\end{algorithm*}
%\endgroup
\end{minipage}}

\end{figure}
%TODO%%%%%%%%%%%%%%%%%%%%%%% BASE DECOMP ALG
\clearpage

\subsubsection{The Extraction Width Parameter} \label{sec:solving-the-vnep:extraction-width}

The size of the base LP formulation is determined by the choice of extraction order. The extraction width parameter was introduced in \cite{rostSchmidFPTApproximations} to parameterize the size of the LP formulation in terms of the selected extraction order.

We now give the definition of extraction width, both in terms of a specific extraction order, and as a graph parameter of the request topology.
\begin{definition}(Extraction Width, cf. \cite{rostSchmidFPTApproximations}) \label{def:extraction-width}\\
Let $\reqTopology$ be a request graph, and let $\reqExtractionOrderDef$ be an extraction order.

The width of $\reqExtractionOrder$ is then defined as
\begin{align}
\extractionWidth(\reqExtractionOrder) := 1 + \max_{i \in \reqNodes} \max_{\bagLabelSet \in \bagLabelSets} |\bagLabelSet|
\end{align}

We define the graph parameter extraction width by identifying with each graph the minimal width of all possible extraction orders: Let $\extractionOrderCharacter (\reqExtractionOrder) $ denote the set of all possible extraction orders of $\reqTopology$. We then define the extraction width of $\reqTopology$ as 
\begin{align}
\extractionWidthReq(\reqTopology) := \min_{\reqExtractionOrder \in \extractionOrderCharacter(\reqTopology)} \extractionWidth (\reqExtractionOrder).
\end{align}
\end{definition}

The following theorem, derived by Rost and Schmid in \cite{rostSchmidFPTApproximations}, gives the complexity of the base algorithm, in terms of LP size and decomposition runtime.
\begin{theorem} (Base Algorithm Complexity, cf. \cite{rostSchmidFPTApproximations}) \label{thm:base-alg-complexity}\\
Let $\VNEPInstance$ be an instance of VNEP. Given an extraction order $\reqExtractionOrderDef$ for the request $\reqTopology$, the size of Linear Program $\ref{IP:novel-AC}$ is bounded by $\mathcal{O}\left(|\substrateTopology|^{\extractionWidth(\reqExtractionOrder)} \cdot |\reqTopology| \right)$. Algorithm~\ref{alg:decomposition:Algorithm-Novel-AC} executes in time $\mathcal{O}\left(|\substrateTopology|^{2 \cdot \extractionWidth(\reqExtractionOrder) + 1} \cdot |\reqTopology|^2 \right)$.
\end{theorem}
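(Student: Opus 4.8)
The plan is to bound separately the number of variables and constraints in Linear Program~\ref{IP:novel-AC}, and then to count the total work done by Algorithm~\ref{alg:decomposition:Algorithm-Novel-AC} across all iterations. First I would recall the variable classes of the base LP formulation. The dominating contribution comes from the node bag variables $\gamma^u_{i,b,\labelsetmappingIndexTwo}$: for a fixed request node $i$, fixed substrate node $u$, and fixed edge bag $\outEdgeBag \in \BagIPlus$, there is one such variable for each mapping $\mappingChar_\labelsetmappingIndexTwo \in \MappingSpace[\bagLabelSet]$. Since $|\MappingSpace[\bagLabelSet]| = |\substrateNodes|^{|\bagLabelSet|}$ and by Definition~\ref{def:extraction-width} we have $|\bagLabelSet| \leq \extractionWidth(\reqExtractionOrder) - 1$, the number of $\gamma$-variables for a fixed $i$ is $\bigO(|\substrateNodes|^{\extractionWidth(\reqExtractionOrder)})$, using that the number of edge bags of $i$ is at most $|\outEdgesExtractionOrder{i}| \leq |\reqEdges|$. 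Summing over all request nodes gives $\bigO(|\substrateNodes|^{\extractionWidth(\reqExtractionOrder)} \cdot |\reqTopology|)$. I would then check that the subformulation variables $\subLP[\vec{y}][e,\mappingChar_e]$, $\subLP[\vec{z}][e,\mappingChar_e]$, $\subLP[\vec{a}][e,\mappingChar_e]$ are subsumed by this bound: there is one subformulation per edge $e$ per mapping $\mappingChar_e \in \MappingSpace[\labelsetEdge]$, and since $\labelsetEdge \subseteq \bagLabelSet$ for the bag containing $e$, we have $|\labelsetEdge| \leq \extractionWidth(\reqExtractionOrder) - 1$; each subformulation contributes $\bigO(|\substrateTopology|)$ variables, for a total of $\bigO(|\reqEdges| \cdot |\substrateTopology|^{\extractionWidth(\reqExtractionOrder)})$, again within the claimed bound. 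The global variables $\vec{y}$ and $\vec{a}$ are clearly $\bigO(|\reqNodes| \cdot |\substrateNodes| + |\substrateResources|)$, negligible in comparison. An analogous but routine count shows the number of constraints is also $\bigO(|\substrateTopology|^{\extractionWidth(\reqExtractionOrder)} \cdot |\reqTopology|)$ — the binding constraints are (\ref{alg:lp:novel:gamma-to-outgoing-edges}) and (\ref{alg:lp:novel:incoming-edges-to-gamma-variables}), which range over a node, a substrate node, an edge bag, an edge, and a label-set mapping, and each such mapping index again costs a factor $|\substrateNodes|^{\extractionWidth(\reqExtractionOrder)-1}$.

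For the runtime of Algorithm~\ref{alg:decomposition:Algorithm-Novel-AC}, I would argue as in the text that each iteration of the outer while-loop sets at least one LP variable to zero in Line~\ref{algline:base-decomposition:adapt-variables-one}, so the number of iterations is at most the number of LP variables, i.e. $\bigO(|\substrateTopology|^{\extractionWidth(\reqExtractionOrder)} \cdot |\reqTopology|)$. Within a single iteration, the dominant cost is the traversal of the extraction order: for each of the $\bigO(|\reqTopology|)$ nodes and each of its edge bags we choose a label-set mapping $m_\labelsetmappingIndexTwo$ consistent with the already-fixed partial mapping (scanning over $\MappingSpace[\bagLabelSet]$, which costs $\bigO(|\substrateNodes|^{\extractionWidth(\reqExtractionOrder)-1})$), and for each edge in the bag we run the graph search of Lemma~\ref{lem:local-connectivity-property}, which costs $\bigO(|\substrateEdges|)$. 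The set $\Variables_k$ computed in Line~\ref{algline:base-decomposition:compute-Vk} and the capacity updates have size and cost $\bigO(|\reqTopology| \cdot |\substrateTopology|)$. Hence each iteration costs $\bigO(|\reqTopology| \cdot |\substrateTopology|^{\extractionWidth(\reqExtractionOrder)})$, wait — more carefully, $\bigO(|\reqTopology| \cdot (|\substrateNodes|^{\extractionWidth(\reqExtractionOrder)-1} + |\substrateEdges|))$, which is $\bigO(|\reqTopology| \cdot |\substrateTopology|^{\extractionWidth(\reqExtractionOrder)})$ since $\extractionWidth(\reqExtractionOrder) \geq 1$. Multiplying the per-iteration cost by the iteration count gives $\bigO(|\substrateTopology|^{2\extractionWidth(\reqExtractionOrder)} \cdot |\reqTopology|^2)$; a slightly more careful accounting of the graph-search term, which contributes $|\substrateEdges|$ rather than a power of $|\substrateTopology|^{\extractionWidth}$, yields the stated $\bigO(|\substrateTopology|^{2\extractionWidth(\reqExtractionOrder)+1} \cdot |\reqTopology|^2)$.

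The main obstacle I anticipate is not any single estimate but getting the bookkeeping of the exponents exactly right: one must be careful that the ``$+1$'' in the definition of $\extractionWidth$ (Definition~\ref{def:extraction-width}) is consistent with writing $|\substrateNodes|^{|\bagLabelSet|} \leq |\substrateNodes|^{\extractionWidth(\reqExtractionOrder)-1}$, and that the various $\bigO(|\substrateTopology|)$ factors from graph searches, from the flow variables inside a single subformulation, and from the capacity-update loops combine to exactly one extra power of $|\substrateTopology|$ in the decomposition runtime rather than two. Since the details closely mirror the complexity analysis in \cite{rostSchmidFPTApproximations}, and since the extended algorithm of Section~\ref{sec:hierarchical-bags} will come with its own full proof, I would keep this argument at the level of a sketch and refer to Rost and Schmid for the precise constants.
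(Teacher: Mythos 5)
Your counting argument is correct and follows the same route the paper takes for this result: the paper itself defers to Theorems 17 and 18 of \cite{rostSchmidFPTApproximations}, and the analogous proofs it does spell out (Theorems~\ref{thm:adapted-alg-lp-size} and~\ref{thm:adapted-alg-runtime}) bound the $\gamma$-variables and subformulations by the mapping-space size $|\substrateNodes|^{\extractionWidth(\reqExtractionOrder)-1}$ and bound the iteration count by the number of LP variables, exactly as you do. Your per-iteration accounting in fact lands slightly below the stated exponent, which is harmless since the theorem only claims an upper bound.
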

\begin{proof}
See Theorem 17 and Theorem 18 in \cite{rostSchmidFPTApproximations}.
\end{proof}
Theorem~\ref{thm:base-alg-complexity} states that the base algorithm parameterized by extraction width is fixed-parameter tractable, i.e. both the size of the base LP formulation, and the runtime of the base decomposition algorithm are polynomially bounded for extraction orders with bounded width.

\subsubsection{Topologies with Small Extraction Width} \label{sec:solving-the-vnep:small-extraction-width-graphs}

Rost and Schmid derive several results related to the extraction width of specific graph classes. First, it is obvious by definition that tree-like requests have extraction width $1$: An extraction order $\reqExtractionOrderDef$ for a tree-like request cannot contain a confluence, and it follows for all edges $e \in \reqExtractionOrderEdges$ that $\labelsetEdge = \emptyset$ under the confluence edge label assignment. Then, all out-edges of a node can be partitioned in a single edge bag, whose bag label set is also empty.

A second class of topologies with bounded extraction width is given by the class of \emph{cactus graphs}.
\begin{definition}(Cactus Graphs) \label{def:cactus-graphs} \\
An undirected graph $\genericGraphDef$ is called a cactus graph, if any cycles contained in $\genericGraph$ intersect in at most a single node.
\end{definition}
We also refer to (directed) request topologies $\reqTopology$ as cactus graphs, if their undirected versions $\undirected[\reqTopology]$ meet this definition. Rost and Schmid show that cactus graphs have extraction width $2$. A detailed study of the base algorithm restricted to cactus graphs can be found in \cite{rost:vne-approx-leveraging-rand-round-2018}.

Rost and Schmid further introduce the class of half wheel graphs as an example of a topology, where suboptimal root placement has a drastic impact on the extraction width.
\begin{definition}(Half Wheel Graph, cf. \cite{rostSchmidFPTApproximations}) \label{def:half-wheel-graphs} \\
A half wheel graph $\halfwheelGraphDef$ is an undirected graph with node and edge sets defined as
\begin{align}
\halfwheelGraphNodes &:= \{ w_c\} \cup \{ w_1, w_2, \ldots w_n \} \\
\halfwheelGraphEdges &:= \{ \{ w_c, w_1 \}, \{ w_c, w_2 \}, \ldots, \{ w_c, w_n \} \} \cup \{\{w_1, w_2\}, \ldots, \{w_{n-1}, w_n \}\}
\end{align}
\end{definition}
Similar to our definition of cactus graphs, we also refer to directed request topologies as half wheel graphs, if their undirected version is a half wheel graph. 

The following lemma gives the width of extraction orders for a half-wheel graph in terms of the chosen root placement.
\begin{lemma} (Extraction Width of Half Wheel Graphs, cf. \cite{rostSchmidFPTApproximations}) \label{lem:extraction-width-half-wheel} \\
Consider a half wheel graph $\reqTopology$ with $n$ outer nodes. Considering any extraction order $\reqExtractionOrderDef$ for which the central node $w_c$ is chosen to be the root, $\extractionWidth (\reqExtractionOrder) \geq \lfloor n/2 \rfloor + 1$ holds. If an outer node is selected as the root node, an extraction order with $\extractionWidth (\reqExtractionOrder) = 2$ exists.
\end{lemma}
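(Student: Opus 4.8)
The statement has two halves: a lower bound for the central-root placement and an existence result for an outer-root placement. I would treat them separately and, for each, reason directly about confluences and the resulting bag label sets.

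\textbf{The lower bound for central root.} Suppose $\reqExtractionOrder$ is any extraction order rooted at $w_c$. First I would observe that, since $\reqUndirectedTopology$ is connected and $w_c$ has degree $n$, in any acyclic orientation rooted at $w_c$ the central node has the $n$ spoke-edges oriented away from it (each $w_\ell$ is reached from $w_c$, possibly directly, possibly via the rim path, but the spoke $\{w_c, w_\ell\}$ must be oriented so as to be consistent with acyclicity and reachability — the key point is that $w_c$ is the unique source and every spoke leaves it). Then each rim edge $\{w_\ell, w_{\ell+1}\}$ together with the two spokes $(w_c, w_\ell)$ and $(w_c, w_{\ell+1})$ forms (after taking one of the two rim-paths between $w_\ell$ and $w_{\ell+1}$) an undirected cycle through $w_c$, so $w_c$ is the start node of a confluence whose end node is $w_{\ell+1}$ (or $w_\ell$, depending on rim orientation). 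More carefully: I would argue that for a long stretch of consecutive rim nodes, the spoke edges leaving $w_c$ all belong to confluences with distinct end nodes lying along the rim, so that the label sets of many of $w_c$'s out-edges are large or, at minimum, that these out-edges cannot be split into bags with small label sets because every rim node in the "far" half of the rim appears as a confluence end node reachable through two internally disjoint paths from $w_c$. Counting these end nodes gives roughly $n/2$ labels on a single bag, hence $\extractionWidth(\reqExtractionOrder) = 1 + \max |\bagLabelSet| \geq \lfloor n/2\rfloor + 1$. The precise bookkeeping — identifying exactly which rim node is the confluence end node for each pair of spokes under a given rim orientation, and checking that the out-edge bags at $w_c$ cannot separate them because their label sets pairwise intersect — is the main obstacle, and I expect the cleanest route is to fix the rim orientation (say all rim edges point "clockwise") and then explicitly list, for spoke $(w_c, w_\ell)$, the set of confluence end nodes $\{w_{\ell+1}, \ldots, w_n\}$ (or a symmetric variant), show these spokes therefore share labels and must lie in one bag, and conclude the bag label set has size at least $n - \lfloor n/2\rfloor = \lceil n/2 \rceil \geq \lfloor n/2 \rfloor$.

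\textbf{The existence result for outer root.} For the second half I would simply exhibit an extraction order. Root at the outer node $w_1$. Orient the rim path $w_1 \to w_2 \to \cdots \to w_n$ and orient every spoke from $w_c$ toward its outer neighbor except one, say orient $(w_1, w_c)$ from $w_1$ to $w_c$ and $(w_c, w_\ell)$ from $w_c$ to $w_\ell$ for $\ell \geq 2$. This is acyclic and rooted at $w_1$. Now each outer node $w_\ell$ with $\ell \geq 2$ has exactly two in-edges — the rim edge $(w_{\ell-1}, w_\ell)$ and the spoke $(w_c, w_\ell)$ — so it is a confluence end node, but the confluence's two internally disjoint paths from the common ancestor are short: the common ancestor farthest from the root is either $w_c$ or $w_{\ell-1}$ depending on structure, and in either case I would check that each edge lies on only a bounded number of confluences, giving $|\labelsetEdge| \leq 1$ for every edge. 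Then at $w_c$, whose out-edges are the $n-1$ spokes to $w_2, \ldots, w_n$, each spoke $(w_c, w_\ell)$ carries the single label $w_\ell$; since these labels are pairwise distinct, Definition~\ref{def:edge-bags} puts each spoke in its own bag, every bag label set has size $1$, and $\extractionWidth(\reqExtractionOrder) = 2$. I would finish by noting the lower bound $\extractionWidth \geq 2$ is immediate since the half wheel is not a tree, so equality holds. The main thing to verify carefully here is that no edge accidentally lies on a confluence with a larger label set — in particular that the rim edges and the spokes each get label sets of size exactly one — which is a routine case analysis of which pairs of in-edges at which nodes generate confluences and which edges those confluences traverse.
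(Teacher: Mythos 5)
Your construction for the outer-root case is incorrect, and this is the substantive half of the proof since the paper handles it directly (the central-root lower bound is deferred to the cited reference). You root at $w_1$, orient the rim as $w_1 \to w_2 \to \cdots \to w_n$, and keep the spokes $(w_c, w_\ell)$ for $\ell \geq 2$ pointing \emph{out of} $w_c$. But then for every pair $j < \ell$ with $\ell \geq 3$, the two paths $w_c \to w_j \to w_{j+1} \to \cdots \to w_\ell$ (spoke plus rim) and $w_c \to w_\ell$ (spoke) are node-disjoint except at their endpoints, so they form a confluence ending in $w_\ell$ in the sense of Definition~\ref{def:confluence}. Hence the spoke $(w_c, w_2)$ alone is labeled with every one of $w_2, w_3, \ldots, w_n$, not with the single label you claim; all of $w_c$'s out-edges have pairwise overlapping label sets, they collapse into one edge bag with bag label set $\{w_2,\ldots,w_n\}$, and your extraction order has width $n$, not $2$. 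The paper's construction avoids this by reversing \emph{all} spokes so that $\outEdgesExtractionOrder{w_c} = \emptyset$: then $w_c$ is the unique node with more than one in-edge, every confluence ends in $w_c$, every edge label set is a subset of $\{w_c\}$, and the width is $2$. The error is not cosmetic -- making $w_c$ a sink rather than a source is exactly the point of the outer-root placement.

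On the lower bound, your overall strategy (all spokes must leave $w_c$ since the root can have no in-edges; rim edges create confluences starting at $w_c$; the resulting labels chain the spokes into a single bag) is sound, but your plan to ``fix the rim orientation, say all clockwise'' proves the bound only for that one extraction order, whereas the lemma quantifies over \emph{all} extraction orders rooted at $w_c$. The missing pieces are: (i) for \emph{any} rim orientation, each rim edge $\{w_\ell, w_{\ell+1}\}$ makes its head a confluence end node whose label lands on both spokes $(w_c,w_\ell)$ and $(w_c,w_{\ell+1})$, so consecutive spokes always share a label and Definition~\ref{def:edge-bags} forces all $n$ spokes into one bag; and (ii) the number of distinct heads of the $n-1$ rim edges is minimized by alternating the orientation and is then $\lfloor n/2 \rfloor$, which lower-bounds the size of that single bag's label set for every orientation. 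With those two observations the bound $\extractionWidth(\reqExtractionOrder) \geq \lfloor n/2\rfloor + 1$ follows; your closing count ``$n - \lfloor n/2\rfloor$'' does not correspond to either of these quantities.
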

\begin{proof}
See Lemma 36 and Corollary 37 in \cite{rostSchmidFPTApproximations} for the case where $\reqExtractionOrderRoot = w_c$. 
When the root is placed at an outer node, any extraction order $\reqExtractionOrder$, with $\outEdgesExtractionOrder{w_c} = \emptyset$, i.e. all edges incident in the central node $w_c$ point towards $w_c$, has width $\extractionWidth(\reqExtractionOrder) = 2$ (cf. Figure 6 in \cite{rostSchmidFPTApproximations}).
\end{proof}
An example of two extraction orders of a half-wheel graph with $n=9$, demonstrating both root placements, is shown in Figure~\ref{fig:02:half_wheel_bad_root}.

\begin{figure}[h] 
	\centering
	\includegraphics[width=0.9\textwidth]{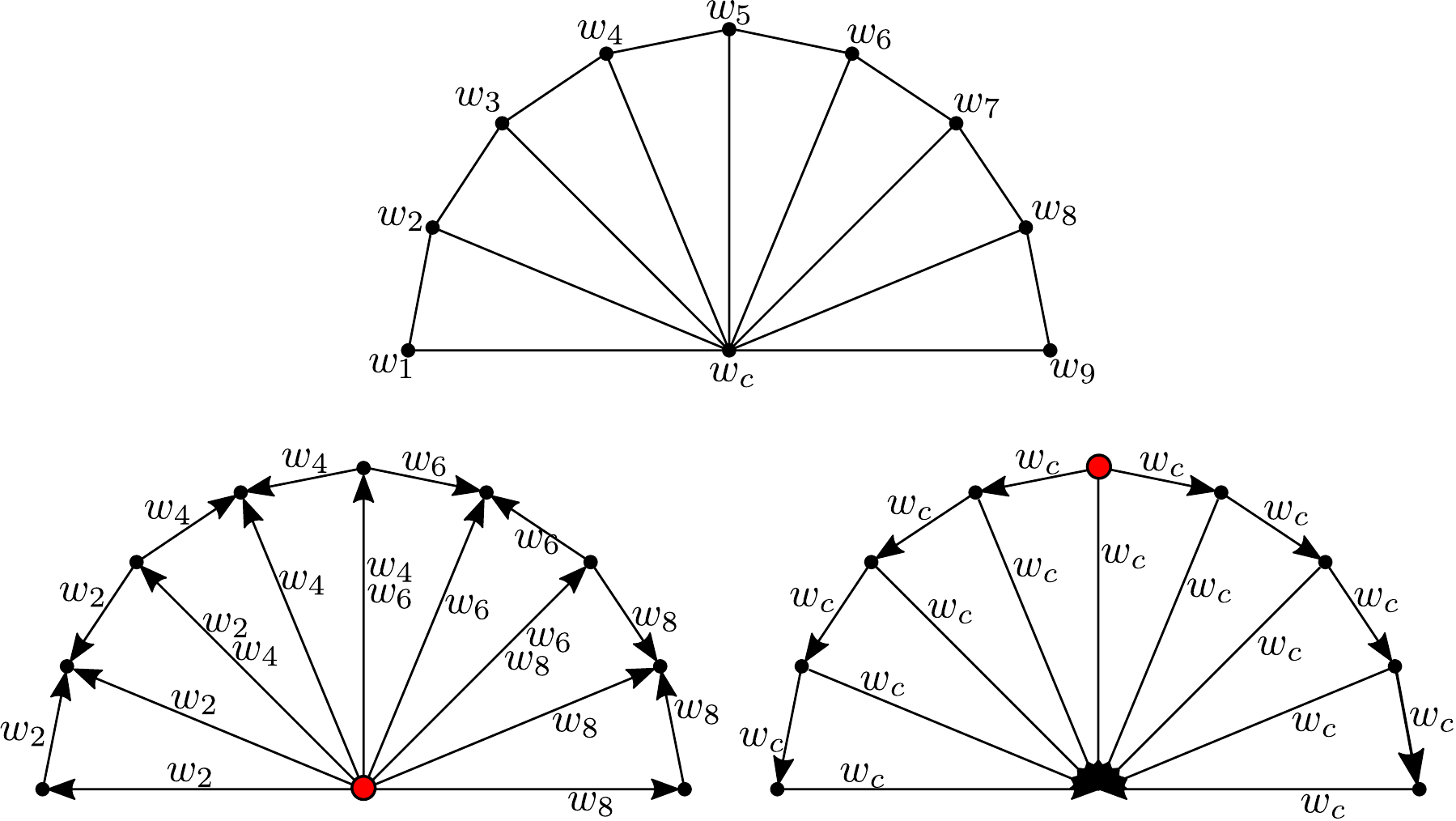}
	\caption{
Two extraction orders for a \emph{half-wheel graph} with 9 outer nodes. \newline
Top: The half-wheel graph as an undirected graph. \newline
Bottom: Two extraction orders, showing the optimal and non-optimal root placement. \newline 
Left: Extraction order obtained by designating the central node $w_c$ as the root node. While any given edge has at most two label nodes, all of the root node's out-edges are in the same edge bag with the bag label set $\{w_2, w_4, w_6, w_8\}$. By Lemma~\ref{lem:extraction-width-half-wheel}, the width scales linearly with the number of outer nodes. \newline
Right: Extraction order obtained by designating the outer node $w_5$ as the root node. Note that the width of this extraction order is $2$, as all confluences share the same end node.
}
	\label{fig:02:half_wheel_bad_root}
\end{figure}

The following result from \cite{rostSchmidFPTApproximations} gives a bound on the impact of adding parallel paths to an existing request topology.
\begin{lemma}(Bounded Increase of Extraction Width with Additional Parallel Paths, cf. \cite{rostSchmidFPTApproximations}) \label{thm:extraction-width-adding-parallel-paths} \\
Given an arbitrary graph $\reqTopologyDef$, adding any number of paths parallel to an existing edge increases the extraction width $\extractionWidthReq (\reqTopology)$ by at most the maximum degree of $\reqTopology$.
\end{lemma}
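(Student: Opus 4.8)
The plan is to let $\reqTopology'$ denote $\reqTopology$ together with the added paths, start from an optimal extraction order of $\reqTopology$, and extend it to $\reqTopology'$ by routing every added path ``alongside'' the edge $e^\ast$. Since $\reqTopology$ contains the edge $e^\ast$, its maximum degree $\Delta(\reqTopology)$ is at least $1$, so it suffices to produce an extraction order of $\reqTopology'$ whose width exceeds $\extractionWidthReq(\reqTopology)$ by at most one. Fix $\reqExtractionOrderDef$ with $\extractionWidth(\reqExtractionOrder) = \extractionWidthReq(\reqTopology)$. As $\undirected[\reqExtractionOrder] = \undirected[\reqTopology]$, exactly one orientation of $e^\ast$ occurs in $\reqExtractionOrder$; write $e^\ast = (a,b) \in \reqExtractionOrderEdges$. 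Every path added parallel to $e^\ast$ runs between $a$ and $b$ and, as parallel edges are excluded by Definition~\ref{def:introvnep:reqtop}, has at least one fresh internal node; orient it as $a \to p_1 \to \dots \to p_\ell \to b$, and let $\reqExtractionOrder'$ be the resulting digraph on the node set of $\reqTopology'$. Because $a$ is an ancestor of $b$ in the acyclic graph $\reqExtractionOrder$, no directed cycle is created; every fresh node is reachable from $\reqExtractionOrderRoot$ through $a$; and $\undirected[\reqExtractionOrder'] = \undirected[\reqTopology']$. Hence $\reqExtractionOrder'$ is a valid extraction order of $\reqTopology'$, and it remains to bound $\extractionWidth(\reqExtractionOrder')$.

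The technical heart is a comparison of confluence edge labels; write $\labelsetEdge[e]$ and $\labelsetEdgeTilde[e]$ for the label set of an edge $e$ under the confluence edge label assignment of $\reqExtractionOrder$ and of $\reqExtractionOrder'$, respectively. Every confluence of $\reqExtractionOrder$ is still a confluence of $\reqExtractionOrder'$, so $\labelsetEdge[e] \subseteq \labelsetEdgeTilde[e]$ always. I would prove the reverse-direction statements: (i) $\labelsetEdgeTilde[e] = \labelsetEdge[e]$ for every old edge $e \ne e^\ast$, and $\labelsetEdgeTilde[e^\ast] \subseteq \labelsetEdge[e^\ast] \cup \{b\}$; (ii) $\labelsetEdgeTilde[f] \subseteq \labelsetEdge[e^\ast] \cup \{b\}$ for every edge $f$ lying on an added path. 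The mechanism is a confluence-transfer argument: a fresh internal node has in- and out-degree one in $\reqExtractionOrder'$, so whenever a confluence $C = P^1 \cup P^2$ of $\reqExtractionOrder'$ uses an added edge, one of its two paths — say $P^1$ — traverses an entire added path from $a$ to $b$; replacing that subpath of $P^1$ by the single edge $e^\ast$ yields a path $Q^1$ of $\reqExtractionOrder$ with the same endpoints as $C$ that contains $e^\ast$ together with all old edges of $P^1$. One then checks that $P^2$ uses no fresh node: using one would force both $a$ and $b$ into $P^1 \cap P^2$, so $C$ would start at $a$ and end at $b$ — the single genuinely new confluence, which contributes only the label $b$. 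Hence $P^2$ lies in $\reqExtractionOrder$, $Q^1 \cup P^2$ is a confluence of $\reqExtractionOrder$ with the same end node as $C$, and that end node already labels $e^\ast$ and every old edge of $P^1$. Applying the same replacement to whichever of $P^1, P^2$ carries a prescribed old edge $e \ne e^\ast$ shows $e$ gains no new label, and that $e^\ast$ can gain only $b$; together with the monotonicity noted above, this gives (i) and (ii).

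Granting (i) and (ii), the bound follows by inspecting the nodes of $\reqExtractionOrder'$. Old nodes other than $a$ keep their out-edges and the label sets thereof, hence their edge bags and bag label sets are exactly as in $\reqExtractionOrder$. A fresh internal node $p$ has a unique out-edge $f$, so its single bag has label set $\labelsetEdgeTilde[f] \subseteq \labelsetEdge[e^\ast] \cup \{b\}$, of size at most $|\labelsetEdge[e^\ast]| + 1 \le \extractionWidth(\reqExtractionOrder)$. At $a$, only $e^\ast$ among the old out-edges changes its label set (gaining at most $b$), no other old out-edge of $a$ carries $b$, and each new out-edge $(a,p_1)$ has label set contained in $\labelsetEdge[e^\ast] \cup \{b\}$; hence in $\reqExtractionOrder'$ the new out-edges of $a$ merge exactly into the edge bag that already contained $e^\ast$, whose bag label set grows only by $\{b\}$, while the remaining bags of $a$ are untouched. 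Therefore $\max_i \max_{\bagLabelSet \in \bagLabelSets} |\bagLabelSet|$ increases by at most one, so $\extractionWidth(\reqExtractionOrder') \le \extractionWidthReq(\reqTopology) + 1 \le \extractionWidthReq(\reqTopology) + \Delta(\reqTopology)$, and $\extractionWidthReq(\reqTopology') \le \extractionWidth(\reqExtractionOrder')$ completes the proof. The main obstacle is the confluence-transfer argument: rigorously ruling out that the non-traversing path of a new confluence reuses a different added path, and cleanly dispatching the boundary cases where a confluence starts at $a$ or ends at $b$, is where essentially all of the case analysis lives.
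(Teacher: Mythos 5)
The paper does not actually prove this lemma: it defers entirely to Lemma~34 of \cite{rostSchmidFPTApproximations}, so there is no internal argument to compare yours against. Your proof is therefore a genuine addition, and as far as I can check it is sound; moreover it establishes the \emph{stronger} bound of $+1$ for the extraction width itself, whereas the stated lemma only asks for $+\Delta(\reqTopology)$. Your route --- orient every added path alongside $e^\ast=(a,b)$, show via the ``contract the added path to $e^\ast$'' replacement that every new confluence either ends in $b$ or projects onto an old confluence with the same end node, and conclude that only $e^\ast$'s bag at $a$ (and the trivial bags of the fresh degree-two nodes) can grow, by exactly the label $b$ --- is essentially the same mechanism the thesis itself uses for the extraction \emph{label} width (the lemma ``Increase of Extraction Label Width when Adding Parallel Edges''), transplanted to the disjoint-edge-bag setting. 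What your version buys is the observation that the new label $b$ cannot bridge two previously distinct bags at $a$, so the $+1$ bound survives even without the running-intersection relaxation; this somewhat undercuts the contrast the thesis draws between the two parameters for this particular operation, and is worth flagging as such.

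One intermediate assertion is false as written: ``no other old out-edge of $a$ carries $b$.'' If $\reqTopology$ already contains a second $a$--$b$ path, say $a \to c \to b$ in the extraction order, then $(a,c)$ carries the label $b$ before any path is added. The conclusion you draw from it nevertheless holds, but you need one extra line: if some old out-edge $(a,c)\neq e^\ast$ has $b\in\labelsetEdge[(a,c)]$, then $b$ is a confluence end node of the old order, so by Lemma~\ref{lemma:incomingLabelsUnique} every in-edge of $b$ --- in particular $e^\ast$ --- already has $b$ in its label set; hence $(a,c)$ already shared the label $b$ with $e^\ast$ and lay in $e^\ast$'s bag. With that repair, the merged bag at $a$ is exactly (old bag of $e^\ast$) $\cup$ (new first edges), its label set grows by at most $\{b\}$, and no other bag is disturbed, so the $+1$ bound goes through. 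Please also state explicitly that the minimal confluence label assignment is used (label set $=$ set of confluence end nodes through the edge), since the paper's Definition~\ref{def:confluence-edge-labels} only gives a lower bound on the label sets and your equality claims in (i) need minimality.
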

\begin{proof}
See Lemma 34 in \cite{rostSchmidFPTApproximations}.
\end{proof}

\begin{remark} (Clarification Regarding Parallel Edges) \\
Lemma~\ref{thm:extraction-width-adding-parallel-paths} is originally phrased in \cite{rostSchmidFPTApproximations} in terms of adding parallel \emph{edges} to the request topology. Since the definition of a request topology in this thesis, Definition~\ref{def:introvnep:reqtop}, explicitly forbids parallel edges, we only consider inserting parallel paths for an existing edge.
\end{remark}

The problem of finding an \emph{optimal} extraction order, i.e. one with minimal width, for a given request topology was shown to be NP-hard in \cite{rostSchmidFPTApproximations}.
\begin{theorem} (Hardness of Computing Optimal Extraction Orders, cf. \cite{rostSchmidFPTApproximations}) \\
Computing an extraction order of minimum extraction width is $\complexityNP$-hard.
\end{theorem}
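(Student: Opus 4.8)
The approach is a reduction from the treewidth decision problem -- given an undirected graph $\genericGraphDef$ and an integer $k$, decide whether $\treewidth(\genericGraph)\le k$ -- which is a classic $\complexityNP$-complete problem (Arnborg, Corneil and Proskurowski). The goal is to produce, in polynomial time, a request topology $\reqTopology$ and an integer $k'$ such that $\extractionWidthReq(\reqTopology)\le k'$ holds if and only if $\treewidth(\genericGraph)\le k$. Since by Theorem~\ref{thm:base-alg-complexity} the width of a \emph{given} extraction order is computable in polynomial time, a procedure returning an extraction order of minimum width would decide treewidth, so no such procedure can exist unless $\complexityP=\complexityNP$.

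\textbf{Relating the two parameters.} The core is to show that the graph parameter $\extractionWidthReq$ is, up to a small additive constant, the treewidth of (a lightly modified copy of) the input graph; small examples already point this way -- a tree has $\extractionWidthReq=1$ and treewidth $1$, a cactus has $\extractionWidthReq=2$ and treewidth $2$, and a transitive tournament on $n$ nodes has both parameters equal to $n-1$. Concretely I would prove the sandwich $\treewidth(\undirected[\reqTopology])\le\extractionWidthReq(\reqTopology)\le\treewidth(\undirected[\reqTopology])+c$ for a suitable $\reqTopology$ built from $\genericGraph$; if the half-wheel phenomenon (Definition~\ref{def:half-wheel-graphs}, Lemma~\ref{lem:extraction-width-half-wheel}) -- where a poor root choice inflates a bag label set even though the optimum is small -- obstructs a clean argument on $\genericGraph$ itself, I would attach a rigid ``hub gadget'' to $\genericGraph$ that pins down the root and the global orientation while contributing only $c$ to the width.

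\textbf{The two inequalities.} For the lower bound $\treewidth\le\extractionWidthReq$: take an extraction order $\reqExtractionOrderDef$ of width $\extractionWidth(\reqExtractionOrder)$, traverse $\reqNodes$ in reverse topological order, and assign to each node $i$ the bag $\{i\}\cup\bigcup_b\bagLabelSet[i][b]$. The bag label set $\bagLabelSet[i][b]$ consists exactly of the end nodes of confluences through an out-edge of $i$ in the bag, i.e.\ nodes to which $i$ is joined by two internally disjoint routes; by Menger's theorem this certifies a boundary of size $|\bagLabelSet[i][b]|$ separating $i$'s past from those nodes, and one checks the connectivity and running-intersection conditions of a tree decomposition -- crucially, a node dropped from a bag never re-enters a bag closer to the root, because its incoming label sets agree along all in-edges (Lemma~\ref{lemma:incomingLabelsUnique}) -- so these bags, linked along $\reqExtractionOrderEdges$, form a tree decomposition of $\undirected[\reqTopology]$ of width at most $\extractionWidth(\reqExtractionOrder)-1$. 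For the upper bound $\extractionWidthReq\le\treewidth+c$: root an optimal tree decomposition of $\genericGraph$, orient every edge from the deeper to the shallower endpoint, place $\reqExtractionOrderRoot$ in the top bag (and glue on the gadget consistently), and verify that both endpoints of every confluence lie in a common decomposition bag, so each bag label set has size at most $\treewidth(\genericGraph)$.

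\textbf{Main obstacle.} The delicate part is the interaction between confluences and separators. For the lower bound one must turn a \emph{pair of disjoint paths} (a confluence) into a genuine graph separator and show that overlapping confluences forced into a single bag really certify one large separator, not merely many small ones -- this is where the disjoint-path/submodularity bookkeeping lives. For the upper bound one must design the hub gadget so that it is simultaneously rigid enough to stop a low-width extraction order from reorienting $\genericGraph$ in a way unmoored from any tree decomposition (the half-wheel shows how fragile bag label sets are to root placement), yet so that its own contribution to every bag label set is a constant independent of $\genericGraph$. Matching the additive constants on the two sides, so that the threshold $k'$ can be read off cleanly from $k$, is the last fiddly step.
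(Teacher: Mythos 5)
Your plan hinges on a two-sided bound of the form $\treewidth(\undirected[\reqTopology])\le\extractionWidthReq(\reqTopology)\le\treewidth(\undirected[\reqTopology])+c$, and the upper bound is false: the request of Figure~\ref{fig:03:half_wheels_double_root} (two half-wheel graphs joined by a single edge) has treewidth $2$, yet by Lemma~\ref{lem:extraction-width-half-wheel} its extraction width grows linearly with the number of outer nodes, because whichever single root you pick, the opposite half-wheel is entered through its hub and all of that hub's spokes are forced outward by acyclicity. This obstruction is global -- it comes from root placement and orientation, not from any local separator structure -- so a constant-size ``hub gadget'' glued onto $\genericGraph$ cannot contribute ``only $c$ to the width''; the whole sandwich strategy collapses here. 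Your lower-bound construction is also broken as written: the bags $\{i\}\cup\bigcup_b\bagLabelSet[i][b]$ fail edge coverage (for a tree every label set is empty, so all bags are singletons and no edge $\{i,j\}$ lies in a common bag, which would give treewidth $0$ for trees), and their size is governed by the \emph{union} of all of a node's bag label sets rather than by the maximum over single bags, so it is not bounded by $\extractionWidth(\reqExtractionOrder)-1$ in the first place.

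For context, this thesis does not prove the statement itself but defers to Theorem~40 of \cite{rostSchmidFPTApproximations}, so there is no in-paper argument to compare against. Where treewidth genuinely does enter in this thesis is Theorem~\ref{thm:hardness-extraction-label-ordering}: there the reduction encodes an arbitrary graph not as the request topology but as the \emph{label set graph} of a single node's incident label sets, and optimal extraction label set orderings for a \emph{fixed} extraction order correspond to tree decompositions of that auxiliary graph. Section~\ref{sec:extraction-orders-tree-decomp} explicitly leaves the relationship between these width parameters and the treewidth of the request topology itself as an open question -- which is precisely the relationship your proof would need to establish. To salvage your approach you would need either a fundamentally different gadget that controls orientations globally, or a reduction from a different $\complexityNP$-hard problem altogether.
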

\begin{proof}
See Theorem 40 in \cite{rostSchmidFPTApproximations}.
\end{proof}

\cleardoublepage
\section{Theoretical Background} \label{sec:background}

In this section, we give an overview of some theoretical concepts that are used for the improvement to the base algorithm presented in Section~\ref{sec:hierarchical-bags} of this thesis.

We will first discuss tree decompositions and the related graph parameter treewidth. Intuitively, a tree decomposition is some representation of an arbitrary graph in a tree-like structure. The treewidth parameter can then be considered to quantify the similarity of a graph to a tree. Treewidth has many applications in parameterizing the complexity of $\complexityNP$-hard problems. A survey of such results is given by Arnborg in \cite{arnborg1985efficient}. 

In particular, tree decompositions have led to very important results in the context of graph problems formulated in second-order monadic logic, where an important result was shown by Courcelle in \cite{courcelle1990monadic}, stating that such problems can be decided in linear time for graphs with bounded treewidth. Tree decompositions have applications in statistics and machine learning, where they are used in algorithms related to Bayesian networks. In this context, they are commonly referred to as junction trees \cite{lauritzen1988local,jensen1996introduction,nielsen2009bayesian,cowell2006probabilistic}. 

We then discuss the running intersection property, which is a property of set families closely related to tree decompositions. The running intersection property will be a defining characteristic of the extension of the base algorithm.

Finally, we give a brief overview of the theory of hypergraphs. Hypergraphs are a generalization of an undirected graph which allows for edges connecting more than two nodes. Any finite family of sets can be considered the edge set of such a hypergraph. This makes hypergraphs a useful way to represent the set of label sets assigned to a node's outgoing edges. We will also discuss how the concepts of tree decompositions and the running intersection property are related to acyclicity in hypergraphs. 

The theory of hypergraphs has applications in database theory: A database schema can be interpreted as a hypergraph, and Beeri, Fagin, Maier and Yannakakis show in \cite{beeri1983desirability} that the class of database schemas corresponding to \emph{acyclic} hypergraphs in particular have useful properties.

\subsection{Tree Decompositions and Treewidth} \label{sec:background:tree-decomp-width}

In this section, we introduce the concept of tree decompositions and the closely related graph parameter treewidth. 

The notion of treewidth was first introduced in 1986 by Robertson and Seymour in a series of works on graph minors \cite{robertson1986treewidth}. It has since gained significant attention, because many NP-hard problems are solvable in polynomial time for graphs with bounded tree width.

We first define tree decomposition, following the definition given by Bodlaender in \cite{bodlaender1996linear}.
\begin{definition}(Tree Decomposition, cf. \cite{bodlaender1996linear}) \label{def:tree-decomposition} \\
A tree decomposition of a graph $\genericGraphDef$ is a pair $\treeDecompDef$, where $\treeDecompTreeDef$ is a tree and $\treeDecompSets = \{\treeDecompSet_i ~|~ i \in \treeDecompTreeNodes \}$ is a family of subsets of $\genericGraphNodes$, containing one subset for each node of $\treeDecompTree$, such that: 
\begin{enumerate}
\item $\bigcup_{i \in \treeDecompTreeNodes} \treeDecompTreeNodes  = \genericGraphNodes$, i.e. each node in $\genericGraphNodes$ is included in at least one of the sets in $\treeDecompSets$. \label{def:tree-decomp:every-node-is-included}
\item For all edges $(v, w) \in \genericGraphEdges$, there exists an $i \in \treeDecompTreeNodes$ with $v \in \treeDecompSet_i$ and $w \in \treeDecompSet_i$. \label{def:tree-decomp:every-edge-is-covered}
\item For all $i, j, k \in \treeDecompTreeNodes$, if $j$ is on the path from $i$ to $k$ in $\treeDecompTree$, then $\treeDecompSet_i \cap \treeDecompSet_k \subseteq \treeDecompSet_j$. \label{def:tree-decomp:sets-on-path-contain-nodes}
\end{enumerate}
The sets in $\treeDecompSets$ are also referred to as the \emph{bags} of the tree decomposition.
\end{definition}
This definition of tree decompositions separates the family of node sets $\treeDecompSets$ from the node set of the tree $\treeDecompTree$. For simplicity, we directly identify the nodes of $\treeDecompTree$ with the corresponding node sets in $\treeDecompSets$, e.g. we may write ``$\treeDecompSet_i \in \treeDecompTreeNodes$'' instead of ``$\treeDecompSet_i \in \treeDecompSets$, where $i \in \treeDecompTreeNodes$''.

Examples of tree decompositions are shown in Figure~\ref{fig:tree-decomp-examples}.
\begin{figure}[htbp]
	\centering
	\includegraphics[width=0.8\textwidth]{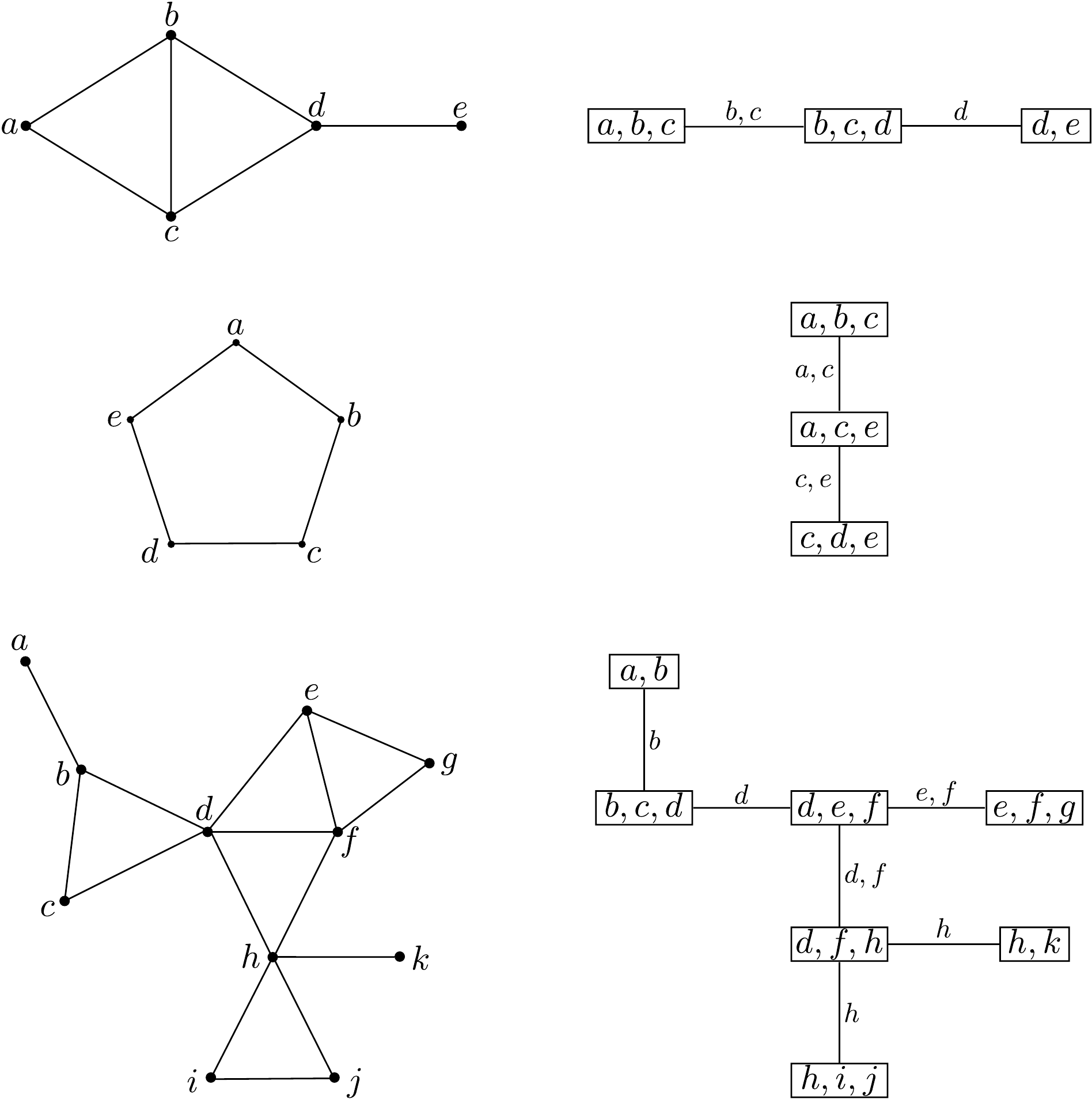}
	\caption{
Three examples of tree decompositions. The left column shows the original graphs, and the right column visualizes a tree decomposition for each graph. Note that the nodes of the tree decomposition are directly identified with the corresponding node sets. The edges of the tree decomposition's tree are labeled with the intersection of the incident nodes, to show the connection of the tree decompositions to \emph{join trees}, which are introduced later in Definition~\ref{def:join-tree}.
	}
	\label{fig:tree-decomp-examples}
\end{figure}

Property~\ref{def:tree-decomp:sets-on-path-contain-nodes} of the definition of tree decompositions can equivalently be stated as follows:
Given any node $v \in \genericGraphNodes$ of the undirected graph, the subgraph of the tree $\treeDecompTree$ induced by the sets containing $v$ is a tree.

A tree decomposition can be found for any undirected graph, as gathering all nodes in a single set trivially defines a tree decomposition. However, tree decompositions where only a small number of nodes are gathered in each single node set are of interest. The width of a tree decomposition is therefore defined as the size of its largest node set.
\begin{definition} (Width of a Tree Decomposition) \\
Given a tree decomposition $\treeDecompDef$ for some undirected graph $\genericGraphDef$, the width of $\treeDecomp$ is the size of the largest node set in $\treeDecompSets$:
\begin{align}
\decompwidth(\treeDecomp) := \max_{\treeDecompSet \in \treeDecompSets} |X| - 1
\end{align}
\end{definition}

The treewidth can then be defined as a graph parameter of the original undirected graph $\genericGraph$. It is the minimal width of all tree decompositions of $\genericGraph$.
\begin{definition} (Treewidth) \\
Let $\genericGraphDef$ be an undirected graph. The treewidth of $\genericGraph$ is defined as
\begin{align}
\treewidth(\genericGraph) := \min_{\treeDecomp \in \setOfTreeDecomps(\genericGraph)} \decompwidth(\treeDecomp)
\end{align}
where $\setOfTreeDecomps(\genericGraph)$ is the set of all tree decompositions of $\genericGraph$.
\end{definition}

Intuitively, treewidth can be thought of as a measure of how structurally similar a graph is to a tree. When a tree decomposition of small width exists, only few nodes are combined in any given set in $\treeDecompSets$, and the decomposition's tree $\treeDecompTree$ resembles the structure of original graph.

The following theorem, stating the hardness deciding the treewidth of a graph, was originally shown by Arnborg, Corneil and Proskurowski in \cite{arnborg1987complexity}, in the context of identifying subgraphs of so-called $k$-trees, which is equivalent to the problem of computing tree decompositions.
\begin{theorem}(Hardness of Treewidth Decision Problem, cf. \cite{arnborg1987complexity}) \label{thm:treewidth-hardness}\\
Given an arbitrary graph $\genericGraph$ and an integer $k$, the decision problem ``Is $\treewidth(G) \leq k$?'' is $\complexityNP$-complete.
\end{theorem}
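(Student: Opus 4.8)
The plan is to establish the two directions of $\complexityNP$-completeness separately: membership in $\complexityNP$, which is routine, and $\complexityNP$-hardness, which is the substantive part.

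For membership in $\complexityNP$ I would take a tree decomposition itself as the certificate. The only subtlety is bounding its size. Starting from an arbitrary tree decomposition $\treeDecompDef$ of $\genericGraph$ of width at most $k$, one repeatedly contracts any tree edge $\{i,j\}$ of $\treeDecompTree$ whose bags are nested, i.e.\ $\treeDecompSet_i \subseteq \treeDecompSet_j$; each such contraction preserves all three properties of Definition~\ref{def:tree-decomposition} and does not increase the width. The resulting ``reduced'' decomposition has no two adjacent nested bags, and a standard counting argument then bounds its number of bags by $|\genericGraphNodes|$, so it can be written down in $\mathcal{O}(|\genericGraphNodes|\cdot(k+1))$ space, which is polynomial in the input. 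Given a candidate, one checks in polynomial time that $\treeDecompTree$ is a tree and that Properties~\ref{def:tree-decomp:every-node-is-included}--\ref{def:tree-decomp:sets-on-path-contain-nodes} hold, the last one for instance by verifying that the set of bags containing $v$ induces a connected subtree for every $v \in \genericGraphNodes$. Hence ``$\treewidth(\genericGraph) \le k$'' lies in $\complexityNP$.

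For $\complexityNP$-hardness I would give a polynomial-time many-one reduction from a known $\complexityNP$-complete graph-layout problem, in the spirit of Arnborg, Corneil and Proskurowski; a convenient source is a separation/arrangement problem such as a restricted variant of minimum cut linear arrangement. It helps to use the equivalent reformulation of the target question: $\treewidth(\genericGraph) \le k$ iff $\genericGraph$ is a subgraph of a $k$-tree (a \emph{partial $k$-tree}), equivalently iff $\genericGraph$ admits a chordal supergraph in which every clique has size at most $k+1$. From a source instance I would build a graph consisting of (i) a rigid ``backbone'' gadget that, in any chordal completion with bounded clique size, must be completed essentially linearly, so that any width-$\le k$ tree decomposition of the whole graph is forced to behave like a \emph{path} decomposition threaded along the backbone; and (ii) ``selector'' gadgets attached to the backbone that translate the discrete choices of the source instance (an ordering, a cut, a partition) into the sizes of the separators that necessarily appear between consecutive backbone bags. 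Choosing the target $k$ as the optimum of the source objective plus a fixed offset contributed by the gadgets then makes ``$\treewidth \le k$'' equivalent to the source instance being a yes-instance.

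The main obstacle is the soundness direction: one must show that \emph{every} width-$\le k$ tree decomposition of the constructed graph -- not merely the intended ones -- induces a valid solution of the source instance. This requires structural lemmas of the form ``in any width-$\le k$ tree decomposition the backbone vertices only occur in bags arranged along a $\treeDecompTree$-path, and the selector gadgets cannot be folded to avoid paying for their separators.'' The natural tool is the separator property of tree decompositions -- if $\treeDecompSet_j$ lies on the $\treeDecompTree$-path between $\treeDecompSet_i$ and $\treeDecompSet_k$ then $\treeDecompSet_j$ separates the associated vertex sets in $\genericGraph$ -- applied to carefully placed gadget cliques and pendant structures. The completeness direction (converting a source solution into a width-$\le k$ tree decomposition) and the polynomiality of the construction are comparatively mechanical once the gadgets are fixed. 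An alternative route reduces directly from the $\complexityNP$-complete ``chordal sandwich with bounded clique size'' problem, trading the layout gadgetry for reliance on a less elementary source; in either case the crux is the same, namely forcing low-width tree decompositions to respect the intended combinatorial structure.
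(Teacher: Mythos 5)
The paper does not actually prove this theorem: it is imported from Arnborg, Corneil and Proskurowski \cite{arnborg1987complexity} as an external result, so there is no internal proof to compare against. Your membership argument is correct and essentially complete: contracting tree edges whose bags are nested preserves all three properties of Definition~\ref{def:tree-decomposition} and the width, the reduced decomposition has at most $|\genericGraphNodes|$ bags (each non-root bag owns a vertex absent from its parent, and Property~\ref{def:tree-decomp:sets-on-path-contain-nodes} makes this assignment injective), and verifying a candidate decomposition is clearly polynomial.

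The hardness half, however, is a plan rather than a proof, and that is where the entire content of the theorem lives. You name a plausible source problem and describe what the gadgets must accomplish --- a backbone forcing every low-width tree decomposition to behave like a path decomposition, selectors translating separator sizes into the source objective --- but you never construct them, and you explicitly defer the soundness lemma (that \emph{every} width-$\le k$ decomposition of the constructed graph respects the intended structure) which is precisely the hard part of the Arnborg--Corneil--Proskurowski argument. Asserting that such gadgets exist and can be controlled is, in effect, asserting the theorem. There is also a small but real slip in the framing: you propose setting the target $k$ to ``the optimum of the source objective plus a fixed offset,'' but that optimum is not polynomial-time computable (which is the whole point of reducing from an $\complexityNP$-complete problem); a many-one reduction must derive $k$ from the \emph{budget} of the source decision instance. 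As written, the proposal establishes membership in $\complexityNP$ and correctly identifies the shape of the known hardness proof, but it does not establish $\complexityNP$-hardness.
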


However, for any fixed value of $k$, the treewidth decision problem is solvable in polynomial time, i.e. it is fixed-parameter tractable with the parameter $k$. For graphs with treewidth $k \leq 3$, linear-time algorithms for finding a tree decomposition have been found by Matoušek and Thomas in \cite{matouvsek1991algorithms}. For graphs with bounded treewidth, algorithms for computing the treewidth with polynomial runtime have been found, such as the algorithm by Reed with $\mathcal{O}(n \log(n))$ complexity in \cite{reed1992computing-treewidth}. Further, a treewidth approximation algorithm with polynomial runtime is given by Bodlaender, Gilbert, Hafsteinsson and Kloks in  \cite{bodlaender1995approximating}, yielding a tree decomposition of width within $\bigO(\log n)$ of the treewidth.

We now introduce some classes of graphs with bounded treewidth and state some theorems related to treewidth. A detailed discussion of related results, and in particular an overview of graph classes with bounded treewidth is given by Bodlaender in the survey \cite{bodlaender1996arboretum}. 

\begin{lemma}(Treewidth of Forests, cf. \cite{bodlaender1996arboretum}) \label{lemma:treewidth-forests}\\
A graph has treewidth $1$ if and only if it is a forest.
\end{lemma}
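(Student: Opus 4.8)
The plan is to prove the two implications separately, in both cases by induction on $|\genericGraphNodes|$, each time peeling off a vertex of degree at most one. It is slightly cleaner to establish the sharper statement that $\treewidth(\genericGraph) \le 1$ if and only if $\genericGraph$ is a forest; the formulation given in the lemma then follows after the harmless observation that an edgeless graph has treewidth $0$, whereas a forest containing an edge cannot have treewidth $0$ since a bag of size one covers no edge.

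For the direction ``$\genericGraph$ is a forest $\Rightarrow \treewidth(\genericGraph) \le 1$'', I would induct on $|\genericGraphNodes|$, the base case $|\genericGraphNodes| \le 1$ being trivial. A nonempty forest always has a vertex $v$ of degree at most one. Applying the induction hypothesis to $\genericGraph - v$ yields a tree decomposition of width at most $1$, and I would extend it by attaching a single new bag: the bag $\{v\}$ hung off an arbitrary existing bag if $v$ is isolated, or the bag $\{v, u\}$ hung off any existing bag containing the unique neighbour $u$ of $v$. Checking properties~\ref{def:tree-decomp:every-node-is-included} and~\ref{def:tree-decomp:every-edge-is-covered} of Definition~\ref{def:tree-decomposition} is immediate; the only point requiring care is property~\ref{def:tree-decomp:sets-on-path-contain-nodes}, which holds because $v$ occurs in exactly the new bag, while the new bag meets the rest of the decomposition only in $\{u\}$ and is adjacent to a bag containing $u$.

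For the converse ``$\treewidth(\genericGraph) \le 1 \Rightarrow \genericGraph$ is a forest'', I would again induct on $|\genericGraphNodes|$. Fix a tree decomposition $\treeDecompDef$ of $\genericGraph$ of width at most one — so every bag has at most two vertices — chosen so that $|\treeDecompTreeNodes|$ is minimum. If $|\treeDecompTreeNodes| = 1$ then $|\genericGraphNodes| \le 2$ and $\genericGraph$ is trivially a forest. Otherwise pick a leaf bag $X$ of $\treeDecompTree$ with neighbour $X'$; minimality rules out $X \subseteq X'$ (deleting $X$ would give a smaller valid decomposition), so there is a ``private'' vertex $v \in X \setminus X'$. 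Using property~\ref{def:tree-decomp:sets-on-path-contain-nodes} — equivalently, that the bags containing a fixed vertex induce a subtree of $\treeDecompTree$ — one concludes that $v$ lies in no bag other than $X$, whence every edge of $\genericGraph$ incident to $v$ is covered only by $X$; since $X$ contains just one other vertex, $\deg(v) \le 1$. Restricting every bag to $\genericGraphNodes \setminus \{v\}$ produces a width-$\le 1$ tree decomposition of $\genericGraph - v$, so $\genericGraph - v$ is a forest by the induction hypothesis, and re-inserting the degree-$\le 1$ vertex $v$ keeps it a forest.

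The only genuine obstacle is this last argument: correctly invoking the running-intersection property to show that a private vertex of a leaf bag occurs nowhere else, together with the bookkeeping that restricting all bags to a vertex subset (and, if needed, deleting a redundant leaf bag) preserves both validity and width. An alternative, less self-contained route would bypass the induction altogether — a graph containing a cycle contains some $C_n$ as a subgraph, hence $K_3$ as a minor, and since treewidth is minor-monotone with $\treewidth(K_3) = 2$ this forces $\treewidth(\genericGraph) \ge 2$ — but proving minor-monotonicity of treewidth and $\treewidth(K_3) = 2$ is no shorter than the direct leaf-peeling argument, so I would keep the induction.
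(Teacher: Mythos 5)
The paper does not prove this lemma at all --- it is imported verbatim from Bodlaender's survey with only a citation, so there is no in-paper argument to compare against. Your proof is correct and is the standard self-contained argument. The forward direction (peel off a degree-$\le 1$ vertex, reattach a bag of size at most two next to a bag containing its neighbour) and the converse (take a bag-minimal width-$1$ decomposition, find a private vertex of a leaf bag, use Property~\ref{def:tree-decomp:sets-on-path-contain-nodes} of Definition~\ref{def:tree-decomposition} to conclude it lies in no other bag and hence has degree at most one, then restrict and induct) are both sound, and the key steps --- minimality forcing $X \not\subseteq X'$, and the subtree property localising the private vertex --- are exactly the points that need care and you handle them. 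You are also right to flag that the statement as literally written is slightly off for edgeless graphs (treewidth $0$, yet still forests); proving the clean equivalence $\treewidth(\genericGraph) \le 1 \Leftrightarrow \genericGraph$ is a forest and then observing that a forest with an edge cannot have treewidth $0$ is the correct repair, and matches how the cited survey actually phrases the result. The minor-monotonicity detour you mention at the end would indeed be longer, so keeping the induction is the right call.
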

\begin{lemma}(Treewidth of Cactus Graphs, cf. \cite{bodlaender1996arboretum,bodlaender1986classes}) \label{lemma:treewidth-cactus-graphs}\\
Any cactus graph containing a cycle has treewidth $2$.
\end{lemma}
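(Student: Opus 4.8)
The plan is to establish $\treewidth(\genericGraph) = 2$ for a cactus graph $\genericGraph$ containing a cycle by proving the two inequalities separately. The lower bound is immediate: since $\genericGraph$ contains a cycle it is not a forest, so by Lemma~\ref{lemma:treewidth-forests} it cannot have treewidth $\leq 1$ (a graph of treewidth $0$ has no edges and one of treewidth $1$ is a forest), hence $\treewidth(\genericGraph) \geq 2$. The substance lies in the upper bound, which I would in fact prove in the slightly stronger form that \emph{every} cactus graph has treewidth at most $2$.

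First I would reduce to the connected case: disjoint tree decompositions of the connected components can be joined by adding arbitrary edges between their trees, and since any vertex lives in only one component the running-intersection property is preserved, while the width is the maximum over components. So it suffices to bound a connected cactus. Here I would also record the structural fact that drives everything: every block (maximal biconnected subgraph) of a cactus is either a single edge or a simple cycle. This is forced by the defining property that two cycles share at most one vertex — a biconnected subgraph containing a cycle and anything more (a chord, or an extra vertex on a second internally disjoint path) would contain two cycles meeting in at least two vertices.

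Then I would induct on the number of blocks of a connected cactus $\genericGraph$. In the base case $\genericGraph$ is a single vertex, a single edge, or a simple cycle $C_k = v_1 v_2 \cdots v_k v_1$; for the cycle I exhibit the fan decomposition with bags $\{v_1, v_2, v_3\}, \{v_1, v_3, v_4\}, \ldots, \{v_1, v_{k-1}, v_k\}$ arranged as a path (consecutive bags overlap in $v_1$ and one further vertex), which has width $2$ and is routinely checked against the three conditions of Definition~\ref{def:tree-decomposition}. In the inductive step, pick a block $B$ that is a leaf of the block-cut tree of $\genericGraph$, so $B$ contains exactly one cut vertex $c$; let $\genericGraph'$ be $\genericGraph$ with all vertices of $B$ other than $c$ deleted. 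Then $\genericGraph'$ is a connected cactus with fewer blocks, so by the inductive hypothesis it has a tree decomposition $\treeDecomp'$ of width $\leq 2$, and $B$ (an edge or a cycle) has one of width $\leq 2$ as well. Since $c$ belongs to both $\genericGraph'$ and $B$, some bag $X_0$ of $\treeDecomp'$ and some bag $Y_0$ of the decomposition of $B$ contain $c$; joining the two trees by a new edge $\{X_0, Y_0\}$ gives a decomposition of $\genericGraph$ of width $\leq 2$. Vertex and edge coverage are immediate from $\genericGraphNodes = V(\genericGraph') \cup V(B)$ and $\genericGraphEdges = E(\genericGraph') \cup E(B)$, and condition~\ref{def:tree-decomp:sets-on-path-contain-nodes} holds because $V(\genericGraph') \cap V(B) = \{c\}$, so $c$ is the only vertex whose bags straddle the new edge, and it lies in both $X_0$ and $Y_0$.

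The main obstacle is not any single computation but the bookkeeping in the gluing step: one must check that removing the interior of a leaf block leaves a connected cactus, that joining two disjoint trees creates no cycle, and — the genuinely load-bearing point — that the running-intersection property survives, which relies precisely on $c$ being the \emph{unique} shared vertex of $B$ and the rest of $\genericGraph$, i.e. on the block structure of cacti established above. The fan decomposition of a cycle, the component reduction, and the lower bound are all routine.
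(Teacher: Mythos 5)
The paper does not actually prove this lemma; it is stated as a known result and deferred to the cited references (Bodlaender's survey and the classes-of-graphs paper), so there is no in-paper argument to compare against. Your self-contained proof is correct and is essentially the textbook argument: the lower bound via Lemma~\ref{lemma:treewidth-forests}, the structural fact that every block of a cactus is a bridge or a simple cycle (which does follow from Definition~\ref{def:cactus-graphs}, since a $2$-connected non-cycle block contains a theta subgraph and hence two cycles meeting in at least two vertices), the width-$2$ fan decomposition of $C_k$, and the gluing of decompositions along the unique cut vertex of a leaf block. All three conditions of Definition~\ref{def:tree-decomposition} are verified where they need to be, and the load-bearing step --- that condition~\ref{def:tree-decomp:sets-on-path-contain-nodes} survives the gluing because $c$ is the sole shared vertex and lies in both bags joined by the new tree edge --- is handled correctly. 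The only thing your write-up leaves implicit is that deleting the interior of a leaf block leaves a graph that is still a cactus (immediate, since the cactus property is inherited by subgraphs) and still connected; you flag both as bookkeeping, and they are. What your approach buys over the paper's is a proof at all; it costs about a page of standard block-decomposition machinery that the paper avoids by citation.
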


\begin{lemma}(Treewidth of Series-Parallel Graphs, cf. \cite{bodlaender1996arboretum}) \label{lemma:treewidth-series-parallel}\\
Any series-parallel graph has treewidth $\leq 2$.
\end{lemma}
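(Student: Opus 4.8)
The plan is to prove the statement by structural induction on the recursive definition of (two-terminal) series-parallel graphs — the one that starts from a single edge and closes under series composition (identifying the sink of one graph with the source of another) and parallel composition (identifying sources and identifying sinks). The key is to prove a \emph{strengthened} claim: every two-terminal series-parallel graph $G$ with source $s$ and sink $t$ admits a tree decomposition of width at most $2$ that contains some bag $B^\star$ with $\{s,t\}\subseteq B^\star$. Insisting that one bag holds both terminals is exactly what makes the induction self-supporting. The base case is trivial: a single edge $(s,t)$ has the one-bag tree decomposition whose only bag is $\{s,t\}$, of width $1\le 2$, and that bag contains both terminals.

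For the inductive step, suppose $G$ is built from $G_1$ (terminals $s_1,t_1$) and $G_2$ (terminals $s_2,t_2$) by one of the two compositions, and let $\treeDecomp_1,\treeDecomp_2$ be tree decompositions of width $\le 2$ with distinguished bags $B_1\supseteq\{s_1,t_1\}$ and $B_2\supseteq\{s_2,t_2\}$ supplied by the induction hypothesis. In the \emph{parallel} case $s_1=s_2=s$ and $t_1=t_2=t$ are identified (so $V(G_1)\cap V(G_2)=\{s,t\}$); I would glue $\treeDecomp_1$ and $\treeDecomp_2$ by adding a single tree edge between $B_1$ and $B_2$ and keep $B_1$ as the distinguished bag. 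In the \emph{series} case $t_1$ and $s_2$ are identified to a vertex $v$ (so $V(G_1)\cap V(G_2)=\{v\}$); I would introduce a fresh bag $B^\star=\{s_1,v,t_2\}$ of size $3$, attach it by tree edges to $B_1$ (which contains $s_1$ and $v=t_1$) and to $B_2$ (which contains $v=s_2$ and $t_2$), and take $B^\star$ as the distinguished bag for the new terminals $s_1,t_2$.

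Next I would verify the three axioms of Definition~\ref{def:tree-decomposition} for the glued object. Node coverage and edge coverage (items~\ref{def:tree-decomp:every-node-is-included} and~\ref{def:tree-decomp:every-edge-is-covered}) are immediate, since every vertex and every edge of $G$ already lives in $G_1$ or in $G_2$ and is therefore covered by $\treeDecomp_1$ or $\treeDecomp_2$. The running-intersection property (item~\ref{def:tree-decomp:sets-on-path-contain-nodes}) is the only delicate point: for a vertex lying in only one of $G_1,G_2$ its family of bags is a subtree of the corresponding $\treeDecomp_i$ and is unaffected by the glue; for a vertex lying in both — which, because the pre-composition graphs are vertex-disjoint, can only be one of the identified terminals $s,t$ (parallel case) or $v$ (series case) — the bag subtrees on the two sides each contain the glued bag ($B_1$, $B_2$, and additionally $B^\star$ in the series case), so their union across the new tree edge(s) stays connected. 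The width is $\max(2,2)=2$ in the parallel case and $\max(2,2,|B^\star|-1)=2$ in the series case, so the strengthened claim follows; dropping the extra condition on $B^\star$ gives $\treewidth(G)\le 2$. For a series-parallel graph that is not two-terminal (disconnected, or containing cut vertices), I would invoke the standard fact that the treewidth of a graph is the maximum treewidth over its biconnected components, each of which is a two-terminal series-parallel graph, and apply the claim componentwise.

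The main obstacle I anticipate is not the bookkeeping of widths but making the running-intersection check at the glue points airtight — in particular being explicit that $V(G_1)\cap V(G_2)$ is \emph{exactly} the set of identified terminals (which relies on the component graphs being vertex-disjoint before composition) so that no stray vertex can break the connectivity of its bag subtree — and choosing the distinguished bag in the series composition so that the inductive invariant ``some bag contains both current terminals'' is preserved.
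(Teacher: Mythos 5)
Your proposal is correct. Note, however, that the paper does not prove this lemma at all: it is stated as a known background fact and simply cited from Bodlaender's survey, so there is no in-paper argument to compare against. Your self-contained proof is the standard one for this result — structural induction over the two-terminal series-parallel composition with the strengthened invariant that some bag contains both terminals — and the details check out: the base case bag $\{s,t\}$ has width $1$; the parallel gluing edge between the two distinguished bags keeps the bag-subtrees of the identified terminals connected; the fresh bag $\{s_1,v,t_2\}$ in the series case has size $3$ and hence width $2$, and it simultaneously repairs connectivity for $v$, $s_1$, and $t_2$ while re-establishing the invariant for the new terminal pair. Your explicit observation that $V(G_1)\cap V(G_2)$ consists exactly of the identified vertices is indeed the point on which the running-intersection check hinges, and the reduction of the general (non-two-terminal) case to biconnected components via the fact that treewidth is the maximum over biconnected components is the standard way to finish. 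What your argument buys over the paper's citation is a constructive, polynomial-time procedure that actually produces a width-$2$ tree decomposition from a series-parallel decomposition tree, which is in the spirit of how the paper later uses tree decompositions (Lemma~\ref{lemma:extraction-LSO-is-tree-decomp}) but is not needed for the paper's purposes here.
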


The following lemma states that in a tree decomposition of a graph $\genericGraph$, any complete subgraph of $\genericGraph$ must be contained in a single node set. 
\begin{lemma}(Cliques in Tree Decomposition, cf. \cite{bodlaender1993pathwidth})\label{lemma:cliques-in-tree-decomp}\\
Let $\genericGraphDef$ be an undirected graph and let $\treeDecompDef$ be a tree decomposition for $\genericGraph$. For any clique $H = (V_H, E_H)$ in $\genericGraph$, there is a set $\treeDecompSet \in \treeDecompSets$ with $V_H \subseteq \treeDecompSet$.
\end{lemma}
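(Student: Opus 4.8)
The plan is to reduce the statement to the \emph{Helly property for subtrees of a tree}. For each vertex $v \in \genericGraphNodes$ define the node set $W_v := \{ i \in \treeDecompTreeNodes ~|~ v \in \treeDecompSet_i \} \subseteq \treeDecompTreeNodes$. By Property~\ref{def:tree-decomp:every-node-is-included} of Definition~\ref{def:tree-decomposition}, $W_v \neq \emptyset$; and by the reformulation of Property~\ref{def:tree-decomp:sets-on-path-contain-nodes} stated immediately after Definition~\ref{def:tree-decomposition}, the subgraph of $\treeDecompTree$ induced by $W_v$ is connected, hence a subtree of $\treeDecompTree$.

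First I would establish that the subtrees induced by $W_v$ and $W_w$ intersect for all $v, w \in V_H$. Since $H$ is a clique we have $\{v, w\} \in E_H$, and as $H$ is a subgraph of $\genericGraph$ this means $\{v, w\} \in \genericGraphEdges$. Property~\ref{def:tree-decomp:every-edge-is-covered} then supplies a tree node $i$ with $v \in \treeDecompSet_i$ and $w \in \treeDecompSet_i$, so $i \in W_v \cap W_w$. Hence $\{ W_v ~|~ v \in V_H \}$ is a finite family of pairwise-intersecting node sets, each inducing a subtree of $\treeDecompTree$.

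The conclusion then follows from the Helly property for subtrees of a tree: a finite family of pairwise-intersecting subtrees of a tree has a common vertex. Applying it to $\{ W_v ~|~ v \in V_H \}$ yields a node $i^\star \in \treeDecompTreeNodes$ with $i^\star \in W_v$ for every $v \in V_H$, i.e. $v \in \treeDecompSet_{i^\star}$ for all $v \in V_H$. This is precisely $V_H \subseteq \treeDecompSet_{i^\star}$ with $\treeDecompSet_{i^\star} \in \treeDecompSets$, as required.

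The only ingredient not already available in the excerpt is the Helly property, which I expect to be the main obstacle and would include as a short auxiliary lemma, proved by induction on the number of subtrees. The two-subtree case is the definition of ``intersecting''. For the inductive step one roots $\treeDecompTree$ at an arbitrary node, takes for each subtree $W_v$ its vertex $r_v$ of minimum depth, and picks $v^\star$ maximizing $\mathrm{depth}(r_v)$; since $W_{v^\star}$ shares a vertex with every other $W_v$, and that shared vertex lies in a connected set containing both $r_v$ and $r_{v^\star}$, the unique tree path between them passes through $r_{v^\star}$, forcing $r_{v^\star} \in W_v$ for all $v$, so $r_{v^\star}$ is common to all. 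Alternatively one can bypass Helly by a direct induction on $|V_H|$, repeatedly merging two clique vertices at a time and invoking Properties~\ref{def:tree-decomp:every-edge-is-covered} and~\ref{def:tree-decomp:sets-on-path-contain-nodes}, but the Helly route is cleaner and more self-contained.
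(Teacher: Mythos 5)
Your proof is correct. The paper does not actually prove Lemma~\ref{lemma:cliques-in-tree-decomp} itself --- it only states it with a pointer to the cited reference --- so there is no in-paper argument to compare against; your Helly-property route is the standard one and everything you use is already available: Property~\ref{def:tree-decomp:every-node-is-included} gives $W_v \neq \emptyset$, the reformulation of Property~\ref{def:tree-decomp:sets-on-path-contain-nodes} stated right after Definition~\ref{def:tree-decomposition} gives connectedness of each $W_v$ in $\treeDecompTree$, and Property~\ref{def:tree-decomp:every-edge-is-covered} applied to the clique edges gives pairwise intersection. Your sketch of the auxiliary Helly lemma (root $\treeDecompTree$, take the deepest of the minimum-depth vertices $r_v$, and observe that for every $v$ the connected set $W_v$ contains the path from the shared vertex up to $r_v$, which passes through $r_{v^\star}$) is also sound, so the write-up would in fact supply the proof the paper delegates to the citation.
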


As a direct consequence of the previous lemma, we state the treewidth of a complete graph.
\begin{corollary}(Tree Decomposition of Complete Graphs) \label{lemma:treewidth-clique}\\
The complete graph $K_n$ with $n$ nodes has treewidth $n-1$.
\end{corollary}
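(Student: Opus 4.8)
The plan is to prove the two inequalities $\treewidth(K_n) \leq n-1$ and $\treewidth(K_n) \geq n-1$ separately, the first by exhibiting an explicit tree decomposition and the second by invoking Lemma~\ref{lemma:cliques-in-tree-decomp}.

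For the upper bound, I would take the trivial tree decomposition $\treeDecomp = (\treeDecompSets, \treeDecompTree)$ where $\treeDecompTree$ is the single-node tree and $\treeDecompSets = \{\treeDecompSet\}$ with $\treeDecompSet = \genericGraphNodes = V(K_n)$. I would check the three conditions of Definition~\ref{def:tree-decomposition}: every node is trivially covered since $\treeDecompSet$ contains all of $V(K_n)$; every edge is covered since both endpoints lie in the single bag; and the path condition holds vacuously because there is only one bag. Hence $\decompwidth(\treeDecomp) = |\treeDecompSet| - 1 = n - 1$, so $\treewidth(K_n) \leq n-1$.

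For the lower bound, I would observe that $K_n$ is itself a clique on $n$ vertices. By Lemma~\ref{lemma:cliques-in-tree-decomp}, any tree decomposition $\treeDecompDef$ of $K_n$ must contain a bag $\treeDecompSet \in \treeDecompSets$ with $V(K_n) \subseteq \treeDecompSet$; since $\treeDecompSet \subseteq V(K_n)$ as well, we get $|\treeDecompSet| = n$ and therefore $\decompwidth(\treeDecomp) \geq n - 1$. As this holds for every tree decomposition, $\treewidth(K_n) \geq n-1$. Combining the two bounds gives $\treewidth(K_n) = n-1$.

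There is no real obstacle here: the corollary is an immediate consequence of the already-established Lemma~\ref{lemma:cliques-in-tree-decomp} together with the trivial single-bag decomposition, so the only thing to be careful about is getting the off-by-one in the width definition right (width is bag size minus one).
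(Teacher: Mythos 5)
Your proof is correct and follows exactly the route the paper intends: the paper states this corollary as a direct consequence of Lemma~\ref{lemma:cliques-in-tree-decomp} (which gives your lower bound) and has already noted that the single-bag decomposition exists for any graph (your upper bound). You have simply written out the details the paper leaves implicit.
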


\subsection{The Running Intersection Property} \label{sec:background:rip}

We now discuss the \emph{running intersection property}, which is a property of families of sets. The running intersection property is closely related to tree decompositions. Indeed, as we will discuss in Section~\ref{sec:background:hypergraphs}, Property~\ref{def:tree-decomp:sets-on-path-contain-nodes} in the definition of tree decompositions is equivalent to enforcing the running intersection property for $\treeDecompSets$, the bags of a tree decomposition.

In Section~\ref{sec:hierarchical-bags}, we will discuss an extension of the base algorithm which removes the requirement that the bag label sets of edge bags must be disjoint. Instead, the algorithm will be extended such that some overlap between the label sets can be tolerated. The running intersection property will define the extent to which an overlap between label sets is permissible.

The running intersection property states that the sets in a family of sets can be ordered in such a way that each set's overlap with its predecessors in the ordering is entirely contained in one of the preceding sets.
\begin{definition}(Running Intersection Property, cf. \cite{beeri1983desirability})\label{def:runningIntersectionProperty}\\
Let $\setFamilyDef$ be a family of sets. An ordering $\labelsetOrder = (\setFamilySet_{1}, ... \setFamilySet_{n})$ of $\setFamily$ satisfies the running intersection property, if the intersection of each set $\setFamilySet_\labelSetIndex \in \labelsetOrder$ with its preceding sets according to the ordering is contained in one of the preceding sets $\setFamilySet_{\labelSetIndexTwo}$, i.e. for each index $\labelSetIndex \in \{2, ..., n\}$, there is an index $\labelSetIndexTwo < \labelSetIndex$, such that 
\begin{align}
\setFamilySet_{\labelSetIndex} \cap \left(\bigcup_{\labelSetIndexThree < \labelSetIndex} \setFamilySet_{\labelSetIndexThree} \right) \subseteq \setFamilySet_{\labelSetIndexTwo}.
\end{align}
Such an ordering is called a running intersection ordering.
\end{definition}

We next introduce notation to refer to the predecessor of some label set in a running intersection ordering. We define the predecessor label set as the overlap of the label set in question with all preceding label sets.
\begin{definition}(Predecessor Label Set)\label{def:ripPredecessorLabelSet}\\
Let $\setFamilyDef$ be a family of sets. Given an ordering $\labelsetOrder = (\setFamilySet_{1}, ... \setFamilySet_{n})$, and a set $\setFamilySet_{\labelSetIndex} \in \labelsetOrder$, we define the predecessor label set $\labelsetPredecessors_\labelSetIndex$ as the intersection of $\setFamilySet_{\labelSetIndex}$ with the union of its predecessors in the ordering, i.e. 
\begin{align}
\labelsetPredecessors_\labelSetIndex := \setFamilySet_{\labelSetIndex} \cap \left(\bigcup_{\labelSetIndexTwo < \labelSetIndex} \setFamilySet_{\labelSetIndexTwo} \right).
\end{align}
\end{definition}

We define the \emph{predecessor index function}, which maps the index of a label set to the index of the first label set in the running intersection order which contains the predecessor label set defined above.
\begin{definition}(Predecessor Index Function)\label{def:ripPredecessorIndexFunction}\\
Let $\setFamilyDef$ be a set family satisfying the running intersection property and let $\labelsetOrder = (\setFamilySet_{1}, ... \setFamilySet_{n})$ be a running intersection ordering for $\setFamily$. The predecessor index function \mbox{$\labelsetPredecessorFunction: \setFamily \rightarrow \{1, \ldots n\}$} maps each set to the index of the first set in the running intersection ordering which contains its predecessor label set.
\begin{align}
\labelsetPredecessorFunction (\setFamilySet_{\labelSetIndex}) := \begin{cases}
\labelSetIndex & \text{ if } \labelsetPredecessors_\labelSetIndex = \emptyset \\
\min_\labelSetIndexTwo \{\labelSetIndexTwo \in  \{1, \ldots, \labelSetIndex\}: \labelsetPredecessors_\labelSetIndex \subseteq \setFamilySet_{\labelSetIndexTwo} \} & \text{ otherwise}
\end{cases}
\end{align}
We also use $\labelsetPredecessorFunction(\labelSetIndex)$ as a shorthand notation directly using the label set's index in the ordering to refer to $\labelsetPredecessorFunction(\setFamilySet_{\labelSetIndex})$.
\end{definition}

\subsection{Hypergraphs}\label{sec:background:hypergraphs}

In this section, we introduce the theory of hypergraphs, which generalize the notion of an undirected graph by allowing for (hyper-)edges containing more than two nodes. We introduce this concept as the running intersection property and the concept of tree decompositions are closely related to acyclicity of the associated hypergraph.

Hypergraphs are useful as a direct representation of a set family, which in the context of this thesis will usually be a set of label sets. In particular, given some extraction order $\reqExtractionOrderDef$ containing the node $i \in \reqNodes$, we will be interested in the set containing the edge label assignment for the edges incident in $i$, i.e. in the set \mbox{$\{\labelsetEdge[e] ~|~ e \in \left( \inEdgesExtractionOrder{i} \cup \outEdgesExtractionOrder{i} \right) \}$}.

We begin by defining the notion of a hypergraph.
\begin{definition}(Hypergraph)\label{def:hypergraph} \\
A hypergraph $\hypergraphDef$ is a pair, where $V$ defines a set of nodes, and $\mathcal{E} \subseteq \PowerSet(\mathcal{V})$, where $\PowerSet(\cdot)$ denotes the power set, defines a set of hyperedges defined over the node set. The number of nodes contained in a single edge is unbounded.
\end{definition}
Note that this reduces to the usual definition of an undirected graph, if all edges of a hypergraph contain exactly two nodes.

\begin{remark} \label{remark:label-set-hypergraph}
Hypergraphs are useful as a direct representation of a set of label sets, where we consider each label set to be a hyperedge. When interpreting a set of label sets $\labelsets = \{\labelsetIndexed{1}, ..., \labelsetIndexed{n}\}$ as the edges of a hypergraph, we will assume that $V = \bigcup \labelsets$ and directly identify $\labelsets$ with the hypergraph defined by $(\bigcup \labelsets, \labelsets)$.
\end{remark}

We define the \emph{reduction} of a hypergraph, by removing any edge that is a subset of another edge.
\begin{definition}(Hypergraph Reduction)\label{def:hypergraph-reduction}\\
Given a hypergraph $\hypergraphDef$, we define its reduction as $\hypergraph = (V, \hypergraphEdges')$, where $\hypergraphEdges'$ are the inclusion-wise maximal edges, i.e. \mbox{$\hypergraphEdges' = \{e ~|~ e \in \hypergraphEdges:~ \nexists f \in \hypergraphEdges \setminus \{e\} : ~e \subseteq f \}$}.
\end{definition}

We now define \emph{join trees}, which are closely related to tree decompositions and running intersection orderings. The following definition is adapted from \cite{beeri1983desirability}.
\begin{definition}(Join Tree) \label{def:join-tree}\\
Given a set family $\setFamily$, a join tree $\joinTreeDef[\setFamily]$ is a tree, whose nodes are the sets in $\setFamily$, and whose edges can be labeled such that 
\begin{enumerate}
	\item Each join tree edge $\{X_1, X_2\}$ is labeled with $X_1 \cap X_2$. \label{def:join-tree:edge-labeled-with-intersection}
	\item For each pair of nodes in the join tree $X_1, X_2 \in \setFamily$, each edge along the path connecting $X_1$ and $X_2$ in $\joinTree$ is labeled with $X_1 \cap X_2$. \label{def:join-tree:paths-are-labeled}
\end{enumerate}
%A join tree satisfies the property that for each label $k$, the subgraph of the $T_J$ induced by the nodes in $\hypergraphEdges$ which $k$ is a (connected) tree.
We may also identify a join tree with the hypergraph $(\bigcup \setFamily, \setFamily)$ instead of the set family $\setFamily$ (see Remark~\ref{remark:label-set-hypergraph}).
\end{definition}

The following lemma shows how join trees are related to tree decompositions. 
\begin{lemma} (Tree Decompositions and Join Trees) \label{lemma:tree-decomp-and-join-trees} \\
Let $\genericGraphDef$ be a graph with tree decomposition $\treeDecompDef$. Then, the tree $\treeDecompTreeDef$ of the tree decomposition is a join tree of the hypergraph $\hypergraphDef$, where $\hypergraphNodes = \bigcup \treeDecompSets$ and $\hypergraphEdges = \treeDecompSets$.
\end{lemma}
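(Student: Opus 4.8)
The plan is to show that the tree $\treeDecompTree$ underlying the tree decomposition $\treeDecomp = (\setFamily, \treeDecompTree)$, once we identify its nodes with the corresponding bags in $\setFamily$, satisfies the two defining properties of a join tree for the hypergraph $\hypergraph$ with $\hypergraphNodes = \bigcup \setFamily$ and $\hypergraphEdges = \setFamily$. Since Definition~\ref{def:join-tree} only asks that the edges of the tree \emph{can} be labeled in the prescribed way, the first property --- that each tree edge $\{X_1, X_2\}$ is labeled with $X_1 \cap X_2$ --- is immediate: we simply declare this to be the labeling. All the content is therefore in verifying the second property, namely that for \emph{every} pair of bags $X_1, X_2 \in \setFamily$, each edge on the (unique) path connecting $X_1$ and $X_2$ in $\treeDecompTree$ carries the label $X_1 \cap X_2$.

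First I would reformulate Property~\ref{def:tree-decomp:sets-on-path-contain-nodes} of Definition~\ref{def:tree-decomposition} in the equivalent ``connectedness'' form already noted in the excerpt: for every vertex $v \in \genericGraphNodes$, the set of bags containing $v$ induces a connected subtree of $\treeDecompTree$. Now fix two bags $X_1, X_2$ and let $e' = \{Y_1, Y_2\}$ be an arbitrary edge on the path from $X_1$ to $X_2$ in $\treeDecompTree$; its label is $Y_1 \cap Y_2$. I would prove $Y_1 \cap Y_2 = X_1 \cap X_2$ by two inclusions. For ``$\supseteq$'': take $v \in X_1 \cap X_2$. Since the bags containing $v$ form a subtree and this subtree contains both $X_1$ and $X_2$, it contains the whole $X_1$--$X_2$ path, hence both $Y_1$ and $Y_2$, so $v \in Y_1 \cap Y_2$. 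For ``$\subseteq$'': take $v \in Y_1 \cap Y_2$. The bags containing $v$ form a subtree containing the adjacent pair $Y_1, Y_2$; I need $v \in X_1$ and $v \in X_2$. This is where Property~\ref{def:tree-decomp:sets-on-path-contain-nodes} in its original ``intermediate bag'' form is cleanest: $Y_1$ and $Y_2$ both lie on the path from $X_1$ to $X_2$, so one of $Y_1, Y_2$ lies on the path from $X_1$ to the other, and by Property~\ref{def:tree-decomp:sets-on-path-contain-nodes} (using that $v$ is in both $Y_1$ and $Y_2$) we get $v$ in every bag between them --- I then push this argument outward to conclude $v \in X_1$ and $v \in X_2$ by applying the path condition to the triples $(X_1, Y_i, Y_j)$ and $(Y_i, Y_j, X_2)$ with the $Y$'s ordered as they appear along the path. (Equivalently: the subtree of $v$-bags is connected and meets the $X_1$--$X_2$ path at $Y_1, Y_2$; if it missed $X_1$ it would be disconnected from $X_1$'s side, yet it must reach past $Y_1$ toward $X_1$ --- a cleaner way to phrase the same fact.)

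The main obstacle is purely one of careful bookkeeping about which bag lies ``between'' which along the path, so that Property~\ref{def:tree-decomp:sets-on-path-contain-nodes} is invoked with the indices in the right order; there is no deep difficulty, and the connected-subtree reformulation makes the $\subseteq$ direction essentially a one-line observation once stated correctly. I would close by noting that, having verified both join-tree properties with the natural labeling, $\treeDecompTree$ is by definition a join tree of $\hypergraph$, which is what was claimed; and I would remark that the converse-flavoured facts (every join tree of a graph's ``natural'' hypergraph gives a tree decomposition) are not needed here and can be deferred.
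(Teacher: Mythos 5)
Your $\supseteq$ direction is exactly the paper's argument: for $v \in X_1 \cap X_2$, Property~\ref{def:tree-decomp:sets-on-path-contain-nodes} (equivalently, connectedness of the set of bags containing $v$) puts $v$ into both endpoints of every edge on the $X_1$--$X_2$ path, hence into every edge label along it. That is all the paper proves, and it is all the join-tree condition requires under its intended reading: ``each edge along the path is labeled with $X_1 \cap X_2$'' means that $X_1 \cap X_2$ is \emph{contained} in the label $Y_1 \cap Y_2$ fixed by Property~\ref{def:join-tree:edge-labeled-with-intersection}, not that it equals it --- the paper's own proof closes with ``the label set of $e$ contains $\treeDecompSet_i \cap \treeDecompSet_k$.''

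The reverse inclusion $Y_1 \cap Y_2 \subseteq X_1 \cap X_2$ that you set out to prove is false in general, and the argument you sketch for it cannot be repaired. Counterexample: decompose the path graph on vertices $a, v, b$ with bags $X_1 = \{a\}$, $Y_1 = \{a,v\}$, $Y_2 = \{v,b\}$, $X_2 = \{b\}$ arranged as a path in $\treeDecompTree$; this satisfies all three properties of Definition~\ref{def:tree-decomposition}, the edge $\{Y_1, Y_2\}$ lies on the $X_1$--$X_2$ path, yet $Y_1 \cap Y_2 = \{v\} \not\subseteq \emptyset = X_1 \cap X_2$. The step that fails is the ``push outward'': Property~\ref{def:tree-decomp:sets-on-path-contain-nodes} only propagates membership of $v$ to bags lying \emph{between} two bags that already contain $v$; applied to the triple $(X_1, Y_1, Y_2)$ it yields $X_1 \cap Y_2 \subseteq Y_1$, which says nothing about whether $v \in X_1$. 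Likewise the subtree of bags containing $v$ is under no obligation to ``reach past $Y_1$ toward $X_1$'' --- it may consist of just the single edge $\{Y_1, Y_2\}$. Dropping the $\subseteq$ half entirely leaves a correct proof that coincides with the paper's.
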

\begin{proof}
First, note that the tree $\treeDecompTreeDef$ is a tree whose nodes are the hyperedges $\hypergraphEdges$. We now have to show that the edges of the tree can be labeled according to Definition~\ref{def:join-tree}.

By Property~\ref{def:join-tree:edge-labeled-with-intersection} of the join tree's edge labeling, each edge $\{i,j\} \in \joinTreeEdges$ is assigned the label set $\treeDecompSet_i \cap \treeDecompSet_j$.

We now show that Property~\ref{def:join-tree:paths-are-labeled} of the join tree's edge labeling then follows from Property~\ref{def:tree-decomp:sets-on-path-contain-nodes} of a tree decomposition (see Definition~\ref{def:tree-decomposition}). Consider  a pair of nodes in the tree decomposition, $i, k \in \treeDecompTreeNodes$. Property~\ref{def:tree-decomp:sets-on-path-contain-nodes} of the tree decomposition states that given any node $j$ on the (uniquely defined) path between $i$ and $k$, $\treeDecompSet_i\cap \treeDecompSet_k \subseteq \treeDecompSet_j$ holds. Consider any edge $e = (u, v)$ on this path. By Property~\ref{def:join-tree:edge-labeled-with-intersection} of a join tree, $e$ is labeled with $\treeDecompSet_u \cap \treeDecompSet_v$. However, by Property~\ref{def:tree-decomp:sets-on-path-contain-nodes} of the tree decomposition, both $\treeDecompSet_i\cap \treeDecompSet_k \subseteq \treeDecompSet_u$ and $\treeDecompSet_i\cap \treeDecompSet_k \subseteq \treeDecompSet_v$ hold, implying that the label set of $e$ contains $\treeDecompSet_i\cap \treeDecompSet_k$.

Therefore, a tree decomposition defines a join tree for the hypergraph $(\bigcup \treeDecompSets, \treeDecompSets)$.
\end{proof}
Some examples of join trees can be found in Figure~\ref{fig:tree-decomp-examples}: Every tree in the right column is a join tree for the underlying set family.

We will further clarify the relation between join trees and tree decompositions towards the end of this section, in Lemma~\ref{lemma:join-trees-and-tree-decomp-primal-graphs}.

In the context of hypergraph theory, the existence of join trees and of running intersection orderings of the edge sets in hypergraphs is tied to \emph{$\alpha$-acyclicity}, one of several definitions of hypergraph acyclicity. Multiple distinct definitions of acyclicity for hypergraphs exist. In increasing order of restrictiveness, $\alpha$-, $\beta$-, $\gamma$- and Berge-acyclicity have been defined, all of which reduce to the usual definition of acyclicity when restricted to regular graphs. A detailed discussion of the hierarchy of hypergraph acyclicity notions can be found in the paper by Fagin, \cite{fagin1983degrees}.

We now give a definition of $\alpha$-acyclicity, in terms of several equivalent conditions. A proof of the equivalence can be found in \cite{beeri1983desirability} or \cite[p. 460ff]{maier1983theory}.
\begin{definition}($\alpha$-Acyclicity, cf. \cite{beeri1983desirability}) \label{def:hypergraph-alpha-acyclicity} \\
A hypergraph $\hypergraphDef$ is $\alpha$-acyclic, if and only if any of the following hold:
\begin{enumerate}
\item The reduction of $\hypergraph$ (cf. Definition~\ref{def:hypergraph-reduction}) is $\alpha$-acyclic. 
\item A join tree for $\hypergraphEdges$ exists.
\item $\hypergraphEdges$ satisfies the running intersection property, i.e. a running intersection ordering of $\hypergraphEdges$ exists.
%\item The GYO algorithm succeeds for $\hypergraph$. 
\end{enumerate}
%\defWhitespace
\end{definition} 
Since $\alpha$-acyclicity is the only kind relevant for this thesis, we will in the following omit the prefix ``$\alpha$'' and simply refer to it as acyclicity or label set acyclicity, when referring to a hypergraph induced by a set of label sets, as described in Remark~\ref{remark:label-set-hypergraph}. 

The inclusion of the second condition draws the connection between join trees and the existence of a running intersection order. To obtain a running intersection order of a label set, our approach will involve the generation of a join tree based on the label sets present. In the following lemmas, we will see how a join tree may be constructed given a hypergraph, and how a running intersection ordering can be constructed from this join tree.

We now examine how a join tree can be constructed for a given acyclic hypergraph. We first introduce the notion of an intersection graph, which is a representation of a family of sets in the form of an undirected graph.
\begin{definition} (Intersection Graph) \label{def:intersection-graph}\\
Let $\setFamilyDef$ be a family of sets. The intersection graph $\intersectionGraphDef[\setFamily]$ is an undirected graph with node set $\intersectionGraphNodes$, where two nodes $\setFamilySet, \setFamilySet' \in \setFamily$ are connected if their sets have a non-empty intersection, i.e. we define the edge set as $\intersectionGraphEdges = \{ \{\setFamilySet, \setFamilySet' \}  ~|~  \setFamilySet, \setFamilySet' \in \intersectionGraphNodes: \setFamilySet \cap \setFamilySet' \neq \emptyset \}$.
\end{definition}

A join tree can be obtained as a maximum spanning tree of the intersection graph for the hypergraph's edge set.
\begin{lemma}(Constructing a Join Tree from an Intersection Graph, cf. \cite{maier1983theory})\label{lemma:join-tree-max-spanning-tree}\\
Given an acyclic hypergraph $\hypergraphDef$, any maximum spanning tree of the intersection graph $\intersectionGraph$ with weight function $w: \intersectionGraphEdges \rightarrow \mathbb{N}_0$, $w(S, S') = |S \cap S' |$ is a join tree.
\end{lemma}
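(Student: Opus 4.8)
The plan is to reduce to the standard characterization of join trees by the \emph{running-intersection} (junction-tree) property and then pin down the maximum spanning-tree weight by a counting argument. For a spanning tree $T$ of $\intersectionGraph$ and an element $v \in \hypergraphNodes$, write $T_v$ for the subgraph of $T$ induced on the hyperedges containing $v$. First I would prove the characterization: $T$, equipped with the labelling assigning $S \cap S'$ to each edge $\{S,S'\}$, is a join tree if and only if $T_v$ is connected for every $v \in \hypergraphNodes$. For the forward direction, if $S_1, S_2 \ni v$ then along the $S_1$--$S_2$ path in $T$ every edge $\{Y,Y'\}$ carries the label $Y \cap Y'$ (Property~\ref{def:join-tree:edge-labeled-with-intersection}) which must contain $S_1 \cap S_2 \ni v$ (Property~\ref{def:join-tree:paths-are-labeled}), so every node on the path contains $v$ and $T_v$ is connected. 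Conversely, if each $T_v$ is connected, then for any $S_1, S_2$ and any $v \in S_1 \cap S_2$ the $S_1$--$S_2$ path in $T$ stays inside the subtree $T_v$, so each of its edges has both endpoints containing $v$; hence the label $Y \cap Y'$ of every such edge contains $S_1 \cap S_2$, which is exactly Property~\ref{def:join-tree:paths-are-labeled}.

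Next I would establish the weight identity. Exchanging the order of summation,
\[
w(T) \;=\; \sum_{\{S,S'\}\in E(T)} |S \cap S'| \;=\; \sum_{v \in \hypergraphNodes} \bigl|\{\, \{S,S'\} \in E(T) : v \in S \cap S'\,\}\bigr| \;=\; \sum_{v \in \hypergraphNodes} |E(T_v)|.
\]
Since $T$ is a tree, each $T_v$ is a forest on $n_v := |\{S \in \hypergraphEdges : v \in S\}|$ vertices, so $|E(T_v)| \le n_v - 1$, with equality exactly when $T_v$ is connected. Hence every spanning tree of $\intersectionGraph$ satisfies $w(T) \le W^\ast$, where $W^\ast := \sum_{v \in \hypergraphNodes}(n_v - 1)$, and equality holds precisely when $T$ has the running-intersection property, i.e. (by the characterization above) precisely when $T$ is a join tree.

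Finally I would invoke acyclicity. By Definition~\ref{def:hypergraph-alpha-acyclicity}, $\hypergraph$ being acyclic yields a join tree $T^\ast$ for $\hypergraphEdges$. A short argument shows $T^\ast$ is a spanning tree of $\intersectionGraph$: if an edge $\{S,S'\}$ of $T^\ast$ had $S \cap S' = \emptyset$, then removing it splits $T^\ast$ into two parts such that no element occurs both in a node of one part and a node of the other (the connecting path would otherwise run through $\{S,S'\}$, forcing $v \in S \cap S'$), which makes $\intersectionGraph$ disconnected and contradicts the existence of a spanning tree. Thus $w(T^\ast) = W^\ast$ by the previous paragraph, so $W^\ast$ is the maximum spanning-tree weight of $\intersectionGraph$. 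Any maximum spanning tree therefore attains $w(T) = W^\ast$, hence has the running-intersection property, hence is a join tree.

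I expect the characterization lemma to be the main obstacle: it requires the correct reading of Property~\ref{def:join-tree:paths-are-labeled} (the label of each path edge \emph{contains} $X_1 \cap X_2$) together with the observation that a subtree's internal path coincides with the path in the ambient tree, and a careful handling of the possibility that the join tree furnished by $\alpha$-acyclicity contains an edge with empty intersection — ruled out only because connectedness of $\intersectionGraph$ is implicit in the statement. The weight identity and the concluding optimality argument are then routine.
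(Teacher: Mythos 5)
Your proof is correct and complete. The paper itself offers no argument here --- it simply defers to Theorem~13.1 of Maier's book --- so you have supplied the proof that the citation stands in for, and it is exactly the classical counting argument: the characterization of join trees via connectedness of each induced subgraph $T_v$, the double-counting identity $w(T)=\sum_{v}|E(T_v)|$ with the forest bound $|E(T_v)|\le n_v-1$, and the observation that acyclicity furnishes a tree attaining the bound, so every maximum spanning tree must attain it too. The two delicate points are both handled properly: you read Property~\ref{def:join-tree:paths-are-labeled} as containment of $X_1\cap X_2$ in each path edge's label (the same reading the paper uses in its proof of Lemma~\ref{lemma:tree-decomp-and-join-trees}), and you correctly dispose of the possibility that the join tree guaranteed by $\alpha$-acyclicity contains an edge with empty intersection, which can only happen when $\intersectionGraph$ is disconnected and the lemma is vacuous. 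Nothing further is needed.
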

\begin{proof}
See Theorem 13.1 in \cite[p. 459]{maier1983theory}.
\end{proof}

Note that Lemma~\ref{lemma:join-tree-max-spanning-tree} only holds for acyclic hypergraphs, i.e. it assumes the existence of a join tree.

The following lemmas show that given a join tree, a running intersection ordering of the join tree's nodes can be constructed, and vice versa.
\begin{lemma}(Deriving a Running Intersection Ordering from a Join Tree, cf. \cite{maier1983theory,beeri1983desirability}) \label{lemma:rip-order-from-join-tree}\\
Let $\setFamily$ be a set family admitting a join tree $\joinTreeDef[\setFamily]$. Then, $\setFamily$ has the running intersection property and a running intersection ordering can be computed through a preorder traversal of $\joinTree$, i.e. a graph traversal of the join tree visiting each node before its sub-trees.
\end{lemma}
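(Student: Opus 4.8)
The plan is to root the join tree $\joinTree$ at an arbitrary node and show that \emph{any} preorder traversal of the rooted tree yields a running intersection ordering (Definition~\ref{def:runningIntersectionProperty}), with the running-intersection witness of each set being simply its parent in the rooted join tree. This also identifies the predecessor index of Definition~\ref{def:ripPredecessorIndexFunction} directly with the join-tree structure.

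First I would fix a root of $\joinTree$ and run a preorder traversal, obtaining an ordering $\labelsetOrder = (\setFamilySet_1, \ldots, \setFamilySet_n)$ of $\setFamily$. The only feature of preorder I will use is the following: each node is visited strictly before every node in its subtrees; equivalently, if $\setFamilySet_b$ denotes the parent of $\setFamilySet_a$ in the rooted tree, then $b < a$, and no proper descendant of $\setFamilySet_a$ receives an index $< a$. Then I would fix $a \in \{2, \ldots, n\}$, let $\setFamilySet_b$ be the parent of $\setFamilySet_a$, and prove $\setFamilySet_a \cap \bigl( \bigcup_{c < a} \setFamilySet_c \bigr) \subseteq \setFamilySet_b$; since $b < a$, this is exactly the running intersection property with witness index $b$.

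It suffices to show $\setFamilySet_a \cap \setFamilySet_c \subseteq \setFamilySet_b$ for each individual $c < a$ (the cases $c = b$ and $\setFamilySet_c \cap \setFamilySet_a = \emptyset$ are trivial). The structural core is this: by the preorder property, $\setFamilySet_c$ with $c < a$ cannot lie in the subtree of $\joinTree$ rooted at $\setFamilySet_a$, so $\setFamilySet_c$ is separated from $\setFamilySet_a$ by the single tree edge $\{\setFamilySet_b, \setFamilySet_a\}$ joining that subtree to the rest of $\joinTree$; hence $\setFamilySet_b$ lies on the unique $\joinTree$-path from $\setFamilySet_c$ to $\setFamilySet_a$. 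By Property~\ref{def:join-tree:paths-are-labeled} of Definition~\ref{def:join-tree}, the edge of that path incident to $\setFamilySet_b$ is labeled with $\setFamilySet_c \cap \setFamilySet_a$, while by Property~\ref{def:join-tree:edge-labeled-with-intersection} that same label equals the intersection of the edge's two endpoints, one of which is $\setFamilySet_b$; therefore $\setFamilySet_c \cap \setFamilySet_a \subseteq \setFamilySet_b$. Taking the union over all $c < a$ yields the claim, and since the witness $\setFamilySet_b$ is by construction the parent of $\setFamilySet_a$ in $\joinTree$, the predecessor label set and predecessor index of Definition~\ref{def:ripPredecessorIndexFunction} can be read off directly from the rooted join tree.

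I do not expect a genuine obstacle; the argument is purely combinatorial. The only point requiring care is the interpretation of the two edge-labeling conditions in Definition~\ref{def:join-tree}: I must make sure they jointly deliver the consequence actually used, namely that $\setFamilySet_c \cap \setFamilySet_a$ is contained in every node lying on the $\joinTree$-path from $\setFamilySet_c$ to $\setFamilySet_a$. A secondary routine point is to state the preorder property precisely (parent strictly before child, all descendants strictly after the node) and to handle the degenerate intersections noted above, both of which make the required inclusion immediate.
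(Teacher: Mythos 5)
Your proposal is correct: rooting $\joinTree$, taking the parent $\setFamilySet_b$ as the witness, observing that any earlier-indexed $\setFamilySet_c$ lies outside the subtree of $\setFamilySet_a$ so that the edge $\{\setFamilySet_b,\setFamilySet_a\}$ lies on the $\setFamilySet_c$--$\setFamilySet_a$ path, and combining the two labeling conditions of Definition~\ref{def:join-tree} (reading the second one as containment of the edge label, exactly as the paper does in the proof of Lemma~\ref{lemma:tree-decomp-and-join-trees}) gives $\setFamilySet_c \cap \setFamilySet_a \subseteq \setFamilySet_b \cap \setFamilySet_a \subseteq \setFamilySet_b$. The paper itself supplies no argument here and simply defers to \cite{maier1983theory,beeri1983desirability}; your write-up is the standard proof found there, so there is nothing to object to.
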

\begin{proof}
For preorder traversals, see Lemma 13.7 in \cite[p. 471]{maier1983theory}. For level-order traversals, see the proof of Theorem 3.4, $(9) \Rightarrow (10)$ in \cite[p. 494]{beeri1983desirability}.
\end{proof}

\begin{lemma}(Deriving a Join Tree from a Running Intersection Ordering) \label{lemma:join-tree-from-rip-order}\\
Given a set family $\setFamilyDef$ that has a running intersection ordering $(S_1, ... S_n)$, a join tree for $\setFamily$ can be constructed in polynomial time.
\end{lemma}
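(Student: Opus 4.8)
The plan is to turn the given running intersection ordering into a join tree by a parent‑pointer construction and then verify Definition~\ref{def:join-tree}. Write the ordering as $(S_1,\ldots,S_n)$, and for $i\in\{2,\ldots,n\}$ let $\labelsetPredecessors_i = S_i\cap\bigl(\bigcup_{j<i}S_j\bigr)$ be the predecessor label set (Definition~\ref{def:ripPredecessorLabelSet}). Define a parent index $p(i):=\labelsetPredecessorFunction(i)$ if $\labelsetPredecessors_i\neq\emptyset$, and $p(i):=1$ otherwise. In both cases $p(i)<i$: this is clear when $\labelsetPredecessors_i=\emptyset$, and when $\labelsetPredecessors_i\neq\emptyset$ the running intersection property (Definition~\ref{def:runningIntersectionProperty}) guarantees some $j<i$ with $\labelsetPredecessors_i\subseteq S_j$, so by Definition~\ref{def:ripPredecessorIndexFunction} $p(i)=\labelsetPredecessorFunction(i)\leq j<i$. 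Let $\joinTree$ be the graph with node set $\setFamily$ and edges $\{\,\{S_i,S_{p(i)}\}\mid i\in\{2,\ldots,n\}\,\}$, where edge $\{S_i,S_{p(i)}\}$ is labeled by $S_i\cap S_{p(i)}$. Since every edge joins $S_i$ to a strictly earlier $S_{p(i)}$, there are exactly $n-1$ edges and $\joinTree$ is connected and acyclic, hence a tree; it is produced in polynomial time, since each $\labelsetPredecessors_i$ and each $\labelsetPredecessorFunction(i)$ is computable in polynomial time. Property~\ref{def:join-tree:edge-labeled-with-intersection} holds by construction.

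For Property~\ref{def:join-tree:paths-are-labeled} it suffices, exactly as in the proof of Lemma~\ref{lemma:tree-decomp-and-join-trees}, to show that for every element $v\in\bigcup\setFamily$ the nodes of $\joinTree$ containing $v$ induce a connected subtree: then any edge on the $\joinTree$‑path between two $v$‑containing sets $S_a,S_b$ is labeled by the intersection of two $v$‑containing sets, hence its label contains $v$, and ranging over all $v\in S_a\cap S_b$ yields $S_a\cap S_b\subseteq$ (label of any such edge). Root $\joinTree$ at $S_1$, fix $v$, and let $S_m$ be the set of smallest index with $v\in S_m$. I claim, by strong induction on $i$: \emph{if $v\in S_i$ then the parent chain $S_i,S_{p(i)},S_{p(p(i))},\ldots$ contains $S_m$ as a term, and every set on this chain up to and including $S_m$ contains $v$.} For $i=m$ this is immediate. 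For $i>m$: since $v\in S_i$ and $v\in S_m$ with $m<i$, we have $v\in S_i\cap\bigl(\bigcup_{j<i}S_j\bigr)=\labelsetPredecessors_i$, so $\labelsetPredecessors_i\neq\emptyset$ and $p(i)=\labelsetPredecessorFunction(i)$. By Definition~\ref{def:ripPredecessorIndexFunction}, $\labelsetPredecessors_i\subseteq S_{p(i)}$, so $v\in S_{p(i)}$; moreover $p(i)$ is the \emph{least} index whose set contains $\labelsetPredecessors_i$, and since $v\in\labelsetPredecessors_i$ any such index is $\geq m$, so $m\leq p(i)<i$. Applying the induction hypothesis to $p(i)$ completes the claim.

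The claim shows $S_m$ is an ancestor of (or equal to) every $v$‑containing set, and that on the root‑path of any such set the segment from the set down to $S_m$ consists only of $v$‑containing sets. Hence, given $S_a,S_b$ with $v\in S_a\cap S_b$, their least common ancestor $S_\ell$ in $\joinTree$ is at depth at least that of $S_m$, so it lies on the all‑$v$ segment from $S_a$ to $S_m$; thus $v\in S_\ell$, and both segments $S_a$–$S_\ell$ and $S_b$–$S_\ell$ are sub‑segments of all‑$v$ paths, so the unique $\joinTree$‑path from $S_a$ to $S_b$ passes only through $v$‑containing sets. Therefore $\joinTree$ is a join tree for $\setFamily$, built in polynomial time; this is the converse of Lemma~\ref{lemma:rip-order-from-join-tree}.

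I expect the only delicate point to be the bookkeeping in the induction: one must check that $p(i)$ never jumps past $S_m$ (handled by the minimality in Definition~\ref{def:ripPredecessorIndexFunction}), that the sets $S_i$ with $\labelsetPredecessors_i=\emptyset$ — attached arbitrarily to $S_1$ — cannot interfere, since being disjoint from all earlier sets they never enter the $v$‑subtree for a $v$ shared with an earlier set, and that the least common ancestor of two $v$‑containing nodes still contains $v$ rather than only lying on a common ancestor strictly above $S_m$.
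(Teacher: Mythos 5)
Your proof is correct, but it takes a genuinely different route from the paper. The paper's proof is a two-line reduction: a running intersection ordering witnesses $\alpha$-acyclicity of the associated hypergraph (Definition~\ref{def:hypergraph-alpha-acyclicity}), and then Lemma~\ref{lemma:join-tree-max-spanning-tree} (cited from Maier) asserts that any maximum-weight spanning tree of the intersection graph is a join tree, computable in polynomial time. You instead build the tree directly from the given ordering via parent pointers $S_i \mapsto S_{\labelsetPredecessorFunction(i)}$ and verify the join-tree properties from first principles, the key step being the induction showing that for each element $v$ the parent chain of any $v$-containing set descends through $v$-containing sets to the minimal-index set $S_m$ containing $v$, so the $v$-containing nodes induce a connected subtree. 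Your bookkeeping is sound: the minimality in Definition~\ref{def:ripPredecessorIndexFunction} indeed forces $m \leq \labelsetPredecessorFunction(i) < i$ whenever $v \in \labelsetPredecessors_i$, and sets with empty predecessor set can only contain elements not seen earlier, so attaching them to $S_1$ creates edges that never lie on a path between two sets sharing an element. What each approach buys: the paper's argument is shorter but leans on an external theorem stated without proof and discards the given ordering in favor of recomputing a spanning tree; yours is self-contained, uses the ordering as given (so the resulting join tree is compatible with the predecessor structure already defined in Section~\ref{sec:background:rip}), and makes the polynomial-time claim immediate from the construction.
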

\begin{proof}
Since $\setFamily$ has a running intersection ordering, the corresponding hypergraph (see Remark~\ref{remark:label-set-hypergraph}) is acyclic. By Lemma~\ref{lemma:join-tree-max-spanning-tree}, any maximal spanning tree of the intersection graph (which is computable in polynomial time) is a join tree for $\setFamily$.
\end{proof}

We next show that for each set in a set family satisfying the running intersection property, a running intersection ordering starting at that set exists.
\begin{corollary}(First Element in a Running Intersection Ordering)\label{corollary:rip-arbitrary-start-edge}\\
Let $\setFamilyDef$ be a family of sets that satisfies the running intersection property. Then, for any $S_i \in \setFamily$, there exists a running intersection ordering $\labelsetOrder$ with $\labelsetOrder_1 = S_i$.
\end{corollary}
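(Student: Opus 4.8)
The plan is to route the argument through the correspondence between running intersection orderings and join trees, which has already been set up in the excerpt. First I would observe that, since $\setFamily$ satisfies the running intersection property, Lemma~\ref{lemma:join-tree-from-rip-order} (equivalently, the second condition of the $\alpha$-acyclicity characterization in Definition~\ref{def:hypergraph-alpha-acyclicity}) guarantees that a join tree $\joinTreeDef[\setFamily]$ exists. A join tree is an \emph{unrooted} tree whose node set is exactly $\setFamily$, and whose defining properties (Definition~\ref{def:join-tree}) are symmetric in the sense that no particular node is singled out as a root.

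The key step is then to root $\joinTree$ at the prescribed set $\setFamilySet_i$ and perform a preorder traversal starting from $\setFamilySet_i$. By Lemma~\ref{lemma:rip-order-from-join-tree}, the sequence in which the nodes of $\joinTree$ are visited by a preorder traversal is a running intersection ordering of $\setFamily$. Since a preorder traversal always visits the root of the tree before any of its subtrees, and since we are free to choose $\setFamilySet_i$ as the root for this traversal, the resulting ordering $\labelsetOrder$ has $\labelsetOrder_1 = \setFamilySet_i$, which is exactly what is claimed.

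The main thing to check carefully — and the only point with any content — is that Lemma~\ref{lemma:rip-order-from-join-tree} may legitimately be applied with $\setFamilySet_i$ designated as the root. This is immediate from the fact that a join tree has no intrinsic root: for any choice of root node, the tree together with that root is a valid rooted tree, and the preorder traversal from that root produces a running intersection ordering. Thus the corollary follows by simply instantiating the join-tree construction and the traversal lemma; no further calculation is needed. One might also remark that the (perhaps surprising) conclusion — that the first element of a running intersection ordering can be chosen arbitrarily — is precisely a reflection of the root-freeness of join trees.
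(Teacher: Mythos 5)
Your proposal is correct and follows essentially the same route as the paper's own proof: obtain a join tree for $\setFamily$ via Lemma~\ref{lemma:join-tree-from-rip-order}, then apply Lemma~\ref{lemma:rip-order-from-join-tree} with the traversal started at $\setFamilySet_i$. Your additional remark that the traversal may legitimately be rooted at any node because a join tree is unrooted makes explicit a point the paper leaves implicit, but the argument is the same.
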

\begin{proof}
By Lemma~\ref{lemma:join-tree-from-rip-order}, a join tree for $\setFamily$ can be constructed. By Lemma~\ref{lemma:rip-order-from-join-tree}, any pre- or level-order traversal of the join tree is a running intersection ordering. This traversal can be started at $S_i$.
\end{proof}

We will now introduce two common representations of hypergraphs in terms of undirected graphs. We first introduce the \emph{primal graph}.
\begin{definition}(Primal Graph) \label{def:hypergraph-primal-graph} \\
Given a hypergraph $\hypergraphDef$, the primal graph $\hgPrimalGraphDef$ is an undirected graph with the same node set as $\hypergraph$. The edge set is defined as $\hgPrimalGraphEdges := \{ \{i, j\} ~|~ i, j \in \hgPrimalGraphNodes: \exists e \in \hypergraphEdges: \{i, j\} \subseteq e \}$, i.e. two nodes are connected in the primal graph, if both occur in one of the hyperedges in $ \hypergraphEdges$.
\end{definition}

Using this definition, we can establish the connection between the join graph defined for a hypergraph's edge set, and the concept of a tree decomposition.
\begin{lemma} (Join Trees and Tree Decompositions of Primal Graphs) \label{lemma:join-trees-and-tree-decomp-primal-graphs} \\
Let $\hypergraphDef$ with $\hypergraphNodes = \bigcup_{e \in \hypergraphEdges} e$ be an acyclic hypergraph and let $\hgPrimalGraphDef$ denote its primal graph. Any join tree for $\hypergraphEdges$ defines a tree decomposition for $\hgPrimalGraph$.
\end{lemma}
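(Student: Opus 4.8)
The plan is to verify directly that a join tree $\joinTreeDef[\hypergraphEdges]$ for $\hypergraphEdges$, viewed as a pair $(\hypergraphEdges, \joinTreeEdges)$ together with the family of node sets $\hypergraphEdges$ itself (each hyperedge is a subset of $\hgPrimalGraphNodes$), satisfies the three conditions of Definition~\ref{def:tree-decomposition} with respect to $\hgPrimalGraph$. So the bags of the candidate tree decomposition are precisely the hyperedges of $\hypergraph$, indexed by the nodes of the join tree.

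First I would check Condition~\ref{def:tree-decomp:every-node-is-included}: since $\hypergraphNodes = \bigcup_{e \in \hypergraphEdges} e$ by hypothesis, every node of $\hgPrimalGraph$ lies in some hyperedge, hence in some bag. Next, Condition~\ref{def:tree-decomp:every-edge-is-covered}: let $\{i,j\} \in \hgPrimalGraphEdges$. By Definition~\ref{def:hypergraph-primal-graph}, there is a hyperedge $e \in \hypergraphEdges$ with $\{i,j\} \subseteq e$; this $e$ is a bag containing both endpoints. Finally, Condition~\ref{def:tree-decomp:sets-on-path-contain-nodes}, the running-intersection/connectivity condition: take bags $e_1, e_3 \in \hypergraphEdges$ and let $e_2$ lie on the path between them in $\joinTree$. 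Any node $v \in e_1 \cap e_3$ must be shown to lie in $e_2$. For this I would use Property~\ref{def:join-tree:paths-are-labeled} of a join tree: every edge on the path from $e_1$ to $e_3$ in $\joinTree$ is labeled with $e_1 \cap e_3$, and by Property~\ref{def:join-tree:edge-labeled-with-intersection} the edge $\{e_2', e_2\}$ incident to $e_2$ on that path is labeled with $e_2' \cap e_2$; equating the two labels gives $e_1 \cap e_3 \subseteq e_2' \cap e_2 \subseteq e_2$, so $v \in e_2$. (If $e_2$ is an endpoint of the path, i.e. $e_2 = e_1$ or $e_2 = e_3$, the claim is immediate.)

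I expect the main subtlety — rather than a deep obstacle — to be the bookkeeping in the last step: one must be careful that ``$e_2$ on the path from $e_1$ to $e_3$'' picks out a join-tree edge incident to $e_2$ that is itself on the path, and to handle the degenerate cases where the path has length zero or one. This is exactly the argument already carried out in the reverse direction in the proof of Lemma~\ref{lemma:tree-decomp-and-join-trees}, so it can be reused almost verbatim. Note that acyclicity of $\hypergraph$ is used only to guarantee that a join tree exists at all (Definition~\ref{def:hypergraph-alpha-acyclicity}); given such a join tree, the three verifications above are purely combinatorial and complete the proof.
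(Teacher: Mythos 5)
Your proposal is correct and follows essentially the same route as the paper's proof: verify the three tree-decomposition conditions directly, with conditions 1 and 2 immediate from the hypotheses and condition 3 obtained by combining the two labeling properties of a join tree to get $e_1 \cap e_3 \subseteq e_2' \cap e_2 \subseteq e_2$ for any bag $e_2$ on the path. Your handling of the degenerate endpoint cases and the observation that acyclicity is needed only for the existence of the join tree are both fine and, if anything, slightly more careful than the paper's version.
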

\begin{proof}
We show that the join tree satisfies the three defining properties of a tree decomposition for the primal graph.
First, consider Property~\ref{def:tree-decomp:every-node-is-included} of Definition~\ref{def:tree-decomposition}. Due to the requirement that $\hypergraphNodes = \bigcup_{e \in \hypergraphEdges} e$, every node in $\hypergraphNodes$ occurs in some hyperedge, and is therefore  contained in one of the nodes of the join tree.

Property~\ref{def:tree-decomp:every-edge-is-covered} is satisfied, since any nodes that are connected in the primal graph by an edge both occur in some set, which is a node in the join tree.

Finally, Property~\ref{def:tree-decomp:sets-on-path-contain-nodes} holds: By definition of a join tree, any node that occurs in the label set of an edge must also be contained in the two sets to which the edge is incident. Given three nodes $i, j, k \in \hypergraphNodes$, Property~\ref{def:join-tree:paths-are-labeled} in Definition~\ref{def:join-tree} states that each edge lying on the path from $i$ to $k$ must contain the intersection $i \cap k$ in the edge label set. It then follows that $i\cap k$ must also be contained in the set $k$.
\end{proof}

Finally, we define the representation of a hypergraph as an \emph{incidence graph}, which is also commonly referred to as the dual graph.
\begin{definition}(Incidence Graph) \label{def:hypergraph-incidence-graph} \\
Given a hypergraph $\hypergraphDef$, the incidence graph $\hgIncidenceGraphDef$ is a bipartite graph, where the node set is $\hgIncidenceGraphNodes := \hypergraphNodes \dot{\cup} \hypergraphEdges$, and the edge set is $\hgIncidenceGraphEdges = \{ \{ v, e \} ~|~ v \in \hypergraphNodes, e \in \hypergraphEdges: v \in e  \}$. 

The incidence graph therefore contains a node for each node and edge of the hypergraph, and an edge connecting each hyperedge to each of its incident nodes.
\end{definition}
An example of a hypergraph with its primal and incidence graph representations is shown in Figure~\ref{fig:label-set-graph-example}.

\begin{figure}[htbp]
	\centering
	\includegraphics[width=0.6\textwidth]{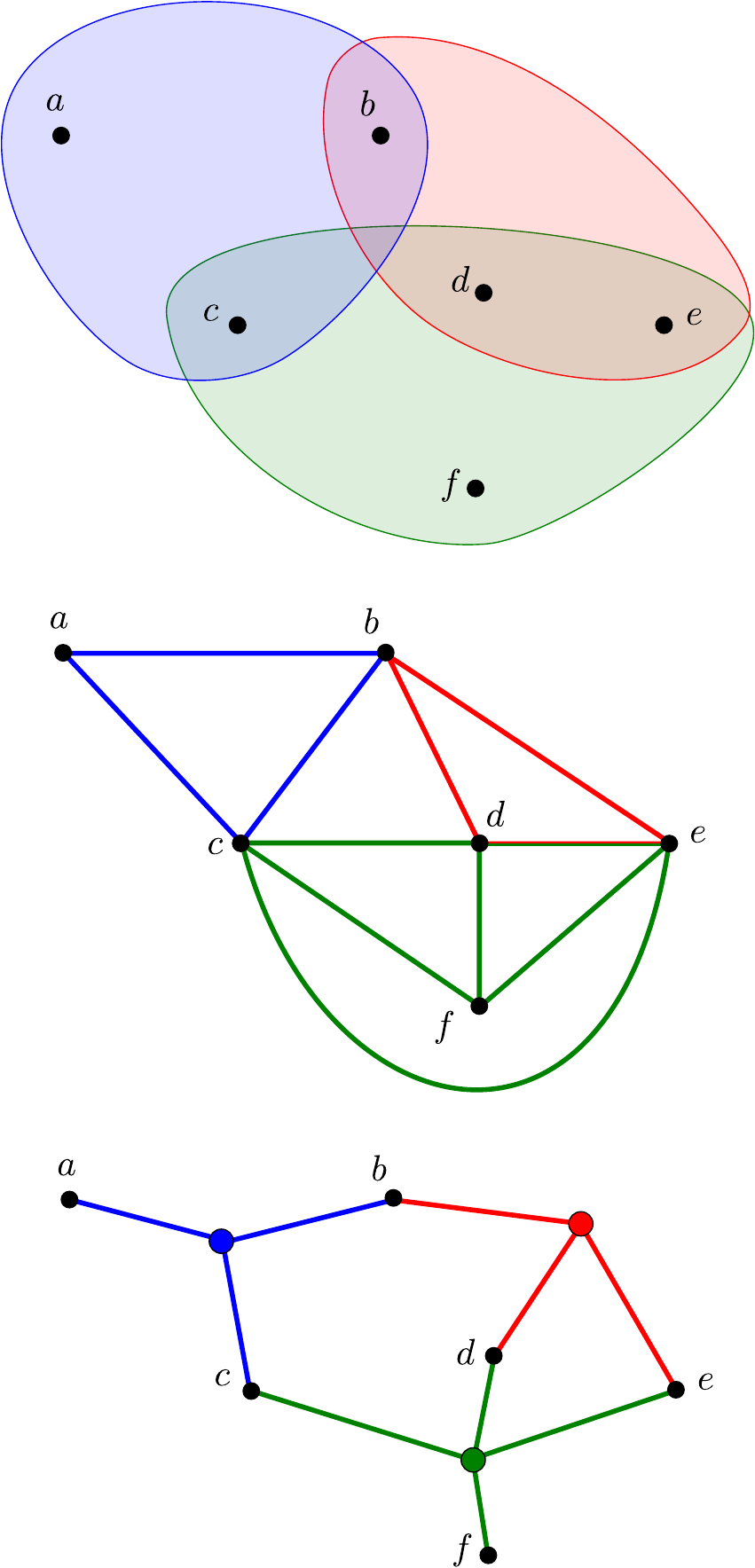}
	\caption{
The primal graph and incidence graph for the hypergraph \newline $\hypergraphDef = (\{a, b, c, d, e, f\} , \{ \{a, b, c\}, \{ b, d, e \}, \{ c, d, e, f\} \})$. \\
Top: The hypergraph $\hypergraph$. \\
Center: The primal graph $\hgPrimalGraphDef$ of $\hypergraph$. Two nodes are connected in $\hgPrimalGraphEdges$ if some hyperedge contains them both. \\
Bottom: The incidence graph $\hgIncidenceGraphDef$ of $\hypergraph$. A node is introduced for each node and hyperedge in $\hypergraphNodes$ and $\hypergraphEdges$, respectively. The three unlabeled nodes represent the hyperedges, and are colored accordingly in the diagram. Each such unlabeled node is connected to the nodes contained in its corresponding hyperedge.
	}
	\label{fig:label-set-graph-example}
\end{figure}

\cleardoublepage
\section{Extending the Base Algorithm with Label Set Acyclicity}\label{sec:hierarchical-bags}

The base algorithm assigns to  each edge in the extraction order a label set to enforce consistent mappings of confluence end nodes through the introduction of LP subformulations. In this labeling, the out-edges of a node may have different label assignments. To induce a consistent amount of flow in each LP subformulation, the base algorithm partitions each node's outgoing edges in edge bags. Edges with overlapping label sets are placed in the same edge bag, such that the resulting partition and the associated bag label sets are disjoint (see Definition \ref{def:edge-bags}). For each such edge bag, a set of $\gamma$-variables are defined, along with constraints enforcing a consistent induction of flow in the LP subformulations of edges contained in the edge bags.

As discussed in Section~\ref{sec:performance-base-approx}, both the number of LP subformulation variables $(\vec{y}, \vec{z}, \vec{a})$, and the number of bag variables $\vec{\gamma}$, increase exponentially with the size of the associated bag label sets. Since the size of the bag label sets is greater than the size of the individual edge label sets, the bag variables are the determining factor for the size of the LP formulation. Further, the example of a half-wheel graph (see Figure~\ref{fig:02:half_wheel_bad_root}) shows that the size of a bag label set may scale linearly with the number of nodes contained in the request for some extraction orders.

The extraction width can also be drastically impacted by relatively small changes to the request topology. Adding a single edge to an extraction order may require two edge bags to merge and may therefore drastically impact the overall size of the LP formulation. Figure~\ref{fig:impact-of-adding-one-edge} shows an example of such an extraction order, where the addition of a single edge causes roughly a doubling of the bag label set size. As discussed in Section~\ref{sec:performance-base-approx}, the LP size of the base formulation grows exponentially with the extraction width. Therefore, adding a single edge to a request can have significant runtime implications for the base algorithm.

In this section, we present a modification of the base algorithm, which relaxes the requirement of each label set in the partition to be disjoint. We will extend the base LP formulation and decomposition algorithm such that they allow for an overlap between the label sets defined by the partition. This means that the edges can be partitioned in groups with smaller associated label sets than in the case of disjoint edge bags. Therefore, the number of $\gamma$-variables, which depends exponentially on the size of the label sets, can be drastically reduced by using a partition with smaller label sets.

\begin{figure}[htbp]
    \centering
    \includegraphics[width=0.75\textwidth]{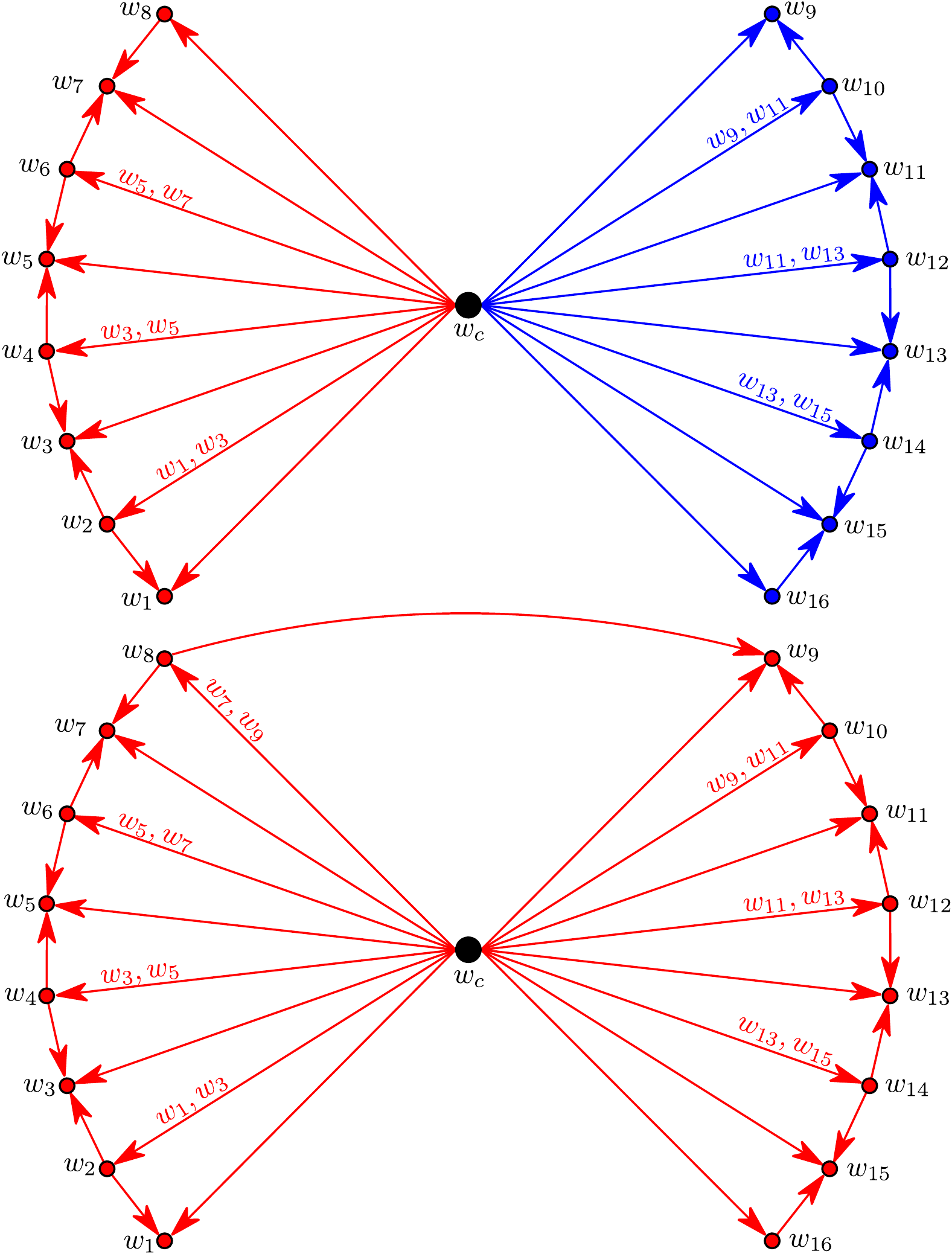}
    \caption{
This example shows the potential impact on the extraction width of adding a single edge to an extraction order. While only some edge label sets are shown, the label sets of all other edges are subsets of the shown label sets. \newline
Top: The original extraction order. The out-edges of the root node $w_c$ are partitioned into two edge bags shown in red and blue, with bag label sets $\bagLabelSets[w_c] = \{ \{w_1, w_3, w_5, w_7\}, \{w_9, w_{11}, w_{13}, w_{15}\} \}$, resulting in an extraction width of $5$. \newline
Bottom: The extraction order is modified by adding the edge $(w_8, w_9)$. Note that the two edge bags are now merged to a single edge bag with bag label set $\bagLabelSets[w_c] = \{ \{w_1, w_3, w_5, w_7, w_9, w_{11}, w_{13}, w_{15}\} \}$, resulting in an extraction width of $9$. This drastically increases the size of the corresponding base LP formulation, and the runtime of the base decomposition algorithm due to Theorem~\ref{thm:base-alg-complexity}.
    }
    \label{fig:impact-of-adding-one-edge}
\end{figure}

We begin by introducing the notion of an extraction label set ordering in Section~\ref{sec:hierarchical:extraction-label-set-ordering}. In Section~\ref{sec:hierarchical:adapted-lp}, we present the extension of the base LP formulation, discussing the variables and constraints. We also show constructively that the extended LP formulation has a solution, whenever a convex combination of mappings exists.
Section~\ref{sec:hierarchical:decomp-edge-label-assignments} introduces decomposable edge label assignments, which generalize the confluence edge label assignment from Definition~\ref{def:confluence-edge-labels}. This more general notion is required for the extension of the algorithm presented in Section~\ref{sec:multiroot}.
In Section~\ref{sec:hierarchical:decomposition-algorithm}, we discuss the decomposition algorithm for the extended LP formulation, including a correctness proof. In Section~\ref{sec:rip-reduction-to-base-algorithm}, we show how the approach using extraction label set orderings is related to the base algorithm, and show that the base algorithm emerges from the extended algorithm given a specific choice of extraction label set ordering.
In Section~\ref{sec:size-rip-lp-formulation}, we discuss the complexity of the extended algorithm, parameterized by the new parameter extraction label width. We further generalize some findings related to the extraction width to the new parameter. Lastly, in Section~\ref{sec:extraction-orders-tree-decomp}, we show the hardness of finding an optimal extraction label set ordering for a specific extraction order.

\subsection{Extraction Label Set Orderings}\label{sec:hierarchical:extraction-label-set-ordering}

The proposed adaptation relies on a partitioning of each node's out-edges in such a way, that the label sets  associated with each  set in the partition satisfy the running intersection property introduced in Section~\ref{sec:background:rip}.

Due to the equality of the label sets on all in-edges of a node (see Lemma~\ref{lemma:incomingLabelsUnique}), it is now convenient to identify this unique label set with the request node directly.
\begin{definition}(Incoming Label Set)\label{def:IncomingLabelset}\\
Given an extraction order $\reqExtractionOrderDef$ and a node $i \in \reqNodes$, the incoming label set of $i$ is given by 
\begin{align}
\labelsetIncoming := \begin{cases}
\emptyset & \text{ if } i = \reqExtractionOrderRoot\\
\labelsetEdge[e] \text{ for some } e \in \inEdgesExtractionOrder{i} & \text{ otherwise (cf. Lemma~\ref{lemma:incomingLabelsUnique})} 
\end{cases}
\end{align}
\end{definition}

We now define the \emph{extraction label set ordering}, which is essential for the adapted algorithm. 
\begin{definition} (Extraction Label Set Ordering) \label{def:extractionLabelSetOrdering}\\
Given an extraction order $\reqExtractionOrderDef$ with an edge label assignment $\labelsets_e = \{\labelsetEdge ~|~ e \in \reqExtractionOrderEdges\}$, an extraction label set ordering $\labelsetOrderSet$ defines for each node $i \in \reqNodes$ an ordering $\labelsetOrder_i$ of some set of label sets $\labelsets \subseteq \PowerSet(\bigcup_e \labelsetEdge)$ with the following properties:
\begin{enumerate}
    \item $\labelsetOrder_i$ is a running intersection ordering. \label{def:extractionLabelSetOrdering:item:running-intersection-ordering}
    \item The first element of $\labelsetOrder_i$ is $\labelsetIncoming$. \label{def:extractionLabelSetOrdering:item:first-nonlocal}
    \item For each $e \in \outEdgesExtractionOrder{i}$, there is exactly one representative label set $\labelsetRepresentative[e] \in \labelsetOrder_i$ with $\labelsetEdge[e] \subseteq \labelsetRepresentative[e]$. If multiple label sets $\labelsetIndexed{\labelSetIndex}  \in \labelsetOrder_i$ satisfy $\labelsetEdge[e] \subseteq \labelsetIndexed{\labelSetIndex}$, we define the representative label set as the first according to the ordering $\labelsetOrder_i$. \label{def:extractionLabelSetOrdering:item:representative-exists}
\end{enumerate}
\defWhitespace
\end{definition}

The extraction label set ordering replaces the notion of edge bags used in the base algorithm. Unlike the disjoint partition enforced by the edge bags, the extraction label set orderings allow for some overlap between different label sets. 

Allowing this overlap between the label sets allows the resulting label sets to be smaller than the corresponding edge bags in terms of the number of contained labels. Indeed, in cases where the edge label sets of each node's in- and out-edges can be arranged in a running intersection order, the size of the representative label set is exactly the size of the edge label set. In this case, the number of bag variables $\gamma$ is no longer the only dominating factor of the LP formulation's size.

Note that the uniqueness requirement in Property~\ref{def:extractionLabelSetOrdering:item:representative-exists} is not strictly required by the adapted algorithm. However, relaxing this requirement for a single representative will lead to the generation of some redundant constraints.

\begin{figure}[htbp]
    \centering
    \includegraphics[width=0.9\textwidth]{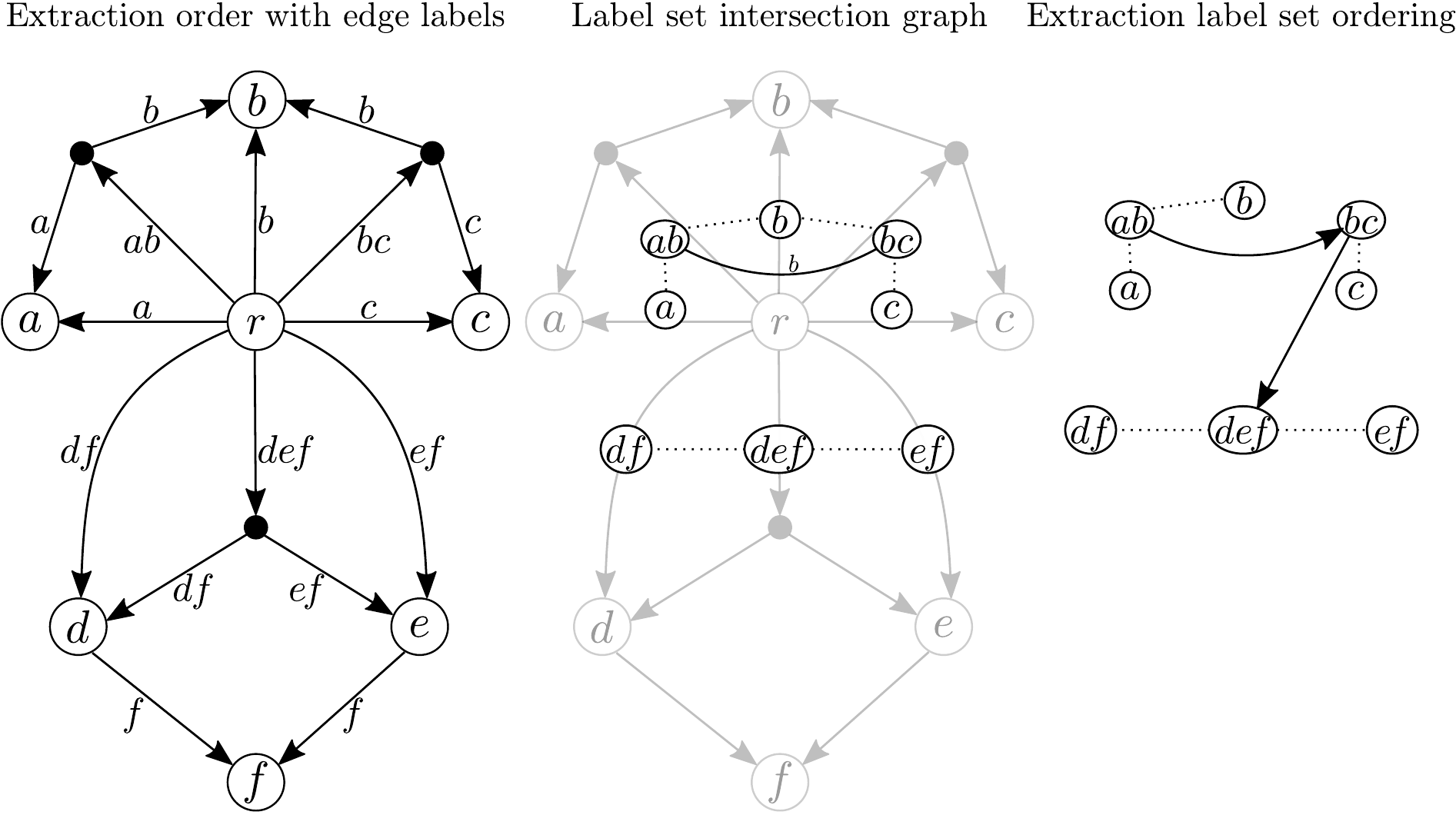}
    \caption{
An illustration of how an extraction label set ordering for an extraction label set ordering is derived for the root node $r$.  \newline
Left: The extraction order used for this example. Each edge is annotated with the confluence edge label assignment. \newline
Center: The label sets for the outgoing edges $\outEdgesExtractionOrder{r}$ are connected according to their intersecting labels. Dotted edges indicate that a label set is a subset of the other (e.g. $\{d, f\} \subset \{d, e, f\}$). \newline
Right: The resulting extraction label set ordering for the root node $r$ is  $\labelsetOrder_r = (\{a, b\}, \{b, c\}, \{d, e, f\})$.
    }
    \label{fig:rip:extraction_order_to_rip_example}
\end{figure}

An illustration of how an extraction label set ordering can be constructed given an extraction order with the confluence edge label assignment is shown in Figure~\ref{fig:rip:extraction_order_to_rip_example} for the root node $r$. In a first step, the intersection graph of the label sets incident in $r$ is constructed. Label sets which are subsets of other label sets are indicated with dotted lines. In this example, the resulting intersection graph already forms a join tree. A running intersection ordering is constructed through a traversal of this join trees.

In Section~\ref{sec:rip-reduction-to-base-algorithm}, we will show that the bag label sets associated with the base algorithm's edge bags satisfy the running intersection property. For now, we note that an extraction label set ordering can be defined for any configuration of edge labels, e.g. with the trivial ordering defined by collecting all label nodes in a single set.

\subsection{The Adapted LP Formulation}\label{sec:hierarchical:adapted-lp}

%TODO%%%%%%%%%%%%%%%%%%%% LP FORMULATION RIP %%%%%%%%%%%%%%%%%%%%%

\begin{figure}[bh!]
{
  \LinesNotNumbered
  \renewcommand{\arraystretch}{1.4}
  
  \removelatexerror
  
  \begin{IPFormulation}{H}
    
    \popline
    
    \SetAlgorithmName{Linear Program}{}{{}}
    
    \newcommand{\spaceIt}{\qquad\quad\quad}
    \newcommand{\miniSpace}{\hspace{1.5pt}}

    \begin{tabular}{FRLQB}
      \multicolumn{5}{r}{\parbox{0.975\textwidth}{~}} \\[-16pt]
      %================================= SUB LP CONSTRAINTS =================================% 
      \multicolumn{3}{L}{\textnormal{  \hspace{-10pt}(\ref{eq:classic_mcf:constr_flow}) -  (\ref{eq:classic_mcf:constr_edge_load}) for $\VGe$ on variables } \subLP[(\vec{y},\vec{z},\vec{a})][e,\mappingChar_e]} ~~& \forall e \in \reqEdges, \mappingChar_e \in \MappingSpace[\labelsetEdge]  & \tagIt{eq:lp:novel-rip:subformulations} \vspace{10pt} \\

      %================================= RECYCLED CONSTRAINTS =================================% 
      \multicolumn{3}{L}{
      \textnormal{(\ref{alg:lp:novel:node-embedding}) - (\ref{alg:lp:novel:node-to-sub-node-mapping}), (\ref{alg:lp:novel:forbidding-nodes-in-sub-lps}) - (\ref{alg:lp:novel:edge-load}) on variables $\lpvars$}}  
      &    & \tagIt{eq:lp:novel-rip:recycled} \vspace{10pt} \\

      %================================= GAMMA TO OUT-EDGES =================================%
      \hspace{50pt}
      \subLP[y^u_{i}][\EOEdgeToOriginal(e),\mappingChar_{e}]  & ~=~ & \sum_{\begin{subarray}{c}
          \mappingChar_{\labelsetmappingIndex} \in \MappingSpace[\labelsetRepresentative[e]]: \\
          \restrict[\mappingChar_{\labelsetmappingIndex}][\labelsetEdge] = \mappingChar_{e}
      \end{subarray}}  \hspace{-10pt}\gamma^u_{i,\labelSetIndex,\labelsetmappingIndex}  
      & \hspace{-5pt}
      \begin{array}{r}
        \forall i \in \reqNodes, u \in  \substrateNodesByType[\reqNodeType(i)],  e \in \outEdgesExtractionOrder{i}, \\
        \labelsetRepresentative[e] \in \labelsetOrder_i, \mappingChar_{e} \in \MappingSpace[\labelsetEdge[e]]
      \end{array} & \tagIt{eq:lp:novel-rip:gamma-to-out-edges} \vspace{10pt}\\
      
      %================================= IN_EDGES TO GAMMA ==================================%
      \hspace{50pt}
      \subLP[y^u_{i}][\EOEdgeToOriginal(e),\mappingChar_\labelsetmappingIndex] & ~=~ & 
      \gamma^u_{i,1,\labelsetmappingIndex}  
      & \hspace{-5pt}
      \begin{array}{r}
        \forall  i \in \reqNodes, e \in \inEdgesExtractionOrder{i}, u \in  \substrateNodesByType[\reqNodeType(i)], \\ \mappingChar_\labelsetmappingIndex \in \MappingSpace[\labelsetIncoming]
      \end{array}  & \tagIt{eq:lp:novel-rip:in-edges-to-gamma} \vspace{10pt}\\
      
      %================================= GAMMA CONTINUITY =================================%
      \hspace{50pt}
      \sum_{\begin{subarray}{c}
          \mappingChar_\labelsetmappingIndex \in \MappingSpace[\labelsetIndexed{\labelSetIndex}]: \\
          \restrict[\mappingChar_\labelsetmappingIndex][\labelsetPredecessors_\labelSetIndex] = \mappingPredecessors
      \end{subarray}} \hspace{-10pt} \gamma^u_{i, \labelSetIndex, \labelsetmappingIndex}\hspace{-5pt}  & ~=~ & 
            \sum_{\begin{subarray}{c}
                \mappingChar_\labelsetmappingIndexTwo \in \MappingSpace[\labelsetIndexed{\labelsetPredecessorFunction(\labelSetIndex)}]: \\
                \restrict[\mappingChar_\labelsetmappingIndexTwo][ \labelsetPredecessors_\labelSetIndex] = \mappingPredecessors
            \end{subarray}} \hspace{-10pt} \gamma^u_{i, \labelsetPredecessorFunction(\labelSetIndex), \labelsetmappingIndexTwo}  
      & \hspace{-5pt}
      \begin{array}{r}
         \forall i \in \reqNodes, u \in  \substrateNodesByType[\reqNodeType(i)], \labelsetOrder_i \in \labelsetOrderSet,\\ 
         \labelsetIndexed{\labelSetIndex} \in \labelsetOrder_i: \labelsetPredecessorFunction(\labelsetIndexed{\labelSetIndex}) \neq \labelSetIndex, \\
         \mappingPredecessors \in \MappingSpace[\labelsetPredecessors_\labelSetIndex]
      \end{array} & \tagIt{eq:lp:novel-rip:gamma-continuity} \vspace{10pt}\\
      
      %======================================== VARIABLES =================================%
      \multicolumn{4}{C}{
%        \hspace{-10pt}
        \begin{array}{c}
          y^u_{i} \in [0,1],~\forall i \in \reqNodes, u \in  \substrateNodesByType[\reqNodeType(i)]; \hspace{18pt} a^{x,y} \geq 0,~\forall (x,y) \in \SR \\
          \gamma^u_{i,\labelSetIndex,\labelsetmappingIndex} \in [0,1],~\forall  i \in \reqNodes, u \in  \substrateNodesByType[\reqNodeType(i)], \labelsetIndexed{\labelSetIndex} \in \labelsetOrder_i, m_\labelsetmappingIndex \in \MappingSpace[\labelsetIndexed{\labelSetIndex}]
        \end{array}
      } & \tagIt{eq:lp:novel-rip:variables}

    \end{tabular}
    %\end{minipage}
    %}
    \caption{Adapted Linear Program for Extraction Label Set Orderings}
    \label{LP:RunningIntersectionProperty}
  \end{IPFormulation}
}
  %\VSPACE{-12pt}
\end{figure}
%TODO%%%%%%%%%%%%%%%%%%%% LP FORMULATION RIP %%%%%%%%%%%%%%%%%%%%%

We now discuss the adapted LP formulation, which is given in Linear Program~\ref{LP:RunningIntersectionProperty}. We will first discuss the variables and constraints. We then show that the adapted formulation always has a solution, if a convex combination of mappings exists.

\subsubsection{Changes from the Base Formulation}
We first give an overview of the difference between the adapted formulation and the base formulation, Linear Program~\ref{IP:novel-AC}. The variables are defined analogously to the base formulation. Only the $\gamma$-variables have a slightly different meaning, since they are no longer associated with the bag label sets defined by the partitioning of edges into edge bags, but rather with the representative label sets of an extraction label set ordering.

Similar to the base formulation, LP subformulations are introduced in Constraint~(\ref{eq:lp:novel-rip:subformulations}). In Constraint~(\ref{eq:lp:novel-rip:recycled}), several constraints from the base formulation are used without modification:
\begin{itemize}
\item Constraint (\ref{alg:lp:novel:node-embedding}) forces the mapping of each node.
\item Constraint (\ref{alg:lp:novel:node-to-sub-node-mapping}) induces a flow in each LP subformulation.
\item Constraint (\ref{alg:lp:novel:forbidding-nodes-in-sub-lps}) forces the consistent mapping of confluence end nodes in subformulations.
\item Constraints (\ref{alg:lp:novel:node-load}) - (\ref{alg:lp:novel:edge-load}) track the allocations and enforce substrate capacity constraints implicitly through the domain of the allocation variables.
\end{itemize}

Two constraints from the base formulation related to the edge bags are replaced with slightly adapted constraints. These changes reflect the use of label set orderings instead of edge bags, and therefore differ primarily in notation.

Constraint (\ref{alg:lp:novel:gamma-to-outgoing-edges}) is replaced by Constraint (\ref{eq:lp:novel-rip:gamma-to-out-edges}), which connects the node label variables $\gamma^u_{i, a, \labelsetmappingIndex}$ of the request node $i \in \reqNodes$ to the node mapping variables $\subLP[y^u_i][e, \restrict[m_\labelsetmappingIndex][\labelsetEdge]]$ in the subformulations associated with the out-edges of $i$. Instead of the bag label set $\bagLabelSet[i][b]$, the representative label set $\labelsetRepresentative[e]$ defined by the label set ordering $\labelsetOrder_i$ is used.

Similarly, Constraint (\ref{alg:lp:novel:incoming-edges-to-gamma-variables}) is replaced by Constraint (\ref{eq:lp:novel-rip:in-edges-to-gamma}). This constraint propagates the value of the node mapping variables from subformulations associated with the node's in-edges to the node's $\gamma$-variables. As the incoming label set is the first in the sequence $\labelsetOrder_{i}$, the corresponding $\gamma$-variables are directly accessed with the index $1$. 

The label set continuity Constraint (\ref{eq:lp:novel-rip:gamma-continuity}) is newly introduced and has no counterpart in the base formulation. For each label set $\labelsetIndexed{a}$, it enforces consistency between the $\gamma$-variables associated with $\labelsetIndexed{a}$ and the $\gamma$-variables of the predecessor label set. Intuitively, the label set continuity constraint encodes the running intersection ordering defined by the label set ordering $\labelsetOrderSet$ in the LP formulation.

As a technicality, note that for each in-edge $e \in \inEdgesExtractionOrder{i}$, there is a one-to-one correspondence between the variables $\subLP[y^u_i][\EOEdgeToOriginal(e), \restrict[m_\labelsetmappingIndex][\labelsetEdge]]$ and $\gamma^u_{i, 1, \labelsetmappingIndex}$, because their associated label sets are equal. That is, $\labelsetEdge[e] = \labelsetIncoming[i]$ and therefore, $\MappingSpace[\labelsetEdge[e]] = \MappingSpace[\labelsetIncoming[i]]$. In practice, the variables $\gamma^u_{i, 1, \labelsetmappingIndex}$ could therefore be eliminated by applying Constraint (\ref{eq:lp:novel-rip:in-edges-to-gamma}). However, including them simplifies the notation and discussion of the adapted algorithm.

\subsubsection{Existence of Solutions}

We will now verify that Linear Program~\ref{LP:RunningIntersectionProperty} always has a solution, given that a convex combination of mappings exists. The proof of the following lemma constructively shows how to obtain such a solution given a convex combination of mappings. The proof works regardless of how edge labels are assigned.

\begin{lemma} (Solvability of Linear Program~\ref{LP:RunningIntersectionProperty}) \label{lemma:rip-lp-solvability-with-arbitrary-edge-labels} \\
Let $\VNEPInstance$ be a VNEP instance and let $\PotEmbeddings := \{(\probSubscript[1],\mappingIteration[1]), ... (\probSubscript[n],\mappingIteration[n])\}$ be a convex combination of valid mappings for $\reqTopology$, such that $\sum_k \probSubscript[k] = 1$ holds. Let $\reqExtractionOrderDef$ be an extraction order for $\reqTopology$ and let $\labelsetsEdgesTilde$ be an arbitrary edge label assignment for $\reqExtractionOrder$. Then there exists a solution $\lpvarstilde$ for Linear Program~\ref{LP:RunningIntersectionProperty} using label assignment $\labelsetsEdgesTilde$.
\end{lemma}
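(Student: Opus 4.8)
The plan is to exhibit a feasible point $\lpvarstilde$ of Linear Program~\ref{LP:RunningIntersectionProperty} built directly from $\PotEmbeddings$ by \emph{disaggregating} the weights $\probSubscript[k]$ according to how the $k$-th node mapping restricts to the various label sets. Write $\mappingNodesIteration[k]$ and $\mappingEdgesIteration[k]$ for the node and edge maps of $\mappingIteration[k]$, and for a request node $i$, a node set $L \subseteq \reqNodes$ and a mapping $\sigma \in \MappingSpace[L]$ introduce the ``master quantity'' $W^u_{i,L,\sigma} := \sum \{\, \probSubscript[k] : \mappingNodesIteration[k](i) = u,\ \langle \mappingNodesIteration[k] \mid L \rangle = \sigma \,\}$. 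I would then set $\ytilde^u_i := \sum_{k : \mappingNodesIteration[k](i) = u} \probSubscript[k]$; for each edge $e$ with endpoint $j$ the subformulation node variables $\subLP[\ytilde^u_j][\EOEdgeToOriginal(e), \mappingChar_e] := \sum \{\, \probSubscript[k] : \mappingNodesIteration[k](j) = u,\ \langle \mappingNodesIteration[k] \mid \labelsetEdge[e] \rangle = \mappingChar_e \,\}$; the subformulation edge and allocation variables $\subLP[\ztilde^{u,v}_{i,j}][e, \mappingChar_e]$ and $\subLP[\atilde^{x}][e, \mappingChar_e]$ as the $\probSubscript[k]$-weighted contributions of exactly those $k$ with $\langle \mappingNodesIteration[k] \mid \labelsetEdge[e] \rangle = \mappingChar_e$, using the path $\mappingEdgesIteration[k](\EOEdgeToOriginal(e))$ (reversed when $e$ is reversed in the extraction order); the global allocations $\atilde^{x} := \sum_k \probSubscript[k]\,\allocationFunction(\mappingIteration[k], x)$; and the node-label variables $\gammatilde^u_{i,\labelSetIndex,\labelsetmappingIndex} := W^u_{i, \labelsetIndexed{\labelSetIndex}, \mappingChar_\labelsetmappingIndex}$ for each $\labelsetIndexed{\labelSetIndex} \in \labelsetOrder_i$ and $\mappingChar_\labelsetmappingIndex \in \MappingSpace[\labelsetIndexed{\labelSetIndex}]$. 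The single structural fact driving everything is that if $L \subseteq L'$ then $\langle \mappingNodesIteration[k] \mid L \rangle$ is a function of $\langle \mappingNodesIteration[k] \mid L' \rangle$, so every $\gamma$-identity below is a marginalization of the $W$'s.

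Next I would check the ``inherited'' constraints (\ref{eq:lp:novel-rip:subformulations}) and (\ref{eq:lp:novel-rip:recycled}). For fixed $k$, the restriction of $\mappingIteration[k]$ to an edge subgraph $\VGe$ is an integral feasible point of the MCF program on $\VGe$: flow is conserved along the simple path $\mappingEdgesIteration[k](\EOEdgeToOriginal(e))$ from $\mappingNodesIteration[k](i)$ to $\mappingNodesIteration[k](j)$, and no request element sits on a too-small resource, so Constraints (\ref{eq:classic_mcf:constr_flow})--(\ref{eq:classic_mcf:constr_edge_load}) hold; since these are linear and the subformulation variables are the $\probSubscript[k]$-weighted sum over the $k$ with $\langle \mappingNodesIteration[k] \mid \labelsetEdge[e] \rangle = \mappingChar_e$ of such integral points, Constraint (\ref{eq:lp:novel-rip:subformulations}) follows. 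Constraint (\ref{alg:lp:novel:node-embedding}) holds because $\sum_k \probSubscript[k] = 1$ and each $\mappingIteration[k]$ maps $i$ to exactly one node; Constraint (\ref{alg:lp:novel:node-to-sub-node-mapping}) holds because the subformulations of $e$ partition $\{k : \mappingNodesIteration[k](i) = u\}$ by $\langle \mappingNodesIteration[k] \mid \labelsetEdge[e] \rangle$; Constraint (\ref{alg:lp:novel:forbidding-nodes-in-sub-lps}) holds since for $i \in \labelsetEdge[e]$ the condition $\langle \mappingNodesIteration[k] \mid \labelsetEdge[e] \rangle = \mappingChar_e$ already forces $\mappingNodesIteration[k](i) = \mappingChar_e(i)$, so the forbidden subformulation variables vanish; and Constraints (\ref{alg:lp:novel:node-load})--(\ref{alg:lp:novel:edge-load}) follow by applying Definition~\ref{def:mapping-allocation-function} term-by-term to each $\mappingIteration[k]$.

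The substance is the three $\gamma$-constraints. Constraint (\ref{eq:lp:novel-rip:gamma-to-out-edges}) is a marginalization: since $\labelsetEdge[e] \subseteq \labelsetRepresentative[e] = \labelsetIndexed{\labelSetIndex}$, summing $\gammatilde^u_{i,\labelSetIndex,\labelsetmappingIndex} = W^u_{i, \labelsetIndexed{\labelSetIndex}, \mappingChar_\labelsetmappingIndex}$ over the $\mappingChar_\labelsetmappingIndex$ with $\langle \mappingChar_\labelsetmappingIndex \mid \labelsetEdge[e] \rangle = \mappingChar_e$ regroups exactly the $k$ with $\langle \mappingNodesIteration[k] \mid \labelsetEdge[e] \rangle = \mappingChar_e$ and recovers $\subLP[\ytilde^u_i][\EOEdgeToOriginal(e), \mappingChar_e]$. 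Constraint (\ref{eq:lp:novel-rip:in-edges-to-gamma}) is immediate because, by Property~\ref{def:extractionLabelSetOrdering:item:first-nonlocal} of Definition~\ref{def:extractionLabelSetOrdering} together with Lemma~\ref{lemma:incomingLabelsUnique}, the first label set of $\labelsetOrder_i$ is $\labelsetIncoming[i] = \labelsetEdge[e]$ for every $e \in \inEdgesExtractionOrder{i}$, so both sides equal $W^u_{i, \labelsetIncoming[i], \mappingChar_\labelsetmappingIndex}$. For the continuity Constraint (\ref{eq:lp:novel-rip:gamma-continuity}), observe that $\labelsetPredecessors_\labelSetIndex \subseteq \labelsetIndexed{\labelSetIndex}$ by Definition~\ref{def:ripPredecessorLabelSet} and $\labelsetPredecessors_\labelSetIndex \subseteq \labelsetIndexed{\labelsetPredecessorFunction(\labelSetIndex)}$ by Definition~\ref{def:ripPredecessorIndexFunction}, so both the left and right sums collapse to $W^u_{i, \labelsetPredecessors_\labelSetIndex, \mappingPredecessors}$. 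Finally, each $\ytilde^u_i$ and each $\gammatilde^u_{i,\labelSetIndex,\labelsetmappingIndex}$ is a sum over a subset of $\{\probSubscript[1], \ldots, \probSubscript[n]\}$, hence in $[0,1]$, and every $\atilde$-variable is a nonnegative combination of the $\probSubscript[k]$, so (\ref{eq:lp:novel-rip:variables}) holds; this establishes feasibility.

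I expect the main obstacle to be organizational rather than conceptual: keeping the edge-reversal convention ($\EOEdgeToOriginal(e)$ versus $e$) aligned between the subformulation variables and the paths of the $\mappingIteration[k]$, and matching the running-intersection bookkeeping (the indices $\labelSetIndex$, $\labelsetPredecessorFunction(\labelSetIndex)$, the sets $\labelsetPredecessors_\labelSetIndex$, $\labelsetRepresentative[e]$) against the master quantities $W^u_{i,L,\sigma}$. Once the ``restriction to a subset is a function of restriction to a superset'' observation is isolated, each $\gamma$-identity is a one-line marginalization, and the remaining bulk is the routine (but careful) verification of the MCF constraints inside each subformulation. A secondary point worth stating explicitly is the assumption, inherited from the notion of valid mapping used for decompositions, that no request element of any $\mappingIteration[k]$ is placed on a substrate resource too small to hold a single demand, which is what makes Constraints (\ref{eq:classic_mcf:constr_forbid_small_nodes})--(\ref{eq:classic_mcf:constr_forbid_small_edges}) hold within the subformulations.
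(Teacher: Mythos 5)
Your proposal is correct and follows essentially the same route as the paper's proof: both construct the feasible point by disaggregating the weights $\probSubscript[k]$ according to the restrictions $\restrict[\mappingIteration[k]][L]$ of each valid mapping to the relevant label sets, and both reduce the three $\gamma$-constraints to the observation that a valid mapping agrees with exactly one element of each mapping space, so that sums over compatible refinements telescope. Your explicit inclusion of the condition $\mappingNodesIteration[k](i)=u$ in the variable definitions and your remark about Constraints (\ref{eq:classic_mcf:constr_forbid_small_nodes})--(\ref{eq:classic_mcf:constr_forbid_small_edges}) are minor tightenings of details the paper leaves implicit, not a different argument.
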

\begin{proof}
We prove the lemma constructively, by deriving a variable assignment from $\PotEmbeddings$, and showing that all constraints of Linear Program~\ref{LP:RunningIntersectionProperty} are satisfied by this assignment. Consider the following variable assignment for the label set variables $\gammatilde$, and the subformulation variables $\ytilde$ and $\ztilde$:
\begin{align}
\gammatilde^u_{i, \labelSetIndex, \labelsetmappingIndex} & = \sum_{
\begin{subarray}{c}
    (\probSubscript, \mappingIteration[k]) \in \PotEmbeddings \\
    \restrict[\mappingIteration[k]][\labelsetIndexed{\labelSetIndex}] = \mappingChar_{\labelsetmappingIndex}
\end{subarray}} \probSubscript & 
\begin{array}{r}
\forall e \in \reqExtractionOrderEdges, i \in e, u \in \substrateNodes, \\
\labelsetIndexed{\labelSetIndex} \in \labelsetOrder_i, \mappingChar_{\labelsetmappingIndex} \in \MappingSpace[\labelsetIndexed{\labelSetIndex}] 
\end{array} \label{eq:arbitrary-edge-labels:set-gamma-tilde}\\
\subLP[\ytilde^u_i][\EOEdgeToOriginal(e),\mappingEdgeTilde] & = \sum_{
\begin{subarray}{c}
    (\probSubscript, \mappingIteration[k]) \in \PotEmbeddings \\
    \restrict[\mappingIteration[k]][\labelsetEdgeTilde] = \mappingEdgeTilde
\end{subarray}} \probSubscript & 
\forall e \in \reqExtractionOrderEdges, i \in e, u \in \substrateNodes, \mappingEdgeTilde \in \MappingSpace[\labelsetEdgeTilde[e]] \label{eq:arbitrary-edge-labels:set-y-tilde}\\
\subLP[\ztilde^{u,v}_{e}][\EOEdgeToOriginal(e),\mappingEdgeTilde] & = \sum_{
\begin{subarray}{c}
    (\probSubscript, \mappingIteration[k]) \in \PotEmbeddings \\
    \restrict[\mappingIteration[k]][\labelsetEdgeTilde] = \mappingEdgeTilde \\
    (u, v) \in \mappingIteration[k](e)
\end{subarray}} \probSubscript &
\forall e \in \reqExtractionOrderEdges, (u, v) \in \substrateEdges, \mappingEdgeTilde \in \MappingSpace[\labelsetEdgeTilde[e]] \label{eq:arbitrary-edge-labels:set-z-tilde}
\end{align}

From these variables, the remaining variables may be assigned as follows: The allocation variables can be assigned according to Constraint (\ref{eq:classic_mcf:constr_node_load}) and (\ref{eq:classic_mcf:constr_edge_load}) for the subformulation allocation variables, and (\ref{alg:lp:novel:node-load}) and (\ref{alg:lp:novel:edge-load}) for the aggregated allocation variables. The aggregated node mapping variables $\ytilde$ which are not directly tied to a LP subformulation can be assigned according to Constraint (\ref{alg:lp:novel:node-to-sub-node-mapping}). All of these constraints are therefore trivially satisfied by the resulting variable assignment. 

We next show that all other constraints of the LP formulation are also satisfied. The argument will usually rely on the assumption that $\PotEmbeddings$ is a convex combination of \emph{valid} mappings. By the definition of a valid mapping (see Definition~\ref{def:valid-mapping}), this implies that each mapping $\mappingIteration[k]$ in $\PotEmbeddings$ maps each request node to a single substrate node. Therefore, for any set of request nodes (and in particular any label set) $L \subseteq \reqNodes$, the mapping $\mappingIteration[k]$ agrees with exactly one element of the mapping space $m\in \MappingSpace[L]$. 

We begin with the flow conservation/induction Constraint (\ref{eq:classic_mcf:constr_flow}). According to (\ref{eq:arbitrary-edge-labels:set-y-tilde}), given any edge $(i, j) \in \reqExtractionOrderEdges$, the subformulation node variables belonging to the nodes $i$ and $j$, namely $\subLP[y^{\mappingIteration(i)}_i][\EOEdgeToOriginal(i,j), \restrict[\mappingIteration][\labelsetEdge[(i,j)]]]$ and $\subLP[y^{\mappingIteration(j)}_j][\EOEdgeToOriginal(i,j), \restrict[\mappingIteration][\labelsetEdge[(i,j)]]]$, are increased by $\probSubscript$. Additionally, by the validity of $\mappingIteration[k]$, all edge mapping variables $\ztilde$ which are incremented by a flow $\probSubscript$ in (\ref{eq:arbitrary-edge-labels:set-z-tilde}) lie on a path starting at $\mappingIteration[k](i)$ and ending in $\mappingIteration[k](j)$. Therefore, the flow preservation and induction constraints are satisfied.

We next consider Constraint (\ref{alg:lp:novel:node-embedding}), which forces the embedding of each request node. Since each mapping $\mappingIteration[k]$ with $(\probSubscript, \mappingIteration[k]) \in \PotEmbeddings$ is by assumption valid, it agrees with exactly one of the mappings $m_e \in \MappingSpace[\labelsetEdge[e]]$ in the mapping space of the edge label set. Therefore, for each request node $i$, each $\probSubscript$ is added to the node mapping subvariables corresponding to a single edge label mapping in  (\ref{eq:arbitrary-edge-labels:set-y-tilde}). Each subformulation variable $\subLP[\ytilde^\cdot_i][\EOEdgeToOriginal(e),\mappingEdgeTilde]$ contributes to exactly one term in the sum on the right side of Constraint (\ref{alg:lp:novel:node-embedding}). Since $\sum_k \probSubscript = 1$, Constraint (\ref{alg:lp:novel:node-embedding}) holds.

We next consider Constraint (\ref{alg:lp:novel:forbidding-nodes-in-sub-lps}). The validity of each mapping $\mappingIteration[k]$ requires each node to be mapped to a single substrate node, and each request edge to be mapped to a substrate path connecting the appropriate nodes. Therefore, given any label node $l$ and any in-edge $e \in \inEdgesExtractionOrder{l}$ of $l$, only subformulation variables $\subLP[\ytilde^{\mappingIteration[k](l)}_l][\EOEdgeToOriginal(e),\mappingEdgeTilde]$ with $\mappingEdgeTilde(l) = \mappingIteration[k](l)$ are incremented in (\ref{eq:arbitrary-edge-labels:set-y-tilde}) and all other subformulation node variables remain zero.

We next consider Constraint (\ref{eq:lp:novel-rip:gamma-to-out-edges}). Let $e = (i,j) \in \reqExtractionOrderEdges$, and $\labelsetIndexed{\labelSetIndex} \in \labelsetOrder_i$ with $\labelsetEdge[e] \subseteq \labelsetIndexed{\labelSetIndex}$, and let $m_\labelsetmappingIndex \in \MappingSpace[\labelsetIndexed{\labelSetIndex}]$. Consider any mapping $(\probSubscript, \mappingIteration[k]) \in \PotEmbeddings$ such that $\restrict[\mappingIteration[k]][\labelsetIndexed{\labelSetIndex}] = m_\labelsetmappingIndex$. Since $\labelsetEdge[e] \subseteq \labelsetIndexed{\labelSetIndex}$, there is exactly one $m_e \in \MappingSpace[\labelsetEdge[e]]$ such that $\restrict[ m_\labelsetmappingIndex][\labelsetEdge[e]] =  m_e$. Whenever $\probSubscript$ is added to $\gammatilde^u_{i, \labelSetIndex, \labelsetmappingIndex}$  in  (\ref{eq:arbitrary-edge-labels:set-gamma-tilde}), it is also added to $\subLP[\ytilde^u_i][\EOEdgeToOriginal(e),m_e]$ in (\ref{eq:arbitrary-edge-labels:set-y-tilde}). Therefore, Constraint (\ref{eq:lp:novel-rip:gamma-to-out-edges}) holds.

Constraint (\ref{eq:lp:novel-rip:in-edges-to-gamma}) holds because each mapping $\mappingIteration[k]$ contributes to exactly one mapping of each node's incoming label set. Since the incoming label set is equal to the edge label set indexing the subformulation node variable, the same amount of flow is added to the subformulation node mapping variable and the $\gamma$-variable of the incoming label set in (\ref{eq:arbitrary-edge-labels:set-gamma-tilde}) and (\ref{eq:arbitrary-edge-labels:set-y-tilde}).

Finally, we consider Constraint (\ref{eq:lp:novel-rip:gamma-continuity}). Let $\labelsetIndexed{\labelSetIndex}$ be some label set with predecessor label set $\labelsetIndexed{\labelsetPredecessorFunction(\labelSetIndex)}$. We only consider label sets where $\labelsetIndexed{\labelSetIndex} \neq \labelsetIndexed{\labelsetPredecessorFunction(\labelSetIndex)}$, since Constraint (\ref{eq:lp:novel-rip:gamma-continuity}) is only generated in this case. For each mapping $(\probSubscript, \mappingIteration) \in \PotEmbeddings$ it holds that $\restrict[\mappingIteration][\labelsetPredecessors_\labelSetIndex] \in \MappingSpace[\labelsetPredecessors_\labelSetIndex]$, i.e. each $\mappingIteration$ is compatible with one predecessor mapping $\mappingPredecessors$. Due to the validity of each $\mappingIteration[k]$, it also holds that each $\mappingIteration[k]$ is compatible with exactly one $m_\labelSetIndex \in \MappingSpace[\labelsetIndexed{\labelSetIndex}]$ and exactly one $m_{\labelsetPredecessorFunction(\labelSetIndex)} \in \MappingSpace[\labelsetIndexed{\labelsetPredecessorFunction(\labelSetIndex)}]$. Since $m_\labelSetIndex$ and $m_{\labelsetPredecessorFunction(\labelSetIndex)}$ occur once in the sums on the left and right side of Constraint (\ref{eq:lp:novel-rip:gamma-continuity}), respectively, the same mapping value $\probSubscript$ is added on both sides and the constraint holds.
\end{proof}

\subsection{Decomposable Edge Label Assignments}\label{sec:hierarchical:decomp-edge-label-assignments}

In Section~\ref{sec:multiroot}, we will require extensions of the confluence edge label assignment as a mechanism to enforce non-local consistency of the request mapping. In this section, the required properties of these extended edge label assignments are defined. 

The decomposition algorithm and its correctness proof will therefore be formulated in terms of this more general notion of a \emph{decomposable edge label assignment}. We will verify that the confluence edge label assignment introduced in Definition~\ref{def:confluence-edge-labels} has these properties. Thus, the edge label assignment can still be thought of as the confluence label assignment for the remainder of Section~\ref{sec:hierarchical-bags}.

We first define the \emph{label-induced subgraph} which, given some label node $k$, is the subgraph defined by the edges labeled with $k$.
\begin{definition} (Label-Induced Subgraph) \\
Given an extraction order $\reqExtractionOrderDef$ with some edge label assignment $\labelsetsEdges$. Let $k \in \bigcup_{e \in \reqExtractionOrderEdges} \labelsetEdge[e]$ be a node that occurs as a label in some label set. We define the label-induced subgraph for $k$, $\reqEOLabelSubgraph[k]$, as the smallest subgraph containing all edges that are labeled with $k$:
\begin{align}
\reqEOLabelSubgraphDef[k],
\end{align}
where $\reqEOLabelSubgraphEdges[k] := \{e \in \reqExtractionOrderEdges ~|~ k \in \labelsetEdge[e]\}$ and $\reqEOLabelSubgraphNodes[k] := \bigcup_{(i, j) \in \reqEOLabelSubgraphEdges[k]} \{i, j\}$. 
\end{definition}

We now define the properties of decomposable edge label assignments.
\begin{definition} (Decomposable Edge Label Assignment) \label{def:decomposable-edge-labels} \\
Let $\reqTopologyDef$ be some request with extraction order $\reqExtractionOrderDef$. Let $\labelsetsEdgesExtractionOrder$ be the confluence edge label assignment derived according to Definition~\ref{def:confluence-edge-labels}, and let $\labelsetsEdgesTilde$ be a different edge label assignment. 

We call $\labelsetsEdgesTilde$ a decomposable edge label assignment, if it satisfies the following properties, where  $k$ is a node which occurs in some label set $\labelsetEdgeTilde[e] \in \labelsetsEdgesTilde$:  
\begin{enumerate}
\item $\labelsetsEdgesTilde$ extends the confluence edge labels, i.e. for each edge $e \in \reqExtractionOrderEdges$, $\labelsetEdge[e] \subseteq \labelsetEdgeTilde[e]$
\label{def:decomposability-extended-edge-labels:confluence-label-extension}
\item The label-induced subgraph $\reqEOLabelSubgraph[k]$ for $k$ is connected and rooted in some node $\reqEOLabelSubgraphRoot[k] \in \reqNodes(k)$.\label{def:decomposability-extended-edge-labels:rooted-labelinducedgraph}
\item Each label node is contained in its own label-induced subgraph, i.e. $k \in \reqEOLabelSubgraphNodes[k]$. 
\label{def:decomposability-extended-edge-labels:labelinducedgraph-contains-label-node}
\item Lemma~\ref{lemma:incomingLabelsUnique} holds, i.e. each of a node's incoming edges have the same label set. 
\label{def:decomposability-extended-edge-labels:in-edges-same-labels}
\item For any $e \in \outEdgesExtractionOrder{k}$, $k \not \in \labelsetEdgeTilde[e]$. 
\label{def:decomposability-extended-edge-labels:labels-end-in-label-node}
\end{enumerate}
\defWhitespace
\end{definition}
Several examples of both decomposable and non-decomposable edge label assignments for a specific extraction order are shown in Figure~\ref{fig:examples-extended-label-assignment}.

\begin{figure}[htbp]
    \centering
    \includegraphics[width=0.6\textwidth]{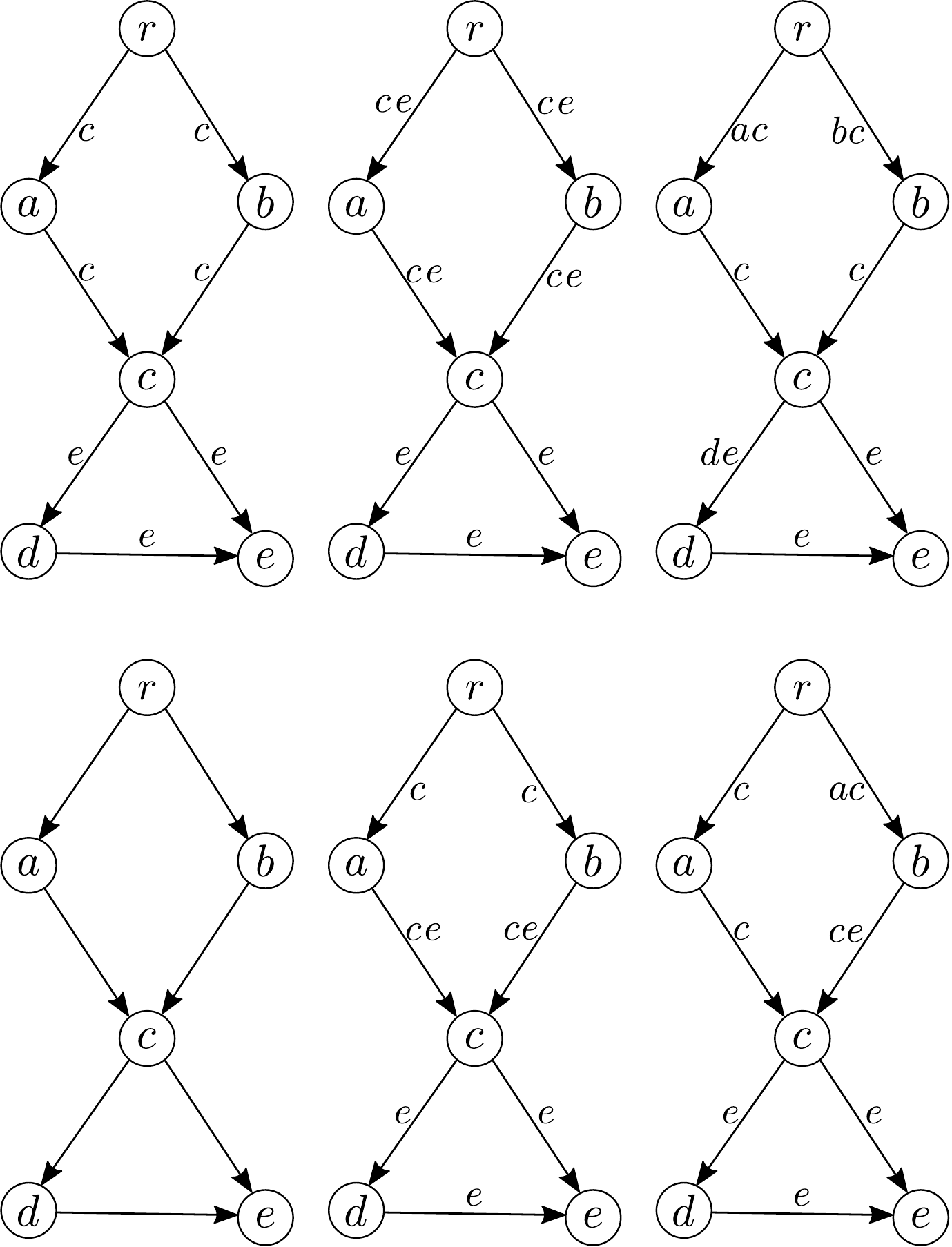}
    \caption{
        An extraction order $\reqExtractionOrder$, with examples of decomposable and non-decomposable edge label assignments according to Definition~\ref{def:decomposable-edge-labels}.
        \newline
        Top: Decomposable edge label assignments.
        \newline 
        Left: The confluence label assignment.
        Center: Edge label $e$ is extended towards the root node $r$. 
        Right: Edge labels for $a$, $b$ and $d$ are newly introduced.
        \newline
        Bottom: Non-decomposable edge label assignments.
        \newline 
        Left: Does not extend the confluence edge label assignment $\labelsetsEdgesExtractionOrder$.
        Center: The label-induced subgraph for $e$ is not rooted.
        Right: Violates Lemma~\ref{lemma:incomingLabelsUnique} for node $c$ and Property~\ref{def:decomposability-extended-edge-labels:labelinducedgraph-contains-label-node} for node $a$, since $a \not\in \reqEOLabelSubgraphNodes[a]$.
    }
    \label{fig:examples-extended-label-assignment}
\end{figure}

From the definition, we derive the following corollary, stating that if a node occurs as a label in some label set, all of its in-edges are labeled with it.
\begin{corollary} (In-Edges of Label Nodes) \label{corollary:decomp-labels:in-edges-label-nodes} \\
Let  $\reqExtractionOrder$ be an extraction order and let $\labelsetsEdges$ be a decomposable edge label assignment. Let $k \in \bigcup \labelsetsEdges$ be a node that occurs as a label in $\labelsetsEdges$. Then, each in-edge of $k$ is labeled with $k$.
\end{corollary}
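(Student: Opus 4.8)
The plan is to derive Corollary~\ref{corollary:decomp-labels:in-edges-label-nodes} directly from the defining properties of a decomposable edge label assignment (Definition~\ref{def:decomposable-edge-labels}), specifically Properties~\ref{def:decomposability-extended-edge-labels:rooted-labelinducedgraph}, \ref{def:decomposability-extended-edge-labels:labelinducedgraph-contains-label-node}, \ref{def:decomposability-extended-edge-labels:in-edges-same-labels}, and \ref{def:decomposability-extended-edge-labels:labels-end-in-label-node}. The goal is to show that for a node $k$ occurring as a label, every edge $e \in \inEdgesExtractionOrder{k}$ satisfies $k \in \labelsetEdge[e]$, i.e. $e \in \reqEOLabelSubgraphEdges[k]$.

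First I would observe that since $k$ occurs as a label somewhere, the label-induced subgraph $\reqEOLabelSubgraph[k]$ is nonempty, and by Property~\ref{def:decomposability-extended-edge-labels:labelinducedgraph-contains-label-node} we have $k \in \reqEOLabelSubgraphNodes[k]$. Hence $k$ is incident to at least one edge of $\reqEOLabelSubgraphEdges[k]$. By Property~\ref{def:decomposability-extended-edge-labels:labels-end-in-label-node}, no \emph{out}-edge of $k$ in the extraction order is labeled with $k$, so the edge(s) of $\reqEOLabelSubgraph[k]$ incident to $k$ must be \emph{in}-edges of $k$. In particular, $k$ has at least one in-edge $e_0$ with $k \in \labelsetEdge[e_0]$; this already rules out the trivial case $k = \reqExtractionOrderRoot$ (the root has no in-edges, so it cannot appear as a label under a decomposable assignment). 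Then I would invoke Property~\ref{def:decomposability-extended-edge-labels:in-edges-same-labels}, which states that Lemma~\ref{lemma:incomingLabelsUnique} holds, i.e. all of $k$'s incoming edges share the same label set: for all $e, e' \in \inEdgesExtractionOrder{k}$, $\labelsetEdge[e] = \labelsetEdge[e']$. Since $\labelsetEdge[e_0] \ni k$, it follows that $\labelsetEdge[e] \ni k$ for every $e \in \inEdgesExtractionOrder{k}$, which is exactly the claim.

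As a sanity check on the rootedness hypothesis, I would also note that Property~\ref{def:decomposability-extended-edge-labels:rooted-labelinducedgraph} (connectedness and rootedness of $\reqEOLabelSubgraph[k]$ in some $\reqEOLabelSubgraphRoot[k]$) together with Property~\ref{def:decomposability-extended-edge-labels:labels-end-in-label-node} is what guarantees that $k$ is not itself the root of its own label-induced subgraph unless $\reqEOLabelSubgraph[k]$ is degenerate — but this finer structural fact is not needed for the corollary; the short argument via uniqueness of incoming label sets suffices. The main (and only mild) obstacle is making sure the case $k = \reqExtractionOrderRoot$ is excluded cleanly: this follows because the extraction order is rooted in $\reqExtractionOrderRoot$, so $\reqExtractionOrderRoot$ has no in-edge, hence no edge can contain it as a label under Property~\ref{def:decomposability-extended-edge-labels:labels-end-in-label-node} combined with $k \in \reqEOLabelSubgraphNodes[k]$ from Property~\ref{def:decomposability-extended-edge-labels:labelinducedgraph-contains-label-node}. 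Everything else is a one-line appeal to Property~\ref{def:decomposability-extended-edge-labels:in-edges-same-labels}.
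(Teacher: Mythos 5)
Your argument is correct and is essentially identical to the paper's proof: both use Property~\ref{def:decomposability-extended-edge-labels:labelinducedgraph-contains-label-node} and Property~\ref{def:decomposability-extended-edge-labels:labels-end-in-label-node} to conclude that some in-edge of $k$ carries the label $k$, and then Property~\ref{def:decomposability-extended-edge-labels:in-edges-same-labels} to propagate it to all in-edges. The extra remarks about excluding $k = \reqExtractionOrderRoot$ are a harmless addition not present in (and not needed by) the paper's version.
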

\begin{proof}
By Property~\ref{def:decomposability-extended-edge-labels:labelinducedgraph-contains-label-node} of Definition~\ref{def:decomposable-edge-labels}, $k$ is contained in its label-induced subgraph and by Property~\ref{def:decomposability-extended-edge-labels:labels-end-in-label-node}, none of the out-edges of $k$ are labeled with $k$. It follows that $k$ must be contained in one of its in-edges' label sets, which implies by Property~\ref{def:decomposability-extended-edge-labels:in-edges-same-labels} that each in-edge of $k$ is labeled with $k$.
\end{proof}

In the next lemma, we show that once a label node has been introduced in a decomposable label assignment, any path leading to the label node is labeled with it. That is, once a label node is first encountered in some edge label set during a graph traversal, any edge leading towards the label node must be labeled with it.

\begin{lemma} (Label Path Continuity) \label{lemma:decomplabels:all-paths-are-labeled}\\
Given an extraction order $\reqExtractionOrder$ with some decomposable edge label assignment $\labelsetsEdges$, and some node $k \in \bigcup_e \labelsetEdge[e]$ that occurs as a label in $\labelsetsEdges$. Let $i \in \reqEOLabelSubgraphNodes[k]$ with $i \neq k$ be a node in the label-induced subgraph for $k$, i.e. some $e' \in \outEdgesExtractionOrder{i} \cup \inEdgesExtractionOrder{i}$ with $k \in \labelsetEdge[e']$ exists. Then, any path $\path[i][k] \subseteq \reqExtractionOrderEdges$ from $i$ to $k$ is labeled with $k$, i.e. for each $e \in \path[i][k]$ it holds that $k \in \labelsetEdge[e]$.
\end{lemma}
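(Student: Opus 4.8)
The plan is to argue by induction on the length of the path $\path[i][k]$, peeling off the first edge and invoking the structural properties of a decomposable edge label assignment together with the connectivity/rootedness of the label-induced subgraph $\reqEOLabelSubgraph[k]$. First I would set up the base case: if $\path[i][k]$ consists of a single edge $(i,k)$, then this edge is an in-edge of $k$, and by Corollary~\ref{corollary:decomp-labels:in-edges-label-nodes} every in-edge of $k$ is labeled with $k$, so $k \in \labelsetEdge[(i,k)]$ and we are done.

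For the inductive step, consider a path $\path[i][k] = (i, i_1), (i_1, i_2), \ldots, (i_{m-1}, k)$ with $m \geq 2$, and let $e = (i, i_1)$ be its first edge. The hypothesis tells us that $i$ lies in $\reqEOLabelSubgraphNodes[k]$, i.e.\ some edge incident to $i$ is labeled with $k$. The crucial claim to establish is that $e$ itself is labeled with $k$, i.e.\ $k \in \labelsetEdge[e]$; once we know this, $i_1 \in \reqEOLabelSubgraphNodes[k]$, and since $i_1 \neq k$ (assuming $m \geq 2$; if $i_1 = k$ we are back in the base case after one edge), the induction hypothesis applied to the shorter path from $i_1$ to $k$ gives that every remaining edge is labeled with $k$. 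To prove $k \in \labelsetEdge[e]$, I would reason as follows: $i \in \reqEOLabelSubgraphNodes[k]$ and $i \neq k$, so by Property~\ref{def:decomposability-extended-edge-labels:labels-end-in-label-node} of Definition~\ref{def:decomposable-edge-labels} applied in reverse — the out-edges of $k$ are never labeled with $k$ — combined with the connectivity and rootedness of $\reqEOLabelSubgraph[k]$ in $\reqEOLabelSubgraphRoot[k]$, the label-induced subgraph restricted to $i$'s incident $k$-labeled edges forms a rooted structure in which every node other than the root has an incoming $k$-labeled edge. Since $e$ lies on a path toward $k$ in the original extraction order and $i$ is already in the $k$-labeled region, the edge leaving $i$ toward $k$ must remain inside $\reqEOLabelSubgraph[k]$: otherwise the path from $i$ to $k$ would have to re-enter $\reqEOLabelSubgraph[k]$ through a node, contradicting either the acyclicity of the extraction order or the rootedness of $\reqEOLabelSubgraph[k]$ (a node cannot be reached from $\reqEOLabelSubgraphRoot[k]$ except along $k$-labeled edges, yet $i$ and $k$ are both in $\reqNodes(k)$ and the extraction order is a DAG).

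The main obstacle I anticipate is making the "the edge must stay inside $\reqEOLabelSubgraph[k]$" argument fully rigorous, since it relies on a somewhat subtle interplay: the extraction order $\reqExtractionOrder$ is a DAG, $\reqEOLabelSubgraph[k]$ is a connected subgraph rooted at $\reqEOLabelSubgraphRoot[k]$, and both endpoints $i, k$ of the path lie in $\reqNodes(k)$. One clean way to handle this would be to observe that in a rooted DAG every node of a connected rooted subgraph other than the root is reached only via edges of the subgraph from within the subgraph (this is essentially what "rooted" buys us in combination with the acyclicity assumed in Definition~\ref{def:extraction-order}), so any path from $i$ to $k$ that leaves $\reqNodes(k)$ would force some node of $\reqNodes(k)$ to have a non-$k$-labeled incoming edge lying on a $k$-internal path, which is impossible since $\reqEOLabelSubgraph[k]$ is defined to contain \emph{all} $k$-labeled edges and every node in $\reqNodes(k)\setminus\{\reqEOLabelSubgraphRoot[k]\}$ has its incoming edges $k$-labeled by Corollary~\ref{corollary:decomp-labels:in-edges-label-nodes} applied to the subgraph structure. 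I would then wrap up by noting that $k \in \labelsetEdge[e]$ for each $e \in \path[i][k]$, which is exactly the claim, concluding the induction.

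\begin{proof}
We proceed by induction on the length $\ell$ of the path $\path[i][k]$.

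\emph{Base case ($\ell = 1$).} Here $\path[i][k]$ consists of the single edge $(i,k) \in \inEdgesExtractionOrder{k}$. By Corollary~\ref{corollary:decomp-labels:in-edges-label-nodes}, every in-edge of $k$ is labeled with $k$, so $k \in \labelsetEdge[(i,k)]$.

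\emph{Inductive step ($\ell \geq 2$).} Write $\path[i][k] = (i, i_1), (i_1, i_2), \ldots$, and let $e = (i, i_1)$ be its first edge. We first show $k \in \labelsetEdge[e]$. By assumption, $i \in \reqEOLabelSubgraphNodes[k]$ and $i \neq k$, so by Property~\ref{def:decomposability-extended-edge-labels:labels-end-in-label-node} of Definition~\ref{def:decomposable-edge-labels} none of the out-edges of $k$ is labeled with $k$; in particular $k$ does not lie on the path before its final position, so every intermediate node $i_1, \ldots$ differs from $k$ except the last, and the path enters $\reqEOLabelSubgraphNodes[k]$ at $i$. Since $\reqEOLabelSubgraph[k]$ is connected and rooted in $\reqEOLabelSubgraphRoot[k]$ (Property~\ref{def:decomposability-extended-edge-labels:rooted-labelinducedgraph}) and the extraction order is acyclic (Definition~\ref{def:extraction-order}), every node of $\reqEOLabelSubgraphNodes[k]$ other than $\reqEOLabelSubgraphRoot[k]$ has all of its incoming edges labeled with $k$. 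As $(i, i_1)$ lies on a path from $i \in \reqEOLabelSubgraphNodes[k]$ toward $k \in \reqEOLabelSubgraphNodes[k]$, the node $i_1$ must also lie in $\reqEOLabelSubgraphNodes[k]$: otherwise the path from $i$ to $k$ would leave and later re-enter $\reqEOLabelSubgraph[k]$, which is impossible in the rooted DAG since a node of $\reqEOLabelSubgraph[k]$ can only be reached along $k$-labeled edges internal to $\reqEOLabelSubgraph[k]$. Hence $(i, i_1) \in \reqEOLabelSubgraphEdges[k]$, i.e.\ $k \in \labelsetEdge[e]$.

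Now $i_1 \in \reqEOLabelSubgraphNodes[k]$. If $i_1 = k$ we are done; otherwise the remaining path from $i_1$ to $k$ has length $\ell - 1$, and by the induction hypothesis each of its edges is labeled with $k$. Combined with $k \in \labelsetEdge[e]$, we conclude that every edge of $\path[i][k]$ is labeled with $k$.
\end{proof}
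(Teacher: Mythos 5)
Your overall strategy (induction on the path length, reducing everything to showing that the first edge is labeled) can be made to work, but the pivotal step of your inductive case rests on a claim that is false as stated. You assert that ``a node of $\reqEOLabelSubgraph[k]$ can only be reached along $k$-labeled edges internal to $\reqEOLabelSubgraph[k]$''; the root $\reqEOLabelSubgraphRoot[k]$ of the label-induced subgraph is precisely the exception, since rootedness together with Property~\ref{def:decomposability-extended-edge-labels:in-edges-same-labels} of Definition~\ref{def:decomposable-edge-labels} only guarantees that every \emph{other} node of $\reqEOLabelSubgraphNodes[k]$ has all of its in-edges labeled with $k$. A path could therefore in principle leave the subgraph and re-enter it at $\reqEOLabelSubgraphRoot[k]$ through an unlabeled edge, and your written proof does not rule this out. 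The missing ingredient --- which is exactly the heart of the paper's argument --- is that such a re-entry node would be a descendant of $i$ along your path while simultaneously being an ancestor of $i$ (because $i \in \reqEOLabelSubgraphNodes[k]$ is reachable from $\reqEOLabelSubgraphRoot[k]$ inside the subgraph), contradicting acyclicity. You gesture at this in your strategy paragraph, but it never makes it into the formal proof. A second, smaller gap of the same kind: from $i_1 \in \reqEOLabelSubgraphNodes[k]$ you conclude $(i,i_1) \in \reqEOLabelSubgraphEdges[k]$, but membership of a node in $\reqEOLabelSubgraphNodes[k]$ does not make every incident edge labeled; you again need to exclude $i_1 = \reqEOLabelSubgraphRoot[k]$ (same cycle argument) before Property~\ref{def:decomposability-extended-edge-labels:in-edges-same-labels} lets you conclude that the in-edge $(i,i_1)$ is labeled.

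For comparison, the paper avoids the induction entirely: it assumes some edge of $\path[i][k]$ is unlabeled, takes the \emph{last} such edge $(x,y)$, observes that $y$ then has no $k$-labeled in-edge (Property~\ref{def:decomposability-extended-edge-labels:in-edges-same-labels}) but does have a $k$-labeled out-edge (the next path edge, which exists because the final edge into $k$ is labeled by Corollary~\ref{corollary:decomp-labels:in-edges-label-nodes}), forces $y = \reqEOLabelSubgraphRoot[k]$ by rootedness, and then derives the contradiction from $y$ being both an ancestor and a descendant of $i$. If you insert that same argument where you claim $i_1 \in \reqEOLabelSubgraphNodes[k]$, and again to exclude $i_1 = \reqEOLabelSubgraphRoot[k]$, your induction goes through --- but at that point the induction adds little, since the ``last unlabeled edge'' argument already handles the whole path at once.
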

\begin{proof}
We assume that $k$ is reachable from $i$. Otherwise, the claim is trivially true, since no paths from $i$ to $k$ would exist. 
Let $\path[i][k]$ be some path from $i$ to $k$. Assume for the sake of contradiction, that some edge in $\path[i][k]$ exists that is not labeled with $k$, and let $(x, y)$ be the last such edge in the path. Then, by Property~\ref{def:decomposability-extended-edge-labels:in-edges-same-labels} of the decomposable edge label assignment, no in-edge of $y$ is labeled with $k$.

Let $(j,k) \in \inEdgesExtractionOrder{k} \cap \path[i][k]$ be the in-edge of $k$ which lies on the path $\path[i][k]$. By Corollary~\ref{corollary:decomp-labels:in-edges-label-nodes}, $k$ must be in the label set of $(j,k)$. Therefore, at least one edge in $\path[i][k]$ is labeled with $k$.
Since the node $y$ has no in-edge labeled with $k$, but at least one out-edge that is labeled with $k$, and since $\reqEOLabelSubgraph[k]$, the label-induced subgraph for $k$, is rooted by Property~\ref{def:decomposability-extended-edge-labels:rooted-labelinducedgraph}, $y$ must be this root node. However, $y$ can be reached with a path from $i$, and some edge containing $i$ is labeled with $k$. Due to the acyclicity of the extraction order, no path from $y$ to $i$ can exist.

It follows that the label-induced subgraph for $k$ cannot be rooted, violating Property~\ref{def:decomposability-extended-edge-labels:rooted-labelinducedgraph} of a decomposable label assignment. This contradicts the assumption that $\labelsetsEdges$ is a decomposable label assignment. We therefore conclude that each edge in the path $\path[i][k]$ is labeled with $k$.
\end{proof}

Finally, we verify that the confluence edge label assignment (see Definition~\ref{def:confluence-edge-labels}) is indeed a decomposable edge label assignment.
\begin{lemma} (Confluence Edge Labels are Decomposable) \label{lemma:confluence-labels-are-decomp}\\ 
Let $\reqExtractionOrderDef$ be an extraction order. The confluence edge label assignment $\labelsetsEdges$ (see Definition~\ref{def:confluence-edge-labels}) for $\reqExtractionOrder$ is a decomposable edge label assignment (see Definition~\ref{def:decomposable-edge-labels}).
\end{lemma}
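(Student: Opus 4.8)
The plan is to verify the five defining properties of a decomposable edge label assignment (Definition~\ref{def:decomposable-edge-labels}) one by one for the confluence edge label assignment $\labelsetsEdges$, with Properties~\ref{def:decomposability-extended-edge-labels:confluence-label-extension}, \ref{def:decomposability-extended-edge-labels:labelinducedgraph-contains-label-node}, \ref{def:decomposability-extended-edge-labels:in-edges-same-labels} and~\ref{def:decomposability-extended-edge-labels:labels-end-in-label-node} being short, and Property~\ref{def:decomposability-extended-edge-labels:rooted-labelinducedgraph} (rootedness of the label-induced subgraphs) carrying essentially all of the work. Property~\ref{def:decomposability-extended-edge-labels:confluence-label-extension} is immediate, since $\labelsetsEdges$ trivially extends itself: $\labelsetEdge[e] \subseteq \labelsetEdge[e]$ for all $e \in \reqExtractionOrderEdges$. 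Property~\ref{def:decomposability-extended-edge-labels:in-edges-same-labels} is exactly Lemma~\ref{lemma:incomingLabelsUnique}. For Property~\ref{def:decomposability-extended-edge-labels:labelinducedgraph-contains-label-node}, observe that if $k$ occurs as a label of some edge, then by Definition~\ref{def:confluence-edge-labels} there is a confluence $\confluence \in \confluenceSet$ with end node $k$; the last edge of each of the two paths forming $\confluence$ is an in-edge of $k$ contained in $\confluence$, hence is labeled with $k$, so $k \in \reqEOLabelSubgraphNodes[k]$. For Property~\ref{def:decomposability-extended-edge-labels:labels-end-in-label-node}, suppose for contradiction that some $e = (k, m) \in \outEdgesExtractionOrder{k}$ carried the label $k$; then $e$ would lie on one of the two simple directed paths of a confluence ending at $k$, forcing $m$ to appear after $k$ on a path that terminates at $k$, contradicting simplicity of the path. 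Hence $k \notin \labelsetEdge[e]$.

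It remains to prove Property~\ref{def:decomposability-extended-edge-labels:rooted-labelinducedgraph}: that each label-induced subgraph $\reqEOLabelSubgraph[k]$ is connected and rooted. Connectedness is the easy half: every edge of $\reqEOLabelSubgraph[k]$ lies, by construction, on one of the two paths of some confluence ending at $k$, so for any node $v \in \reqEOLabelSubgraphNodes[k]$ the suffix of that path from $v$ to $k$ is a directed path all of whose edges are labeled with $k$, i.e. it is contained in $\reqEOLabelSubgraph[k]$. Thus every node of $\reqEOLabelSubgraph[k]$ reaches $k$ inside the subgraph, which in particular makes $\reqEOLabelSubgraph[k]$ weakly connected. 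Since $\reqEOLabelSubgraph[k]$ is a weakly connected sub-DAG of the acyclic graph $\reqExtractionOrder$, it is rooted if and only if it has a unique source (a node with no incoming edge within $\reqEOLabelSubgraph[k]$), because in a finite DAG every node is reachable from some source. So the goal reduces to showing $\reqEOLabelSubgraph[k]$ has exactly one source.

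A node $a$ that is a source of $\reqEOLabelSubgraph[k]$ is incident to a labeled edge, and since it is the head of none it is the tail of one, say $(a,a')$, which lies on some confluence $\confluence$ ending at $k$; because $a$ has no labeled in-edge it has no predecessor along $\confluence$, so $a$ is the start node of $\confluence$. Hence every source of $\reqEOLabelSubgraph[k]$ is the start node of a confluence ending at $k$. The crux — the step I expect to be the main obstacle — is to rule out two distinct such sources. I would argue this by taking a common ancestor $t$ of two hypothetical sources $a \neq b$ in $\reqExtractionOrder$ (which exists since $\reqExtractionOrder$ is rooted in $\reqExtractionOrderRoot$) and splicing paths $t \rightsquigarrow a$ and $t \rightsquigarrow b$ together with the internally node-disjoint $a \rightsquigarrow k$ and $b \rightsquigarrow k$ paths supplied by the confluences of $a$ and $b$; the spare alternative on each side lets one reroute away from any shared internal node and extract a confluence from $t$ to $k$ passing through $a$ (or through $b$), which would hand $a$ (or $b$) a labeled in-edge, contradicting that it is a source. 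Equivalently, and probably more cleanly, one first pins down — using Lemma~\ref{lemma:incomingLabelsUnique} — a single canonical start node $s^k$, the deepest common ancestor of the in-neighbours of $k$, shows that $\reqEOLabelSubgraphEdges[k]$ is precisely the set of edges lying on directed $s^k$–$k$ paths, and concludes that $\reqEOLabelSubgraph[k]$ is rooted in $s^k$; establishing well-definedness of $s^k$ and this exact description of $\reqEOLabelSubgraphEdges[k]$ is the delicate combinatorial point, and it mirrors the analysis of confluences underlying Lemma~\ref{lemma:incomingLabelsUnique} in~\cite{rostSchmidFPTApproximations}.
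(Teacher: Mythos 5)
Your proposal is correct and follows essentially the same route as the paper: Properties 1, 3, 4 and 5 are dispatched with the same short arguments (identity inclusion, confluence paths reaching the end node, Lemma~\ref{lemma:incomingLabelsUnique}, and acyclicity/simplicity), and the rootedness of the label-induced subgraph is established by the same device of taking a common ancestor of two hypothetical roots and splicing paths into a confluence ending at $k$ that is only partially labeled, yielding a contradiction. If anything you are more careful than the paper, which simply asserts that the two spliced paths to $k$ are node-disjoint, whereas you explicitly flag the rerouting needed to avoid shared internal nodes as the delicate point.
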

\begin{proof}
We prove that each of the defining properties of decomposable label assignments are satisfied by the confluence edge label assignment.

Since $\labelsetsEdges$ is the confluence edge label assignment, Property~\ref{def:decomposability-extended-edge-labels:confluence-label-extension} is satisfied with equality.

Property~\ref{def:decomposability-extended-edge-labels:rooted-labelinducedgraph} holds, since by definition of the confluence label assignment, each edge that is labeled with $k$ lies on some confluence ending in $k$. Consider the start node of each such confluence. Assume for contradiction, that the label-induced subgraph for $k$ in the confluence label assignment is not rooted. Then there are (at least) two confluences ending in $k$ with different start nodes $i$ and $j$, such that no in-edge of $i$ and $j$ is labeled with $k$. However, since the extraction order is rooted, a path from $\reqExtractionOrderRoot$ to both $i$ and $j$ exists. Therefore, $i$ and $j$ share some common ancestor node, $a$. There are at least two node-disjoint paths from $a$ to $k$, namely one path via $i$ and one path via $j$. These two paths from $a$ to $k$ form a confluence. Since the in-edges of $i$ and $j$ are not labeled with $k$, this confluence is only partially labeled with $k$. This contradicts the definition of the confluence label assignment.

Property~\ref{def:decomposability-extended-edge-labels:labelinducedgraph-contains-label-node} holds, since the end node of any confluence must be reached by the edges that lie on the confluence. Since all these edges are labeled with the end node, it follows that any label node is contained in its label-induced subgraph.

The claim that Property~\ref{def:decomposability-extended-edge-labels:in-edges-same-labels} holds for confluence edge labels is the statement of Lemma~\ref{lemma:incomingLabelsUnique}. Accordingly, a proof is given by Rost and Schmid in \cite{rostSchmidFPTApproximations}.

Finally, Property~\ref{def:decomposability-extended-edge-labels:labels-end-in-label-node} holds: If any out-edge $e \in \outEdgesExtractionOrder{k}$ were labeled with $k$ by the confluence label assignment, it would lie on a confluence ending in $k$. Any branch of this confluence would form a cycle, contradicting the acyclicity of the extraction order $\reqExtractionOrder$.
\end{proof}

%TODO%%%%%%%%%%%%%%%%%%%% DECOMPOSITION ALGORITHM RIP %%%%%%%%%%%%%%%%%%%%%

\begin{figure}[tbh!]
%\scalebox{0.88}{
%\begin{minipage}{1.08\columnwidth}

%\begingroup
\removelatexerror

\begin{algorithm*}[H]

\SetKwInOut{Input}{Input}\SetKwInOut{Output}{Output}
\SetKwFunction{ProcessPath}{ProcessPath}{}{}
\SetKwFunction{reverse}{reverse}{}{}
\SetKwFunction{LP}{LP}
\SetKwFunction{LP}{LP}

\newcommand{\SET}{\textbf{set~}}
\newcommand{\ADD}{\textbf{add~}}
\newcommand{\EACH}{\textbf{each~}}
\newcommand{\DEFINE}{\textbf{define~}}
\newcommand{\AND}{\textbf{and~}}
\newcommand{\LET}{\textbf{let~}}
\newcommand{\WITH}{\textbf{with~}}
\newcommand{\COMPUTE}{\textbf{compute~}}
\newcommand{\FIND}{\textbf{find~}}
\newcommand{\CHOOSE}{\textbf{choose~}}
\newcommand{\DECOMPOSE}{\textbf{decompose~}}
\newcommand{\FORALL}{\textbf{for all~}}
\newcommand{\OBTAIN}{\textbf{obtain~}}
\newcommand{\WITHPROBABILITY}{\textbf{with probability~}}

\caption{Decomposition algorithm for solutions of Linear Program \ref{LP:RunningIntersectionProperty}}
\label{alg:decompositionAlgorithm-RIP}

\Input{VNEP-instance (\substrateTopology, \reqTopology), request extraction order $\reqExtractionOrderDef$, valid label set ordering $\labelsetOrderSet$, solution $\lpvars$ to Linear Program ~\ref{LP:RunningIntersectionProperty}}
\Output{Convex combination~$\PotEmbeddings = \{\decomp = (\prob,\mappingRequestIteration)\}_k$ of valid mappings}
  \SET $\PotEmbeddings  \gets \emptyset$ \AND $k \gets 1$\\
  \SET $x \gets 1$ \label{algline:rip-decomp:setRemainingFlowVar}\\
  \While{$x > 0$ \label{algline:rip-decomp:while-flow}}
  { \label{algline:rip-decomp:begin-while-flow}
    
    \SET $\mappingRequestIteration = (\mappingNodesIteration,\mappingEdgesIteration)~\gets (\emptyset,\emptyset)$ \label{algline:rip-decomp:init-maps}\\
    
    \SET $\Queue \gets \{\reqExtractionOrderRoot \}$ \\
    
    \CHOOSE $u \in \substrateNodes$ \WITH $y^u_{\reqExtractionOrderRoot} > 0$ \AND \SET $\mappingNodesIteration(\reqExtractionOrderRoot)~\gets u$ \label{algline:rip-decomp:nodeMapRoot}\\
    
    \While{$|\Queue| > 0$}{  \label{algline:rip-decomp:begin-while-q}
      \CHOOSE $i \in \Queue$ \AND \SET$\Queue \gets \Queue \setminus \{i\}$ \label{algline:rip-decomp:retrieve-i-from-q}\\
      \ForEach{$\labelsetIndexed{\labelSetIndex} \in \labelsetOrder_i$  \label{algline:rip-decomp:label-set-loop}}{
        \LET $\mappedRequestNodes = (\mappingNodesIteration)^{-1}(\substrateNodes)$ denote the already mapped nodes \label{algline:rip-decomp:iterationstart}\\
        \CHOOSE $m_\labelsetmappingIndex \in \MappingSpace[\labelsetIndexed{\labelSetIndex}]$, s.t. $\gamma^{\mappingNodesIteration(i)}_{i,\labelSetIndex,\labelsetmappingIndex} > 0$ \AND $\restrict[m_\labelsetmappingIndex][\labelsetIndexed{\labelSetIndex} \cap \mappedRequestNodes] = \restrict[\mappingNodesIteration][\labelsetIndexed{\labelSetIndex} \cap \mappedRequestNodes]$ 
        \label{algline:rip-decomp:choose-compatible-labelset-mapping} \\
        \SET $\mappingNodesIteration(j) \gets m_\labelsetmappingIndex(j)$ \FORALL $j \in \labelsetIndexed{\labelSetIndex} \setminus \mappedRequestNodes$\\
        \ForEach{$e=(i,j) \in \outEdgesExtractionOrder{i}: \labelsetRepresentative[e] = \labelsetIndexed{\labelSetIndex}$ \label{algline:rip-decomp:edge-in-label-set-loop}}{
          \eIf{$(i,j) = \EOEdgeToOriginal(i, j)$}{
            \COMPUTE path ${P}^{u,v}_{i,j}$ from $\mappingNodesIteration(i)=u$ to  $v \in \substrateNodes$ according to Lemma~\ref{lem:local-connectivity-property}\\
          \pushline\pushline  \nonl s.t. 
          $\begin{array}{rl}
          \subLP[y^v_{j}][(i,j),\restrict[\mappingNodesIteration][\labelsetEdge]] > 0 & \textnormal{\AND} \\[2pt]
          \subLP[z^{u',v'}_{i,j}][(i,j),\restrict[\mappingNodesIteration][\labelsetEdge]] > 0 & \textnormal{\FORALL } (u',v') \in {P}^{u,v}_{i,j}
          \end{array}$ \label{algline:rip-decomp:compute-path}\\
                      \vspace{2pt}
            \popline\popline
            
            \SET $\mappingEdgesIteration(i,j) \gets {P}^{u,v}_{i,j}$ \AND \textbf{if} $\mappingNodesIteration(j) = \emptyset$ \textbf{then} $\mappingNodesIteration(j) \gets v$
            \label{algline:rip-decomp:map-edge-head-node}\\
          }{
            \COMPUTE path ${P}^{v,u}_{j,i}$ from $v \in \substrateNodes$ to $\mappingNodesIteration(i)=u$ according to Lemma~\ref{lem:local-connectivity-property} \label{algline:rip-decomp:compute-path-rev-edge}\\
            \pushline \pushline \nonl s.t.
            $\begin{array}{rl}
              \subLP[y^v_{j}][\EOEdgeToOriginal(i, j),\restrict[\mappingNodesIteration][\labelsetEdge]] > 0 & \textnormal{\AND} \\[2pt]
              \subLP[z^{u',v'}_{j,i}][\EOEdgeToOriginal(i, j),\restrict[\mappingNodesIteration][\labelsetEdge]] > 0 & \textnormal{\FORALL } (u',v') \in {P}^{u,v}_{j,i}
            \end{array}$\\
            \vspace{2pt}
            \popline\popline

            \SET $\mappingEdgesIteration(\EOEdgeToOriginal(i,j)) \gets {P}^{u,v}_{j,i}$ \AND \textbf{if} $\mappingNodesIteration(j) = \emptyset$ \textbf{then} $\mappingNodesIteration(j) \gets v$
            \label{algline:rip-decomp:map-edge-head-node-reversed}\\
          
          }
          \If{$\mappingEdgesIteration(\EOEdgeToOriginal(e)) \neq \emptyset$ \textnormal{\FORALL}$e \in \inEdgesExtractionOrder{j}$}{
            \SET $\mathcal{Q} \gets \mathcal{Q} \cup \{j\}$ \label{algline:rip-decomp:add-j-to-q} \label{algline:rip-decomp:end-while-q}\\
          }
        }
      }
    }
    \SET $\Variables_k \gets 
    \left( \begin{array}{ll}
            & \{x\}  \cup  \{y^u_{i} ~|~ i \in \reqNodes, u=\mappingNodesIteration(i)\} \\
        \cup   & \{\,\subLP[y^u_{i}][e,\restrict[\mappingNodesIteration][\labelsetEdge]] ~|~ e \in \reqEdges, i \in e, u=\mappingNodesIteration(i) \}\\
        \cup    & \{\subLP[z^{u,v}_{i,j}][e,\restrict[\mappingNodesIteration][\labelsetEdge]]~|~ e=(i,j) \in \reqEdges, (u,v) \in \mappingEdgesIteration(i,j)\} \\
        \cup   & \{\gamma^{u}_{i,\labelSetIndex,\labelsetmappingIndex}~|~ i \in \reqNodes, u = \mappingNodesIteration(i), \labelsetIndexed{\labelSetIndex} \in \labelsetOrder_i, m_\labelsetmappingIndex=\restrict[\mappingNodesIteration][\labelsetIndexed{\labelSetIndex}]\} 
        \end{array}  \right)$ \label{algline:rip-decomp:compute-Vk}\\
    \SET $\prob \gets \min \Variables_k$ \label{algline:rip-decomp:computing-prob} \\
    \SET $v \gets v - \prob$ \FORALL $v \in \Variables_k$ \label{algline:rip-decomp:adapt-variables-one}\\
    \SET $a^{x,y} \gets a^{x,y} - \prob \cdot \allocationFunction(\mappingRequestIteration,x,y)$ \FORALL $(x,y) \in \SR$\label{algline:rip-decomp:adapt-load-one}\\ 
    \ForEach{$(i,j) \in \reqEdges$ \textnormal{\AND \EACH} $(x,y) \in \{(\Vtype(i),i), (\Vtype(j),j), (i,j)\}$}{
      \SET $\subLP[a^{x,y}][(i,j),\restrict[\mappingNodes][\labelsetEdge]] \gets \subLP[a^{x,y}][(i,j),\restrict[\mappingNodes][\labelsetEdge]] - \prob \cdot \allocationFunction(\mappingRequestIteration,x,y)$ \label{algline:rip-decomp:adapt-load}\\
    }
    \ADD $\decomp = (\prob,\mappingRequestIteration)$ to $\PotEmbeddings$ \\
    \SET $k \gets k + 1$ \\
  }

\KwRet{$\PotEmbeddings$}
\end{algorithm*}
%\endgroup
%\end{minipage}}

\end{figure}

%TODO%%%%%%%%%%%%%%%%%%%% DECOMPOSITION ALGORITHM RIP %%%%%%%%%%%%%%%%%%%%%
\clearpage

\subsection{Decomposition Algorithm for Extraction Label Set Orderings} \label{sec:hierarchical:decomposition-algorithm}

We now discuss the decomposition algorithm for Linear Program~\ref{LP:RunningIntersectionProperty}, which is given in pseudocode in Algorithm~\ref{alg:decompositionAlgorithm-RIP}. The algorithm proceeds similarly to the base decomposition algorithm: In each iteration of the outer loop starting in Line~\ref{algline:rip-decomp:while-flow}, a single mapping for the request is extracted. Within the body of the loop, a mapping for the extraction order root node is chosen, and incrementally extended through a graph traversal of the extraction order. 

The main difference to the base decomposition algorithm is the iteration over the label sets in Lines~\ref{algline:rip-decomp:label-set-loop} to~\ref{algline:rip-decomp:edge-in-label-set-loop}, where the base algorithm's iteration over edge bags is replaced by iteration over the extraction label set ordering $\labelsetOrderSet$. Unlike the iteration over the edge bags, the order in which the label sets are processed is now crucial.
\begin{remark} (Iteration over Extraction Label Set Ordering) \\
In Line~\ref{algline:rip-decomp:edge-in-label-set-loop} of Algorithm~\ref{alg:decompositionAlgorithm-RIP}, the iteration over the label sets of the extraction label set ordering $\labelsetOrder_i$ must be performed according to the order defined by $\labelsetOrder_i$.
\end{remark}

We will now show the correctness of Algorithm~\ref{alg:decompositionAlgorithm-RIP}.
\begin{theorem}(Decomposability of LP~\ref{LP:RunningIntersectionProperty} with Extraction Label Set Orderings)\\
Linear Program~\ref{LP:RunningIntersectionProperty} is decomposable using Algorithm~\ref{alg:decompositionAlgorithm-RIP}, if an extraction label set ordering $\Omega$ and a decomposable edge label assignment $\labelsetsEdges$ are used. Additionally, the resource allocations of the resulting convex combination of mappings $\PotEmbeddings$ are bounded by the allocation variables, such that $\sum_{(\prob, \mappingRequestIteration) \in \PotEmbeddings} \prob \cdot \allocationFunction(\mappingRequestIteration, x, y) \leq a^{x, y}$ holds for each resource $(x, y) \in \substrateResources$.
\label{thm:decomp_of_rip_orderings}
\end{theorem}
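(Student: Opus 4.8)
The plan is to show that Algorithm~\ref{alg:decompositionAlgorithm-RIP}, run on a feasible solution $\lpvars$ of Linear Program~\ref{LP:RunningIntersectionProperty}, terminates after polynomially many outer iterations (Line~\ref{algline:rip-decomp:while-flow}) and returns a genuine convex combination $\PotEmbeddings$ of \emph{valid} mappings (Definition~\ref{def:valid-mapping}) whose footprint is dominated by the allocation variables. I would carry the argument on a loop invariant: at the start of each outer iteration the current variable values still form a feasible solution of Linear Program~\ref{LP:RunningIntersectionProperty} (with the allocation variables' lower bound relaxed to $0$), and $x$ equals the common ``total flow'' $\sum_u y^u_{\reqExtractionOrderRoot}$ of that solution. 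Granting the invariant, termination and completeness are immediate: since $\prob = \min \Variables_k$ (Line~\ref{algline:rip-decomp:computing-prob}) and $x \in \Variables_k$ (Line~\ref{algline:rip-decomp:compute-Vk}), every iteration drives at least one LP variable to $0$, so the number of iterations is bounded by the number of LP variables; and $x$ drops by exactly $\prob^k$ each round from its initial value $1$, so $\sum_k \prob^k = 1$ and $\PotEmbeddings$ is a convex combination. (Lemma~\ref{lemma:rip-lp-solvability-with-arbitrary-edge-labels} gives the converse direction.)

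The substantive part is that a single inner traversal (Lines~\ref{algline:rip-decomp:begin-while-q}--\ref{algline:rip-decomp:add-j-to-q}) is always well defined. The root choice in Line~\ref{algline:rip-decomp:nodeMapRoot} is possible by Constraint~(\ref{alg:lp:novel:node-embedding}) and $x>0$. For the label-set mapping in Line~\ref{algline:rip-decomp:choose-compatible-labelset-mapping}, I would process the label sets of the dequeued node $i$ strictly in the running-intersection order $\labelsetOrder_i$ of the extraction label set ordering $\labelsetOrderSet$ (Definition~\ref{def:extractionLabelSetOrdering}). When $\labelsetIndexed{\labelSetIndex}$ is reached, its predecessor label set $\labelsetPredecessors_\labelSetIndex$ is already fully mapped: for $\labelSetIndex=1$ it is $\labelsetIncoming$, whose image was pinned by the in-edge subformulations via Constraint~(\ref{eq:lp:novel-rip:in-edges-to-gamma}); for $\labelSetIndex>1$ it is contained in $\labelsetIndexed{\labelsetPredecessorFunction(\labelSetIndex)}$ with $\labelsetPredecessorFunction(\labelSetIndex)<\labelSetIndex$, processed earlier. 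I then need the auxiliary fact that \emph{every} already-mapped node of $\labelsetIndexed{\labelSetIndex}$ lies in $\labelsetPredecessors_\labelSetIndex$; this is exactly where Lemma~\ref{lemma:decomplabels:all-paths-are-labeled} (label-path continuity) and Corollary~\ref{corollary:decomp-labels:in-edges-label-nodes} enter, forcing any node of $\labelsetIndexed{\labelSetIndex}$ that was mapped earlier to have been reached through a label set also containing it. Constraint~(\ref{eq:lp:novel-rip:gamma-continuity}) then provides a positive $\gamma^{\mappingNodesIteration(i)}_{i,\labelSetIndex,\labelsetmappingIndex}$ agreeing with the fixed image of $\labelsetPredecessors_\labelSetIndex$, which is the vector the algorithm selects. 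Finally, the path computations (Lines~\ref{algline:rip-decomp:compute-path} and~\ref{algline:rip-decomp:compute-path-rev-edge}) succeed because Constraint~(\ref{eq:lp:novel-rip:gamma-to-out-edges}) gives $\subLP[y^u_i][e,m_e] \ge \gamma^{\mappingNodesIteration(i)}_{i,\labelSetIndex,\labelsetmappingIndex}>0$ for the relevant edge mapping $m_e$, and each subformulation satisfies the flow constraints by Constraint~(\ref{eq:lp:novel-rip:subformulations}), so Lemma~\ref{lem:local-connectivity-property} applies inside the subformulation and yields the required path with a positive head-node variable.

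Next I would show the extracted $m^k$ is valid. Every request node and edge is mapped: $\reqExtractionOrder$ is a rooted DAG, so induction along a topological order shows each node eventually enters $\mathcal{Q}$ (a node is enqueued once all its in-edges are assigned, Line~\ref{algline:rip-decomp:add-j-to-q}), and each out-edge is assigned when its tail is dequeued. For single-valuedness of the node mapping, the only delicate case is a node $j$ with at least two in-edges, which is then a confluence end node; by Corollary~\ref{corollary:decomp-labels:in-edges-label-nodes} $j$ lies in the common (Lemma~\ref{lemma:incomingLabelsUnique}) label set of all its in-edges, and by Lemma~\ref{lemma:decomplabels:all-paths-are-labeled} $j$ remains in the processed label sets along every root-to-$j$ path inside $\reqEOLabelSubgraph[j]$. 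Hence the consistency test in Line~\ref{algline:rip-decomp:choose-compatible-labelset-mapping}, propagated through the predecessor label sets, keeps the value $\mappingNodesIteration(j)$ fixed from the moment $j$ is first mapped, while Constraint~(\ref{alg:lp:novel:forbidding-nodes-in-sub-lps}) forces every in-edge subformulation indexed by its $m_e$ to admit only the node mapping $m_e(j)=\mappingNodesIteration(j)$; so every in-edge path into $j$ ends at $\mappingNodesIteration(j)$, and with the path endpoints from Lemma~\ref{lem:local-connectivity-property} this makes $m^k$ a valid mapping. I expect this consistency step to be the main obstacle: it is where the running-intersection structure of $\labelsetOrderSet$, the label-path-continuity of a decomposable edge label assignment (Definition~\ref{def:decomposable-edge-labels}), and the forbidding constraint must be combined into a single invariant on $\mappingNodesIteration$ stating that an already-mapped label node's current image equals the image forced by every subformulation that will later reference it.

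The resource bound is then routine. In each iteration $a^{x,y}$ is decreased by exactly $\prob^k\cdot\allocationFunction(\mappingRequestIteration,x,y)$ (Line~\ref{algline:rip-decomp:adapt-load-one}) and stays nonnegative, because $\prob^k$ is bounded by the subformulation node and edge variables used by $m^k$, which dominate $m^k$'s per-resource footprint through Constraints~(\ref{eq:classic_mcf:constr_node_load})--(\ref{eq:classic_mcf:constr_edge_load}) inside the subformulations together with the aggregation Constraints~(\ref{alg:lp:novel:node-load})--(\ref{alg:lp:novel:edge-load}); the simultaneous decrement of the subformulation allocation variables in Line~\ref{algline:rip-decomp:adapt-load} keeps the coupling consistent. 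Summing the decrements over all iterations and using $a^{x,y}\ge 0$ at termination yields $\sum_{(\prob,\mappingRequestIteration)\in\PotEmbeddings}\prob\cdot\allocationFunction(\mappingRequestIteration,x,y)\le a^{x,y}$ for every $(x,y)\in\substrateResources$. Since Lemma~\ref{lemma:confluence-labels-are-decomp} shows the confluence edge label assignment satisfies the hypotheses, the theorem in particular re-derives correctness of the base decomposition algorithm.
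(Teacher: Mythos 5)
Your proposal is correct and follows essentially the same route as the paper's proof: the same invariant tying $x$ to $\sum_u y^u_{\reqExtractionOrderRoot}$, the same use of local connectivity inside subformulations, label-path continuity plus Constraint~(\ref{alg:lp:novel:forbidding-nodes-in-sub-lps}) for consistency of re-encountered label nodes, the running-intersection structure plus Constraint~(\ref{eq:lp:novel-rip:gamma-continuity}) for the existence of a compatible $\gamma$-mapping, and the same nonnegativity argument for the allocation bound. The only difference is presentational — you argue directly via a loop invariant where the paper frames the two key steps as a proof by contradiction — which does not change the substance.
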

\begin{proof}
For the sake of contradiction, assume that the LP formulation is non-decomposable, i.e. a valid variable assignment for $\lpvars$ exists, such that the decomposition algorithm cannot extract a valid mapping.

The LP subformulations $(\vec{y}, \vec{z}, \vec{a})$ are unchanged from Linear Program~\ref{IP:novel-AC}, and therefore the local connectivity property from Lemma~\ref{lem:local-connectivity-property} holds within each subformulation. Consider any edge $e = (i, j) \in \reqExtractionOrderEdges$ of the extraction order. If any node mapping variable $\subLP[y^u_i][\EOEdgeToOriginal(e), m_e]$ with $m_e \in \MappingSpace[\labelsetEdge]$ for the tail node $i$ and some substrate node $u\in \substrateNodes$ is non-zero, local connectivity guarantees the existence of some non-zero variable $\subLP[y^v_j][\EOEdgeToOriginal(e), m_e]$ mapping the head node $j$ to some substrate node $v \in \substrateNodes$. Additionally, it guarantees that a path $P^{u,v}_{i,j}$ through the substrate exists with $z^{x, y}_{i, j} > 0$ for all path edges $(x, y) \in P^{u,v}_{i,j}$.

The decomposition algorithm initially determines a mapping for the root node $\reqExtractionOrderRoot$ in Line~\ref{algline:rip-decomp:nodeMapRoot}. Given a request node $i$, let 
$s_i := \sum_{u \in \substrateNodes}y^u_i$ denote the sum of all variables related to the mapping of $i$.
The following invariant holds in each iteration: For any $i \in \reqNodes$, $s_i$ is equal to the variable $x$, which tracks the total remaining mapping probability. Initially, $x=1$ is set in Line~\ref{algline:rip-decomp:setRemainingFlowVar}, while $s_i=1$ holds due to Constraint (\ref{alg:lp:novel:node-embedding}). In each iteration of the algorithm $s_i$ and $x$ are reduced by the same amount $\prob$ in Line~\ref{algline:rip-decomp:adapt-variables-one}.

Once $x=0$, it follows that $s_{\reqExtractionOrderRoot} = 0$, such that the while-condition in Line~\ref{algline:rip-decomp:begin-while-flow} fails and the algorithm terminates before reaching Line~\ref{algline:rip-decomp:nodeMapRoot} again. Therefore, some non-zero node mapping variable for the root node $\reqExtractionOrderRoot$ exists in each iteration of the while-loop, and the mapping of $\reqExtractionOrderRoot$ in Line~\ref{algline:rip-decomp:nodeMapRoot} always succeeds.

Next, assume that the algorithm fails to produce a valid mapping in some iteration $k$. This might occur in the following ways:
\begin{enumerate}
\item The algorithm maps an edge $e = (i, j) \in \reqEdges$ to a path $P^{u, v}_{i, j}$ in Line~\ref{algline:rip-decomp:map-edge-head-node} or~\ref{algline:rip-decomp:map-edge-head-node-reversed}, when either $i$ or $j$ has previously been mapped to some other node earlier in the algorithm's execution, violating the validity of the mapping. \label{thm:decomp_of_rip_orderings:contradiction:edge-fail}
\item For some $i \in \reqNodes$ with $u = \mappingNodesIteration(i)$, and for some label set $\labelsetIndexed{\labelSetIndex}$, the algorithm fails to choose a mapping $\mappingChar_\labelsetmappingIndex$  in Line~\ref{algline:rip-decomp:choose-compatible-labelset-mapping}, such that $\gamma^u_{i, a, \alpha} > 0$ and such that $\mappingChar_\labelsetmappingIndex$ is compatible with the mapped nodes $V^m_R$. \label{thm:decomp_of_rip_orderings:contradiction:node-fail}
\end{enumerate}

Case~\ref{thm:decomp_of_rip_orderings:contradiction:edge-fail} also occurs in the correctness proof of Algorithm~\ref{alg:decomposition:Algorithm-Novel-AC} and is proven contradictory by Rost and Schmid in \cite{rostSchmidFPTApproximations}. We reproduce the argument here, using the more general terminology of decomposable edge label assignments.

Assume that the edge mapping operation produces an invalid edge mapping for some edge $e = (i, j) \in \reqExtractionOrderEdges$. W.l.o.g., assume that $\EOEdgeToOriginal(e) = e$. If $\mappingNodesIteration(j)=\emptyset$, i.e. no previous node mapping for $j$ exists, $j$ is immediately and consistently mapped with the edge mapping in Line~\ref{algline:rip-decomp:map-edge-head-node}. 

To obtain an invalid mapping, $\mappingNodesIteration(j)$ must therefore have been set at an earlier time by the decomposition algorithm. This can only occur, if $j$ is contained in some edge label set. Then, since $j$ is by definition of a decomposable label assignment (see Definition~\ref{def:decomposable-edge-labels}) included in its label-induced subgraph, it must also hold that $j \in \labelsetIncoming[j]$. This implies that the algorithm in Line~\ref{algline:rip-decomp:compute-path} selected a path ${P}^{u,v}_{i,j}$ ending in some node $v \neq \mappingNodesIteration(j)$, such that $\subLP[y^v_{j}][(i,j),\restrict[\mappingNodesIteration][\labelsetEdge]] > 0$. However, Constraint (\ref{alg:lp:novel:forbidding-nodes-in-sub-lps}) explicitly sets $\subLP[y^v_{j}][(i,j),\restrict[\mappingNodesIteration][\labelsetEdge]] = 0$, contradicting the assumption that the variable assignment is a valid solution of Linear Program~\ref{LP:RunningIntersectionProperty}. 

Further, due to the definition of a decomposable label assignment, there must be a single node, namely the root of the label-induced subgraph for $j$, where $j$ was first introduced as a label. Due to the label path continuity shown in Lemma~\ref{lemma:decomplabels:all-paths-are-labeled}, all edges between this root node and $j$ are labeled with $j$. Since the previous edges on this path were successfully mapped, it follows that some non-zero flow is induced in the LP subformulation, such that a valid edge mapping can be extracted through a graph search, as guaranteed by the local connectivity property.

We now consider Case~\ref{thm:decomp_of_rip_orderings:contradiction:node-fail}. Note that every request node $i \in \reqNodes$ that is removed from $\mathcal{Q}$ in Line~\ref{algline:rip-decomp:retrieve-i-from-q} is already mapped by $\mappingNodesIteration$. We denote this set of already mapped nodes by $\mappedRequestNodes$. Assume for contradiction that for some $i \in \reqNodes$ with $u = \mappingNodesIteration(i)$ and some label set $\labelsetIndexed{\labelSetIndex} \in \labelsetOrder_i$, no mapping $\mappingChar_\labelsetmappingIndex \in \MappingSpace[\labelsetIndexed{\labelSetIndex}]$ exists with  $\restrict[\mappingChar_\labelsetmappingIndex][\labelsetIndexed{\labelSetIndex} \cap \mappedRequestNodes] = \restrict[\mappingNodesIteration][\labelsetIndexed{\labelSetIndex} \cap \mappedRequestNodes]$ and $\gamma^{u}_{i, \labelSetIndex, \labelsetmappingIndex}> 0$.
Any \emph{non-locally} mapped label nodes, i.e. nodes in $\labelsetIndexed{\labelSetIndex}$ that were mapped in previous iterations of the while loop in Lines~\ref{algline:rip-decomp:begin-while-q}~to~\ref{algline:rip-decomp:add-j-to-q}, must be contained in $\labelsetIncoming$.
Since $\labelsetOrder_i$ is part of an extraction label set ordering, $\labelsetIncoming$ is the first element of $\labelsetOrder_i$. From the running intersection property, it then follows that $\labelsetIndexed{\labelSetIndex} \cap \mappedRequestNodes \subseteq \labelsetPredecessors_\labelSetIndex$, i.e. any node that has been mapped by the algorithm and that is a label in $\labelsetIndexed{\labelSetIndex}$ is also contained in the predecessor label set $\labelsetPredecessors_\labelSetIndex$. 

Let $\labelSetIndexTwo = \labelsetPredecessorFunction(\labelsetIndexed{\labelSetIndex})$ be the index of the associated predecessor in the label set ordering. If $\labelSetIndexTwo = \labelSetIndex$, the predecessor label set of $\labelsetIndexed{\labelSetIndex}$ is empty, i.e. the label set $\labelsetIndexed{\labelSetIndex}$ does not contain any nodes that are already mapped in $\mappingNodesIteration$. Since $i$ was retrieved from the queue, all in-edges of $i$ are already mapped in $\mappingRequestIteration$. At the time when these in-edges were mapped, the criterion $\subLP[y^v_{j}][(i,j),\restrict[\mappingNodesIteration][\labelsetEdge]] > 0$ in Line~\ref{algline:rip-decomp:compute-path}, or analogously in Line~\ref{algline:rip-decomp:compute-path-rev-edge} had to be satisfied. Constraint (\ref{alg:lp:novel:node-to-sub-node-mapping}) then requires that $y^u_i > 0$. From Constraints (\ref{alg:lp:novel:node-to-sub-node-mapping}) and (\ref{eq:lp:novel-rip:gamma-to-out-edges}), it follows that some $\mappingChar_\labelsetmappingIndex \in \MappingSpace[\labelsetIndexed{\labelSetIndex}]$ with $\gamma^{u}_{i, \labelSetIndex, \labelsetmappingIndex } > 0$ exists, in contradiction to the assumption.

Therefore, assume that $\labelSetIndexTwo < \labelSetIndex$  and let $\mappingChar_\labelsetmappingIndexTwo := \restrict[\mappingNodesIteration][\labelsetIndexed{\labelSetIndexTwo}]$. Since the algorithm has reached Line~\ref{algline:rip-decomp:choose-compatible-labelset-mapping} for label set $\labelsetIndexed{\labelSetIndex}$,  $\labelsetIndexed{\labelSetIndexTwo}$ has been successfully processed by the algorithm. Therefore, $\gamma^{u}_{i, b, \labelsetmappingIndexTwo } > 0$ holds.
The sum on the right side of Constraint (\ref{eq:lp:novel-rip:gamma-continuity}) includes $\gamma^{u}_{i, b, \labelsetmappingIndexTwo}$. As all $\gamma$-variables are non-negative, it follows that there is at least one non-zero term in the sum on the left side of Constraint (\ref{eq:lp:novel-rip:gamma-continuity}). From the running intersection property, it follows that $\labelsetIndexed{\labelSetIndex} \setminus \labelsetPredecessors_\labelSetIndex$ only contains labels that are newly introduced by $\labelsetIndexed{\labelSetIndex}$, i.e. $\restrict[\mappingNodesIteration][\labelsetIndexed{\labelSetIndex}\setminus \labelsetPredecessors_\labelSetIndex] = \emptyset$ holds. Therefore, mapping these new labels cannot introduce any inconsistencies with already mapped nodes. 
It then follows that there is at least one mapping $\mappingChar_\labelsetmappingIndex \in \MappingSpace[\labelsetIndexed{\labelSetIndex}]$ such that $\gamma^{u}_{i, \labelSetIndex, \labelsetmappingIndex } > 0$ holds, and the algorithm can therefore successfully perform the choose operation in Line~\ref{algline:rip-decomp:choose-compatible-labelset-mapping}. 
This contradicts the assumption that the algorithm fails to construct a valid mapping in any iteration of the while loop starting in Line~\ref{algline:rip-decomp:begin-while-flow}, and by iterating this extraction process as long as $x > 0$ holds.

Hence, a convex combination of mappings $\PotEmbeddings$ is returned by Algorithm~\ref{alg:decompositionAlgorithm-RIP}.

We next show that after the subtraction operations in Lines~\ref{algline:rip-decomp:adapt-variables-one} to~\ref{algline:base-decomposition:adapt-load}, all constraints of Linear Program~\ref{LP:RunningIntersectionProperty} other than Constraint (\ref{alg:lp:novel:node-embedding}), which forces the full embedding of nodes, still hold. Since the mapping $\mappingRequestIterationDef$ which was extracted in iteration $k$ is valid, it holds that each $i \in \reqNodes$ is mapped to a single substrate node, and that each $(i,j) \in \reqEdges$ is mapped to a substrate path from $\mappingNodesIteration(i)$ to $\mappingNodesIteration(j)$. From the validity of $\mappingRequestIteration$, it follows that Constraints (\ref{eq:classic_mcf:constr_flow}), (\ref{alg:lp:novel:node-to-sub-node-mapping}), (\ref{eq:lp:novel-rip:gamma-to-out-edges}), (\ref{eq:lp:novel-rip:in-edges-to-gamma}), and (\ref{eq:lp:novel-rip:gamma-continuity}) hold: In each of these constraints, for every variable from which $\prob$ is subtracted on one side of the equation, a corresponding term exists on the other side, balancing the constraint. Note that by the same argument, Constraint (\ref{alg:lp:novel:node-embedding}) also holds after each iteration, if the constant $1$ on the right side of Constraint (\ref{alg:lp:novel:node-embedding}) is replaced with the variable $x$. A similar argument applies to the Constraints (\ref{alg:lp:novel:node-load}) and (\ref{alg:lp:novel:edge-load}) tracking the global allocations and Constraints (\ref{eq:classic_mcf:constr_node_load}) and (\ref{eq:classic_mcf:constr_edge_load}) tracking the allocations in each subformulation, by taking into account the definition of the mapping resource allocation function $\allocationFunction(\mappingRequestIteration, x, y)$ (see Definition \ref{def:mapping-allocation-function}). Lastly, Constraints (\ref{eq:classic_mcf:constr_forbid_small_nodes}), (\ref{eq:classic_mcf:constr_forbid_small_edges}) and (\ref{alg:lp:novel:forbidding-nodes-in-sub-lps}), which forbid specific node and edge mappings, still hold since the algorithm cannot extract a mapping conflicting with these constraints.
 
Lastly, using this invariant, we verify that the resource allocations of $\PotEmbeddings$ are bounded by the values of the allocation variables $\vec{a}$. Assume for contradiction, that for some resource $(x, y) \in \substrateResources$, $\sum_{(\prob, \mappingRequestIteration) \in \PotEmbeddings} \prob \cdot \allocationFunction(\mappingRequestIteration, x, y) > a^{x, y}$ holds. In each iteration of the decomposition algorithm, a mapping $(\prob, \mappingRequestIteration) \in \PotEmbeddings$ is extracted. The value $\prob \cdot\allocationFunction(\mappingRequestIteration, x, y)$ is subtracted from the allocation variable $a^{x, y}$ in Line~\ref{algline:rip-decomp:adapt-load-one} or~\ref{algline:base-decomposition:adapt-load}. Since the total allocations of all mappings exceed the initial value of the allocation variable by assumption, it follows that $a^{x,y} < 0$ after the final iteration of the algorithm, violating the invariant that the LP constraints are satisfied after each iteration of the loop, since the allocation variables are required to be non-negative.

Therefore, the bound $\sum_{(\prob, \mappingRequestIteration) \in \PotEmbeddings} \prob \cdot \allocationFunction(\mappingRequestIteration, x, y) \leq a^{x, y}$ holds for each resource $(x, y) \in \substrateResources$.
\end{proof}

\subsection{Reduction to the Base Algorithm} \label{sec:rip-reduction-to-base-algorithm}
In this section, we examine how the adapted algorithm relates to the base algorithm. The main result is that the base algorithm emerges from the adapted algorithm when using the label set ordering defined by the edge bag's label sets.

We begin by considering the label set ordering obtained by representing each edge label set by the label set of its edge bag. That is, for each edge $e \in \outEdgesExtractionOrder{i}$, the edge label set $\labelsetEdge[e]$ is represented in the extraction label set ordering $\labelsetOrder_i$ by the label set $\bagLabelSet$ of the edge bag $\outEdgeBag[i][b]$ it belongs to, according to Definition~\ref{def:edge-bags}. 

\begin{definition}(Bag Label Set Ordering) \label{def:bag-label-set-ordering}\\ 
Given an extraction order $\reqExtractionOrderDef$ with decomposable edge label assignment $\labelsetsEdges$, we define the bag label set ordering as $\labelsetOrderSet = \{\labelsetOrder_i | i\in \reqNodes\}$, where
\begin{align}
\labelsetOrder_i := (\labelsetIncoming, \bagLabelSet[i][1], ..., \bagLabelSet[i][n])
\label{eq:equivalence-bags-rip-ordered-labelsets}
\end{align}
where $\bagLabelSet[i][b]$ refers to the label set of $\VEbfsBag[b] \in \BagIPlus$, the $b$-th edge bag of node $i$, as introduced in Definition~\ref{def:bag-label-set}.
\end{definition}

\begin{lemma}(Validity of the Bag Label Set Ordering) \label{lemma:validity-of-bag-label-set-ordering}\\ 
The bag label set ordering introduced in Definition~\ref{def:bag-label-set-ordering} is an extraction label set ordering.
\end{lemma}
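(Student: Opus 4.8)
The plan is to verify that the bag label set ordering $\labelsetOrder_i = (\labelsetIncoming, \bagLabelSet[i][1], \ldots, \bagLabelSet[i][n])$ satisfies the three defining properties of an extraction label set ordering from Definition~\ref{def:extractionLabelSetOrdering}: that it is a running intersection ordering, that its first element is the incoming label set $\labelsetIncoming$, and that every out-edge has a unique representative label set in the ordering. The second property is immediate by construction, so the work concentrates on the first and third.

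For the running intersection property, the key observation is that the bag label sets $\bagLabelSet[i][1], \ldots, \bagLabelSet[i][n]$ are \emph{pairwise disjoint} by Property~2 of Definition~\ref{def:edge-bags} (edges in different bags have disjoint label sets, hence so do the unions defining the bag label sets). Therefore, for each index $a \geq 2$, the intersection of $\labelsetOrder_i(a)$ with the union of all its predecessors reduces to its intersection with $\labelsetIncoming$ alone — all the other predecessors contribute nothing. Since $\labelsetIncoming = \labelsetOrder_i(1)$ is one of the preceding sets, the running intersection property holds with predecessor index $1$ (or with the set's own index if the intersection with $\labelsetIncoming$ is empty). First I would state this disjointness fact explicitly, then carry out this one-line reduction for each $a \geq 2$, and separately note that the case $a=2$ where $\labelsetOrder_i(2) = \bagLabelSet[i][1]$ is covered since its only predecessor is $\labelsetIncoming$.

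For the third property — existence and uniqueness of a representative — I would argue as follows. Given an out-edge $e \in \outEdgesExtractionOrder{i}$, let $\outEdgeBag[i][b]$ be the unique edge bag containing $e$ (uniqueness because the bags partition $\outEdgesExtractionOrder{i}$). Then $\labelsetEdge[e] \subseteq \bagLabelSet[i][b] = \labelsetOrder_i(b+1)$ by Definition~\ref{def:bag-label-set}, so a candidate representative exists. For uniqueness I must check that no other set in the ordering contains $\labelsetEdge[e]$; this requires ruling out (i) another bag label set $\bagLabelSet[i][c]$ with $c \neq b$, and (ii) the incoming label set $\labelsetIncoming$. Case (i) follows from disjointness of the bag label sets together with the fact that $\labelsetEdge[e]$ is nonempty precisely when $e$ lies on some confluence — if $\labelsetEdge[e] = \emptyset$ then several sets trivially contain it, which is why Definition~\ref{def:extractionLabelSetOrdering} specifies "the first according to the ordering" as the tie-breaker, so I would invoke that clause rather than prove strict uniqueness. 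For case (ii), I would use that $i$ is the tail of $e$, so by Property~\ref{def:decomposability-extended-edge-labels:labels-end-in-label-node} of a decomposable edge label assignment (Corollary~\ref{corollary:decomp-labels:in-edges-label-nodes} / the structure of confluences) no label of $e$ can equal $i$ in a way forcing it into $\labelsetIncoming$; more directly, any confluence end node labeling $e \in \outEdgesExtractionOrder{i}$ is a descendant of $i$, whereas labels in $\labelsetIncoming$ correspond to confluences through which $i$ itself is an interior node, and acyclicity prevents a node from being both — so I would argue that generically $\labelsetEdge[e] \not\subseteq \labelsetIncoming$ unless $\labelsetEdge[e] = \emptyset$, in which case the tie-break clause again applies and picks $\labelsetIncoming$ (the first set) as the representative.

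The main obstacle I anticipate is the uniqueness/representative clause: cleanly handling the degenerate case where $\labelsetEdge[e]$ is empty or is a subset of several bag label sets requires careful appeal to the tie-breaking convention built into Definition~\ref{def:extractionLabelSetOrdering}, Property~\ref{def:extractionLabelSetOrdering:item:representative-exists}, rather than an honest uniqueness argument, and I would want to phrase this so that it is clearly legitimate. Everything else — the running intersection property and the first-element property — should follow directly and quickly from the disjointness of bag label sets established in Definition~\ref{def:edge-bags}.
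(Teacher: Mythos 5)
Your proposal is correct and follows essentially the same route as the paper's proof: verify the three properties of Definition~\ref{def:extractionLabelSetOrdering}, using pairwise disjointness of the bag label sets for the running intersection property, the construction itself for the first-element property, and the fact that the edge bags partition $\outEdgesExtractionOrder{i}$ for the existence of a representative. The paper is terser on the representative clause (it only argues existence and leaves uniqueness to the tie-breaking convention built into the definition, exactly as you ultimately do), so your extra worry about strict uniqueness is unnecessary but harmless.
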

\begin{proof}
We show that the bag label set ordering satisfies all defining properties of an extraction label set ordering as given in Definition~\ref{def:extractionLabelSetOrdering}.

Property~\ref{def:extractionLabelSetOrdering:item:running-intersection-ordering} requires that the extraction label set ordering has the running intersection property. By the definition of edge bags (see Definition~\ref{def:edge-bags}), the bag label sets $\bagLabelSet$ are mutually disjoint. Therefore any intersection of some bag label set with another label set in the sequence must be with the incoming label set $\labelsetIncoming$, which comes first in the sequence. Therefore, the bag label set ordering has the running intersection property. 

Property~\ref{def:extractionLabelSetOrdering:item:first-nonlocal} requires that the incoming label set is the first element of the sequence. By Definition~\ref{def:bag-label-set-ordering}, $\labelsetIncoming$ is the first element in the sequence.

Property~\ref{def:extractionLabelSetOrdering:item:representative-exists} states that for each edge label set, a representative label set is contained in the extraction label set ordering. Since the edge bags partition the outgoing edges, i.e. every edge is contained in some edge bag, this property is satisfied as well.
\end{proof}

We will now show that Linear Program~\ref{LP:RunningIntersectionProperty} reduces to Linear Program~\ref{IP:novel-AC} from the base algorithm, when using the edge bags as the basis of the label set ordering.

\begin{lemma}(Equivalence to the Base LP Formulation) \label{lemma:bag-label-set-ordering-yields-same-LP}\\
Let $\reqExtractionOrderDef$ be an extraction order with a decomposable edge label assignment $\labelsetsEdges$. 

When using the bag label set ordering, Linear Program~\ref{LP:RunningIntersectionProperty} is equivalent to Linear Program~\ref{IP:novel-AC} in that any valid variable assignment for LP~\ref{LP:RunningIntersectionProperty} corresponds to a valid variable assignment for Linear Program~\ref{IP:novel-AC}.
\end{lemma}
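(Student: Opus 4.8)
The plan is to exhibit an explicit bijection between the $\gamma$-variables of Linear Program~\ref{LP:RunningIntersectionProperty} under the bag label set ordering and the $\gamma$-variables of the base formulation, Linear Program~\ref{IP:novel-AC}, and then check that under this identification the two sets of constraints coincide (up to the trivial presence of the first-element variables $\gamma^u_{i,1,\alpha}$, which are redundant by Constraint~(\ref{eq:lp:novel-rip:in-edges-to-gamma})). First I would fix a node $i \in \reqNodes$ and recall that the bag label set ordering is $\labelsetOrder_i = (\labelsetIncoming, \bagLabelSet[i][1], \ldots, \bagLabelSet[i][n])$. The $\gamma$-variables of LP~\ref{LP:RunningIntersectionProperty} indexed by $a \geq 2$ are exactly $\gamma^u_{i, a, \labelsetmappingIndex}$ with $\labelsetmappingIndex \in \MappingSpace[\bagLabelSet[i][a-1]]$, which syntactically match the base formulation's $\gamma^u_{i, b, \labelsetmappingIndexTwo}$ with $b$ identified with $a-1$. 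The $\gamma$-variables with $a = 1$ correspond to the incoming label set $\labelsetIncoming$; by Constraint~(\ref{eq:lp:novel-rip:in-edges-to-gamma}) they are equal to the subformulation variables $\subLP[y^u_i][\EOEdgeToOriginal(e), m_\labelsetmappingIndex]$ for any in-edge $e$, so they carry no independent information and can be projected away.

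Next I would go constraint by constraint. The subformulation block (\ref{eq:lp:novel-rip:subformulations}), the recycled constraints (\ref{eq:lp:novel-rip:recycled}), and the variable domains are identical in both formulations (the only discrepancy, $a^{x,y} \geq 0$ versus $a^{x,y} \in [0, \Scap(x,y)]$, is reconciled by Constraint~(\ref{alg:lp:novel:edge-load})/(\ref{alg:lp:novel:node-load}) together with the subformulation capacity bounds, exactly as in Lemma~\ref{lemma:rip-lp-solvability-with-arbitrary-edge-labels}). For Constraint~(\ref{eq:lp:novel-rip:gamma-to-out-edges}): under the bag label set ordering, $\labelsetRepresentative[e] = \bagLabelSet[i][b]$ where $b$ is the bag containing $e$, and restricting a mapping $m_\labelsetmappingIndex \in \MappingSpace[\bagLabelSet[i][b]]$ to $\labelsetEdge[e]$ is the same operation (since $\labelsetEdge[e] \subseteq \bagLabelSet[i][b]$ and $\bagLabelSet[i][b] \cap \labelsetEdge[e] = \labelsetEdge[e]$) as the restriction $\restrict[\mappingChar_\labelsetmappingIndexTwo][\bagLabelSet[i][b] \cap \labelsetEdge] = \mappingChar_e$ appearing in Constraint~(\ref{alg:lp:novel:gamma-to-outgoing-edges}); so the two constraints are the same sum. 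For Constraint~(\ref{eq:lp:novel-rip:gamma-continuity}): because the bag label sets are mutually disjoint, the predecessor label set $\labelsetPredecessors_a$ of any $\bagLabelSet[i][b]$ equals $\bagLabelSet[i][b] \cap \labelsetIncoming$ and $\labelsetPredecessorFunction(a) = 1$, so Constraint~(\ref{eq:lp:novel-rip:gamma-continuity}) becomes a relation between the bag's $\gamma$-variables and the $a=1$ variables; using (\ref{eq:lp:novel-rip:in-edges-to-gamma}) to substitute the $a=1$ variables by subformulation $y$-variables, this collapses precisely to Constraint~(\ref{alg:lp:novel:incoming-edges-to-gamma-variables}) of the base formulation. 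Finally, the base formulation's Constraint~(\ref{alg:lp:novel:node-to-sub-node-mapping}) together with (\ref{eq:lp:novel-rip:in-edges-to-gamma}) reproduces (\ref{eq:lp:novel-rip:in-edges-to-gamma})'s content, so nothing is lost.

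The cleanest way to organize the argument is: given a valid assignment $\lpvars$ for LP~\ref{LP:RunningIntersectionProperty} with the bag label set ordering, keep $\vec y$, $\vec z$, $\vec a$ unchanged and keep the $\gamma$-variables with $a \geq 2$, reindexing $a \mapsto a-1$ to obtain the base formulation's $\vec\gamma$; then verify each base constraint as above. Conversely, given a valid assignment for LP~\ref{IP:novel-AC}, reindex $b \mapsto b+1$, and additionally define $\gamma^u_{i,1,\alpha} := \subLP[y^u_i][\EOEdgeToOriginal(e), m_\alpha]$ for any (hence every, by Lemma~\ref{lemma:incomingLabelsUnique}) in-edge $e$ of $i$; this is well-defined and the remaining new constraints (\ref{eq:lp:novel-rip:in-edges-to-gamma}) and (\ref{eq:lp:novel-rip:gamma-continuity}) hold by construction and by disjointness of the bag label sets respectively.

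I expect the main obstacle to be purely bookkeeping: making the index shift between "edge bags" and "label sets $\labelsetIndexed{a}$ with $a \geq 2$" precise, and carefully checking that the restriction operators $\restrict[\cdot][\bagLabelSet[i][b] \cap \labelsetEdge]$ in the base formulation really do coincide with the restriction $\restrict[\cdot][\labelsetEdge]$ used in the adapted formulation when $\labelsetEdge \subseteq \bagLabelSet[i][b]$, and similarly that $\labelsetPredecessors_a = \bagLabelSet[i][b] \cap \labelsetIncoming$ exactly because the bags are disjoint. There is no deep idea here — the whole point of Definition~\ref{def:bag-label-set-ordering} and Lemma~\ref{lemma:validity-of-bag-label-set-ordering} was to arrange that the running intersection degenerates to "everything intersects only in the first set," which is precisely the disjointness used in the base formulation. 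The one subtle point worth stating explicitly is that the $\gamma^u_{i,1,\alpha}$ variables are genuinely eliminable, so that "equivalent variable assignment" is understood modulo these determined coordinates; I would flag this in a sentence rather than dwell on it.
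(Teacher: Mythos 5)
Your proposal is correct and follows essentially the same route as the paper's proof: identify the subformulation and global variables directly, observe that the index-$1$ $\gamma$-variables for the incoming label set are determined by Constraint~(\ref{eq:lp:novel-rip:in-edges-to-gamma}) and shift the remaining $\gamma$-indices by one, then use the disjointness of the bag label sets to show that the predecessor index degenerates to $1$ so that Constraint~(\ref{eq:lp:novel-rip:gamma-continuity}) collapses, after substitution via (\ref{eq:lp:novel-rip:in-edges-to-gamma}), to Constraint~(\ref{alg:lp:novel:incoming-edges-to-gamma-variables}). The paper carries out exactly this substitution explicitly; your additional remarks on the reverse direction and on the allocation-variable domains are harmless refinements of the same argument.
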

\begin{proof}
We first consider the set of variables defined by both formulations. When using the same edge label assignment, the subformulation variables $(\vec{y}, \vec{z}, \vec{a})$ are the same for both LP formulations. Similarly, the global node mapping and allocation variables, which are independent of the edge bags/label set ordering, are defined the same way in both formulations. 

The $\gamma$-variables of Linear Program~\ref{LP:RunningIntersectionProperty} differ from those in Linear Program~\ref{IP:novel-AC}, since separate $\gamma$ variables are added for the incoming label set which are not present in Linear Program~\ref{IP:novel-AC}. However, note that by Constraint (\ref{eq:lp:novel-rip:in-edges-to-gamma}), these variables are always equal to the corresponding node mapping variables in the in-edges' LP subformulations.

We next show that under the bag label set ordering, the constraints of Linear Program~\ref{LP:RunningIntersectionProperty} enforce the constraints from Linear Program~\ref{IP:novel-AC}. 
Constraint (\ref{eq:lp:novel-rip:gamma-to-out-edges}) generalizes Constraint (\ref{alg:lp:novel:gamma-to-outgoing-edges}) from the original formulation: Since the edge bags partition the outgoing edges, and since the bag label sets are disjoint, the edge label set $\labelsetEdge$ is represented by one of the bag label sets, $\bagLabelSet[i][b]$. Stating $\labelsetRepresentative[e] \in \labelsetOrder_i$ is therefore equivalent to selecting $e \subseteq \VEbfsBag[b]$ and the constraints are otherwise equal.

We next consider Constraints (\ref{eq:lp:novel-rip:in-edges-to-gamma}) and (\ref{eq:lp:novel-rip:gamma-continuity}). Since the bag label sets $\bagLabelSet[i][b]$ are disjoint, any given label set's predecessor index given by $\labelsetPredecessorFunction(b)$ can only be $b$ or $1$, or in other words, any bag label set of outgoing edges may only intersect the incoming label set $\labelsetIncoming$. If the predecessor is $b$ itself, then no constraint of the form (\ref{eq:lp:novel-rip:gamma-continuity}) is generated, as $\labelsetPredecessorFunction(b) \neq b$ is one of the required conditions. 
If on the other hand, $\labelsetPredecessorFunction(b) = 1$, then the following constraints are generated by (\ref{eq:lp:novel-rip:gamma-continuity}):
\begin{align}
\sum_{\begin{subarray}{c}
    \mappingChar_\labelsetmappingIndex \in \MappingSpace[\labelsetIndexed{\labelSetIndex}]: \\
    \restrict[\mappingChar_\labelsetmappingIndex][\labelsetIndexed{\labelSetIndex} \cap \labelsetIncoming] = \mappingPredecessors
\end{subarray}} \hspace{-10pt} \gamma^u_{i, \labelSetIndex, \labelsetmappingIndex}
 ~=~ 
\sum_{\begin{subarray}{c}
    \mappingChar_\labelsetmappingIndexTwo \in \MappingSpace[\labelsetIncoming]: \\
    \restrict[\mappingChar_\labelsetmappingIndexTwo][ \labelsetIndexed{\labelSetIndex} \cap \labelsetIncoming] = \mappingPredecessors
\end{subarray}} \hspace{-10pt} \gamma^u_{i, 1, \labelsetmappingIndexTwo}
&&
\begin{array}{r}
\forall i \in \reqNodes, u \in \substrateNodes, \labelsetOrder_i \in \labelsetOrderSet,\\ 
\labelsetIndexed{\labelSetIndex} \in \labelsetOrder_i: \labelsetPredecessorFunction(\labelsetIndexed{\labelSetIndex}) = 1, \\
\mappingPredecessors \in \MappingSpace[\labelsetIndexed{\labelSetIndex} \cap \labelsetIncoming]
\end{array}
\end{align}

Due to Constraint (\ref{eq:lp:novel-rip:in-edges-to-gamma}), we can substitute $\subLP[y^u_{i}][\EOEdgeToOriginal(e),\mappingChar_\labelsetmappingIndex]$ for $\gamma^u_{i,1,\labelsetmappingIndex}$ for each $e \in \inEdgesExtractionOrder{i}$. Also, $\labelsetPredecessorFunction(\labelsetIndexed{\labelSetIndex}) = 1$ holds if and only if $\labelsetIncoming \cap \labelsetIndexed{\labelSetIndex} \neq \emptyset$. We can substitute ``$\bagLabelSet[i][b]: \VEbfsBag \in \BagIPlus$'' in place of ``$\labelsetOrder_i \in \labelsetOrderSet, \labelsetIndexed{\labelSetIndex} \in \labelsetOrder_i$'' due to (\ref{eq:equivalence-bags-rip-ordered-labelsets}) in Definition~\ref{def:bag-label-set-ordering}. Finally, since $\mappingChar_\labelsetmappingIndex \in \MappingSpace[\labelsetIncoming]$, and since $\labelsetIncoming$ is also the label set associated with the in-edges $e$, we rename $\mappingChar_\labelsetmappingIndex$ to $\mappingChar_e$, and $\labelsetIncoming$ to $\labelsetEdge[e]$. We thus obtain
\begin{align}
\sum_{\begin{subarray}{c}
    \mappingChar_\labelsetmappingIndex \in \MappingSpace[\bagLabelSet[i][b]]: \\
    \restrict[\mappingChar_\labelsetmappingIndex][\bagLabelSet[i][b] \cap \labelsetEdge[e]] = \mappingPredecessors
    \end{subarray}} \hspace{-10pt} \gamma^u_{i, b+1, \labelsetmappingIndex}
~=~ 
\sum_{\begin{subarray}{c}
    \mappingChar_e \in \MappingSpace[\bagLabelSet[i][b] \cap \labelsetEdge[e]]: \\
    \restrict[\mappingChar_e][\bagLabelSet[i][b] \cap \labelsetEdge[e]] = \mappingPredecessors
    \end{subarray}} \hspace{-10pt} \subLP[y^u_{i}][\EOEdgeToOriginal(e),\mappingChar_e]
&&
\begin{array}{r}
\forall i \in \reqNodes, u \in \substrateNodes, e \in \inEdgesExtractionOrder{i}, \\ 
\VEbfsBag \in \BagIPlus: \labelsetEdge[e] \cap \bagLabelSet[i][b] \neq \emptyset, \\
\mappingPredecessors \in \MappingSpace[\bagLabelSet[i][b] \cap \labelsetEdge[e]]
\end{array}
\label{eq:equivalence-bags-rip-ordered-labelsets:in-edges-to-gamma-transformed}
\end{align}
Note that the first label set in the label set ordering is always $\labelsetIncoming \equiv \labelsetEdge[e]$. Therefore, all label set indices of the $\gamma$-variables are offset by one compared to the edge bag's index, and the variable $\gamma^u_{i, b+1, \labelsetmappingIndex}$ is associated with the edge bag $\VEbfsBag[b]$.
Therefore, (\ref{eq:equivalence-bags-rip-ordered-labelsets:in-edges-to-gamma-transformed}) is equivalent to Constraint (\ref{alg:lp:novel:incoming-edges-to-gamma-variables}) from the base formulation. 

Since all other constraints from Linear Program~\ref{IP:novel-AC} are included in Linear Program~\ref{LP:RunningIntersectionProperty} without modification, we have shown that the LP formulations are equivalent when using the bag label set ordering.
\end{proof}
We have therefore shown that under the bag label set ordering, each variable assignment for Linear Program~\ref{LP:RunningIntersectionProperty} corresponds to a variable assignment for Linear Program~\ref{IP:novel-AC}.

We next consider the decomposition algorithms, i.e. we show that when given such equivalent variable assignment, any convex combination of mappings obtained by Algorithm~\ref{alg:decomposition:Algorithm-Novel-AC} can also be obtained by  Algorithm~\ref{alg:decompositionAlgorithm-RIP} using the bag label set ordering as defined in Lemma~\ref{lemma:validity-of-bag-label-set-ordering}.  
\begin{lemma}(Equivalence to the Base Decomposition Algorithm) \label{lemma:bag-label-set-ordering-decomp-equivalence} \\ 
Let $\VNEPInstance$ be an instance of VNEP, and let $\reqExtractionOrderDef$ an extraction order with a decomposable edge label assignment $\labelsetsEdges$. Further, let $\lpvarstilde$ be a solution to Linear Program~\ref{IP:novel-AC} and let $\lpvars$ be the corresponding variable assignment to Linear Program~\ref{LP:RunningIntersectionProperty} using the bag label set ordering $\labelsetOrderSet$.

For any convex combination of mappings $\PotEmbeddings$ obtained by Algorithm~\ref{alg:decomposition:Algorithm-Novel-AC} with input \newline $(\VNEPInstance, \reqExtractionOrder, \lpvarstilde)$, there is an execution of Algorithm~\ref{alg:decompositionAlgorithm-RIP} with input \newline $(\VNEPInstance, \reqExtractionOrder, \labelsetOrderSet, \lpvars)$ yielding the same result, $\PotEmbeddings$.
\end{lemma}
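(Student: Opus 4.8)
The plan is to show that the two decomposition algorithms can be made to produce identical runs when the bag label set ordering is used, by exhibiting a step-by-step correspondence between iterations of Algorithm~\ref{alg:decomposition:Algorithm-Novel-AC} and Algorithm~\ref{alg:decompositionAlgorithm-RIP}. Since both algorithms extract exactly one mapping per iteration of their outer while-loop, it suffices to argue that, starting from equivalent variable assignments, whatever mapping $(\prob,\mappingRequestIteration)$ is chosen by Algorithm~\ref{alg:decomposition:Algorithm-Novel-AC} in one iteration can also be chosen by Algorithm~\ref{alg:decompositionAlgorithm-RIP} in the corresponding iteration, and that the subtraction operations afterwards leave the two variable assignments in equivalent states. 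By Lemma~\ref{lemma:bag-label-set-ordering-yields-same-LP} the initial variable assignments $\lpvars$ and $\lpvarstilde$ are equivalent, so the claim reduces to matching the choice steps and the bookkeeping steps.

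First I would observe that the outer-loop structure, the root mapping in Line~\ref{algline:rip-decomp:nodeMapRoot}, the queue-based traversal of the extraction order, and the edge-mapping computations via Lemma~\ref{lem:local-connectivity-property} are literally identical between the two algorithms --- the subformulation variables $(\vec{y},\vec{z},\vec{a})$ are the same in both formulations, so any edge path computed by one algorithm is also available to the other. The only substantive difference is the inner loop over the partitioning of $i$'s out-edges: Algorithm~\ref{alg:decomposition:Algorithm-Novel-AC} iterates over the edge bags $\VEbfsBags$, choosing for each bag a mapping $m_{\labelsetmappingIndexTwo} \in \MappingSpace[\bagLabelSet[i][\labelSetIndexTwo]]$ of the bag label set with $\gamma^{\mappingNodesIteration(i)}_{i,\labelSetIndexTwo,\labelsetmappingIndexTwo} > 0$ compatible with the already-mapped nodes, whereas Algorithm~\ref{alg:decompositionAlgorithm-RIP} iterates over the label sets $\labelsetIndexed{\labelSetIndex} \in \labelsetOrder_i$ in the bag label set ordering, which by Definition~\ref{def:bag-label-set-ordering} is exactly $(\labelsetIncoming, \bagLabelSet[i][1], \ldots, \bagLabelSet[i][n])$. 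The extra first element $\labelsetIncoming$ is handled trivially: the corresponding $\gamma$-variables $\gamma^u_{i,1,\labelsetmappingIndex}$ equal the in-edge subformulation node variables by Constraint~(\ref{eq:lp:novel-rip:in-edges-to-gamma}), which were already fixed when $i$'s in-edges were mapped, so the choice in Line~\ref{algline:rip-decomp:choose-compatible-labelset-mapping} for $\labelSetIndex=1$ is forced and introduces no new node mappings beyond those Algorithm~\ref{alg:decomposition:Algorithm-Novel-AC} already made. For each subsequent label set $\bagLabelSet[i][b]$, I would check that the set of admissible choices coincides: the compatibility condition $\restrict[m_{\labelsetmappingIndex}][\labelsetIndexed{\labelSetIndex} \cap \mappedRequestNodes] = \restrict[\mappingNodesIteration][\labelsetIndexed{\labelSetIndex} \cap \mappedRequestNodes]$ in Algorithm~\ref{alg:decompositionAlgorithm-RIP} and the analogous condition in Algorithm~\ref{alg:decomposition:Algorithm-Novel-AC} differ only in that $\labelsetIndexed{\labelSetIndex} \cap \mappedRequestNodes$ may now also include nodes from $\labelsetIncoming$; but since the bag label sets are mutually disjoint, $\bagLabelSet[i][b] \cap \labelsetIncoming$ is the only possible overlap, and by the running intersection property combined with the fact that $\labelsetIncoming$ comes first (Property~\ref{def:extractionLabelSetOrdering:item:first-nonlocal} of Definition~\ref{def:extractionLabelSetOrdering}), any already-mapped node in $\bagLabelSet[i][b]$ lies in $\labelsetPredecessors_{\labelSetIndex} \subseteq \labelsetIncoming$, which was already pinned down. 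Together with the equivalence of the $\gamma$-variable values shown in Lemma~\ref{lemma:bag-label-set-ordering-yields-same-LP} (in particular the identification $\gamma^u_{i,b+1,\labelsetmappingIndex} \leftrightarrow \gamma^u_{i,b,\labelsetmappingIndexTwo}$ with the index offset), the sets of valid choices for the two algorithms coincide bag by bag.

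For the bookkeeping, I would note that the sets $\Variables_k$ computed in Line~\ref{algline:rip-decomp:compute-Vk} of Algorithm~\ref{alg:decompositionAlgorithm-RIP} and in Line~\ref{algline:base-decomposition:compute-Vk} of Algorithm~\ref{alg:decomposition:Algorithm-Novel-AC} contain the same subformulation and global variables, and the same $\gamma$-variables up to the index offset and the harmless extra $\gamma^u_{i,1,\cdot}$ entries (which equal in-edge subformulation variables already in $\Variables_k$), so $\prob = \min \Variables_k$ is the same value in both runs, and the resource allocation updates in Lines~\ref{algline:rip-decomp:adapt-load-one}--\ref{algline:rip-decomp:adapt-load} are identical since they depend only on $\mappingRequestIteration$ and $\prob$. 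Hence the two variable assignments remain equivalent after the iteration, and by induction on the iteration count the full runs coincide, producing the same $\PotEmbeddings$. The main obstacle I anticipate is being careful about the index bookkeeping and the extra incoming-label-set $\gamma$-variables --- making precise that "an execution of Algorithm~\ref{alg:decompositionAlgorithm-RIP} yielding the same result" means we resolve each nondeterministic \textbf{choose} in Line~\ref{algline:rip-decomp:choose-compatible-labelset-mapping} and each path computation to match the run of Algorithm~\ref{alg:decomposition:Algorithm-Novel-AC}, and verifying that such a resolution is always legal (i.e., the chosen $\gamma$-variable is indeed positive in the equivalent assignment). This is exactly where Lemma~\ref{lemma:bag-label-set-ordering-yields-same-LP} does the heavy lifting, so the argument here is essentially a bridging lemma rather than a new combinatorial fact.
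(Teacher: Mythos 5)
Your proposal is correct and follows essentially the same route as the paper's proof: both arguments reduce the claim to the observation that the two algorithms differ only in iterating over edge bags versus the bag label set ordering, invoke Lemma~\ref{lemma:bag-label-set-ordering-yields-same-LP} for the equivalence of the variable assignments, and then match the nondeterministic choices iteration by iteration. Your treatment of the extra incoming-label-set $\gamma$-variables and the index offset is in fact more careful than the paper's rather terse argument, but it is the same proof.
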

\begin{proof}
The decomposition algorithm only differ in their iteration over the label sets: While the base decomposition algorithm iterates over the bag label sets in arbitrary order in Line~\ref{algline:base-decomposition:foreach-bag-label-set}, the adapted algorithm iterates over the label sets according to the label set ordering in Line~\ref{algline:rip-decomp:edge-in-label-set-loop}. 

Since the bag label sets are disjoint by definition, the order of iteration over the extraction label set ordering in Algorithm~\ref{alg:decomposition:Algorithm-Novel-AC} is also arbitrary. Additionally, it holds that each bag label set occurs somewhere in the label set ordering. Therefore, each iteration of the for-loop over the label sets in Algorithm~\ref{alg:decompositionAlgorithm-RIP} matches an iteration in the corresponding loop in Algorithm~\ref{alg:decomposition:Algorithm-Novel-AC}, in the sense that in both iterations, a mapping is extended by the same set of edges and nodes as in the execution of Algorithm~\ref{alg:decomposition:Algorithm-Novel-AC}.

The same flows are available within each LP subformulation in $\lpvars$ as in $\lpvarstilde$. Algorithm~\ref{alg:decompositionAlgorithm-RIP} can therefore choose the same edge and node mappings in each iteration of the outer loop starting in Line~\ref{algline:rip-decomp:while-flow} that are contained in the mapping for the current iteration, $\mappingRequestIteration$.
It follows that in each iteration of this outer loop, the same mapping $\mappingRequestIteration$ with the same flow value $\prob$ can be extracted, and therefore the resulting convex combination is the same as $\PotEmbeddings$.
\end{proof}
Note that obtaining the same result from either algorithm is not guaranteed. For example, for some edge $(i, j) \in \reqEdges$, multiple non-zero flows might exist, each which could be chosen in some iteration of the algorithm. 

By Lemma~\ref{lemma:bag-label-set-ordering-yields-same-LP}, we have shown that when using the bag label set ordering, the two Linear Programs are equivalent, and by Lemma~\ref{lemma:bag-label-set-ordering-decomp-equivalence}, we show that the decomposition algorithms can in principle yield the same result, given equivalent LP solutions. Therefore, the extended algorithm given by Linear Program~\ref{LP:RunningIntersectionProperty} and Algorithm~\ref{alg:decompositionAlgorithm-RIP} truly generalizes the approach of the base algorithm.

\subsection{Complexity of the Extended Algorithm} \label{sec:size-rip-lp-formulation}

As discussed in Section~\ref{sec:solving-the-vnep:extraction-width}, the size of the base LP formulation and the runtime of the base decomposition algorithm grow exponentially with the extraction width parameter. The size of the adapted formulation in Linear Program~\ref{LP:RunningIntersectionProperty} can similarly be parameterized by the size of the largest label set in the extraction label set ordering $\labelsetOrderSet$. 

In this section, we first define the \emph{extraction label width} parameter. We then discuss LP size and runtime of the adapted algorithm parameterized by the extraction label width, and explore how several findings by Rost and Schmid related to the extraction width can be transferred to the extraction label width. Finally, we consider the hardness of determining the extraction label width of request topologies. We will show that given an extraction order, computing the optimal label set ordering is $\complexityNP$-hard.

\subsubsection{The Extraction Label Width Parameter}

To parameterize the LP size and runtime, we introduce a parameter analogous to the extraction width, which we call the \emph{extraction label width}. 
\begin{definition} (Extraction Label Width) \\\
Let $\reqTopologyDef$ be a request topology and let $\reqExtractionOrder$ be an extraction order for $\reqTopology$. Let $\labelsetOrderSet$ be an extraction label set ordering for $\reqExtractionOrder$. We define the width $\labelWidthEQ(\reqExtractionOrder, \labelsetOrderSet)$ as the size of the largest label set in $\labelsetOrderSet$:
\begin{align}
\labelWidthEQ(\reqExtractionOrder, \labelsetOrderSet) := 1 + \max_{i \in \reqNodes} \max_{~L \in \labelsetOrder_i~} ~|L| 
\end{align}

We define the extraction label width of a request topology  $\reqTopology$ as the minimal width of all possible extraction orders and extraction label set orderings, where $\extractionOrderCharacter(\reqTopology)$ and $\labelsetOrderSetSet(\reqExtractionOrder)$ denote the sets of all extraction orders of $\reqTopology$, and the set of all extraction label set orderings of $\reqExtractionOrder$, respectively:
\begin{align}
\labelWidth(\reqTopology) := \min_{ \reqExtractionOrder \in \extractionOrderCharacter (\reqTopology) }
   \min_{ ~\labelsetOrderSet \in \labelsetOrderSetSet(\reqExtractionOrder)~ } \labelWidthEQ(\reqExtractionOrder, \labelsetOrderSet)
\end{align}
\end{definition}

Analogously to the size of the base LP formulation and the runtime of the base decomposition algorithm, we can show that Linear Program~\ref{LP:RunningIntersectionProperty} and Algorithm~\ref{alg:decompositionAlgorithm-RIP} have polynomial runtime, when using extraction orders and extraction label set orderings with bounded width.
\begin{theorem} (Size of Linear Program~\ref{LP:RunningIntersectionProperty}) \label{thm:adapted-alg-lp-size}\\
Let $\VNEPInstance$ be an instance of VNEP. Given an extraction order $\reqExtractionOrderDef$, a decomposable edge label assignment $\labelsetsEdges$, and an extraction label set ordering $\labelsetOrderSet$, the size of Linear Program $\ref{IP:novel-AC}$ is bounded by $\mathcal{O}\left(|\substrateTopology|^{\labelWidthEQ(\reqExtractionOrder, \labelsetOrderSet)} \cdot |\reqTopology| \right)$.
\end{theorem}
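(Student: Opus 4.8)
The plan is to bound the size of Linear Program~\ref{LP:RunningIntersectionProperty} by accounting for each class of variables and constraints separately, following the template of the proof of Theorem~\ref{thm:base-alg-complexity} (cf.\ Theorem~17 in \cite{rostSchmidFPTApproximations}). Let $w := \labelWidthEQ(\reqExtractionOrder, \labelsetOrderSet)$, so that every edge label set $\labelsetEdge[e]$ and every label set $L \in \labelsetOrder_i$ has size at most $w-1$, and note that a mapping space $\MappingSpace[L]$ over a label set of size $\ell$ has exactly $|\substrateNodes|^{\ell} \le |\substrateTopology|^{w-1}$ elements.

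First I would count the subformulation variables $\subLP[(\vec{y},\vec{z},\vec{a})][e,\mappingChar_e]$: for each request edge $e$ there are $|\MappingSpace[\labelsetEdge[e]]| \le |\substrateTopology|^{w-1}$ subformulations, each a copy of the MCF formulation on the two-node edge subgraph $\VGe$, whose variable and constraint count is $\bigO(|\substrateTopology|)$; this contributes $\bigO(|\reqEdges| \cdot |\substrateTopology|^{w})$. Next, the global node variables $\vec{y}$ and allocation variables $\vec{a}$ number $\bigO(|\reqNodes|\cdot|\substrateNodes| + |\substrateResources|) = \bigO(|\substrateTopology|\cdot|\reqTopology|)$, which is dominated. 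Then the $\gamma$-variables: for each node $i$, each label set $L \in \labelsetOrder_i$, each substrate node $u$, and each mapping in $\MappingSpace[L]$, we get one variable; since $|\labelsetOrder_i|$ is at most the number of out-edges of $i$ plus one (by Property~\ref{def:extractionLabelSetOrdering:item:representative-exists} each out-edge has a representative, plus the incoming label set), summing over $i$ gives $\bigO(|\reqEdges|)$ label sets total, hence $\bigO(|\reqEdges| \cdot |\substrateNodes| \cdot |\substrateTopology|^{w-1}) = \bigO(|\reqTopology|\cdot|\substrateTopology|^{w})$ many $\gamma$-variables.

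For the constraints, I would argue that each constraint family is indexed by the same kinds of tuples and each individual constraint has a number of terms bounded by a mapping space size. Constraint~(\ref{eq:lp:novel-rip:subformulations}) is the per-subformulation MCF constraints, already accounted for. Constraints~(\ref{alg:lp:novel:node-embedding})–(\ref{alg:lp:novel:edge-load}) recycled from the base formulation contribute $\bigO(|\substrateTopology|^{w}\cdot|\reqTopology|)$ by the analysis in \cite{rostSchmidFPTApproximations}. The new constraints~(\ref{eq:lp:novel-rip:gamma-to-out-edges}), (\ref{eq:lp:novel-rip:in-edges-to-gamma}) and (\ref{eq:lp:novel-rip:gamma-continuity}) are each indexed by a node/edge, a substrate node, a label set in some $\labelsetOrder_i$, and a mapping of (a subset of) that label set; the index ranges over $\bigO(|\reqEdges|)$ label sets and $\bigO(|\substrateTopology|^{w-1})$ mappings, with each constraint summing over at most $|\substrateTopology|^{w-1}$ terms, so each family has total size $\bigO(|\reqTopology|\cdot|\substrateTopology|^{w})$. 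Taking the maximum over all families and noting that the statement's exponent matches $\labelWidthEQ(\reqExtractionOrder,\labelsetOrderSet)$ (there is no factor-of-two blow-up here, since we are bounding LP size, not decomposition runtime) yields the claimed bound $\bigO(|\substrateTopology|^{\labelWidthEQ(\reqExtractionOrder,\labelsetOrderSet)}\cdot|\reqTopology|)$.

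The main obstacle I anticipate is the bookkeeping in Constraint~(\ref{eq:lp:novel-rip:gamma-continuity}): its index set involves the predecessor index function $\labelsetPredecessorFunction$ and the predecessor label sets $\labelsetPredecessors_\labelSetIndex$, and I need to confirm that the number of distinct predecessor mappings $\mappingPredecessors \in \MappingSpace[\labelsetPredecessors_\labelSetIndex]$ is also bounded by $|\substrateTopology|^{w-1}$ (which follows since $\labelsetPredecessors_\labelSetIndex \subseteq \labelsetIndexed{\labelSetIndex}$ has size $\le w-1$) and that the sums on both sides have at most $|\substrateTopology|^{w-1}$ terms. I would also double-check the subtle point that $|\labelsetOrder_i|$ is genuinely $\bigO(|\outEdgesExtractionOrder{i}|)$ rather than exponential — this is exactly Property~\ref{def:extractionLabelSetOrdering:item:representative-exists} of Definition~\ref{def:extractionLabelSetOrdering}, which pins each label set in the ordering to at least one out-edge as its representative, together with the one extra incoming label set. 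Everything else is a routine term count that I would not grind through in detail.
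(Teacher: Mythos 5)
Your proposal takes essentially the same route as the paper's proof, which likewise bounds the number of LP subformulations per edge by the mapping space of the largest label set (at most $|\substrateNodes|^{\labelWidthEQ(\reqExtractionOrder,\labelsetOrderSet)-1}$ subformulations of size $\bigO(|\substrateEdges|)$ each) and counts the $\gamma$-variables over request nodes, substrate nodes, label sets, and label set mappings. Your accounting is in fact slightly more careful than the paper's — explicitly bounding $|\labelsetOrder_i|$ via the representative property and tallying the constraint families — but these refinements only confirm the same $\bigO(|\substrateTopology|^{\labelWidthEQ(\reqExtractionOrder,\labelsetOrderSet)}\cdot|\reqTopology|)$ bound.
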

\begin{proof}
The proof is very similar to the corresponding proof for the base algorithm (cf. \cite{rostSchmidFPTApproximations}, Theorem~17). The most significant change is that the extraction label width is substituted for the extraction width parameter.

Analogously to the bag label sets in the base algorithm, the label sets in the extraction label set ordering are by definition supersets of the label sets in the edge label assignment. It follows that the number of LP subformulations for a single edge is bounded by the size of the largest label set's mapping space, i.e. by $|\substrateNodes|^{\labelWidthEQ(\reqExtractionOrder, \labelsetOrderSet) - 1}$. Each such subformulation has $\bigO(|\substrateEdges|)$ many variables. 

The variables $\gamma^u_{i, \labelSetIndex, \labelsetmappingIndex}$ are defined for each request and substrate node, and for each mapping $m_\labelsetmappingIndex \in \MappingSpace[\labelsetIndexed{\labelSetIndex}]$. Therefore, there are $\bigO(|\substrateNodes|^{\labelWidthEQ(\reqExtractionOrder, \labelsetOrderSet)} \cdot |\reqNodes|)$ many such variables.
\end{proof}

\begin{theorem} (Runtime of Algorithm~\ref{alg:decompositionAlgorithm-RIP}) \label{thm:adapted-alg-runtime}\\
Let $\VNEPInstance$ be an instance of VNEP. Let $\reqExtractionOrderDef$ be an extraction order for $\reqTopology$. Given a solution $\lpvars$ for Linear Program~\ref{LP:RunningIntersectionProperty} with the confluence edge label assignment $\labelsetsEdges$, and an extraction label set ordering $\labelsetOrderSet$, Algorithm~\ref{alg:decomposition:Algorithm-Novel-AC} executes in time $\mathcal{O}\left(|\substrateTopology|^{2 \cdot \labelWidthEQ(\reqExtractionOrder, \labelsetOrderSet) + 1} \cdot |\reqTopology|^2 \right)$.
\end{theorem}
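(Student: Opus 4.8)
The plan is to mirror the runtime analysis for the base decomposition algorithm (Theorem 18 in \cite{rostSchmidFPTApproximations}), substituting the extraction label width $\labelWidthEQ(\reqExtractionOrder, \labelsetOrderSet)$ wherever the extraction width $\extractionWidth(\reqExtractionOrder)$ appeared. The argument splits into two parts: bounding the number of iterations of the outer while-loop starting in Line~\ref{algline:rip-decomp:while-flow}, and bounding the work done per iteration.

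First I would bound the number of outer iterations. As already noted in the text, in each iteration at least one LP variable is decreased to zero in Line~\ref{algline:rip-decomp:adapt-variables-one}: the quantity $\prob$ is the minimum over the variable set $\Variables_k$ of Line~\ref{algline:rip-decomp:compute-Vk}, so the minimizing variable becomes zero and stays zero (no constraint of Linear Program~\ref{LP:RunningIntersectionProperty} ever increases a variable during decomposition, as verified in the proof of Theorem~\ref{thm:decomp_of_rip_orderings}). Hence the number of iterations is bounded by the total number of LP variables, which by Theorem~\ref{thm:adapted-alg-lp-size} is $\bigO\!\left(|\substrateTopology|^{\labelWidthEQ(\reqExtractionOrder, \labelsetOrderSet)} \cdot |\reqTopology|\right)$.

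Next I would bound the cost of a single iteration. Within one iteration the algorithm traverses the extraction order once: for each node $i \in \reqNodes$ it loops over the label sets $\labelsetIndexed{\labelSetIndex} \in \labelsetOrder_i$, and for each it must (a) choose a compatible label-set mapping $m_\labelsetmappingIndex$ in Line~\ref{algline:rip-decomp:choose-compatible-labelset-mapping}, which requires scanning the $\gamma^u_{i,\labelSetIndex,\labelsetmappingIndex}$ variables — there are $\bigO(|\substrateNodes|^{\labelWidthEQ(\reqExtractionOrder, \labelsetOrderSet)-1})$ of them for fixed $i$ and fixed $u=\mappingNodesIteration(i)$ — and (b) for each out-edge whose representative is $\labelsetIndexed{\labelSetIndex}$, compute a path through the substrate via the graph search of Lemma~\ref{lem:local-connectivity-property} in Lines~\ref{algline:rip-decomp:compute-path}/\ref{algline:rip-decomp:compute-path-rev-edge}, costing $\bigO(|\substrateEdges|)$ each. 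Summing over all nodes and their label sets (of which there are $\bigO(|\reqTopology|)$ in total, since each out-edge contributes one representative), and noting that the variable-update and allocation-update steps in Lines~\ref{algline:rip-decomp:adapt-variables-one}–\ref{algline:rip-decomp:adapt-load} touch at most all LP variables, the per-iteration cost is dominated by $\bigO\!\left(|\substrateTopology|^{\labelWidthEQ(\reqExtractionOrder, \labelsetOrderSet)} \cdot |\reqTopology|\right)$, matching the LP size. Multiplying the two bounds yields $\bigO\!\left(|\substrateTopology|^{2\cdot\labelWidthEQ(\reqExtractionOrder, \labelsetOrderSet)+1} \cdot |\reqTopology|^2\right)$, where the extra factor of $|\substrateTopology|$ absorbs the $\bigO(|\substrateEdges|)$ cost of the path searches.

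The main obstacle I anticipate is the careful accounting of the extra $|\substrateTopology|$ factor and confirming that no step hidden inside the traversal (in particular the choose-operation over $\gamma$-variables and the per-edge path extraction) secretly costs more than $\bigO(|\substrateTopology|^{\labelWidthEQ(\reqExtractionOrder, \labelsetOrderSet)} \cdot |\reqTopology|)$; this is essentially bookkeeping, and the structural correctness of the traversal is already guaranteed by Theorem~\ref{thm:decomp_of_rip_orderings}, so it reduces to a routine adaptation of the cited proof of Theorem~18 in \cite{rostSchmidFPTApproximations}.
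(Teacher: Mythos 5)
Your proposal is correct and follows essentially the same route as the paper's proof: the iteration count is bounded by the number of LP variables from Theorem~\ref{thm:adapted-alg-lp-size} (since Line~\ref{algline:rip-decomp:adapt-variables-one} zeroes at least one variable per iteration), and the per-iteration cost is bounded by the choose-operations over the $\gamma$-variables plus the $\bigO(|\reqEdges|)$ graph searches of cost $\bigO(|\substrateEdges|)$ each. Your bookkeeping of the per-iteration cost is in fact slightly more explicit than the paper's, and the product of the two bounds is safely dominated by the stated $\mathcal{O}\left(|\substrateTopology|^{2 \cdot \labelWidthEQ(\reqExtractionOrder, \labelsetOrderSet) + 1} \cdot |\reqTopology|^2 \right)$.
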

\begin{proof}
The proof is very similar to the corresponding proof for the base algorithm (cf. Theorem 18 in \cite{rostSchmidFPTApproximations}). The only significant change is that the extraction label width is substituted for the extraction width parameter.

In each iteration of the outer while-loop starting in Line~\ref{algline:rip-decomp:begin-while-flow}, a single mapping is extracted. At least one node or edge mapping variable is set to zero in Line~\ref{algline:rip-decomp:adapt-variables-one}. Once all node mapping variables are zero, the loop in Line~\ref{algline:rip-decomp:begin-while-flow} condition fails and the algorithm terminates. 

By Theorem~\ref{thm:adapted-alg-lp-size} the number of variables is $\mathcal{O} \left(|\substrateTopology|^{\labelWidthEQ(\reqExtractionOrder, \labelsetOrderSet)}  \cdot |\reqTopology| \right)$. Accordingly, at most $\mathcal{O} \left(|\substrateTopology|^{\labelWidthEQ(\reqExtractionOrder, \labelsetOrderSet)}  \cdot |\reqTopology| \right)$ many mappings are extracted. 

We now consider the time for each mapping extraction. Each iteration of the outer while loop performs at most $|\reqNodes|$ many \textbf{choose} operations in Line~\ref{algline:rip-decomp:choose-compatible-labelset-mapping}. The graph search to decide the edge mapping in Lines~\ref{algline:rip-decomp:compute-path} and~\ref{algline:rip-decomp:compute-path-rev-edge} is performed $\bigO(|\reqEdges|)$ many times and can be implemented in time $\bigO(|\substrateEdges|)$.
\end{proof}

\subsubsection{Extraction Label Width and Extraction Width}

In this subsection, we discuss how several of the results related to the extraction width parameter presented in Section~\ref{sec:solving-the-vnep:small-extraction-width-graphs} extend to the extraction label width.

Using the results from Section~\ref{sec:rip-reduction-to-base-algorithm}, we can first establish an upper bound for the extraction label width: Since the bag label set ordering is always an extraction label set ordering, the extraction label width is at most the extraction width.
\begin{corollary} (Upper Bound for the Extraction Label Width) \\
The extraction label width of a request $\reqTopologyDef$ is bounded above by the extraction width, i.e. $\labelWidth(\reqTopology) \leq \extractionWidthReq(\reqExtractionOrder)$.
\end{corollary}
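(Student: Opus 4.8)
The plan is to invoke the equivalence results from Section~\ref{sec:rip-reduction-to-base-algorithm} directly. By Lemma~\ref{lemma:validity-of-bag-label-set-ordering}, for any extraction order $\reqExtractionOrder$ together with its confluence edge label assignment (which is decomposable by Lemma~\ref{lemma:confluence-labels-are-decomp}), the bag label set ordering $\labelsetOrderSet$ defined in Definition~\ref{def:bag-label-set-ordering} is a valid extraction label set ordering. Hence $\labelsetOrderSet \in \labelsetOrderSetSet(\reqExtractionOrder)$, so it is one of the candidates in the inner minimisation defining $\labelWidth$.

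First I would compute $\labelWidthEQ(\reqExtractionOrder, \labelsetOrderSet)$ for this particular bag label set ordering. Its label sets for node $i$ are $(\labelsetIncoming, \bagLabelSet[i][1], \ldots, \bagLabelSet[i][n])$. The incoming label set $\labelsetIncoming$ equals $\labelsetEdge[e]$ for any $e \in \inEdgesExtractionOrder{i}$, which is a subset of some bag label set of the tail node of $e$, and in particular $|\labelsetIncoming| \le \max_{i,b} |\bagLabelSet[i][b]|$; more simply, $\labelsetIncoming$ is itself an edge label set, and every edge label set is contained in a bag label set, so the maximum over all label sets appearing in $\labelsetOrderSet$ is exactly $\max_{i \in \reqNodes} \max_{\bagLabelSet \in \bagLabelSets} |\bagLabelSet|$. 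Therefore $\labelWidthEQ(\reqExtractionOrder, \labelsetOrderSet) = 1 + \max_{i} \max_{\bagLabelSet \in \bagLabelSets} |\bagLabelSet| = \extractionWidth(\reqExtractionOrder)$, using Definition~\ref{def:extraction-width}.

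Then I would chain the minimisations. For a fixed $\reqExtractionOrder$, $\min_{\labelsetOrderSet' \in \labelsetOrderSetSet(\reqExtractionOrder)} \labelWidthEQ(\reqExtractionOrder, \labelsetOrderSet') \le \labelWidthEQ(\reqExtractionOrder, \labelsetOrderSet) = \extractionWidth(\reqExtractionOrder)$. Taking the minimum over all extraction orders $\reqExtractionOrder \in \extractionOrderCharacter(\reqTopology)$ on both sides gives $\labelWidth(\reqTopology) \le \min_{\reqExtractionOrder} \extractionWidth(\reqExtractionOrder) = \extractionWidthReq(\reqTopology)$, which is the claimed bound. I would note the minor sloppiness in the corollary's statement (it writes $\extractionWidthReq(\reqExtractionOrder)$ where the graph parameter $\extractionWidthReq(\reqTopology)$ is meant, or alternatively $\extractionWidth(\reqExtractionOrder)$ for a fixed order) and simply prove both the per-order inequality $\labelWidth(\reqTopology) \le \extractionWidth(\reqExtractionOrder)$ for every $\reqExtractionOrder$ and the graph-parameter inequality $\labelWidth(\reqTopology) \le \extractionWidthReq(\reqTopology)$.

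The only real obstacle is the bookkeeping in the first step: one must be careful that the incoming label set $\labelsetIncoming$, which sits at the front of the bag label set ordering, does not exceed the largest bag label set. This follows because $\labelsetIncoming$ is an edge label set of one of $i$'s in-edges, and by Definition~\ref{def:edge-bags} every edge label set is contained in some bag label set (of the source node of that edge), so its size is bounded by $\max_{i,b}|\bagLabelSet[i][b]|$ taken over the whole extraction order. Once that observation is in place, the argument is a routine chaining of inequalities over the two nested minimisations, so I expect no serious difficulty.
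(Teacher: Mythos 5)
Your proposal is correct and follows essentially the same route as the paper: exhibit the bag label set ordering of a width-minimizing extraction order, use Lemma~\ref{lemma:validity-of-bag-label-set-ordering} to certify it as a valid extraction label set ordering, observe its width equals the extraction width, and conclude by minimality. Your extra care in checking that the incoming label set $\labelsetIncoming$ (which heads each node's ordering but is not itself a bag label set) cannot exceed the largest bag label set is a detail the paper's proof glosses over, and it is handled correctly.
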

\begin{proof}
Let $\reqExtractionOrderDef$ be an extraction order minimizing the extraction width and let $\bagLabelsetOrderSet$ denote the bag label set ordering of $\reqExtractionOrder$. The extraction width of the request is equal to the size of the largest label set in $\bagLabelsetOrderSet$. By Lemma~\ref{lemma:validity-of-bag-label-set-ordering}, the bag label set ordering is an extraction label set ordering for $\reqExtractionOrder$. It follows that $\labelWidthEQ(\reqExtractionOrder, \bagLabelsetOrderSet) = \extractionWidthReq(\reqTopology)$, which implies that $\labelWidth(\reqTopology) \leq \extractionWidthReq(\reqTopology)$.
\end{proof}
Note that given an extraction order and decomposable edge label assignment, the corresponding bag label set ordering can be derived in polynomial time. We therefore conclude that the complexity of the extended algorithm should never exceed that of the base algorithm.

We now return to Lemma~\ref{thm:extraction-width-adding-parallel-paths}, which states that the extraction width of a request topology increases by at most the maximal degree, when extending the topology by adding paths parallel to existing edges. We will show an analogous result for the extraction label width, with a tighter bound: Instead of increasing by the maximal degree of the topology, the extraction  label width increases by at most one. 
\begin{lemma} (Increase of Extraction Label Width when Adding Parallel Edges) \\
Given an arbitrary request $\reqTopology$, adding any number of parallel paths for an existing edge increases the extraction label width of $\reqTopology$ by at most one.
\end{lemma}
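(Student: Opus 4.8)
The plan is to adapt the strategy Rost and Schmid use for Lemma~\ref{thm:extraction-width-adding-parallel-paths} (their Lemma 34) to the extraction label width. Concretely, I would start from an extraction order $\reqExtractionOrder$ for $\reqTopology$ together with an extraction label set ordering $\labelsetOrderSet$ achieving $\labelWidthEQ(\reqExtractionOrder, \labelsetOrderSet) = \labelWidth(\reqTopology)$. Given the request $\reqTopology'$ obtained by adding $p$ paths parallel to some edge $e_0 = (a, b) \in \reqEdges$, the key step is to build an explicit extraction order $\reqExtractionOrder'$ for $\reqTopology'$ and an extraction label set ordering $\labelsetOrderSet'$ whose maximal label set has size at most $\labelWidth(\reqTopology) - 1 + 1 = \labelWidth(\reqTopology)$, i.e.\ the width increases by at most one.

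The construction of $\reqExtractionOrder'$ is the natural one: orient $\reqExtractionOrder'$ so that on the original edges it agrees with $\reqExtractionOrder$, and each new parallel path from $a$ to $b$ is oriented from $a$ towards $b$ (so that $a$ is an ancestor of every new internal path node, and these new internal nodes have a single in- and out-edge each). Then $\reqExtractionOrder'$ is still a rooted DAG with root $\reqExtractionOrderRoot$. The effect on confluences: every new parallel path, paired with the edge $e_0$ or with an existing $a$--$b$ path, forms a confluence ending in $b$; moreover $b$ already had at least two incoming edges in $\reqExtractionOrder$ whenever $e_0$ was involved in a confluence, so in the confluence edge label assignment for $\reqExtractionOrder'$ the only genuinely new label is $b$, appearing on the edges of the new paths and on $e_0$ (and, by Lemma~\ref{lemma:incomingLabelsUnique}, on all in-edges of $b$). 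The new internal path nodes get the singleton-ish label sets $\{b\}$ on their incident edges. So, compared with $\reqExtractionOrder$, only a bounded and well-understood change to the edge labels occurs: label $b$ is added to a specific set of edges.

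Next I would define $\labelsetOrderSet'$ from $\labelsetOrderSet$. For every node other than $a$ and the new internal path nodes, keep the label set ordering essentially as before, possibly augmenting the label set that represents $e_0$ (and the incoming label set of $b$) by the single element $b$ — this raises the size of at most those label sets by one. For $a$: its incoming label set is unchanged, and its out-edges now include the first edges of the new parallel paths, all carrying the label $\{b\}$ (plus whatever $e_0$ carries). Since the new label $\{b\}$ is a subset of the representative label set already chosen for $e_0$ in $\labelsetOrder_a$ (that representative contains $b$ after augmentation), all new out-edges of $a$ can reuse that representative — no new label set, and no size increase beyond the $+1$ coming from adding $b$. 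For each new internal path node, its label set ordering is the trivial one $(\{b\}, \{b\})$ or similar, of size $1$. Then I must check that $\labelsetOrderSet'$ is a valid extraction label set ordering (Definition~\ref{def:extractionLabelSetOrdering}): the running intersection property is preserved because we only inserted $b$, which lies in the incoming label set of the affected nodes, so any new intersection is absorbed by an earlier set; the incoming label set remains first; and every out-edge still has a representative. This yields $\labelWidthEQ(\reqExtractionOrder', \labelsetOrderSet') \le \labelWidthEQ(\reqExtractionOrder, \labelsetOrderSet) + 1$, hence $\labelWidth(\reqTopology') \le \labelWidth(\reqTopology) + 1$.

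The main obstacle I anticipate is the bookkeeping around label set $b$ when $b$ is simultaneously a confluence end node coming from \emph{pre-existing} structure and from the new paths: I need to argue that adding $b$ to the relevant edge label sets genuinely increases the size of each affected label set by at most one (not more), which hinges on the fact that $b$ either was already present in those sets, or is the unique new label. A secondary subtlety is verifying that the decomposable-edge-label-assignment properties (Definition~\ref{def:decomposable-edge-labels}) still hold for the modified assignment on $\reqExtractionOrder'$ — in particular that the label-induced subgraph for $b$ stays connected and rooted after the new $a$--$b$ paths are attached — but since $a$ is a common ancestor of all $a$--$b$ paths this is exactly the situation already handled in Lemma~\ref{lemma:confluence-labels-are-decomp}. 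I would also remark that the bound of one is tight, since a single parallel path added to a tree raises $\labelWidth$ from $1$ to $2$.
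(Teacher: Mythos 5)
Your proposal is correct and follows essentially the same route as the paper's proof: orient the new parallel paths from $a$ to $b$, observe that $b$ is the only label that can be newly introduced, and absorb it by enlarging the representative label set of $e_0$ (and the incoming label set of $b$) by a single element, with trivial orderings at the new internal path nodes. Your explicit checks of the running intersection property and of Definition~\ref{def:decomposable-edge-labels} are slightly more detailed than the paper's argument, which instead phrases the same idea as a case distinction on whether $b$ was already a confluence end node, but the substance is identical.
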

\begin{proof}
Consider an extraction order $\reqExtractionOrder$ and an extraction label set ordering $\labelsetOrderSet$ for $\reqTopology$ with minimal width, i.e. select $\reqExtractionOrder$, $\labelsetOrderSet$ such that $\labelWidthEQ(\reqExtractionOrder, \labelsetOrderSet) = \labelWidth (\reqTopology)$.

Let $e =(i,j) \in \reqEdges$ be some edge of the graph and assume without loss of generality that $e \in \reqExtractionOrderEdges$. When inserting a parallel path $P$  to the original topology, we include this path in the extraction order in the same orientation as $(i, j)$, i.e. we orient the path in the extraction order such that it starts in $i$ and ends in $j$ in the extraction order. We further denote by $(i,k) \in P$ the first edge in $P$.

The edge $e$ is itself a path from $i$ to $j$, and therefore, $P$ and $(i, j)$ form a confluence ending in $j$. Since $e$ and $P$ both start and end in the same node without branches, the label sets must be equal under the confluence label assignment, i.e. $\labelsetEdge[e] = \labelsetEdge[(i,k)]$. 

There are now two possibilities: 
The node $j$ may be the end node of some confluence in the original extraction order. In this case, $j$ must be in the edge label set $\labelsetEdge[e]$, and accordingly be contained in $\labelsetRepresentative[e] \in \labelsetOrder_i$, the representative label set for $\labelsetEdge[e]$. In this case, the $\labelsetEdge[(i,k)] = \labelsetEdge[e] \subseteq \labelsetRepresentative[e]$ holds and the extraction label set ordering $\labelsetOrderSet$ applies without modification.

Alternatively, $j$ may not be a confluence end node in the original extraction order, and the addition of $P$ may have induced a new confluence. In this case, consider  $\labelsetOrder_i \in \labelsetOrderSet$, the original label set ordering for node $i$. The representative label set of $e$, $\labelsetRepresentative[e] \in \labelsetOrder_i$, does not contain $j$, since $j$ was not originally a confluence end node. However, since $P$ is a path parallel to an existing edge, no other \emph{new} confluences can be formed by its inclusion in the request graph, and since no confluence ending in $j$ exists in the unmodified request, $j$ does not occur in any other label set. Therefore, extending $\labelsetRepresentative[e]$ with the single node $j$ yields an extraction label set ordering for the modified extraction order. 

Only in the second case, i.e. $j \not\in \labelsetEdge[e]$, and if $|\labelsetRepresentative[e]| = \labelWidth (\reqTopology)$, i.e. the representative label set for $e$ was the largest label set in $\labelsetOrderSet$, does the extraction label width of $\reqTopology$ increase by one through the inclusion of $j$. 
\end{proof}

We next consider the extraction label width of the half-wheel graph (see Definition~\ref{def:half-wheel-graphs}), in particular focusing on the non-optimal root placement in the center node.
\begin{lemma} (Extraction Label Width of Half Wheel Graphs) \label{lemma:label-width-of-half-wheel-graphs} \\
Let $\halfwheelGraphDef$  be a half wheel graph with $n$ outer nodes. There exists an extraction order $\reqExtractionOrder$ that is rooted in $w_c$ and a label set ordering $\labelsetOrderSet$, such that $\labelWidthEQ(\reqExtractionOrder, \labelsetOrderSet) = 3$
\end{lemma}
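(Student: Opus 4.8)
The plan is to construct an explicit extraction order $\reqExtractionOrder$ rooted in $w_c$ together with an extraction label set ordering $\labelsetOrderSet$ of width $3$, thereby witnessing $\labelWidthEQ(\reqExtractionOrder, \labelsetOrderSet) = 3$. First I would fix the same "bad" extraction order that Rost and Schmid use in Lemma~\ref{lem:extraction-width-half-wheel} (cf.\ the left diagram in Figure~\ref{fig:02:half_wheel_bad_root}): orient every spoke edge $\{w_c, w_\ell\}$ away from the center, i.e.\ $(w_c, w_\ell) \in \reqExtractionOrderEdges$ for all $\ell$, and orient the rim path so that, reading along $w_1, w_2, \ldots, w_n$, edges alternate direction. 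Concretely, odd-indexed outer nodes $w_1, w_3, w_5, \ldots$ are the "sources" on the rim and each even-indexed node $w_{2t}$ is a confluence end node, reached by the two rim edges $(w_{2t-1}, w_{2t})$ and $(w_{2t+1}, w_{2t})$ together with the spoke $(w_c, w_{2t})$. Then for each even outer node $w_{2t}$ there is a confluence starting at $w_c$ (one branch via the spoke, one branch via an odd neighbour), so under the confluence edge label assignment the spoke $(w_c, w_{2t})$ is labeled $\{w_{2t}\}$, the two incident rim edges are labeled $\{w_{2t}\}$, and the odd spokes $(w_c, w_{2t-1})$ are labeled $\{w_{2t-2}, w_{2t}\}$ (they lie on confluences ending in both neighbouring even nodes) — exactly mirroring the label structure in Figures~\ref{fig:02:half_wheel_bad_root} and~\ref{fig:impact-of-adding-one-edge}.

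The crux is then to exhibit a running intersection ordering of the out-edge label sets of $w_c$ (the only node whose out-edges carry non-trivial, pairwise-intersecting labels) whose largest set has size $2$. The out-edge label sets at $w_c$ are the singletons $\{w_{2t}\}$ from even spokes and the pairs $\{w_{2t-2}, w_{2t}\}$ (with wrap-around / boundary adjustments depending on parity of $n$) from odd spokes. I would order these label sets so that the pairs come in the natural rim order $\{w_0, w_2\}$ (or just $\{w_2\}$ at the boundary), $\{w_2, w_4\}$, $\{w_4, w_6\}$, \ldots, interleaving each singleton $\{w_{2t}\}$ immediately after the first pair that contains $w_{2t}$. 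Each pair $\{w_{2t}, w_{2t+2}\}$ intersects the union of its predecessors exactly in $\{w_{2t}\}$, which is contained in the preceding pair $\{w_{2t-2}, w_{2t}\}$ (or the preceding singleton), so the running intersection property holds with every set of size at most $2$; prepending $\labelsetIncoming[w_c] = \emptyset$ (since $w_c$ is the root) as the mandatory first element costs nothing. This ordering also satisfies Property~\ref{def:extractionLabelSetOrdering:item:representative-exists} of Definition~\ref{def:extractionLabelSetOrdering}, since every edge label set $\labelsetEdge[e]$ (a singleton or a pair) appears verbatim in the ordering. For all other nodes $i \ne w_c$, the out-edges have label sets that are singletons or empty and can be listed in any running intersection order after $\labelsetIncoming[i]$, contributing width at most $2$.

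For nodes other than $w_c$ I should double-check that the incoming label set $\labelsetIncoming[i]$ together with the out-edge label sets still admits a running intersection ordering of width $\le 2$ — this is routine since every even node $w_{2t}$ has $\labelsetIncoming[w_{2t}] = \{w_{2t}\}$ (all its in-edges are labeled $w_{2t}$ by Corollary~\ref{corollary:decomp-labels:in-edges-label-nodes}) and its single out-edge on the rim is labeled $\{w_{2t+2}\}$, whose intersection with $\{w_{2t}\}$ is empty, so the two-element ordering $(\{w_{2t}\}, \{w_{2t+2}\})$ works; odd nodes $w_{2t-1}$ are roots of the rim so $\labelsetIncoming[w_{2t-1}] = \emptyset$ and their out-edges carry singletons. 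Putting this together, $\max_{i}\max_{L \in \labelsetOrder_i}|L| = 2$, so $\labelWidthEQ(\reqExtractionOrder, \labelsetOrderSet) = 1 + 2 = 3$.

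I expect the main obstacle to be the bookkeeping around the parity of $n$ and the two ends of the rim path: the rim is a path, not a cycle, so $w_1$ and $w_n$ are special — depending on whether $n$ is even or odd, the last rim node may be a source or a confluence end node, and the spoke $(w_c, w_1)$ may belong to only one confluence (labeled $\{w_2\}$) rather than two. I would handle this by treating the extreme spokes as boundary cases with possibly smaller (singleton) label sets, which only helps the width bound, and by verifying that the running intersection ordering of $w_c$'s out-edge labels still closes up correctly at both ends. A minor secondary point is confirming that the confluence edge label assignment is decomposable (so that Theorem~\ref{thm:decomp_of_rip_orderings} applies and the construction is legitimate), but this is immediate from Lemma~\ref{lemma:confluence-labels-are-decomp}. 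A clean figure illustrating the extraction order and the resulting chain-like intersection graph of $w_c$'s out-edge labels (analogous to Figure~\ref{fig:rip:extraction_order_to_rip_example}) would make the argument transparent.
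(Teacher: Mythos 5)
Your construction is essentially the paper's own proof: the same alternating orientation of the rim with all spokes pointing outward (you merely swap which parity of outer node is a rim source), the same confluence labels, and the same chain of two-element pair label sets at $w_c$ ordered along the rim to satisfy the running intersection property, with outer nodes handled via their incoming label sets. The only flaw is a parity mix-up in your final paragraph --- in your orientation the even nodes are confluence end nodes with \emph{no} out-edges and the odd nodes have incoming label set $\{w_{2t-2}, w_{2t}\}$ (not $\emptyset$) with out-edge singletons contained in it --- but this is a bookkeeping slip that does not affect the width bound.
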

\begin{proof}
We give a proof for the case where $n$ is odd. The argument can easily be extended to even $n$. We first define the extraction order $\reqExtractionOrderDef$ with root placement $\reqExtractionOrderRoot = w_c$, and the edge orientation
\begin{align}
\reqExtractionOrderEdges = \underbrace{\{ (w_c, w_1), \ldots (w_c, w_n) \}}_{\text{radial edges}} \cup \underbrace{\{ (w_{2}, w_{1}), (w_{2}, w_{3}), \ldots (w_{n-1}, w_{n-2}), (w_{n-1}, w_{n}) \}}_{\text{alternating outer edges}}.
\end{align}

The edges of the extraction order are oriented in such a way that all odd numbered outer nodes are confluence end nodes. The outgoing edges $(w_c, w_i) \in \outEdgesExtractionOrder{w_c}$ from the central node $w_c$ then have the following edge label assignment:
\begin{align}
\labelsetEdge[(w_c, w_i)] = \begin{cases}
\{w_i\} \text{ if $i$ is odd.} \\
\{w_{i-1}, w_{i+1} \} \text{ if $i$ is even.}
\end{cases}
\end{align}

We will now define an extraction label set ordering $\labelsetOrderSet$, such that  $\labelWidthEQ(\reqExtractionOrder, \labelsetOrderSet) = 3$ holds. For $w_c$, we define the ordering by directly using the edge label sets of edges from $w_c$ to even-numbered outer nodes, i.e. the ordering $\labelsetOrder_{\reqExtractionOrderRoot} := (\{w_1, w_3\}, \{w_3, w_5\}, \ldots \{w_{n-2}, w_{n} \})$. This ordering defines an extraction label set ordering for $\reqExtractionOrderRoot$. For each outer node $w_i$, we use the label set ordering containing only the incoming label set, i.e. we set $\labelsetOrder_{w_i} = (\labelsetIncoming[w_i])$. The label sets of the out-edges of outer nodes are subsets of $\labelsetIncoming[w_i]$: If $i$ is odd, then no out-edge exists, and if $i$ is even, it has two out edges labeled with $\{i-1\}$ and $\{i+1\}$, respectively, both of which are in $\labelsetIncoming[w_i]$. The largest label set in $\labelsetOrderSet$ therefore has size two, and it follows that $\labelWidthEQ(\reqExtractionOrder, \labelsetOrderSet) = 3$.
\end{proof}
We thus find that while placing the root in the central node of a half wheel graph is still suboptimal, the difference is no longer dependent on the size of the half wheel graph, and therefore the impact of a non-optimal root placement on the extraction label width is far less drastic compared to the impact on the extraction width, as derived by Rost and Schmid in \cite{rostSchmidFPTApproximations}.

Note that the example shown in Figure~\ref{fig:impact-of-adding-one-edge}, highlighting the impact on the extraction width of adding a single edge to an extraction order, is also a half-wheel graph. Due to Lemma~\ref{lemma:label-width-of-half-wheel-graphs}, it follows that the extraction label width of the extraction order remains unaffected by the inclusion of the additional edge.

\subsection{Optimal Extraction Label Set Orderings} \label{sec:extraction-orders-tree-decomp}

We will now consider the problem of finding a small extraction label set ordering $\labelsetOrderSet$ for a specific extraction order, when using the confluence edge label assignment. In this section, we will show that finding extraction label set orderings with minimal width for a fixed extraction order is $\complexityNP$-hard: By calculating the width of a specific extraction order, the treewidth of an arbitrary undirected graph can be computed. Additionally, we show that an optimal extraction label set ordering can be obtained from a minimal-width tree decomposition of an undirected graph related to the extraction order.

To discuss the extraction label set ordering, we first introduce the \emph{incident label sets} to describe the edge label sets of edges incident in a specific request node.
\begin{definition} (Incident Label Sets) \label{def:incident-labelsets}\\
Let $\reqExtractionOrderDef$ be an extraction order with a decomposable edge label assignment $\labelsetsEdges$. Given a node $i \in \reqNodes$, we define the incident label sets of $i$ as the set family containing the edge label sets of all edges incident in $i$, i.e. we define
\begin{align}
\labelsetsIncident[i] := \{\labelsetIncoming[i]\} \cup \{\labelsetEdge[e] ~|~ e \in \outEdgesExtractionOrder{\reqExtractionOrderRoot}, \labelsetEdge[e]\in \labelsetsEdges \}.
\end{align}
\end{definition}

We will now first show that arbitrary hypergraphs can be represented by the label sets in an extraction order, when the confluence edge label assignment is used. The following theorem states that for an arbitrary collection of label sets $\labelsets$, an extraction order can be constructed such that the root node's incident label sets contain all of the label sets in $\labelsets$.
\begin{theorem} (Existence of Extraction Orders Representing Arbitrary Label Sets) \label{lem:existence-EO-arbitrary-labelsets} \\
Given an arbitrary set of label sets $\labelsets$, an extraction order $\reqExtractionOrderDef$ exists, such that the incident label sets of $\reqExtractionOrderRoot$ under the confluence edge label assignment, $\labelsetsIncident[\reqExtractionOrderRoot]$, contain each label set in $\labelsets$. Further, the inclusion-wise maximal subset of $\labelsetsIncident[i]$ is equal to $\labelsets$, i.e. \newline 
$\labelsets = \{ L ~|~ L \in \labelsetsIncident[i]: (\nexists L' \in \labelsetsIncident[i] \setminus \{L\}:  L \subset L' ) \}$.
\end{theorem}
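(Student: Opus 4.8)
The plan is to construct, for an arbitrary collection of label sets $\labelsets = \{L_1, \ldots, L_n\}$, an explicit extraction order $\reqExtractionOrder$ whose root $\reqExtractionOrderRoot$ has the prescribed incident label sets under the confluence edge label assignment. The natural idea is to realize each label set $L_t$ as (the end-node set of a family of) confluences all starting at the root. Concretely, for each $L_t = \{w_{t,1}, \ldots, w_{t,m_t}\}$ I would create a fresh ``spine'' of intermediate nodes hanging off the root, and then for each label node $w \in L_t$ attach a small gadget — two internally node-disjoint paths from $\reqExtractionOrderRoot$ to a node representing $w$ — so that $w$ becomes a confluence end node and every edge on the chosen portion of the spine gets labeled with $w$. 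By routing the spine so that the first edge out of the root along the $L_t$-gadget lies simultaneously on a confluence ending in \emph{each} $w \in L_t$, that single out-edge acquires exactly the label set $L_t$.

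**The key steps, in order:** First I would fix notation and, without loss of generality, assume the $L_t$ are pairwise incomparable (if not, the maximal ones are what we must reproduce, and the non-maximal ones will appear as labels of edges deeper in the construction — which is harmless since the theorem only asks that $\labelsets$ equal the inclusion-wise maximal subset of $\labelsetsIncident[\reqExtractionOrderRoot]$). Second, for each $t$ I would build a gadget $H_t$: a path $\reqExtractionOrderRoot = p_{t,0} \to p_{t,1} \to \cdots \to p_{t,m_t}$, where $p_{t,k}$ is identified with the label node $w_{t,k}$; and for each $k$, a ``return'' path from $\reqExtractionOrderRoot$ directly to $w_{t,k}$ through one fresh intermediate node $q_{t,k}$, i.e. $\reqExtractionOrderRoot \to q_{t,k} \to w_{t,k}$. (Care is needed with orientation so the whole thing is a DAG rooted at $\reqExtractionOrderRoot$; since $w_{t,k}$ appears in possibly several gadgets, I would instead make the gadgets share only the root and the label nodes, keeping all spine and return nodes fresh — and orient every return edge toward $w_{t,k}$.) Third, I would verify the confluence structure: the spine-prefix $\reqExtractionOrderRoot \to p_{t,1}$ together with the path $\reqExtractionOrderRoot \to q_{t,k} \to w_{t,k}$ and the spine segment $p_{t,1} \to \cdots \to w_{t,k}$ forms a confluence ending in $w_{t,k}$, so by Definition~\ref{def:confluence-edge-labels} the edge $(\reqExtractionOrderRoot, p_{t,1})$ is labeled with $w_{t,k}$ for every $k$; and no node outside $L_t$ becomes an end node of a confluence through $(\reqExtractionOrderRoot, p_{t,1})$, because the only paths through that edge are within $H_t$. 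Hence $\labelsetEdge[(\reqExtractionOrderRoot, p_{t,1})] = L_t$. Fourth, I would check that the root's \emph{other} incident edges (the edges $(\reqExtractionOrderRoot, q_{t,k})$, and any edges in other gadgets) carry label sets that are subsets of some $L_s$ or are pairwise incomparable singletons contained in some $L_s$, so that the inclusion-wise maximal subset of $\labelsetsIncident[\reqExtractionOrderRoot]$ is exactly $\{L_1, \ldots, L_n\}$. Finally I would confirm the object is a legitimate extraction order per Definition~\ref{def:extraction-order}: it is acyclic (all edges oriented away from, or along monotone spines out of, the root), rooted at $\reqExtractionOrderRoot$, contains no parallel/antiparallel edges (ensured by using fresh nodes), and is connected.

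**The main obstacle** I anticipate is the bookkeeping around \emph{shared} label nodes: since the same node $w$ may belong to many $L_t$, the gadgets cannot be fully disjoint, and I must argue that merging them at the label nodes does not create spurious confluences that would attach unwanted labels to $(\reqExtractionOrderRoot, p_{t,1})$ or, worse, break acyclicity. The clean way around this is to make each $w_{t,k}$ a \emph{sink} within its gadget — give label nodes no outgoing edges inside any $H_t$ — and route everything else through private fresh nodes; then any path through $(\reqExtractionOrderRoot, p_{t,1})$ stays inside $H_t$ and terminates at a label node of $L_t$, so Lemma~\ref{lemma:confluence-labels-are-decomp} and the definition of the confluence assignment give exactly the label set $L_t$ with nothing extra. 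The remaining routine verifications (connectivity, no antiparallel edges, the maximality statement) are then straightforward.
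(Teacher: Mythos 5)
Your overall strategy is the same as the paper's: build one gadget per label set $L_t$ so that a single out-edge of the root lies on a confluence ending in each $w \in L_t$ and therefore carries exactly the label set $L_t$, while all remaining root-incident edges pick up only subsets. The gadgets differ, though, and the difference matters. Your \emph{first} version — a spine whose nodes are the label nodes themselves — is genuinely broken, not just awkward: if $w, w' \in L_t \cap L_s$ appear in opposite relative order in the two spines you get antiparallel edges (destroying acyclicity), and even when the orders agree, a path may enter gadget $s$'s spine at $r$, cross into gadget $t$'s spine at a shared label node, and pair with a return path to form a confluence that stamps a label $w_{t,k} \notin L_s$ onto $(r, p_{s,1})$. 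Your ``main obstacle'' paragraph correctly diagnoses this and the fix you sketch — make every label node a sink and keep all spine and return nodes fresh and private to their gadget — is exactly what is needed: every path through $(r,p_{t,1})$ then stays inside gadget $t$ and terminates in a member of $L_t$, only label nodes have in-degree $\geq 2$, and all confluences start at $r$, so $\labelsetEdge[(r,p_{t,1})] = L_t$ with nothing spurious. You should, however, actually restate the repaired gadget (spine of fresh nodes $p_{t,k}$ with pendant edges $p_{t,k} \to w_{t,k}$), since the version you wrote down first cannot be salvaged as stated.

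Compared with the paper, your (repaired) construction is more redundant but more local: you add a two-edge return path $r \to q_{t,k} \to w_{t,k}$ for \emph{every} occurrence of every label node, so each gadget is self-contained and the ``no spurious labels'' argument never has to look at other gadgets. The paper instead uses a single star-shaped representative node $a_t$ with edges $r \to a_t$ and $a_t \to w$ for $w \in L_t$, and supplies the second branch of each confluence \emph{for free} from another label set's representative whenever $w$ occurs in more than one label set, adding a direct edge $r \to w$ only for labels occurring in exactly one set; this yields a smaller extraction order at the cost of a cross-gadget argument. Both establish the theorem; your extra return paths and chain structure only ever contribute singleton label sets and subsets of $L_t$ to $\labelsetsIncident[\reqExtractionOrderRoot]$, so the inclusion-wise maximal subset is still $\labelsets$ (for $\labelsets$ an antichain — your WLOG remark about incomparability papers over the same imprecision that the theorem statement itself has, and is harmless for the treewidth reduction it feeds into).
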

\begin{proof}
We prove the lemma constructively, by describing how to obtain such an extraction order. 
The node set consists of a root node, and two node sets, $V_l$ and $V_\labelsets$. The first node set $V_l$ is the set containing all label nodes, i.e. $V_l = \bigcup_{\labelsetIndexed{\labelSetIndex} \in \labelsets} \labelsetIndexed{\labelSetIndex}$. The second node set $V_\labelsets$ contains one representative node for each label set, i.e. $V_\labelsets = \{1, \ldots |\labelsets| \}$. Given one of these representative nodes $\labelSetIndex \in V_\labelsets$, we will simply refer to the corresponding label set from $\labelsets$ as $\labelsetIndexed{\labelSetIndex}$.
We thus define the node set of the extraction order as 
\begin{align}
\reqNodes = \{\reqExtractionOrderRoot\} \cup V_l \cup V_\labelsets.
\end{align}

We define the edge set as $E = E_1 \cup E_2 \cup E_3$ in terms of the following three components:
\begin{align}
E_1 := & \{ (\reqExtractionOrderRoot, \labelSetIndex) ~|~ \labelSetIndex \in V_\labelsets \} \\
E_2 := & \{ (\labelSetIndex, l) ~|~ \labelSetIndex \in V_\labelsets, l \in \labelsetIndexed{\labelSetIndex} \} \\
E_3 := & \{ (\reqExtractionOrderRoot, l) ~|~ l \in V_l \text{ with } \left| \{ L ~|~ L \in \labelsets, l \in L \} \right| = 1 \}
\end{align}
The first edge set $E_1$ contains edges connecting the root node to each of the nodes representing label sets. The edges in $E_2$ connect each label set node $\labelSetIndex \in V_\labelsets$ with each label node $l \in \labelsetIndexed{\labelSetIndex}$. Finally, the edges $E_3$ connect the root node to label nodes that are only contained in a single label set.

We now show that each label set in $\labelsets$ is the label set of one of the root node's outgoing edges. Let $\labelSetIndex \in V_\labelsets$ be the representative node of some label set $\labelsetIndexed{\labelSetIndex} \in \labelsets$. Since $(\reqExtractionOrderRoot, \labelSetIndex) \in E_1$, and by definition of $E_2$, each label in $\labelsetIndexed{\labelSetIndex}$ is reachable from $\labelSetIndex$. 
A second path from $\reqExtractionOrderRoot$ to $l$ exists: If $l$ is contained in some other label set $\labelsetIndexed{\labelSetIndexTwo}$, then the path via $\labelSetIndexTwo \in V_\labelsets$ can be taken. If $l$ is only contained in $\labelsetIndexed{\labelSetIndex}$, then the edge $(\reqExtractionOrderRoot, l)$ is included in $E_3$.
Therefore, for each label node $l \in \labelsetIndexed{\labelSetIndex}$, the edge $(\reqExtractionOrderRoot, \labelSetIndex)$ lies on a confluence starting in $\reqExtractionOrderRoot$ and ending in $l$. It follows that under the confluence edge label assignment, $\labelsetEdge[(\reqExtractionOrderRoot, \labelSetIndex)] = \labelsetIndexed{\labelSetIndex}$ holds.

For each $e = (\reqExtractionOrderRoot, l) \in E_3$, the edge label set is $\labelsetEdge = \{l\}$, which by construction is not contained in $\labelsets$. However, the label $l$ is contained in some label set in $\labelsets$. Therefore, the label set of any edge in $E_3$ is a subset of some other edge label set and does not affect the inclusion-wise maximal subset of $\labelsetsIncident[\reqExtractionOrderRoot]$.

It follows that the inclusion-wise maximal subset of $\labelsetsIncident[\reqExtractionOrderRoot]$ is the same as $\labelsets$.
\end{proof}
Theorem~\ref{lem:existence-EO-arbitrary-labelsets} not only proves that an extraction order representing any possible edge label set configuration exists, but also shows constructively how this extraction order can be obtained in polynomial time and space. The resulting extraction order is quite similar to the representation of the hypergraph as an incidence graph, as introduced in Definition~\ref{def:hypergraph-incidence-graph}.

Two examples of this construction are shown in Figures~\ref{fig:arbitrary-label-graph-example} and~\ref{fig:arbitrary-label-hypergraph-example}. Figure~\ref{fig:arbitrary-label-graph-example} shows an example where the label sets each contain two labels, and can therefore directly be identified with the edge set of the label graph. Figure~\ref{fig:arbitrary-label-hypergraph-example} shows the more general case of arbitrarily sized label sets. In this case, the label sets must be represented as a hypergraph instead.

\begin{figure}[htbp]
	\centering
	\includegraphics[width=0.75\textwidth]{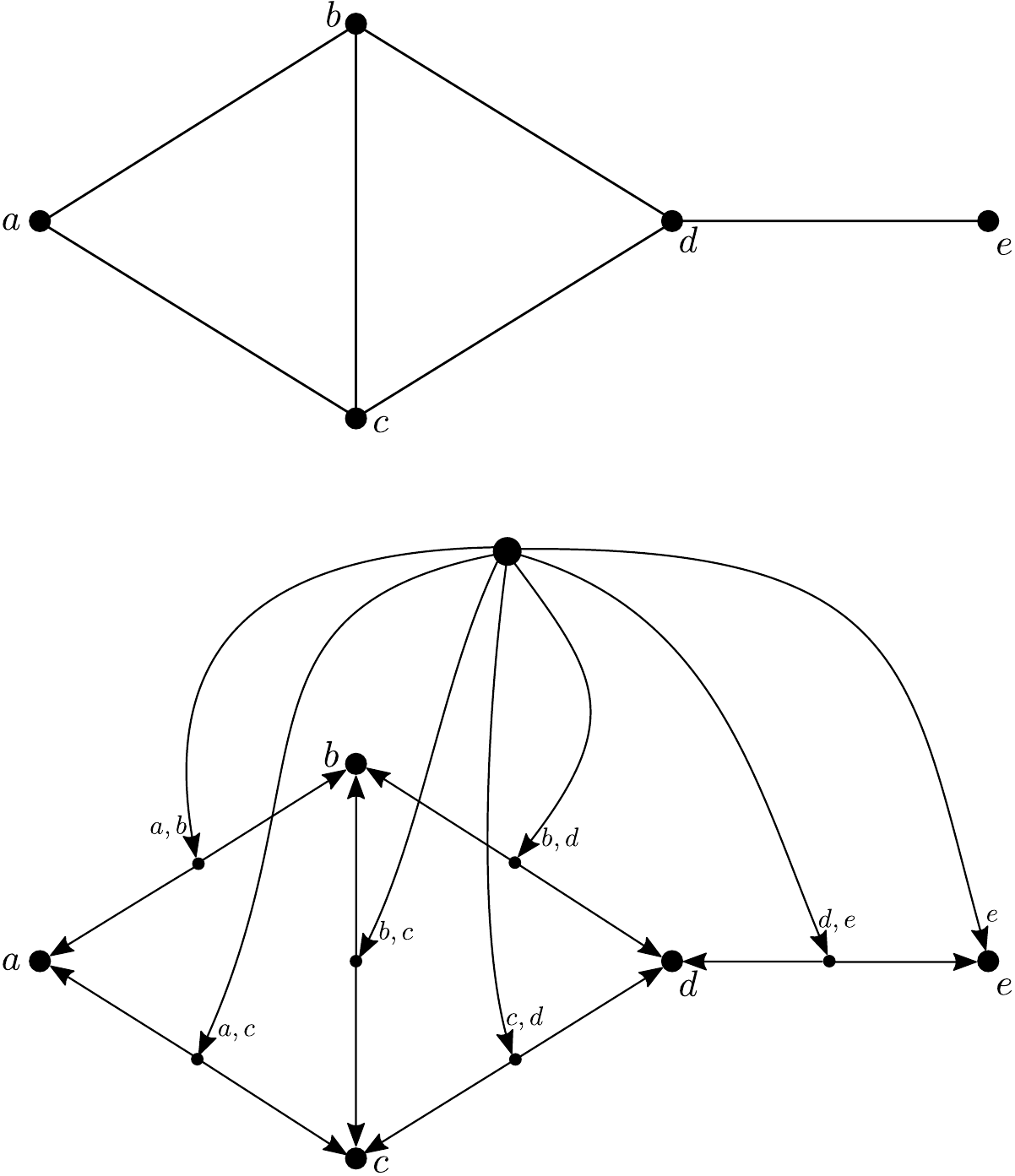}
	\caption{
		An example for the construction described in Lemma~\ref{lem:existence-EO-arbitrary-labelsets} for the label sets \mbox{$\labelsets = \{ \{ a, b \},\{ a, c \},\{ b, c  \}, \{ b, d \},\{ c, d \},\{ d, e \} \}$}. Since the label sets each contain two nodes, they can be represented as the edge set of an undirected graph $G$. An extraction order is created, where the label sets of the root's out-edges correspond to the edge set of $G$.\newline
		Top: The undirected graph $G$, representing the label set graph $\labelsets$. \newline
		Bottom: The extraction order with label sets annotated to the root node's out-edges. Note that the label sets of these edges correspond directly to the edges of $G$. The dashed edge $(r, e)$ belongs to the edge set $E_3$ in Lemma~\ref{lem:existence-EO-arbitrary-labelsets}, and is added to ensure the existence of a confluence ending in $e$, since $e$ only occurs in a single label set.
	}
	\label{fig:arbitrary-label-graph-example}
\end{figure}

\begin{figure}[htbp]
	\centering	\includegraphics[width=0.6\textwidth]{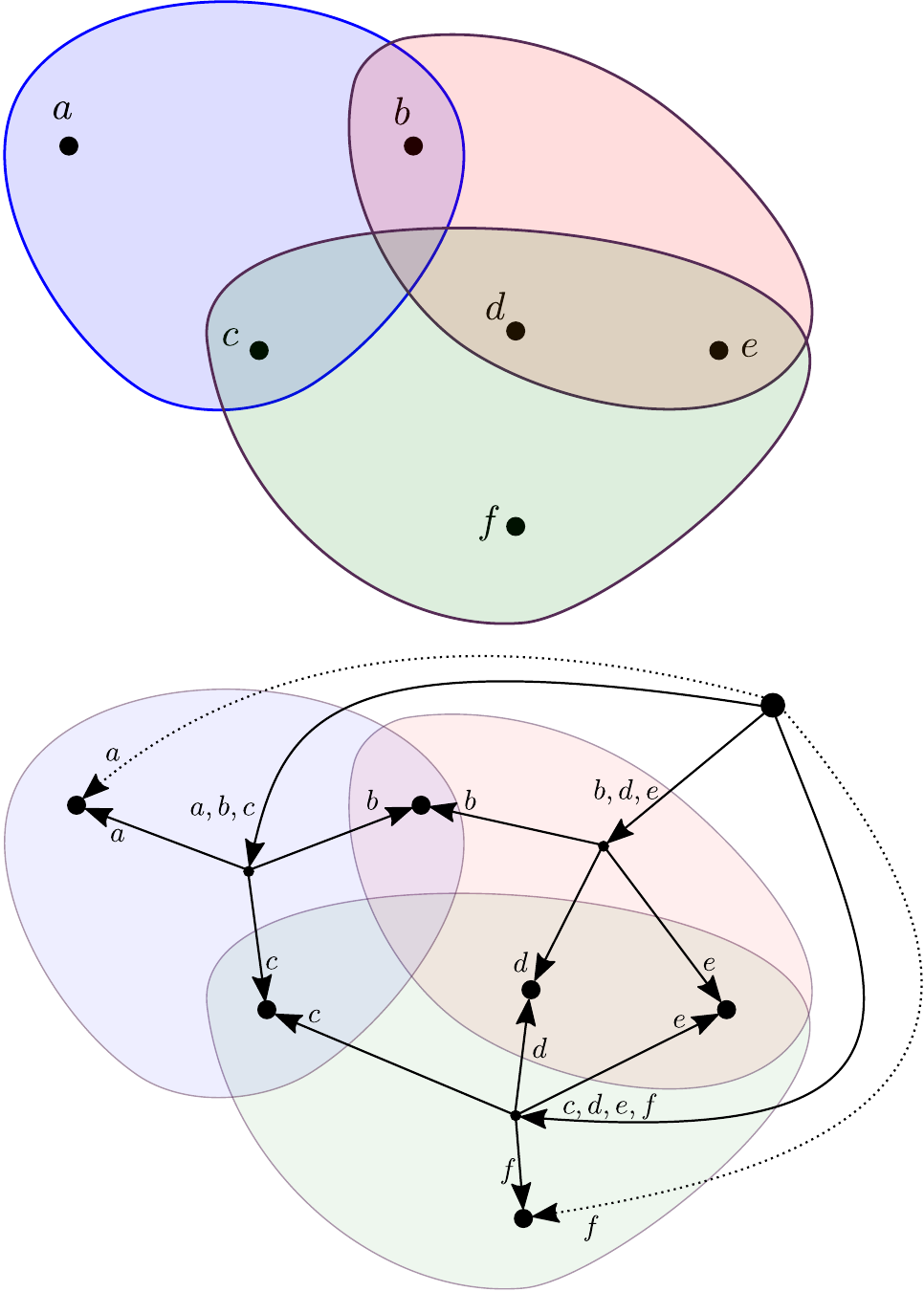}
	\caption{
		An example for the construction described in Lemma~\ref{lem:existence-EO-arbitrary-labelsets} for the label sets \mbox{$\labelsets = \{ \{a, b, c\}, \{ b, d, e \}, \{ c, d, e, f\} \}$}. An extraction order is created, where the label sets of the root's out-edges correspond to the edge set of $G$.\newline
		Top: The hypergraph corresponding to the label set $\labelsets$. The three Hyperedges are colored blue, red, and green, respectively. \newline
		Bottom: The extraction order annotated with the confluence edge label assignment. The label sets of the root node's out-edges directly correspond to the label sets in $\labelsets$. The dashed edges belong to the edge set $E_3$ in Lemma~\ref{lem:existence-EO-arbitrary-labelsets} and are only added for label nodes which would otherwise not be confluence end nodes.
	}
	\label{fig:arbitrary-label-hypergraph-example}
\end{figure}
%\afterpage{\clearpage}

We now define a representation of the label sets  as an undirected graph. The definition identifies the label sets with the corresponding  hypergraph (see Remark~\ref{remark:label-set-hypergraph}), and defines the \emph{label set graph} as its primal graph.
\begin{definition} (Label Set Graph) \label{def:label-set-graph} \\
Let $\labelsets$ be some set of label sets. We define the label set graph $\labelSetGraphDef[\labelsets]$ as the primal graph of the hypergraph induced by $\labelsets$ (see Definition~\ref{def:hypergraph-primal-graph}).
That is, $\labelSetGraph$ is an undirected graph, whose nodes are the label nodes, i.e. $\labelSetGraphNodes[\labelsets] = \bigcup \labelsets$. Two nodes $a, b \in \labelSetGraphNodes[\labelsets]$ are connected if there is some label set $L \in \labelsets$  with $\{a, b\} \subseteq L$.
\end{definition}
Note that by Lemma~\ref{lem:existence-EO-arbitrary-labelsets}, we have established that arbitrary graphs can occur as the label set graph of $\labelsetsIncident[i]$, i.e. the set of label sets of edges incident in some node $i\in \reqNodes$. That is, given an arbitrary graph $\genericGraph$, an extraction order exists such that $\genericGraph$ is the same as the label set graph $\labelSetGraph[\labelsetsIncident[i]]$ for some $i\in \reqNodes$.

The next lemma states that each extraction label set ordering defines a tree decomposition of the corresponding label set graph.
\begin{lemma} (Tree Decompositions and Extraction Label Set Orderings) \label{lemma:extraction-LSO-is-tree-decomp} \\
Let $\reqExtractionOrderDef$ be some extraction order, and let $i \in \reqNodes$ be some node. Consider the set of incident label sets of $i$, $\labelsetsIncident[i]$ (see Definition~\ref{def:incident-labelsets}). 

Given any extraction label set ordering $\labelsetOrder_i$ for $i$, a tree decomposition $\treeDecompDef$ of the label set graph $\labelSetGraph[\labelsetsIncident[i]]$ with width $\decompwidth(\treeDecomp) = \max_{L \in \labelsetOrder_i} |L|$ can be constructed in polynomial time, and vice versa.
\end{lemma}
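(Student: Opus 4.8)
The statement asserts a two-way correspondence between extraction label set orderings for a node $i$ and tree decompositions of the label set graph $\labelSetGraph[\labelsetsIncident[i]]$, preserving the width relationship $\decompwidth(\treeDecomp) = \max_{L \in \labelsetOrder_i} |L|$ and constructible in polynomial time. The plan is to treat the two directions separately, using the machinery assembled in Section~\ref{sec:background:hypergraphs} — in particular the equivalences of Definition~\ref{def:hypergraph-alpha-acyclicity} (running intersection ordering $\Leftrightarrow$ join tree), Lemma~\ref{lemma:join-trees-and-tree-decomp-primal-graphs} (a join tree for the edges of an acyclic hypergraph is a tree decomposition of its primal graph), Lemma~\ref{lemma:join-tree-from-rip-order} (a join tree can be constructed from a running intersection ordering in polynomial time), and Lemma~\ref{lemma:rip-order-from-join-tree} (a running intersection ordering can be read off a join tree by a preorder traversal).

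\textbf{From an extraction label set ordering to a tree decomposition.} Given an extraction label set ordering $\labelsetOrder_i = (L_1, \ldots, L_n)$, first observe that by Property~\ref{def:extractionLabelSetOrdering:item:running-intersection-ordering} of Definition~\ref{def:extractionLabelSetOrdering}, $\labelsetOrder_i$ is a running intersection ordering, so the hypergraph induced by the set family $\{L_1, \ldots, L_n\}$ is $\alpha$-acyclic by Definition~\ref{def:hypergraph-alpha-acyclicity}, and by Lemma~\ref{lemma:join-tree-from-rip-order} a join tree $\joinTree$ for it can be constructed in polynomial time (as a maximum spanning tree of the intersection graph). I then need that the primal graph of this hypergraph equals $\labelSetGraph[\labelsetsIncident[i]]$: the label set graph is by Definition~\ref{def:label-set-graph} the primal graph of the hypergraph induced by $\labelsetsIncident[i]$, and since $\labelsetOrder_i$ contains a representative label set $\labelsetRepresentative[e] \supseteq \labelsetEdge[e]$ for every outgoing edge (Property~\ref{def:extractionLabelSetOrdering:item:representative-exists}) and has $\labelsetIncoming$ as its first element, the set of nodes covered by $\{L_1,\ldots,L_n\}$ and the set of pairs co-occurring in some $L_k$ are exactly those of $\labelsetsIncident[i]$ — here I should be slightly careful, since the $L_k$ may be strict supersets of the edge label sets, so the primal graph of $\{L_k\}$ could in principle have more edges than $\labelSetGraph[\labelsetsIncident[i]]$; the cleanest resolution is to note that adding edges to a graph can only be accommodated by a tree decomposition that already covers them, so I will argue the join tree $\joinTree$ for $\{L_k\}$ is a tree decomposition of the primal graph of $\{L_k\}$ by Lemma~\ref{lemma:join-trees-and-tree-decomp-primal-graphs}, which is a supergraph of $\labelSetGraph[\labelsetsIncident[i]]$, hence also a tree decomposition of $\labelSetGraph[\labelsetsIncident[i]]$ itself. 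Its width is $\max_k |L_k| - 1 = \max_{L \in \labelsetOrder_i}|L| - 1$, matching the claim (modulo the conventional $-1$ in the width of a tree decomposition).

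\textbf{From a tree decomposition to an extraction label set ordering.} Conversely, given a tree decomposition $\treeDecompDef$ of $\labelSetGraph[\labelsetsIncident[i]]$, by Lemma~\ref{lemma:tree-decomp-and-join-trees} its tree is a join tree of the hypergraph whose edges are the bags $\treeDecompSets$, and by Lemma~\ref{lemma:cliques-in-tree-decomp} each clique of $\labelSetGraph[\labelsetsIncident[i]]$ — in particular each label set $L \in \labelsetsIncident[i]$, which forms a clique by definition of the primal graph — is contained in some bag $\treeDecompSet \in \treeDecompSets$; I then want a running intersection ordering of the bags in which $\labelsetIncoming$ appears first. The subtlety is that I need the ordering to start exactly at a set equal to, not merely containing, $\labelsetIncoming$; the way to handle this is to first augment the bag family by adding $\labelsetIncoming$ itself as an extra bag attached (in the join tree) to any bag containing it — this preserves all three tree-decomposition properties and the width, since $|\labelsetIncoming| \le |\treeDecompSet|$ for the bag $\treeDecompSet$ it attaches to — and then invoke Corollary~\ref{corollary:rip-arbitrary-start-edge} to obtain a running intersection ordering beginning with $\labelsetIncoming$. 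The remaining bags, together with the requirement that each $\labelsetEdge[e]$ be contained in a representative, give Properties~\ref{def:extractionLabelSetOrdering:item:running-intersection-ordering}–\ref{def:extractionLabelSetOrdering:item:representative-exists} of Definition~\ref{def:extractionLabelSetOrdering}; the uniqueness of the representative in Property~\ref{def:extractionLabelSetOrdering:item:representative-exists} is handled by the tie-breaking convention ("first according to the ordering") already built into that definition. All constructions — maximum spanning tree, traversal, bag augmentation — are polynomial, giving the polynomial-time claim in both directions.

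\textbf{Expected main obstacle.} The delicate point is the mismatch between edge label sets $\labelsetEdge[e]$ and the (possibly larger) representative label sets / bags: in the forward direction this forces the "supergraph" argument rather than a direct primal-graph equality, and in the backward direction it forces the explicit augmentation of the bag family with $\labelsetIncoming$ to satisfy Property~\ref{def:extractionLabelSetOrdering:item:first-nonlocal} without inflating the width. Getting the bookkeeping of "which set is a subset of which" exactly right — and checking that augmentation truly does not increase $\decompwidth$ — is where the real care is needed; the hypergraph-acyclicity equivalences do the rest essentially mechanically.
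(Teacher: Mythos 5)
Your proof follows essentially the same route as the paper's: both directions go through the join-tree machinery (Lemma~\ref{lemma:join-tree-from-rip-order}, Lemma~\ref{lemma:join-trees-and-tree-decomp-primal-graphs}, Lemma~\ref{lemma:tree-decomp-and-join-trees}, Lemma~\ref{lemma:cliques-in-tree-decomp}, Corollary~\ref{corollary:rip-arbitrary-start-edge}). If anything you are more careful than the paper at two points it glosses over --- the primal graph of the ordering's (possibly enlarged) label sets versus $\labelSetGraph[\labelsetsIncident[i]]$, which you resolve with the supergraph argument, and the fact that $\labelsetIncoming$ need not itself be a bag of the given tree decomposition, which you resolve by augmenting the bag family before invoking Corollary~\ref{corollary:rip-arbitrary-start-edge} --- so the proposal is correct.
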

\begin{proof}
We first show that given an extraction label set ordering $\labelsetOrder_i$ for a node $i$, a tree decomposition for the label set graph $\labelSetGraph[\labelsetsIncident[i]]$ with width $\max_{L \in \labelsetOrder_i} |L|$ can be computed. 

By Definition~\ref{def:extractionLabelSetOrdering}, the extraction label set ordering $\labelsetOrder_i$ has the running intersection property. Therefore, by Lemma~\ref{lemma:join-tree-from-rip-order}, a join tree can be constructed in polynomial time, whose nodes are the label sets in $\labelsetOrder_i$. By Lemma~\ref{lemma:join-trees-and-tree-decomp-primal-graphs}, this join tree defines a tree decomposition for the primal graph of the hypergraph defined by the label sets in $\labelsetOrder_i$. However, by Definition~\ref{def:label-set-graph}, this graph is the label set graph $\labelSetGraph[\labelsetOrder_i]$. By construction, the label sets in $\labelsetOrder_i$ are the same as the sets $\treeDecompSets$ associated with the tree decomposition. It follows that $\decompwidth(\treeDecomp) = \max_{L \in \labelsetOrder_i} |L|$.

We next show that given a tree decomposition $\treeDecompDef$ for the label set graph $\labelSetGraph[\labelsetsIncident[i]]$, we show that an extraction label set ordering $\labelsetOrder_i$ with width $\decompwidth(\treeDecomp)$ can be obtained.
Firstly, by Lemma~\ref{lemma:tree-decomp-and-join-trees}, the tree decomposition can be transformed to a join tree over the tree decomposition's set family $\treeDecompSets$. By Lemma~\ref{lemma:rip-order-from-join-tree}, we can obtain a running intersection ordering from this join tree, satisfying Property~\ref{def:extractionLabelSetOrdering:item:running-intersection-ordering} of the extraction label set ordering.

Secondly, by Corollary~\ref{corollary:rip-arbitrary-start-edge}, the running intersection ordering described above can be constructed in such a way that it starts with $\labelsetIncoming[i]$. Therefore, Property~\ref{def:extractionLabelSetOrdering:item:first-nonlocal} holds.

Finally, note that by Definition~\ref{def:label-set-graph}, each label set in $\labelsetsIncident[i]$ induces a clique subgraph in the label set graph $\labelSetGraph[\labelsetsIncident[i]]$. By Lemma~\ref{lemma:cliques-in-tree-decomp}, each such clique subgraph must be contained in one of the tree decomposition's sets $\treeDecompSet \in \treeDecompTreeNodes$. Therefore, each label set is a subset of at least one set in the join tree used to define the running intersection ordering, thus satisfying Property~\ref{def:extractionLabelSetOrdering:item:representative-exists}. The size of the largest set in the join label set ordering is equal to the size of the largest set in the tree decomposition. Hence, $\decompwidth(\treeDecomp) = \max_{L \in \labelsetOrder_i} |L|$ holds.
\end{proof}

The following theorem states that given an extraction order, finding optimal extraction label set orderings is $\complexityNP$-hard. % by reduction of the treewidth decision problem introduced in Theorem~\ref{thm:treewidth-hardness}.
\begin{theorem} (Hardness of Finding Optimal Extraction Label Set Orderings) \label{thm:hardness-extraction-label-ordering}\\
Given an extraction order $\reqExtractionOrder$, the problem of finding an extraction label set ordering $\labelsetOrderSet$, minimizing the width $\labelWidthEQ(\reqExtractionOrder, \labelsetOrderSet)$ is $\complexityNP$-hard.
\end{theorem}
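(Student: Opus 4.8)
The plan is to reduce the treewidth decision problem, which is $\complexityNP$-complete by Theorem~\ref{thm:treewidth-hardness}, to the problem of computing an extraction label set ordering of minimum width for a fixed extraction order. Let $(\genericGraph, k)$ be an instance of the treewidth decision problem. If $k \leq 1$, or if $\genericGraph$ is edgeless, the answer is decidable directly in polynomial time (checking whether $\genericGraph$ is a forest, respectively edgeless), so I may assume $k \geq 2$ and that $\genericGraph$ has at least one edge; I may also discard the isolated vertices of $\genericGraph$, which does not change $\treewidth(\genericGraph)$, so that every vertex of $\genericGraph$ lies on some edge.

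First I would encode the edge set of $\genericGraph$ as the family of two-element label sets $\labelsets := \{\{a,b\} ~|~ \{a,b\} \in \genericGraphEdges\}$ and invoke Theorem~\ref{lem:existence-EO-arbitrary-labelsets} to obtain, in polynomial time, an extraction order $\reqExtractionOrderDef$ whose root's incident label sets $\labelsetsIncident[\reqExtractionOrderRoot]$ (Definition~\ref{def:incident-labelsets}) have $\labelsets$ as their inclusion-wise maximal subset. Since the two-element label sets contribute exactly the edges of $\genericGraph$ to the primal graph, the auxiliary singleton label sets contribute no edges, and $\bigcup\labelsetsIncident[\reqExtractionOrderRoot] = \genericGraphNodes$, this gives $\labelSetGraph[\labelsetsIncident[\reqExtractionOrderRoot]] = \genericGraph$ by Definition~\ref{def:label-set-graph}.

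The heart of the argument is to relate $\mu := \min_{\labelsetOrderSet} \labelWidthEQ(\reqExtractionOrder, \labelsetOrderSet)$ to $\treewidth(\genericGraph)$. Because an extraction label set ordering prescribes an ordering $\labelsetOrder_i$ for every node $i$ independently, the minimisation decomposes node by node, so $\mu = 1 + \max_{i\in\reqNodes} \min_{\labelsetOrder_i} \max_{L \in \labelsetOrder_i}|L|$. For the auxiliary nodes of the constructed extraction order (the label-set representatives and the label nodes) I would check directly that the incoming label set has size at most $2$ and that every out-edge label set is contained in it, so each of these nodes contributes at most $2$ to the inner maximum. For the root, Lemma~\ref{lemma:extraction-LSO-is-tree-decomp} provides a width-preserving, polynomial-time correspondence between extraction label set orderings of $\reqExtractionOrderRoot$ and tree decompositions of $\labelSetGraph[\labelsetsIncident[\reqExtractionOrderRoot]] = \genericGraph$, whence $\min_{\labelsetOrder_{\reqExtractionOrderRoot}}\max_{L}|L| = \treewidth(\genericGraph)$. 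Combining these, $\mu = \max(\treewidth(\genericGraph)+1,\, 3)$, so for $k \geq 2$ one has $\treewidth(\genericGraph) \leq k$ if and only if $\mu \leq k+1$. Hence any polynomial-time algorithm for finding a minimum-width extraction label set ordering would decide treewidth in polynomial time, and the claimed $\complexityNP$-hardness follows.

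The main obstacle I anticipate is not the construction itself — which is essentially the incidence-graph encoding already set up in Theorem~\ref{lem:existence-EO-arbitrary-labelsets} — but the bookkeeping needed to turn the identity $\mu = \max(\treewidth(\genericGraph)+1, 3)$ into a clean decision-problem reduction: one must argue that the overall extraction label width is governed solely by the root's label set ordering up to a fixed additive constant, using the independence of the per-node orderings together with the explicit bound on the auxiliary nodes, and one must track carefully the two normalisation conventions (the $-1$ in $\decompwidth$ versus the $1 + \max|L|$ in $\labelWidthEQ$) as they pass through Lemma~\ref{lemma:extraction-LSO-is-tree-decomp}.
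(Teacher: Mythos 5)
Your proposal is correct and follows essentially the same route as the paper's proof: a reduction from the treewidth decision problem using the construction of Theorem~\ref{lem:existence-EO-arbitrary-labelsets} to realize $\genericGraph$ as the root's label set graph, followed by the width-preserving correspondence of Lemma~\ref{lemma:extraction-LSO-is-tree-decomp}. You additionally spell out details the paper leaves implicit — the per-node decomposition of the minimisation, the constant bound on the auxiliary nodes' label sets, and the normalisation offsets — which only strengthens the argument.
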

\begin{proof} 
We show the hardness by reduction of the Treewidth Decision Problem, which is $\complexityNP$-complete by Theorem~\ref{thm:treewidth-hardness}. 

Given an instance $(\genericGraph, k)$ of the treewidth decision problem, use the construction from Theorem~\ref{lem:existence-EO-arbitrary-labelsets} to generate an extraction order $\reqExtractionOrderDef$, such that the label set graph $\labelSetGraph[\labelsetsIncident[\reqExtractionOrderRoot]]$ for the root node $\reqExtractionOrderRoot$ is the same as $\genericGraph$. Determine an extraction label set ordering $\labelsetOrderSet$ for $\reqExtractionOrder$, minimizing the width $\labelWidthEQ(\reqExtractionOrder, \labelsetOrderSet)$. Then, by Lemma~\ref{lemma:extraction-LSO-is-tree-decomp}, the size of the largest label set in $\labelsetOrder_{\reqExtractionOrderRoot}$ is the treewidth of $\genericGraph$.

%We now show that the problem of finding a minimal extraction label set ordering is in $\complexityNP$, by reduction to the Treewidth Decision Problem. Given an extraction order $\reqExtractionOrderDef$ and a node $i \in \reqNodes$, construct the label set graph for the incident label sets of $i$, i.e. construct $\labelSetGraph[\labelsetsIncident[i]]$. Determine a tree decomposition of minimal width for $\labelSetGraph[\labelsetsIncident[i]]$. By Lemma~\ref{lemma:extraction-LSO-is-tree-decomp}, this tree decomposition can be transformed to an extraction label set ordering of minimal size.
\end{proof}

We have thus shown that determining an optimal extraction label set ordering for a \emph{specific} extraction order is $\complexityNP$-hard. However, due to the extensive work related to treewidth and tree decomposition, we expect that extraction label set orderings of a reasonable width can be found in many cases.

In \cite{rostSchmidFPTApproximations}, Rost and Schmid prove that computing extraction orders which are optimal for the base algorithm's extraction width parameter is $\complexityNP$-hard. We therefore expect that the problem of computing an optimal extraction order, while additionally optimizing the extraction label set ordering is hard as well. However, this hardness result does not necessarily apply to the underlying request topology. In particular, extraction orders containing label set graphs with large treewidth might be easily identifiable as sub-optimal choices. It is also conceivable that some connection between extraction label width and the treewidth of the request topology might exist. On an intuitive level, one might conjecture that the representation of a label set graph with a certain treewidth might itself require an underlying request topology of a certain treewidth.

\cleardoublepage
\section{Extending the Base Algorithm to Multi-Root Orientations}\label{sec:multiroot}

The second extension of the base algorithm aims to relax the requirement that the extraction order of the request must be a rooted graph. Rost and Schmid have shown in \cite{rostSchmidFPTApproximations} that for the class of half-wheel graphs (see Definition~\ref{def:half-wheel-graphs}), the placement of the root node has drastic implications for the extraction width of the extraction order (see Lemma~\ref{lem:extraction-width-half-wheel}). In Lemma~\ref{lemma:label-width-of-half-wheel-graphs}, we show a similar result for the extraction label width. It follows that the placement of the root node directly affects the size of the LP formulation and the runtime of the decomposition algorithm. 

Now, consider a request topology containing multiple subgraphs, such that the extraction label width of the subgraphs is affected by the root placement. If extraction orders are required to be rooted graphs, it follows that selecting a root optimally for one subgraph may force a sub-optimal root placement in the other subgraphs. An example of such a request is shown in Figure~\ref{fig:03:half_wheels_double_root}. This request consists of two half-wheel graphs, connected by a single edge. By Lemma~\ref{lemma:label-width-of-half-wheel-graphs}, the optimal root placement has an extraction order $\reqExtractionOrder$ with extraction label width $\labelWidthEQ(\reqExtractionOrder, \labelsetOrderSet) = 2$, where $\labelsetOrderSet$ is a minimal extraction label set ordering for $\reqExtractionOrder$. However, under the non-optimal root placement, only a width of $\labelWidthEQ(\reqExtractionOrder, \labelsetOrderSet) = 3$ can be achieved.

\begin{figure}[tbh]
    \centering
    \includegraphics[width=1.0\textwidth]{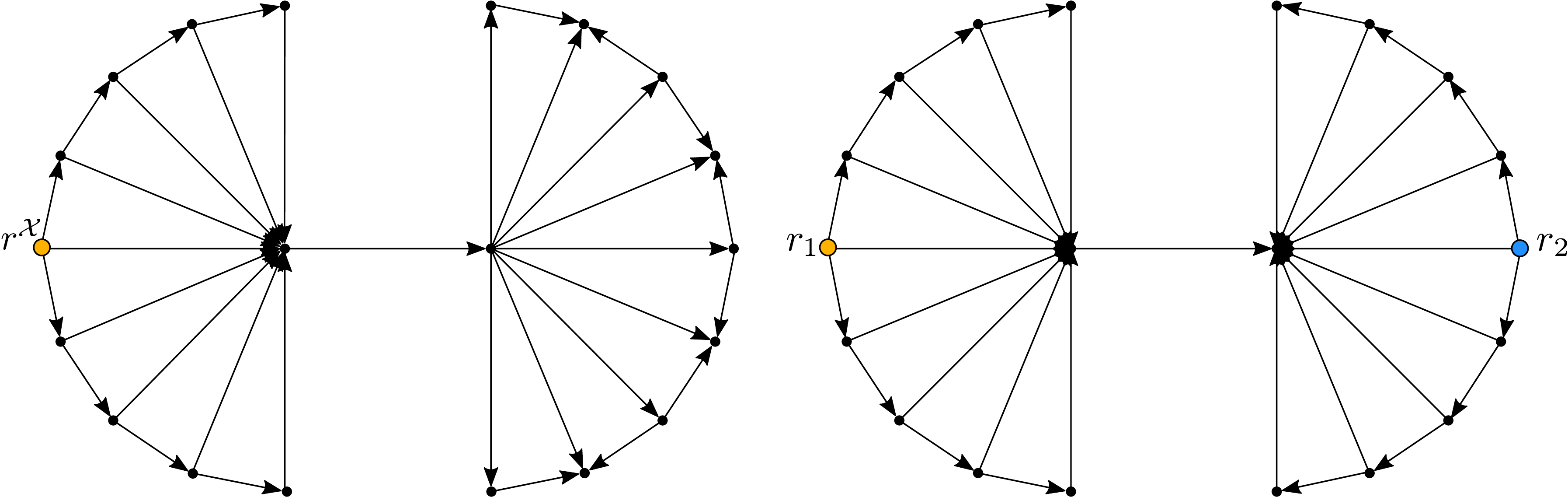}
    \caption{
A request graph motivating the use of multiple root nodes, consisting of two half-wheel graphs (see Definition~\ref{def:half-wheel-graphs}) joined by a single edge.
\newline
Left: A rooted extraction order with the root node $\reqExtractionOrderRoot$ placed optimally in either of the two half-wheel subgraphs forces a suboptimal root placement for the other half-wheel subgraph. The extraction label width using a single root is therefore $3$ according to Lemma~\ref{lemma:label-width-of-half-wheel-graphs}. Note that the extraction width of the base algorithm scales linearly with the size of the half-wheel subgraphs (cf. \cite{rostSchmidFPTApproximations}, Lemma~\ref{lem:extraction-width-half-wheel}).
\newline
Right: Placing an second root node in the right subgraph allows optimal root placement in \emph{both} half-wheel subgraphs. The extraction label width in each half-wheel subgraph is $2$, as each edge now lies on at most one confluence. The improvement for the base algorithm is more drastic, as the extraction width of the base algorithm is also $2$.
	}
    \label{fig:03:half_wheels_double_root}
\end{figure}

Since the size of the LP formulation and the runtime of the decomposition algorithm depend exponentially on the extraction label width (see Theorems~\ref{thm:adapted-alg-lp-size} and~\ref{thm:adapted-alg-runtime}), even reducing the extraction label width parameter by one may have a significant impact on the runtime. It therefore seems useful to extend the algorithm in such a way that a second root node could be placed, allowing the optimal root placement to be chosen in both half-wheel subgraphs. In this section, we introduce such an extension, allowing the decomposition of generalized extraction orders. 

We begin by introducing the basic terminology related to generalized extraction orders in Section~\ref{sec:multiroot:definitions}. In Section~\ref{sec:multiroot:induced-EO}, we describe a simple method of converting a generalized extraction order to a regular extraction order.

In the remainder of Section~\ref{sec:multiroot}, we discuss a more sophisticated algorithm, which results in a lower overall width. This algorithm is based on deriving convex combinations of mappings for rooted subgraphs of the generalized extraction order, and stitching these subgraph mappings together to a solution for the entire request. In Section~\ref{sec:multiroot:tree-like-EOs}, we define the required high-level structure of generalized extraction orders to which the refined algorithm can be applied. Section~\ref{sec:multiroot:rooted-subgraphs} validates the approach of processing subgraphs individually by showing that Algorithm~\ref{alg:decompositionAlgorithm-RIP} can compute convex combinations of mappings for rooted subgraphs. Section~\ref{sec:multiroot:local-subgraph-extensibility} describes how consistent mappings between such subgraphs can be enforced by introducing the notion of \emph{local subgraph extensibility}. In Section~\ref{sec:multiroot:decomposition-alg}, we discuss the decomposition algorithm for generalized extraction orders, and show its correctness. Lastly, in Section~\ref{sec:multiroot:complexity}, we discuss the complexity of the algorithm.

\begin{remark} (Half-wheel Graphs in Diagrams) \\
Many diagrams in this section show request topologies containing multiple half-wheel subgraphs. The use of these half-wheel subgraphs is intended to motivate a specific placement of root nodes: As Rost and Schmid show in \cite{rostSchmidFPTApproximations}, suboptimal root placement in half-wheel graphs leads to an increase in the extraction width which scales linearly in the size of the half-wheel subgraph (see Lemma~\ref{lem:extraction-width-half-wheel}).  Given the results of Section~\ref{sec:hierarchical-bags}, in particular Lemma~\ref{lemma:label-width-of-half-wheel-graphs}, the impact of a bad root placement on the extraction label width is not as drastic and merely increases the extraction label width from $2$ to $3$. Therefore, some figures may show a partitioning of the request which may not be optimal for the extraction label width.
\end{remark}

\subsection{Preliminary Definitions} \label{sec:multiroot:definitions}
We will now introduce the terminology used in the remainder of this section. 
We first define the notion of a \emph{generalized extraction order} as a generalization of the definition of an extraction order given in Definition~\ref{def:extraction-order}.
\begin{definition} (Generalized Extraction Order) \label{def:gen-extraction-order} \\
Given a request topology $\reqTopologyDef$, we call an acyclic graph $\reqDAGOrientationDef$ a (generalized) extraction order of $\reqTopology$, if an edge $e$ is contained in $\reqDAGOrientationEdges$ if either itself or its reversed orientation is contained in $\reqEdges$, i.e. if $\undirected[\reqDAGOrientation] = \undirected[\reqTopology]$.
\end{definition}

\begin{remark}
Throughout Section~\ref{sec:multiroot}, we will primarily consider generalized extraction orders. For simplicity, we will occasionally refer to them simply as extraction orders. When referring to extraction orders as introduced in Definition~\ref{def:extraction-order}, we will explicitly call them \emph{rooted extraction orders}. We also occasionally refer to generalized extraction orders as \emph{multi-root extraction orders} to emphasize this distinction. 
\end{remark}

We next introduce the \emph{root set} containing the \emph{local root nodes}, i.e. the nodes which have no incoming edges in the generalized extraction order.
\begin{definition} (Root Set, Local Root Nodes) \\
Given a generalized extraction order $\reqDAGOrientation$ of a request graph, we define the root set as
\begin{align}
\rootSet := \{r | r \in \reqNodes \text{ with } |\inEdgesExtractionOrder{r}| = 0 \}.
\end{align}
We refer to the nodes in $\rootSet$ as \emph{local root nodes}.
\end{definition}

We next define \emph{root regions} as a partitioning of the multi-root extraction order into a set of subgraphs, such that these subgraphs' edge sets are disjoint. Each root region is rooted in one of the extraction order's local root nodes.
\begin{definition} (Root Regions)\\
Let $\reqDAGOrientation$ be a generalized extraction order with root set $\rootSet$. We define the root region set $\rootRegionSet := \{ \rootRegion[r] ~~|~~ r \in \rootSet\}$ as a partition of the edge set $\reqDAGOrientationEdges$.

We require that the root regions satisfy the following properties:
\begin{enumerate}
	\item The root regions cover the entire extraction order, i.e. $\bigcup_r \rootRegion[r] = \reqDAGOrientationEdges$ 
	\item Root regions are pairwise disjoint, i.e. for $r, r' \in \rootSet$ with $r \neq r'$, it holds that $\rootRegion[r] \cap \rootRegion[r'] = \emptyset$.  
	\item For each local root $r \in \rootSet$ and its corresponding root region $\rootRegion[r] \in \rootRegionSet$, any edge $e \in \rootRegion[r]$ is reachable from $r$.
	\item For any request node, all out-going edges lie in the same root region, i.e. for each $i \in \reqNodes$ there exists a local root node $r \in \rootSet$, such that $\outEdgesDAGOrder{i} \subseteq \rootRegion[r]$.
\end{enumerate}
Lastly, for each root $r \in \rootSet$, we denote the root region's node set as \mbox{$\rootRegionNodes[r] :=  \bigcup_{(i, j) \in \rootRegion[r]} \{i, j\}$} and define the root region subgraph as the rooted graph
\begin{align}
\rootRegionSubgraph[r] := (\rootRegionNodes[r], \rootRegion[r], r).
\end{align}
%We denote the corresponding subgraph of the request by $\rootRegionSubgraphReqDef[r]$.
\end{definition}

%\begin{figure}[tbph]
%	\centering
%	\includegraphics[width=0.8\textwidth]{}
%	\caption{
%		An example of a generalized extraction order with two root regions colored yellow and blue, illustrating the concept of root region boundaries. The dashed edges starting at the root nodes $r_1$ and $r_2$ indicate the existence of some path from the local root to the head node. The root region boundary $\rootRegionBoundaryPair[r_1][r_2]$ contains the black nodes enclosed in the dashed rectangle. Note that the boundary nodes belong to both root regions.
%	}
%	\label{fig:03:root_region_example}
%\end{figure}

Since only the edge sets of different root region subgraphs are defined to be disjoint, their node sets may overlap. We therefore now introduce the notions of \emph{root region boundaries} and \emph{pairwise root region boundaries}.
\begin{definition} (Root Region Boundary)\\
Given an extraction order $\reqDAGOrientation$ and some root node $r \in \rootSet$, we define the root region boundary $\rootRegionBoundary[r]$ as its intersection with all other root regions:
\begin{align}
\rootRegionBoundary[r] := \rootRegionNodes[r] \cap \left(\bigcup_{r' \in \rootSet \setminus \{r\}} \rootRegionNodes[r'] \right)
\end{align}
We additionally define \emph{pairwise} boundary regions between two root nodes $r_1 \in \rootSet$ and $r_2 \in \rootSet$ as
\begin{align}
\rootRegionBoundaryPair[r_1][r_2] := \rootRegionBoundary[r_1] \cap \rootRegionBoundary[r_2]
\end{align}
\end{definition}

To describe the large-scale structure of a generalized extraction order, we define the \emph{root region extraction order}. A root region extraction order is a rooted, directed graph containing a node representing each root region, in which root regions are connected by an edge if they share a non-empty boundary region.
\begin{definition}(Root Region Extraction Order) \label{def:root-region-extraction-order} \\
Let $\reqDAGOrientationDef$ be a generalized extraction order with root set $\rootSet$. A root region extraction order $\RRExtractionOrderDef$ is a directed acyclic graph rooted in $\RRExtractionOrderRoot \in \rootSet$, whose nodes are the roots of the extraction order $\reqDAGOrientation$, and whose edges $\RRExtractionOrderEdges$ are defined as follows:
For all $r, r' \in \rootSet$ with $\rootRegionBoundaryPair[r][r'] \neq \emptyset$, either $(r, r') \in \RRExtractionOrderEdges$ or $(r', r) \in \RRExtractionOrderEdges$. Given a root node $r$, we denote the in- and out-edges of $r$ in the root region extraction order as $\inEdgesRR{r}$ and $\outEdgesRR{r}$, respectively.
\end{definition}
The root region extraction order can also be thought of as a rooted directed acyclic orientation of the \emph{intersection graph} (see Definition~\ref{def:intersection-graph}) of the root regions' node sets. Examples of root region extraction orders are shown in Figures~\ref{fig:03:many-root-regions-tree-example} and~\ref{fig:03:many-root-regions-cycle-example}. Tree-like root region extraction orders will be of particular interest for the main algorithm presented in Section~\ref{sec:multiroot:decomposition-alg}.

\begin{figure}[htb]
	\centering
	\includegraphics[width=0.5\textwidth]{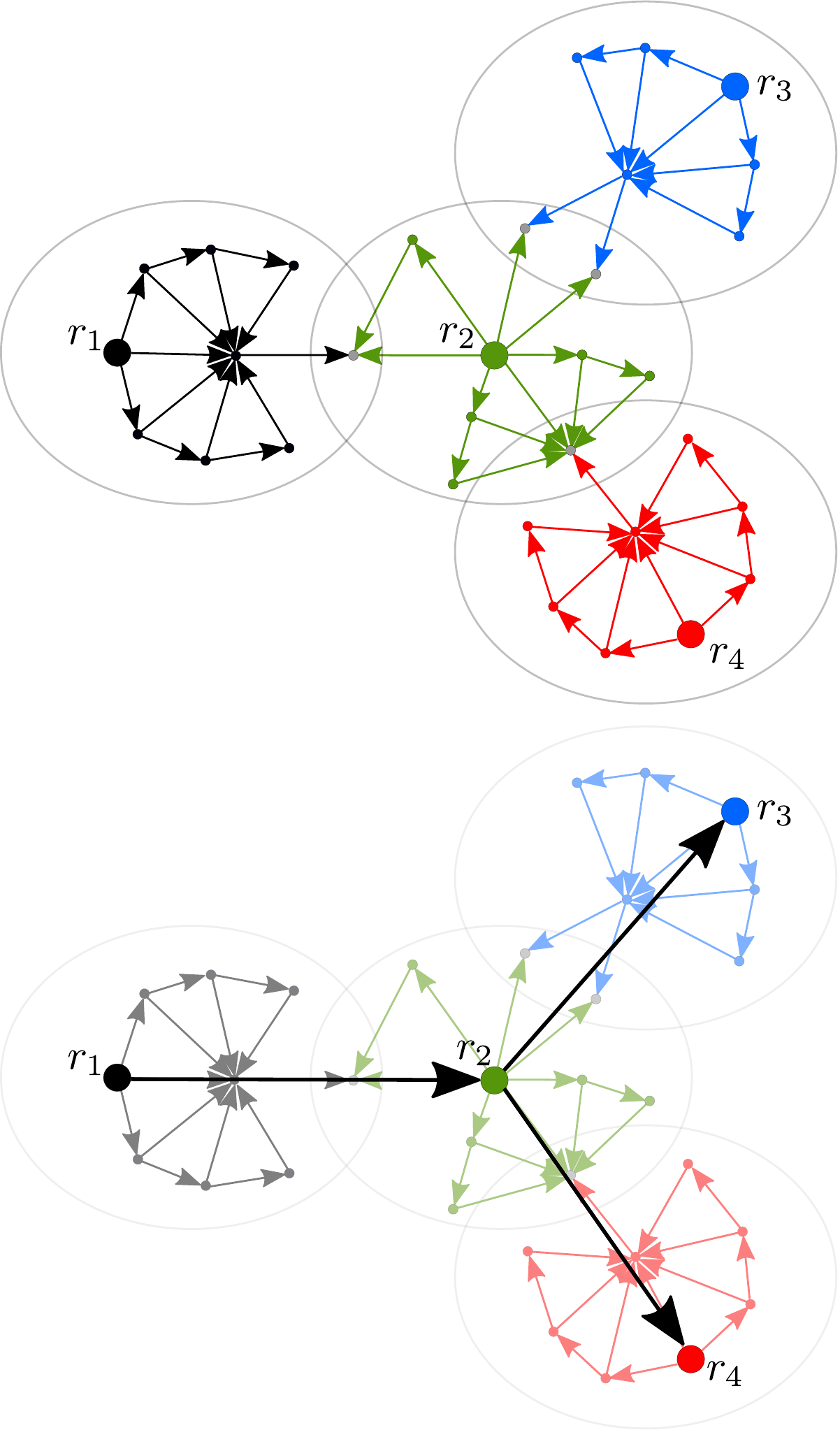}
	\caption{
		A generalized extraction order with several root regions. \newline
		Top: The generalized extraction order. Root regions are enclosed in gray ellipses. \newline
		Bottom: A root region extraction order rooted in $r_1$. Note that the root region extraction order is tree-like.
	}
	\label{fig:03:many-root-regions-tree-example}
\end{figure}

\begin{figure}[p]
	\centering
	\includegraphics[width=0.6\textwidth]{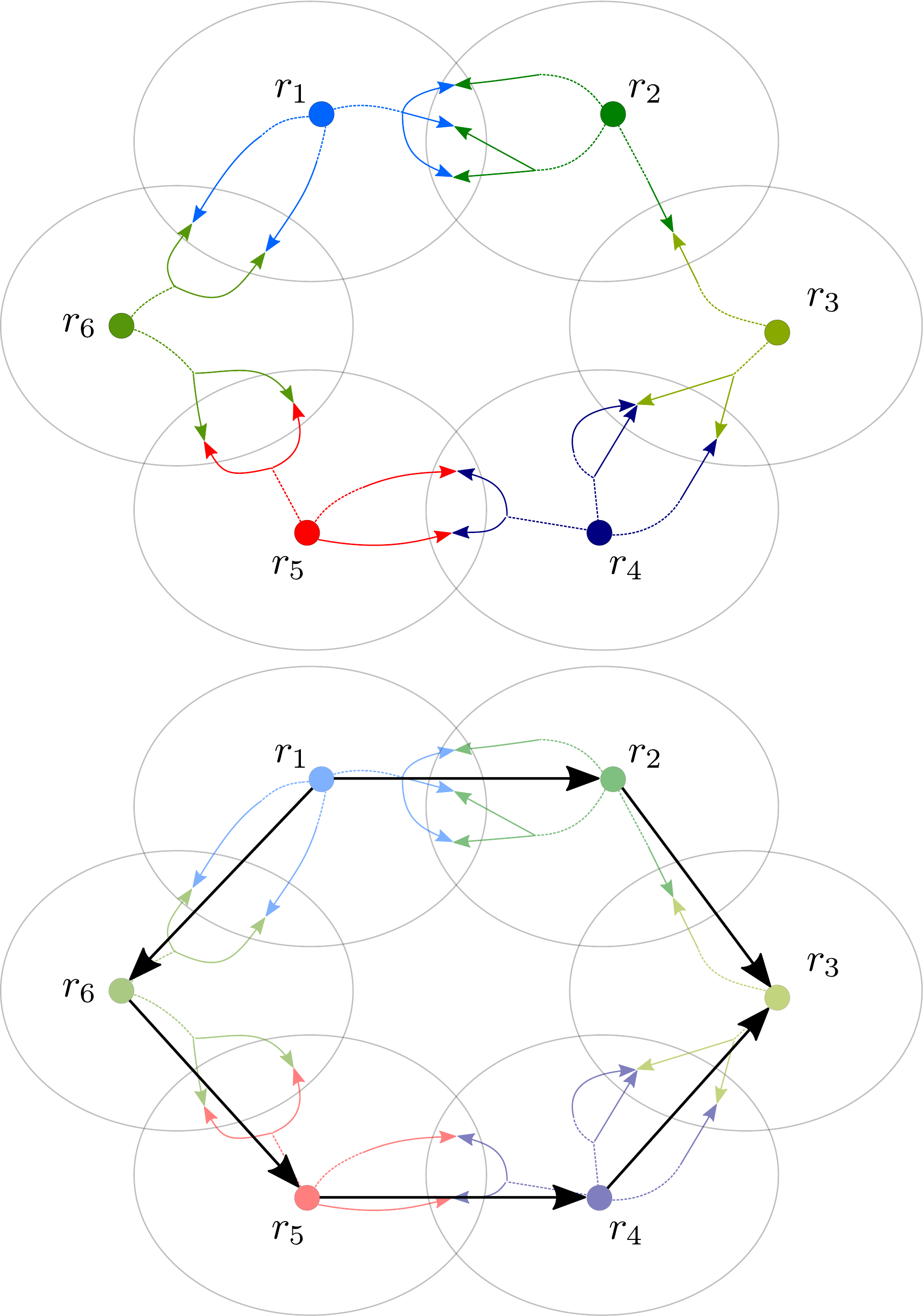}
	\caption{
		A generalized extraction order with several root regions, whose root region extraction order is not tree-like. Dashed edges indicate the existence of a path, i.e. the root regions may also contain half-wheel subgraphs, necessitating the introduction of the local roots. \newline
		Top: The generalized extraction order. Root regions are enclosed in gray ellipses. \newline
		Bottom: A root region extraction order for the request containing a confluence.
	}
	\label{fig:03:many-root-regions-cycle-example}
\end{figure}
\clearpage

\subsection{Induced Extraction Orders}\label{sec:multiroot:induced-EO}

In this section, we propose a simple and universally applicable approach for multi-root extraction orders. This approach relies on a transformation to a rooted extraction order, which can then be handled within the framework of the existing approximation algorithm.

The extraction order is extended by introducing a super-root node $\superroot$ and edges connecting this super-root to each of the local root nodes. We ensure that these additional \emph{virtual} node and edges do not affect the space of possible mappings for the request: The request's demand function is extended by setting $\reqDemand (\superroot) = 0$ for the super-root node and $\reqDemand(e) = 0$ for every virtual edge. By setting their demands to zero, we ensure that no additional substrate resources are allocated when embedding this extended extraction order.

We define the transformation of a generalized extraction order described above as the \emph{induced extraction order}.
\begin{definition} (Induced Extraction Order) \label{def:multiroot:induced-extraction-order}\\
Given a generalized extraction order $\reqDAGOrientation$ with root set $\rootSet$, we define the induced extraction order of $\reqDAGOrientation$ as 
\begin{align}
\reqInducedEO[\reqDAGOrientation] := \left(\reqNodes \cup \{\superroot\}, \reqDAGOrientationEdges \cup E', \superroot \right) \quad \text{where} \quad
E' = \left\{(\superroot, r) ~|~ r \in \rootSet \right\}
\end{align}
We also extend the resource demand function such that $\reqDemand(\superroot) = 0$ and  $\reqDemand(e') = 0$ for all edges in $E'$.
\end{definition} 

Note that the induced extraction order is always a rooted extraction order whose root node is $\superroot$. Further, it requires the same resource allocations as embedding the corresponding request, since no resources are allocated to the superroot node  and the virtual edges. It can therefore directly be used as a rooted extraction order, without further modification to the overall algorithm. Figure~\ref{fig:03:half_wheels_connected_good_root_simple_superroot} shows an example of an induced extraction order, based on the example request of two half-wheel graphs introduced in Figure~\ref{fig:03:half_wheels_double_root}.

\begin{figure}[h]
	\centering
	\includegraphics[width=0.7\textwidth]{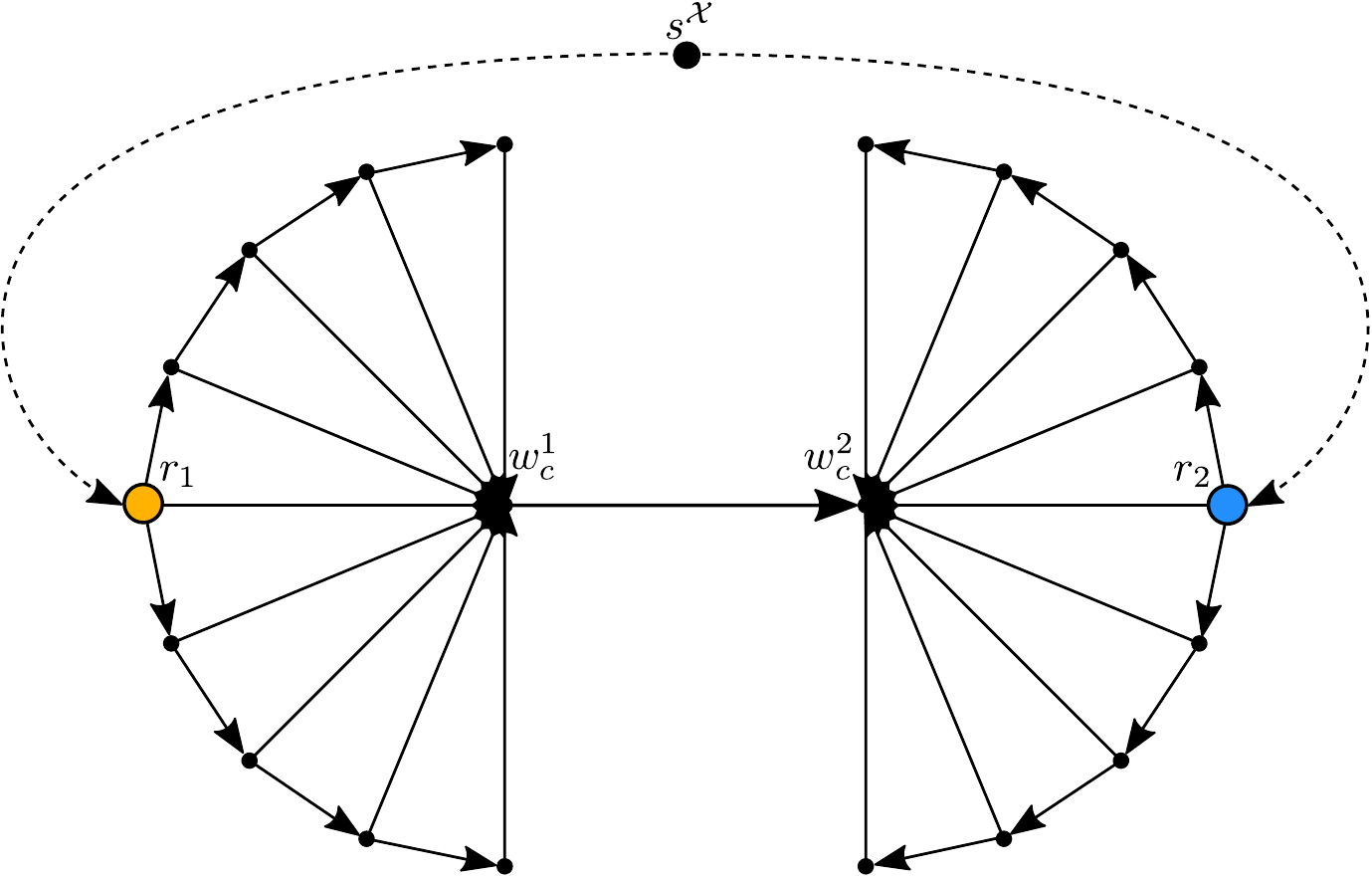}
	\caption{
An illustration of how to transform a generalized extraction order to a rooted extraction order using the example introduced in Figure~\ref{fig:03:half_wheels_double_root}. The transformation is achieved through the introduction of a virtual super-root node connected to each local root node.  Virtual edges connecting $\superroot$ to the local root nodes are dashed. A mapping for the induced extraction order would not allocate any substrate resources to the super-root node or any of the virtual edges, as their demand is set to zero. In this example, the induced extraction order still has extraction label width $3$, as any edge in the left half-wheel graph is labeled with $\{w_c^1, w_c^2\}$. Note that the \emph{classical} extraction width of the base algorithm is also $3$ and therefore drastically improved.
	}
	\label{fig:03:half_wheels_connected_good_root_simple_superroot}
\end{figure}

We now show the validity of using the induced extraction order to determine a mapping for the original multi-root extraction order. First, we show that any valid mapping for the original extraction order can be extended to a valid mapping of its induced extraction order.
\begin{lemma}(Existence of Mappings for Induced Extraction Orders) \label{lemma:multiroot:existence-mapping-induced-order}\\
Given a generalized extraction order $\reqDAGOrientationDef$ and its induced extraction order $\reqInducedEO[\reqDAGOrientation]$, and given a valid mapping $\mappingRequest = (\mappingNodes, \mappingEdges)$ for $\reqDAGOrientation$ with resource allocations bounded by substrate capacities, then there exists a valid mapping $\mappingRequestTilde := (\mappingNodesTilde, \mappingEdgesTilde)$ for $\reqInducedEO[\reqDAGOrientation]$, such that $\restrict[\mappingNodesTilde][\reqNodes] = \mappingNodes$ and $\restrict[\mappingEdgesTilde][\reqDAGOrientationEdges] = \mappingEdges$, and such that the resource allocations of $\mappingNodesTilde$ are bounded by substrate capacities.
\end{lemma}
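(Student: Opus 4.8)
The plan is to prove this by an explicit construction: extend the given mapping $\mappingRequest$ for $\reqDAGOrientation$ to a mapping $\mappingRequestTilde$ for $\reqInducedEO[\reqDAGOrientation]$ by deciding where to send the super-root $\superroot$ and each virtual edge $(\superroot, r) \in E'$. Since the virtual node and edges carry zero demand, the only real obligations are (i) validity in the sense of Definition~\ref{def:valid-mapping}: $\superroot$ must map to a single substrate node, and each virtual edge must map to a substrate path from $\mappingNodesTilde(\superroot)$ to $\mappingNodesTilde(r)$; and (ii) the resource-allocation bound must be preserved.

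First I would define $\mappingNodesTilde$ to agree with $\mappingNodes$ on $\reqNodes$, and pick $\mappingNodesTilde(\superroot) := \mappingNodes(r_0)$ for some fixed local root $r_0 \in \rootSet$ (e.g.\ the root $\RRExtractionOrderRoot$ of a root region extraction order, or any element of $\rootSet$ — the choice is immaterial). For the edge mapping, set $\mappingEdgesTilde$ to agree with $\mappingEdges$ on $\reqDAGOrientationEdges$. For each virtual edge $e' = (\superroot, r) \in E'$, I need a substrate path from $\mappingNodesTilde(\superroot) = \mappingNodes(r_0)$ to $\mappingNodes(r)$; I would take any simple path in $\substrateTopology$ between these two nodes (for $r = r_0$, the empty path). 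Here one must be slightly careful: the substrate is a directed graph, so such a directed path need not exist in general. I would handle this by noting that the VNEP instance is only interesting when such paths exist, or more cleanly, by recalling that the substrate capacity function assigns strictly positive capacity to every resource and that the standard assumption (as in the MCF formulation and Lemma~\ref{lem:local-connectivity-property}) is that the relevant substrate is sufficiently connected; failing that, one can restrict to the strongly connected component or simply invoke that $\reqDemand(e') = 0$ lets us map $e'$ to \emph{any} path, and if no directed path exists the definition of a valid mapping for this degenerate virtual edge can be taken to be vacuous. The cleanest route is probably to assume, as is implicit throughout, that the substrate admits a path between any pair of nodes that need to be connected; this is the natural feasibility precondition and I would state it as such.

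Next I would verify the two properties claimed in the statement. The projection identities $\restrict[\mappingNodesTilde][\reqNodes] = \mappingNodes$ and $\restrict[\mappingEdgesTilde][\reqDAGOrientationEdges] = \mappingEdges$ hold by construction, since $\mappingNodesTilde$ and $\mappingEdgesTilde$ were defined to literally restrict to $\mappingNodes$ and $\mappingEdges$ on $\reqNodes$ and $\reqDAGOrientationEdges$ respectively. For validity of $\mappingRequestTilde$: $\mappingNodesTilde$ is a function $\reqNodes \cup \{\superroot\} \to \substrateNodes$, and each original edge $(i,j) \in \reqDAGOrientationEdges$ is still mapped to a path from $\mappingNodesTilde(i)$ to $\mappingNodesTilde(j)$ because this was already true for $\mappingRequest$; each virtual edge $(\superroot, r)$ is mapped to a path from $\mappingNodesTilde(\superroot)$ to $\mappingNodesTilde(r)$ by the choice above. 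Hence $\mappingRequestTilde$ is valid. Finally, for the resource-allocation bound: by Definition~\ref{def:mapping-allocation-function}, the node allocation $\allocationFunction(\mappingRequestTilde, u, \nodeType)$ differs from $\allocationFunction(\mappingRequest, u, \nodeType)$ only through the contribution of $\superroot$, which is $\reqDemand(\superroot) = 0$, so node allocations are unchanged; similarly the edge allocation $\allocationFunction(\mappingRequestTilde, u, v)$ picks up only the contributions of the virtual edges along their chosen paths, each weighted by $\reqDemand(e') = 0$, so edge allocations are also unchanged. Therefore $\allocationFunction(\mappingRequestTilde, x) = \allocationFunction(\mappingRequest, x) \le \substrateCapacity(x)$ for every substrate resource $x$, i.e.\ the allocations of $\mappingNodesTilde$ remain bounded by the substrate capacities.

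I do not expect a serious obstacle here — the lemma is essentially a sanity check that the zero-demand gadget behaves as intended, and the proof is a short construction plus bookkeeping with Definition~\ref{def:mapping-allocation-function}. The only genuinely delicate point is the existence of the substrate paths for the virtual edges in a \emph{directed} substrate; I would dispatch this either by the standing connectivity assumption or by observing that the virtual edges can be mapped to any walk and the zero demand makes the choice inconsequential for capacities, so the construction goes through whenever the original mapping $\mappingRequest$ exists at all.
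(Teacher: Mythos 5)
Your construction is essentially identical to the paper's own proof: the paper likewise extends $\mappingRequest$ by mapping $\superroot$ to an arbitrary substrate node $u$ and each virtual edge $(\superroot,r)$ to a substrate path from $u$ to $\mappingNodes(r)$, then checks the projection identities, validity, and the zero-demand allocation bookkeeping exactly as you do. Your worry about the existence of directed substrate paths is legitimate but is also silently assumed by the paper, so it does not mark a divergence in approach.
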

\begin{proof}
Consider the following extension of the mapping $\mappingRequest$, where $E'$ is the set of virtual edges in the induced extraction order, $u \in \substrateNodes$ is an arbitrary substrate node and $\path[u][v] \subseteq \substrateEdges$ is a path connecting substrate nodes $u$ and $v$:
\begin{align}
\mappingNodesTilde &:= \mappingNodes \cup \{ \superroot \mapsto u \} \\
\mappingEdgesTilde &:= \mappingEdges \cup \{ e \mapsto \path[u][\mappingNodes(r)] | e := (\superroot, r) \in E' \} 
\end{align}
The properties $\restrict[\mappingNodesTilde][\reqNodes] = \mappingNodes$ and $\restrict[\mappingEdgesTilde][\reqDAGOrientationEdges] = \mappingEdges$ are satisfied by this definition of $\mappingRequestTilde$. The validity of the mapping for each node and edge in the general extraction order $\reqDAGOrientation$ follows from the validity of the original mapping $\mappingRequest$. Further, the virtual super-root node $\superroot$ is mapped to a single node, and each virtual edge is mapped to a path connecting the mapped locations of both of its end points.
Lastly, by definition of the induced extraction order, the virtual resources have zero resource demand and therefore, since the original mapping respected the resource capacities, all substrate capacities are respected by $\mappingRequestTilde$ as well.
\end{proof}
From Lemma~\ref{lemma:multiroot:existence-mapping-induced-order}, it follows that  whenever a mapping for the original request exists, the VNEP approximation algorithms discussed so far can find a mapping for the induced extraction order.
Conversely, we next show that any mapping of the induced extraction order can be converted to a valid mapping of the original multi-root extraction order. 
\begin{lemma}(Validity of Mappings of Induced Extraction Orders) \\
Given a generalized extraction order $\reqDAGOrientationDef$ and its induced extraction order $\reqInducedEO[\reqDAGOrientation]$, and given a valid mapping $\mappingRequestTilde = (\mappingNodesTilde, \mappingEdgesTilde)$ for the  induced extraction order $\reqInducedEO[\reqDAGOrientation]$, then the mapping $\mappingRequest$ defined by $\mappingRequest := (\restrict[\mappingNodesTilde][\reqNodes], \restrict[\mappingEdgesTilde][\reqDAGOrientationEdges])$ is a valid mapping for $\reqDAGOrientation$.
\end{lemma}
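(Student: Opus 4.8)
The plan is to verify directly that the restriction $\mappingRequest := (\restrict[\mappingNodesTilde][\reqNodes], \restrict[\mappingEdgesTilde][\reqDAGOrientationEdges])$ satisfies the two conditions of Definition~\ref{def:valid-mapping}, namely that it is a well-defined node-and-edge mapping with the correct signatures, and that the path assigned to each edge connects the images of that edge's endpoints. Since $\reqNodes \subseteq \reqNodes \cup \{\superroot\}$ and $\reqDAGOrientationEdges \subseteq \reqDAGOrientationEdges \cup E'$, the projections $\restrict[\mappingNodesTilde][\reqNodes]$ and $\restrict[\mappingEdgesTilde][\reqDAGOrientationEdges]$ are simply $\mappingNodesTilde$ and $\mappingEdgesTilde$ restricted to these subsets, so they are total functions $\reqNodes \to \substrateNodes$ and $\reqDAGOrientationEdges \to \substratePathSet$ respectively; this establishes the signature requirement.

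Next I would check the connectivity requirement. For every edge $e = (i, j) \in \reqDAGOrientationEdges$, we have $e \in \reqDAGOrientationEdges \cup E'$, so $\mappingEdgesTilde(e)$ is defined and, by validity of $\mappingRequestTilde$ for $\reqInducedEO[\reqDAGOrientation]$, it is a path in the substrate starting at $\mappingNodesTilde(i)$ and ending at $\mappingNodesTilde(j)$. Since $i, j \in \reqNodes$, we have $\mappingNodesTilde(i) = \mappingNodes(i)$ and $\mappingNodesTilde(j) = \mappingNodes(j)$, so $\mappingEdges(e) = \mappingEdgesTilde(e)$ is a path from $\mappingNodes(i)$ to $\mappingNodes(j)$, exactly as required. (If the request topology's edge $\EOEdgeToOriginal(e)$ is the reverse of $e$, the same argument applies after reversing the path, consistent with the handling of reversed edges elsewhere in the paper.) I would also note that the capacity/allocation claim is immediate: since $\mappingRequest$ uses a subset of the node and edge mapping decisions of $\mappingRequestTilde$, and the virtual node $\superroot$ and virtual edges in $E'$ have zero demand by Definition~\ref{def:multiroot:induced-extraction-order}, the allocation $\allocationFunction(\mappingRequest, x, y)$ equals $\allocationFunction(\mappingRequestTilde, x, y)$ for every resource $(x,y) \in \substrateResources$, hence is bounded by the substrate capacities whenever $\mappingRequestTilde$'s allocations are.

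This proof is essentially bookkeeping: the only mild subtlety is to make sure the domain of $\reqDAGOrientation$ (a generalized, possibly non-rooted extraction order) and of $\reqInducedEO[\reqDAGOrientation]$ line up correctly, i.e. that the induced extraction order genuinely only \emph{adds} the super-root and the edges $E'$ and leaves all original nodes and edges (with the same orientations) in place, which is exactly Definition~\ref{def:multiroot:induced-extraction-order}. I do not anticipate a real obstacle; the main thing to be careful about is phrasing the endpoint-matching condition in a way that also covers edges whose extraction-order orientation is reversed relative to the original request, but this is handled uniformly by the path-reversal convention already in use.
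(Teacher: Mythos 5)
Your proof is correct and follows essentially the same route as the paper's: both arguments observe that the induced extraction order only adds the super-root and the virtual edges, so restricting the mapping leaves every original node and edge mapped, validity of the endpoint-matching condition is inherited directly, and the allocations do not increase (indeed are equal, since the virtual resources have zero demand). Your version merely spells out the bookkeeping in somewhat more detail than the paper does.
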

\begin{proof}
By definition, the induced extraction order's node and edge sets are supersets of the generalized extraction order's node and edge sets, respectively. Therefore, every request node and edge is mapped by $\mappingRequestTilde$, and it follows by the definition of $\mappingRequest$ that these mappings are also included in $\mappingRequest$. Therefore, each node and edge in the original is mapped in $\mappingRequest$. 

The validity of $\mappingRequest$ then follows from the assumption that $\mappingRequestTilde$ is valid: Validity of the node and edge mappings follows directly, and the validity of the resource allocations follows, since no allocations are added in $\mappingRequest$ relative to $\mappingRequestTilde$.
\end{proof}

We now consider the size of the resulting LP formulation when using the induced extraction order. The root regions of the generalized extraction order intersect with other root regions in their boundary node sets $\rootRegionBoundary$. Consider a pairwise root region boundary $\rootRegionBoundaryPair[r][r']$ between two roots $r$ and $r'$. By definition of $\rootRegionBoundaryPair[r][r']$, each node $b \in \rootRegionBoundaryPair[r][r']$ is reachable from both $r$ and $r'$. Further, in the induced extraction order, both $r$ and $r'$ are reachable from the super-root node $\superroot$. Therefore, the root region boundary nodes now form the end nodes of confluences whose start node is $\superroot$. The virtual edges connecting $\superroot$ to the local root nodes lie on these confluences. The size of their label sets is therefore given by the size of the corresponding local root node's boundary region:
\begin{align}
|\labelsetEdge[(\superroot, r)]| = |\rootRegionBoundary[r]| && \forall r \in \rootSet
\end{align}

\subsection{Tree-Like Root Region Extraction Orders}\label{sec:multiroot:tree-like-EOs}

In the remainder of Section~\ref{sec:multiroot}, we introduce a more sophisticated approach which results in a smaller width than the use of the induced extraction order discussed in Section~\ref{sec:multiroot:induced-EO}. We propose an alternative decomposition algorithm, which processes each root region separately and assembles a mapping for the entire request by incrementally extending a partial mapping to each root region.

However, this approach is not applicable for \emph{all} generalized extraction orders. In Definition~\ref{def:root-region-extraction-order}, we introduced the notion of a root-region extraction order as a construct to describe the high-level structure of a multi-root extraction order. The more refined approach requires \emph{tree-like root region extraction orders}, i.e. root region extraction orders, where the underlying undirected graph is a tree.
\begin{definition} (Tree-Like Root Region Extraction Order) \label{def:multiroot:tree-like-RR-EO}\\
Given a generalized extraction order $\reqDAGOrientationDef$, we call its root region extraction order $\RRExtractionOrderDef$ tree-like if $\undirected[\RRExtractionOrder]$, i.e. the undirected version of $\RRExtractionOrder$, is a tree.
\end{definition}
Examples of extraction orders with tree-like and non-tree-like root region extraction orders are shown in Figures~\ref{fig:03:many-root-regions-tree-example} and~\ref{fig:03:many-root-regions-cycle-example}, respectively.

We introduce the notion of an \emph{incoming root region boundary}, assigning to each root region the union of the pairwise boundaries with all parent root regions according to some root region extraction order.
\begin{definition} (Incoming Root Region Boundary) \label{def:incoming-RR-boundary} \\
Let $\reqDAGOrientationDef$ be a generalized extraction order with root set $\rootSet$ and a root region extraction order $\RRExtractionOrderDef$. Given a local root node $r \in \rootSet$, we  define the incoming root region boundary as
\begin{align}
\rootRegionBoundaryIncoming[r] := \begin{cases}
\emptyset & \text{ if } r = \RRExtractionOrderRoot \\
\bigcup_{r' \in \inEdgesRR{r}} \rootRegionBoundaryPair[r'][r] & \text{otherwise.} 
\end{cases}
\end{align}
\end{definition}
Note that in a tree-like root region extraction order, each root region has at most a single in-neighbor in the root region extraction order, and only the root region associated with $\RRExtractionOrderRoot$ has none. The incoming root region boundary can therefore directly be identified with the pairwise boundary between a root region and its parent root region according to the root region extraction order.

\begin{remark} (Restriction of Edges between Boundary Nodes) \label{remark:multiroot-boundary-edges-restriction} \\
Beyond the restriction of only allowing tree-like root region extraction orders, we will further assume that there are no edges between any of the boundary nodes, i.e. for any pair of boundary nodes $b_1, b_2 \in \rootRegionBoundaryPair[r][r']$, it holds that $(b_1, b_2) \neq \reqDAGOrientationEdges$. This restriction can be relaxed to allow for any set of edges that can be extended to a chain, i.e. only forbidding orientations containing edges such as $\{(b_1, b_2), (b_3, b_2)\} \subset \rootRegion[r]$. Due to space constraints, developing such an extension in detail is outside of the scope of this thesis.
We therefore only consider extraction orders containing no edges within the root regions' boundary sets. Where applicable, we describe how the approach might be extended to allow for some edges between boundary nodes.
\end{remark}

\begin{remark} (Generalization to Non-Tree-Like Root Region Extraction Orders) \label{remark:multiroot:generalizing-non-treelike}\\
While Algorithm~\ref{alg:decompositionAlgorithm-multiroot} is only applicable to extraction orders with tree-like root region extraction orders, it can be combined with the approach using induced extraction orders described in Section~\ref{sec:multiroot:induced-EO} by repeatedly proceeding as follows:
\begin{enumerate}
\item Identify a confluence in the root region extraction order.
\item Define an induced extraction order for all root regions lying on the confluence.
\item Substitute this induced extraction order for all contained root regions.
\end{enumerate}
In the following, we assume that these steps have been taken and that the resulting root region extraction order is tree-like.
\end{remark}
An example of the generalization approach described in Remark~\ref{remark:multiroot:generalizing-non-treelike} is shown in Figure~\ref{fig:03:many-root-regions-hybrid-example}, where an induced extraction order is substituted for the root regions which lie on a cycle in the root region extraction order.

\begin{figure}[tbph]
	\centering
	\includegraphics[width=0.7\textwidth]{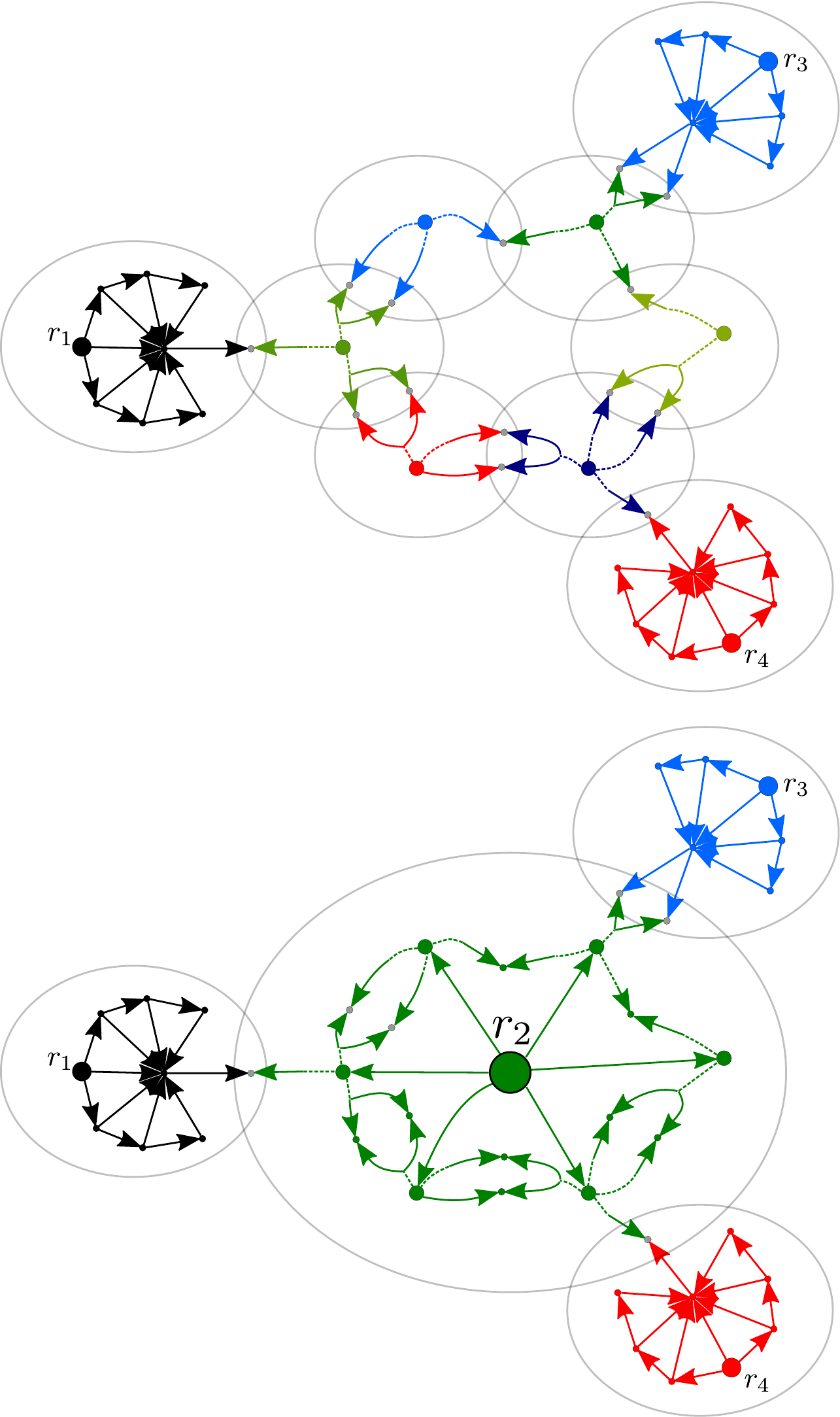}
	\caption{
A generalized extraction order illustrating the procedure described in Remark~\ref{remark:multiroot:generalizing-non-treelike}. Dashed edges indicate the existence of a path, i.e. the root regions may contain additional nodes and edges necessitating the introduction of the local root nodes. \newline
Top: The original extraction containing a root region cycle. \newline
Bottom: Substituting the induced extraction order (see Definition~\ref{def:multiroot:induced-extraction-order}) for the cyclical root regions as described in Remark~\ref{remark:multiroot:generalizing-non-treelike} results in a tree-like root region extraction order. 
	}
	\label{fig:03:many-root-regions-hybrid-example}
\end{figure}
%\afterpage{\clearpage}

\subsection{Decomposition of Individual Root Regions} \label{sec:multiroot:rooted-subgraphs}
In this subsection, we validate the approach of performing the decomposition for each root region separately. Given some local root node $r\in \rootSet$, the corresponding root region subgraph $\rootRegionSubgraph[r]$ is by definition a rooted, acyclic graph. The root region can therefore be interpreted as a rooted extraction order according to Definition~\ref{def:extraction-order} for the corresponding subgraph of the request. We will first extend the definitions of decomposable edge label assignment (see Definition~\ref{def:decomposable-edge-labels}) and extraction label set orderings (see Definition~\ref{def:extractionLabelSetOrdering}) to the multi-root setting. Secondly, in Lemma~\ref{lemma:decomposition-rooted-subgraphs}, we give the main result of this subsection, namely the decomposability of rooted subgraphs.

We first consider the edge label assignment. At the boundary between two root regions, the current definition of decomposable label assignments from Definition~\ref{def:decomposable-edge-labels}, which is only intended for rooted extraction orders, breaks down. At a boundary node, incoming edges may originate in different root regions, and these edges may have different label assignments. We therefore introduce the notion of \emph{multi-root decomposable edge label assignments}, generalizing the original concept. We require that the usual definition of a decomposable label assignment holds within each root region, but allow for a deviation at the root region boundary nodes.
\begin{definition} (Multi-Root Decomposable Edge Label Assignment) \label{def:03:mr-decomp-label-assignment} \\
Let $\reqDAGOrientationDef$ be a generalized extraction order with root set $\rootSet$.
A label assignment $\labelsetsEdges$ for the edges in $\reqDAGOrientation$ is multi-root decomposable if for each local root node $r \in \rootSet$, the label assignment within the root region subgraph $\rootRegionSubgraph[r]$ is a decomposable label assignment according to Definition~\ref{def:decomposable-edge-labels}.
\end{definition}

Note that the use of a multi-root decomposable edge label assignment by itself does not guarantee that a decomposition of multiple intersecting root regions is possible in a consistent way. Rather, we introduce Definition~\ref{def:03:mr-decomp-label-assignment} as a means to describe a label assignment extending over multiple root regions. Figure~\ref{fig:03:mr-decomposable-label-assignment-example} shows a simple example of a multi-root extraction order highlighting the key difference between a multi-root decomposable label assignment and a decomposable label assignment. Edges belonging to different root regions may be differently labeled, and the incoming label set of a node on the boundary between two root regions is no longer unique. Note that, as required, the multi-root decomposable label assignment still defines a decomposable label assignment according to Definition~\ref{def:decomposable-edge-labels} \emph{within each root region}.

\begin{figure}[tbh]
	\centering
	\includegraphics[width=0.2\textwidth]{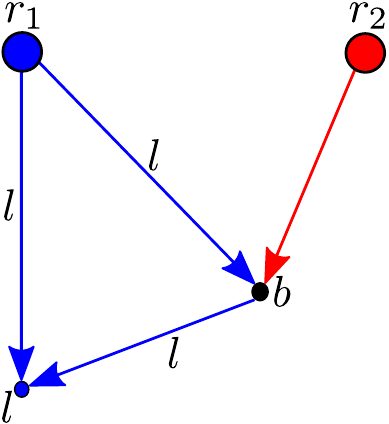}
	\caption{
		An example showing how the multi-root decomposable label assignment introduced in Definition~\ref{def:03:mr-decomp-label-assignment} differs from the original notion of decomposable edge labels from Definition~\ref{def:decomposable-edge-labels}. The extraction order consists of two root regions, rooted in $r_1$ and $r_2$. Note that the incoming edges to node $b$ do not have the same edge label set: The edge $(r_1, b)$ has the label set $\labelsetEdge[(r_1, b)] = \{l\}$, but $(r_2, b)$ is labeled $\labelsetEdge[(r_2, b)] = \emptyset$.
	}
	\label{fig:03:mr-decomposable-label-assignment-example}
\end{figure}

In the next definition, we similarly generalize the notion of an extraction label set ordering (see Definition~\ref{def:extractionLabelSetOrdering}) by introducing the \emph{multi-root extraction label set ordering}, which assigns an extraction label set ordering to each root region subgraph.
\begin{definition} (Multi-root Extraction Label Set Ordering) \label{def:MR-extraction-label-set-ordering}\\
Given a generalized extraction order $\reqDAGOrientationDef$ with a multi-root decomposable edge label assignment $\labelsetsEdges$, we define the multi-root extraction label set ordering $\labelsetOrderSetMR$ by assigning an extraction label set ordering $\labelsetOrderSetRegion[r]$ to each root region of $\reqDAGOrientation$. That is, given the root set $\rootSet = \{r_1, ..., r_n\}$, we define
\begin{align}
\labelsetOrderSetMR := \{ \labelsetOrderSetRegion[r_1], \ldots, \labelsetOrderSetRegion[r_n] \},
\end{align}
where each root node $r_i \in \rootSet$ is assigned an extraction label set ordering $\labelsetOrderSetRegion[r_i] \in \labelsetOrderSetMR$ for the corresponding root region subgraph $\rootRegionSubgraph[r_i]$.
\end{definition}

We now show that Algorithm~\ref{alg:decompositionAlgorithm-RIP} can derive a convex combination of mappings for a single root region of the extraction order, when given a LP solution for the full, generalized extraction order.

\begin{lemma} (Decomposition of Individual Root Regions) \label{lemma:decomposition-rooted-subgraphs}\\ 
Given a VNEP instance $\VNEPInstance$, a generalized extraction order $\reqDAGOrientationDef$, a local root node $r \in \rootSet$ and a solution $\lpvars$ of Linear Program~\ref{LP:RunningIntersectionProperty} constructed over $\reqDAGOrientation$ with a multi-root decomposable edge label assignment $\labelsetsEdges$ and a multi-root extraction label set ordering $\labelsetOrderSetMR$, Algorithm~\ref{alg:decompositionAlgorithm-RIP} may be used to obtain a complete convex combination of valid mappings for the root region subgraph $\rootRegionSubgraphDef[r]$, when given the input  $((\reqTopologyRootRegionSubgraph[r], \substrateTopology), \rootRegionSubgraph[r], \labelsetOrderSetRegion[r], \lpvars)$. Additionally, the resource allocations of the resulting convex combination of mappings are bounded by the substrate capacities.
\end{lemma}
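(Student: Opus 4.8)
The goal is to show that running Algorithm~\ref{alg:decompositionAlgorithm-RIP} on the input $((\reqTopologyRootRegionSubgraph[r], \substrateTopology), \rootRegionSubgraph[r], \labelsetOrderSetRegion[r], \lpvars)$ produces a complete convex combination of valid mappings for $\rootRegionSubgraphDef[r]$ with resource allocations bounded by the substrate capacities. The natural strategy is to \emph{reduce to Theorem~\ref{thm:decomp_of_rip_orderings}}, which establishes exactly this conclusion whenever the inputs are a VNEP instance, a rooted extraction order, an extraction label set ordering, a decomposable edge label assignment, and a valid LP solution. So the plan is to check that each of these hypotheses is met when we restrict attention to the root region $\rootRegionSubgraph[r]$.

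First I would observe that $\rootRegionSubgraph[r] = (\rootRegionNodes[r], \rootRegion[r], r)$ is, by the definition of root regions, a rooted acyclic graph: every edge in $\rootRegion[r]$ is reachable from $r$, and since $\reqDAGOrientation$ is acyclic so is any subgraph. Hence $\rootRegionSubgraph[r]$ is a legitimate rooted extraction order for the request subgraph $\reqTopologyRootRegionSubgraph[r]$ in the sense of Definition~\ref{def:extraction-order}. Next, by Definition~\ref{def:03:mr-decomp-label-assignment}, the restriction of $\labelsetsEdges$ to $\rootRegion[r]$ is a decomposable edge label assignment for $\rootRegionSubgraph[r]$ in the sense of Definition~\ref{def:decomposable-edge-labels}; this is precisely what "multi-root decomposable" asserts region by region. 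Similarly, by Definition~\ref{def:MR-extraction-label-set-ordering}, the component $\labelsetOrderSetRegion[r]$ of the multi-root extraction label set ordering is an extraction label set ordering for $\rootRegionSubgraph[r]$. The one point requiring a small argument is that $\lpvars$, which was constructed as a solution of Linear Program~\ref{LP:RunningIntersectionProperty} over the \emph{full} generalized extraction order $\reqDAGOrientation$, restricts to a valid solution of the analogous LP built over $\rootRegionSubgraph[r]$. Here I would argue that every variable and every constraint that Linear Program~\ref{LP:RunningIntersectionProperty} generates for $\rootRegionSubgraph[r]$ is a variable/constraint that it already generates for $\reqDAGOrientation$ — the subformulation variables $\subLP[(\vec{y},\vec{z},\vec{a})][e,\mappingChar_e]$ for $e \in \rootRegion[r]$, the global node variables $y^u_i$ for $i \in \rootRegionNodes[r]$, and the $\gamma$-variables $\gamma^u_{i,a,\alpha}$ for the orderings $\labelsetOrder_i$ with $i \in \rootRegionNodes[r]$ — together with the fact that, by the root-region property that all out-edges of any node lie in a single root region, none of these constraints reference variables outside $\rootRegion[r]$. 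Therefore the projection of $\lpvars$ onto these coordinates satisfies every constraint of the LP over $\rootRegionSubgraph[r]$.

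With these four checks in place, Theorem~\ref{thm:decomp_of_rip_orderings} applies verbatim to the input $((\reqTopologyRootRegionSubgraph[r], \substrateTopology), \rootRegionSubgraph[r], \labelsetOrderSetRegion[r], \lpvars|_{\rootRegionSubgraph[r]})$, yielding a complete convex combination $\PotEmbeddings$ of valid mappings of $\reqTopologyRootRegionSubgraph[r]$ with $\sum_{(\prob,\mappingRequestIteration)\in\PotEmbeddings}\prob\cdot\allocationFunction(\mappingRequestIteration,x,y)\leq a^{x,y}$ for all $(x,y)\in\substrateResources$. I would also note that Algorithm~\ref{alg:decompositionAlgorithm-RIP}, when handed $\lpvars$ in full but told to traverse only $\rootRegionSubgraph[r]$ via $\labelsetOrderSetRegion[r]$, only ever inspects and modifies the coordinates that belong to the region, so running it on $\lpvars$ versus on its projection is indistinguishable; this justifies the statement's slightly loose phrasing that the algorithm is "given the input $\lpvars$".

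\textbf{Main obstacle.} The genuinely delicate part is the restriction argument for the LP solution: one must verify that the \emph{boundary nodes} of $\rootRegionSubgraph[r]$ do not cause trouble. At such a node $i$, the full extraction order may have in-edges lying in other root regions, so the incoming label set $\labelsetIncoming$ and the ordering $\labelsetOrder_i$ used for the region could differ from those seen globally; but since $\labelsetOrderSetRegion[r]$ is by definition the \emph{region-local} ordering and Definition~\ref{def:03:mr-decomp-label-assignment} guarantees decomposability holds within the region, I expect this to go through — the key is that Constraint~(\ref{eq:lp:novel-rip:in-edges-to-gamma}) for the region only involves in-edges $e\in\inEdgesExtractionOrder{i}$ that lie in $\rootRegion[r]$, and for a local root $r$ of a region the relevant node bag variable is simply indexed $1$ with the region's $\labelsetIncoming$. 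Carefully matching the variable indexing between the global and restricted LPs, especially around these boundary nodes and the local root $r$ itself, is where the proof needs the most attention.
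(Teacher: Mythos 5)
Your proposal follows essentially the same route as the paper's proof: verify that the root region subgraph is a rooted extraction order whose restricted label assignment and region-local label set ordering satisfy the hypotheses of Theorem~\ref{thm:decomp_of_rip_orderings}, observe that the full LP's variables form a superset of those of the region-only LP, and argue that the algorithm's traversal never inspects or modifies variables outside the region (the paper additionally singles out the allocation variables $\vec{a}$ as the one exception, noting they do not influence the algorithm's control flow). Your flagging of the boundary-node indexing as the delicate point is consistent with, and if anything slightly more explicit than, the paper's treatment.
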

\begin{proof}
By definition, $\rootRegionSubgraph[r]$ is a rooted subgraph of $\reqDAGOrientation$. Since $\labelsetsEdges$ is a multi-root decomposable edge label assignment, the edge label assignment obtained by restricting $\labelsetsEdges$ to the root region's edges $\rootRegion[r]$ is by Definition~\ref{def:03:mr-decomp-label-assignment} a decomposable label assignment according to Definition~\ref{def:decomposable-edge-labels}. Further, $\labelsetOrderSetRegion[r]$ is an extraction label set ordering for $\rootRegionSubgraph[r]$ by Definition~\ref{def:MR-extraction-label-set-ordering}. It follows that the variable set $\lpvars$ for the full extraction order is a superset of the variables which would be present in the LP formulation constructed only for the root region $\rootRegionSubgraph[r]$ with edge label assignment $\labelsetsEdges$ and extraction label set ordering $\labelsetOrderSetRegion[r]$.
Now consider an execution of Algorithm~\ref{alg:decompositionAlgorithm-RIP}, where instead of the full LP solution for the entire multi-root extraction order, only the subset of LP variables which are related to the root region subgraph are passed to the algorithm. From the correctness proof of Algorithm~\ref{alg:decompositionAlgorithm-RIP} in Theorem~\ref{thm:decomp_of_rip_orderings}, it follows that the decomposition succeeds.

We have now shown that a subset of the LP variables can be converted to a convex combination of mappings. We next show that the inclusion of the variables related to other root regions does not impact the execution of the decomposition algorithm.

The execution of Algorithm~\ref{alg:decompositionAlgorithm-RIP} mainly consists of the while-loop starting in Line~\ref{algline:rip-decomp:begin-while-flow}. As discussed in the proof of Theorem~\ref{thm:decomp_of_rip_orderings}, the value of the variable $x$ matches the sum of the node mapping variables $y^u_r$ for the root node, where the sum is taken over each possible mapping location for $r$. It follows that the loop is executed until the sum of root node mapping variables is zero, regardless of any additional variables. 
Consider a single iteration of this outer while-loop. After selecting a mapping for the root node $\superroot$, i.e. the root region's local root node $r$, the while-loop starting in Line~\ref{algline:rip-decomp:begin-while-q} performs a top-down traversal of the rooted extraction order $\reqExtractionOrder$, i.e. the root region subgraph $\rootRegionSubgraph[r]$. Within the body of the loop, only nodes that lie in the root region are added to the queue in Line~\ref{algline:rip-decomp:add-j-to-q}, and out-edges of a node are only included in the iteration starting in Line~\ref{algline:rip-decomp:edge-in-label-set-loop}, if they are contained in the root region's edge set $\rootRegion[r]$. 
Therefore, in its traversal of the extraction order, the decomposition algorithm only accesses LP variables that are directly related to the mapping of nodes and edges contained in the root region.

Finally, the minimal flow value $\prob$ is subtracted from a set of variables in Lines~\ref{algline:rip-decomp:adapt-load-one} and following. These variables are explicitly selected in Line~\ref{algline:rip-decomp:compute-Vk}, such that only variables related to the root region are included. The allocation variables $\vec{a}$ are the only exception, since they also account for resource allocations for the other root regions' nodes and edges. However, the allocation variables do not affect the execution of the decomposition algorithm in any way. Indeed, the resource allocations of the resulting mapping for the root region $\rootRegionSubgraph[r]$ are bounded by the resource allocations of the full extraction order, which are reflected in the allocation variables' initial values. Therefore, the resource allocations from the resulting convex combination of mappings for the single root region do not exceed any substrate capacities.

It follows that the algorithm extracts mappings for all nodes and edges contained in the root region and that the variables related to other root regions are ignored by the algorithm, concluding the proof.
\end{proof}

This result guarantees that for a single root region, a convex combination of mappings can be extracted. However, note that after executing Algorithm~\ref{alg:decompositionAlgorithm-RIP} on one root region, decomposition of any intersecting root region is impossible, since the node mapping variables of the boundary nodes are reduced to zero during the decomposition of the first root region in Line~\ref{algline:rip-decomp:adapt-variables-one} of Algorithm~\ref{alg:decompositionAlgorithm-RIP}. This issue will be resolved in the decomposition algorithm presented in Section~\ref{sec:multiroot:decomposition-alg} by resetting the variables of the LP to their original values for each decomposition of a root region subgraph.

\subsection{Local Subgraph Extensibility} \label{sec:multiroot:local-subgraph-extensibility}

We have now shown that the existing decomposition algorithm can extract a solution for a single root region of a multi-root extraction order. However, in order to extend a valid mapping for one root region with the valid mapping of a second root region, we must ensure that the boundary nodes contained in both root regions are mapped consistently in both mappings. In particular, to combine two mappings of two intersecting root regions, we require that their boundary nodes are mapped to the same substrate nodes in both mappings. That is, for two local root nodes $r_1, r_2 \in \rootSet$, and two mappings $m_1$ and $m_2$ for $\rootRegionSubgraph[r_1]$ and $\rootRegionSubgraph[r_2]$, respectively, we require $\restrict[m_1][\rootRegionBoundaryPair[r_1][r_2]] = \restrict[m_2][\rootRegionBoundaryPair[r_1][r_2]]$ in order to combine $m_1$ and $m_2$. We call this property \emph{local subgraph extensibility}.

In this section, we discuss how such consistent node mappings of root region boundaries can be guaranteed. One mechanism to enforce consistent node mappings is given by the local connectivity property (see Lemma~\ref{lem:local-connectivity-property}). However, as discussed in Section~\ref{sec:02:limitations-of-mcf} in the context of the base algorithm, the local connectivity property is insufficient when consistency of node mappings must be enforced \emph{non-locally}. In Section~\ref{sec:02:the-base-vnep-alg}, the notion of labeling edges and duplicating the LP subformulations was introduced in the framework of the base algorithm in order to enforce consistent mappings of confluence start and end nodes non-locally.

In this section, we will introduce a method of extending the edge label assignment in order to enforce consistent boundary node mappings. We will first briefly describe the construction, discussing root region boundaries with one, two, and $n$ nodes. A proof that the approach guarantees existence of compatible mappings in both root regions is the main result of this subsection, Theorem~\ref{thm:local-subgraph-extensibility}. 
The extension of the edge label assignment for two and $n$ boundary nodes is demonstrated in Figure~\ref{fig:03:refined-construction-2-regions-example}.

\begin{figure}[tbph]
	\centering
	\includegraphics[width=1.0\textwidth]{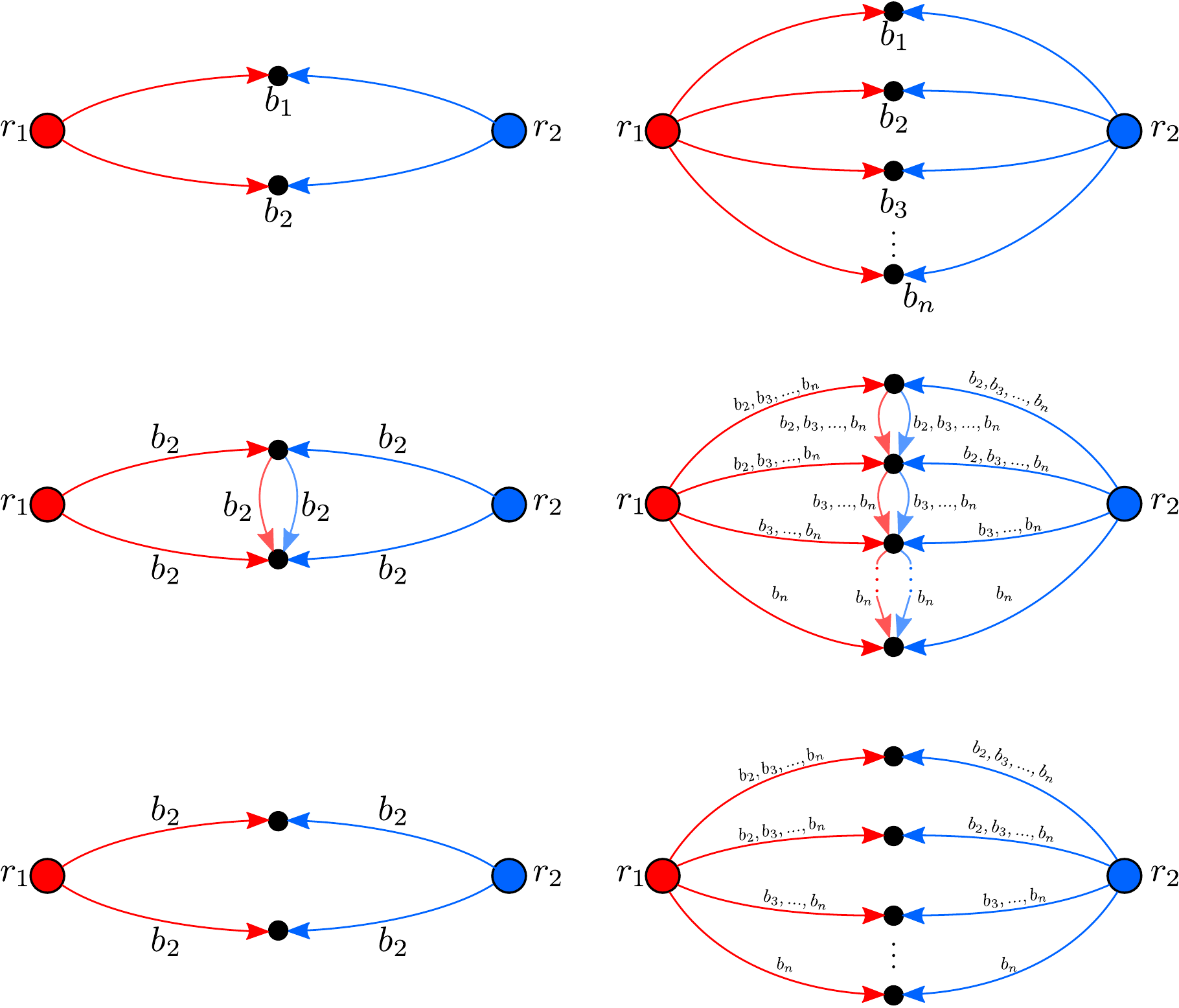}
	\caption{
The multi-root confluence edge label assignment for extraction orders with two root regions. Virtual edges are added in a chain connecting the boundary nodes between the root regions. The label assignment is generated for each extended root region, and then applied to the original extraction order. \newline
Left column: Two boundary nodes. Right column: Arbitrary number of boundary nodes. \newline
Top row: The generalized extraction orders, each consisting of two root regions, shown red and blue. \newline 
Center row: Virtual edges between boundary nodes are added to each root region. The virtual edges introduce new confluences involving the boundary nodes. \newline 
Bottom row: The virtual edges are removed, and we obtain the original extraction order with the multi-root confluence edge label assignment.
	}
	\label{fig:03:refined-construction-2-regions-example}
\end{figure}

\begin{remark} \label{remark:why-only-pairwise-root-regions}
For simplicity of notation, we will only state the results of this subsection for the case where the extraction order contains two root regions, such that $\rootSet = \{r_1, r_2\}$. We will usually assume that a mapping for the first root region $\rootRegionSubgraph[r_1]$ is given and that it has to be extended in a consistent way to a mapping for both root regions. Due to the requirement of tree-like root region extraction orders, this procedure of combining pairs root region mappings can be iteratively extended to the entire extraction order, since each root region intersects with exactly one parent root region according to the root region extraction order.
\end{remark}

\paragraph{One Boundary Node}
We first consider the simple case where the root regions share a single boundary node $b$. The extraction order shown in Figure~\ref{fig:03:half_wheels_connected_good_root_simple_superroot}, with two half-wheel graphs connected by a single edge, is an example of this case. There is only a single node, namely $b$, for which the mappings of the two root region subgraphs must agree.  Since this constraint can be enforced locally, flow conservation ensures that both mappings are consistent and no additional labels are required. Note that using the induced extraction order would have introduced the additional edge label $b$, resulting in a larger LP-formulation. 

\paragraph{Two Boundary Nodes}
Next, consider the case with are two boundary nodes, i.e. let $\rootRegionBoundaryPair = \{b_1, b_2\}$. This case is shown in the left column of Figure~\ref{fig:03:refined-construction-2-regions-example}.
We now extend the extraction order's edge label assignment as follows: We introduce an artificial edge $(b_1, b_2)$ connecting the boundary nodes. This edge must lie on a confluence ending in $b_2$ in both root regions, since $b_1$ and $b_2$ are by definition reachable both from $r_1$ and $r_2$. We now apply the confluence edge label assignment to each of the root region subgraphs $\rootRegionSubgraph[r_1]$ and $\rootRegionSubgraph[r_2]$, where $(b_1, b_2)$ is included in both cases. After calculating the edge label assignment, the virtual edge is removed.

Let $i_1 \in \rootRegionNodes[r_1]$ and $i_2 \in \rootRegionNodes[r_2]$ be the common ancestors of $b_1$ and $b_2$, which are farthest from the corresponding root node. The resulting edge label assignment adds the label $b_2$ to every edge between $i_1$ and $b_2$, and $i_2$ and $b_2$. Note that the induced extraction order would have introduced the label $b_1$ \emph{and} $b_2$, and additionally would have propagated both labels to the root nodes $r_1$ and $r_2$. Therefore, as in the case of one boundary node, the edge label assignment adds one fewer label, and additionally extends the label assignment of fewer edges.

\paragraph{Arbitrary Boundary Size}
This approach can be extended to an arbitrary number $n \geq 2$ of boundary nodes by connecting the boundary nodes to a chain, i.e. for some ordering $(b_1, b_2, ... b_n)$ we introduce artificial edges $E' = \{(b_1, b_2), ... (b_{n-1}, b_n) \}$. This case is shown in the right column of Figure~\ref{fig:03:refined-construction-2-regions-example}.

Again, each boundary node is reachable from both root nodes. Therefore, each of the $n-1$ edges in the chain induces a new confluence, ending in the nodes $b_2, ... b_n$. Therefore, $n-1$ new labels are introduced, and the label $b_i$ with $2 \leq i \leq n$ is in both root regions propagated to the common ancestor of $b_1$ and $b_i$, which is farthest from the root node. Again, the induced extraction order would have introduced $n$ labels, and would have assigned them to \emph{every} edge between the root nodes and the corresponding boundary node.

We now introduce the \emph{extended root regions}, to define the extension of root regions with artificial edges as described above.
\begin{definition}(Extended Root Region) \label{def:extended-extraction-order}\\
Let $\reqDAGOrientation$ be an extraction order with root set $\rootSet$. Let $r \in \rootSet$ be a local root node with root region graph $\rootRegionSubgraphDef[r]$. 

For each local root node $r'$, whose root region graph $\rootRegionSubgraph[r']$ intersects with $\rootRegionSubgraph[r']$ in a set of boundary nodes $\rootRegionBoundaryPair[r][r'] = \{b_1, ..., b_n\}$, we define some ordering $\rootRegionBoundaryPairOrdering[r][r']$ of $\rootRegionBoundaryPair[r][r']$, which we chose w.l.o.g. as $\rootRegionBoundaryPairOrdering[r][r'] = (b_1, ..., b_n)$. 
We then define the extended root region $\rootRegionSubgraphExtended[r]$ as the directed graph obtained by adding a simple path connecting all boundary nodes to the extraction order's edge set according to this ordering, i.e.:
\begin{align}
\rootRegionSubgraphExtended[r] := (\rootRegionNodes[r], \rootRegion[r] \cup \{(b_1, b_2), ... (b_{n-1}, b_n) \}).
\end{align}
Lastly, we require that when constructing the extended root region for $r'$, $\rootRegionSubgraphExtended[r']$, the same ordering of the boundary nodes is used.
\end{definition}

Next, we define the \emph{multi-root confluence edge label assignment}.
\begin{definition}(Multi-Root Confluence Edge Label Assignment) \label{def:multiroot:extended-labels}\\  
Let $\reqDAGOrientationDef$ be an extraction order with local root set $\rootSet$. For each local root node $r \in \rootSet$, the confluence edge label assignment $\labelsetsEdgesRootRegion[r]$ of the extended root region graph $\rootRegionSubgraphExtended[r]$ is determined. The root region is a subgraph of the extended root region graph. Therefore, for each edge $e \in \rootRegion[r]$ in the root region, $\labelsetsEdgesRootRegion[r]$ contains a label set $\labelsetEdge[e]$. We then define the multi-root confluence edge label assignment by assigning to each edge this label set.
\end{definition}
Since the root regions partition all edges of the extraction order, this definition produces a well-defined label assignment for each edge.
We will now show that the multi-root confluence edge label assignment meets the definition of a multi-root decomposable edge label assignment given in Definition~\ref{def:03:mr-decomp-label-assignment}.
\begin{lemma}(Multi-root Confluence Edge Labels are Multi-root Decomposable) \label{corollary:ext-edge-labels-are-mr-decomposable} \\
The multi-root confluence edge label assignment described in Definition~\ref{def:multiroot:extended-labels} is multi-root decomposable.
\end{lemma}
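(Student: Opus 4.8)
The plan is to verify, for each of the five defining properties of a decomposable edge label assignment (Definition~\ref{def:decomposable-edge-labels}), that the property holds \emph{within each root region subgraph} $\rootRegionSubgraph[r]$ when restricting the multi-root confluence edge label assignment to $\rootRegion[r]$. By Definition~\ref{def:03:mr-decomp-label-assignment}, this is exactly what it means for the label assignment to be multi-root decomposable, so once the per-region claim is established, the lemma follows immediately.

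The key observation driving the proof is that by Definition~\ref{def:multiroot:extended-labels}, the label set $\labelsetEdge[e]$ assigned to an edge $e \in \rootRegion[r]$ is precisely the label set $e$ receives under the \emph{ordinary} confluence edge label assignment of the extended root region graph $\rootRegionSubgraphExtended[r]$, which is a rooted acyclic graph (it is $\rootRegionSubgraph[r]$ with a chain of virtual edges added among the boundary nodes, and by Remark~\ref{remark:multiroot-boundary-edges-restriction} there are no pre-existing edges among boundary nodes, so acyclicity and rootedness in $r$ are preserved). Hence I would invoke Lemma~\ref{lemma:confluence-labels-are-decomp}, which states that the confluence edge label assignment of any rooted extraction order is decomposable, to conclude that $\labelsetsEdgesRootRegion[r]$ is a decomposable edge label assignment \emph{for the extended root region} $\rootRegionSubgraphExtended[r]$. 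The remaining work is to transfer each of the five properties from $\rootRegionSubgraphExtended[r]$ down to the subgraph $\rootRegionSubgraph[r]$ obtained by deleting the virtual edges.

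For each property I would argue as follows. Property~\ref{def:decomposability-extended-edge-labels:confluence-label-extension} (extending the confluence labels of $\rootRegionSubgraph[r]$): since $\rootRegionSubgraph[r]$ is a subgraph of $\rootRegionSubgraphExtended[r]$, every confluence of $\rootRegionSubgraph[r]$ is a confluence of $\rootRegionSubgraphExtended[r]$, so the confluence labels of $\rootRegionSubgraph[r]$ are contained in those inherited from $\rootRegionSubgraphExtended[r]$. Properties~\ref{def:decomposability-extended-edge-labels:rooted-labelinducedgraph} and~\ref{def:decomposability-extended-edge-labels:labelinducedgraph-contains-label-node} concern, for a label node $k$, the label-induced subgraph $\reqEOLabelSubgraph[k]$ taken \emph{within $\rootRegionSubgraph[r]$}; here the subtle point is that deleting virtual edges can only remove edges from the label-induced subgraph, so I must check that the resulting subgraph remains connected, rooted, and contains $k$. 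Since a label node $k$ introduced by the extended assignment is the end node of some confluence in $\rootRegionSubgraphExtended[r]$, and (by Property~\ref{def:decomposability-extended-edge-labels:labels-end-in-label-node} applied to the extended graph) no out-edge of $k$ carries label $k$, the label-induced subgraph's edges all lie on paths ending at $k$ from the unique root of the extended label-induced subgraph; I would show this root lies in $\rootRegionSubgraph[r]$ and that the virtual chain edges, being restricted to boundary nodes, are never the \emph{only} route from that root to $k$ within the region — this needs the acyclicity of $\reqDAGOrientation$ and the structure of the chain $E'$. Property~\ref{def:decomposability-extended-edge-labels:in-edges-same-labels} (equality of incoming label sets) for a node $i \in \rootRegionNodes[r]$: every in-edge of $i$ \emph{that lies in $\rootRegion[r]$} is also an in-edge of $i$ in $\rootRegionSubgraphExtended[r]$ (virtual edges only add in-edges at boundary nodes from the chain, and Remark~\ref{remark:multiroot-boundary-edges-restriction} ensures these are between boundary nodes only), so the equality inherited from the extended graph restricts correctly. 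Property~\ref{def:decomposability-extended-edge-labels:labels-end-in-label-node} (no out-edge of $k$ labelled with $k$) transfers directly since $\outEdgesExtractionOrder[r]{k} \subseteq \outEdgesExtractionOrder[\text{ext}]{k}$.

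The main obstacle I anticipate is Property~\ref{def:decomposability-extended-edge-labels:rooted-labelinducedgraph}: the label-induced subgraph for a boundary label node $b_i$ could a priori become disconnected inside $\rootRegionSubgraph[r]$ once the virtual chain edge $(b_{i-1}, b_i)$ is deleted, if that virtual edge were the unique link joining two otherwise separate branches labelled with $b_i$. I would argue this cannot happen: because $b_i$ and $b_{i-1}$ are both boundary nodes reachable from $r$, and no edge between boundary nodes exists in $\reqDAGOrientation$, the common ancestor $i_r$ of $\{b_1,\dots,b_i\}$ farthest from $r$ lies strictly above the boundary, and every edge of $\rootRegion[r]$ labelled with $b_i$ lies on a confluence in $\rootRegionSubgraphExtended[r]$ ending at $b_i$ whose non-virtual portion already forms a connected rooted subgraph of $\rootRegionSubgraph[r]$ rooted at $i_r$ — deleting the single virtual chain edge, which is not on any such confluence's branch inside the region (it is itself one branch of the confluence, the other branch being the genuine path to $b_i$), leaves the label-induced subgraph connected and rooted at $i_r$. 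Making this precise, tracking exactly which edges the multi-root confluence assignment labels with $b_i$ and confirming they coincide (within $\rootRegion[r]$) with the edges on the genuine root-region path from $i_r$ to $b_i$, is where the bulk of the care is needed; everything else is routine subgraph bookkeeping.
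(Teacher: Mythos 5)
Your proposal takes essentially the same route as the paper's proof: it verifies the five properties of Definition~\ref{def:decomposable-edge-labels} within each root region by exploiting that the assignment is the ordinary (decomposable) confluence labeling of the extended root region and then transferring each property down after deleting the virtual chain edges. Notably, your sketch for Property~\ref{def:decomposability-extended-edge-labels:rooted-labelinducedgraph} (rootedness of the label-induced subgraph after the virtual edges are removed) is actually more developed than the paper's own argument for that property, which breaks off after stating the contradiction hypothesis; your observation that the genuine (non-virtual) branch of each induced confluence already forms a rooted labeled path inside $\rootRegionSubgraph[r]$ is the missing ingredient, though you should account for the fact that \emph{several} virtual edges $(b_j,b_{j+1})$ with $j<i$ may carry the label $b_i$, not just one.
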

\begin{proof}
We show that the extended edge label assignment is multi-root decomposable by verifying that it satisfies the defining properties of a decomposable edge label assignment within each root region subgraph according to Definition~\ref{def:decomposable-edge-labels}.

The first property requires that the label assignment is a superset of the confluence edge label assignment for the root region. In the confluence edge label assignment, an edge is labeled with some node $k$ if and only if it lies on some confluence ending in $k$. Since the root region is a subgraph of the extended root region, any such confluence in the root region must also be present in the extended root region. Therefore, these confluences are identified and labeled when applying the confluence label assignment to the extended extraction order.

The second property requires that the label-induced subgraph for each label node is rooted. Assume the contrary, i.e. there is some label $k$, such that the label-induced subgraph for $k$ is not rooted. Note that the subgraph is taken over the original root region, and does not contain the virtual edges. 

The third property requires that each node $k$ which occurs as a label must be contained in its label-induced subgraph. This holds for all nodes which are confluence end nodes in the original root region subgraph. Now consider the case where $k$ is a root region boundary node, which only becomes a label node through the addition of virtual edges. By removing the virtual edges after the label assignment, only one of the paths of any confluence ending in $k$ can be interrupted. A path from the label-induced subgraph's root node to $k$ exists within the original root region subgraph. This path is labeled with $k$ according to Lemma~\ref{lemma:decomplabels:all-paths-are-labeled}, and therefore, $k$ is part of its label-induced subgraph.

The fourth property requires that each of a node's incoming edges have the same label set. This is true, since Lemma~\ref{lemma:incomingLabelsUnique} holds for the confluence edge label assignment within each root region, and since the removal of the virtual edges between boundary nodes does not affect the edge label assignment for any other edges.

Finally, the fifth property requires that no node occurs as a label in any of its out-edges. This is satisfied because the root region is a subgraph of the extended root region: If any edge $(i, j) \in \rootRegion[r]$ were labeled with its tail node $i$, the same edge would have been labeled with $i$ in the edge label assignment for the extended extraction order, thus contradicting that the label assignment is for the extended extraction order is decomposable.
\end{proof}

Using the multi-root confluence edge label assignment, we now show that any mapping of one root region subgraph can be consistently extended by a mapping of an adjacent root region subgraph.
\begin{theorem} (Local Subgraph Extensibility) \label{thm:local-subgraph-extensibility}\\
Let $\reqDAGOrientation$ be an extraction order with root set $\rootSet = \{r_1, r_2\}$ and root regions $\rootRegionSet = \{\rootRegion[r_1], \rootRegion[r_2]\}$, such that $\rootRegionBoundaryPair[r_1][r_2] = \{b_1, \ldots b_n\}$, and such that there are no edges between the boundary nodes. 
Further, let $(\vec{y}, \vec{z}, \vec{a})$ be a solution of Linear Program~\ref{LP:RunningIntersectionProperty} for the extraction order $\reqDAGOrientation$, where the edge label assignment is the multi-root confluence edge label assignment from Definition~\ref{def:multiroot:extended-labels}.

Then, the following holds:    
If a valid mapping $m_1$ with non-zero flow for the subgraph $\rootRegionSubgraph[r_1]$ can be extracted by Algorithm~\ref{alg:decompositionAlgorithm-RIP}, then a valid mapping $m_2$  with non-zero flow can be extracted by applying Algorithm~\ref{alg:decompositionAlgorithm-RIP} to the subgraph $\rootRegionSubgraph[r_2]$, such that $\restrict[m_1][\rootRegionBoundaryPair[r_1][r_2]] = \restrict[m_2][\rootRegionBoundaryPair[r_1][r_2]]$. 
\end{theorem}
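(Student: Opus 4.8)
The plan is to run Algorithm~\ref{alg:decompositionAlgorithm-RIP} on the root region subgraph $\rootRegionSubgraph[r_2]$ while forcing every nondeterministic choice to agree with the boundary mapping $m^B := \restrict[m_1][\rootRegionBoundaryPair[r_1][r_2]]$ read off from $m_1$, and then to show that these forced choices never get stuck. By Lemma~\ref{corollary:ext-edge-labels-are-mr-decomposable} the multi-root confluence labels restrict to a decomposable edge label assignment on $\rootRegionSubgraph[r_2]$, and $\labelsetOrderSetRegion[r_2]$ is an extraction label set ordering for it, so Lemma~\ref{lemma:decomposition-rooted-subgraphs} already guarantees that the \emph{un}constrained algorithm returns a complete convex combination of valid mappings for $\rootRegionSubgraph[r_2]$ with non-zero flow; the whole content of the theorem is therefore the boundary agreement.

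First I would extract the structural consequence of the virtual chain. By Definition~\ref{def:extended-extraction-order} the same ordering $\rootRegionBoundaryPairOrdering[r_1][r_2] = (b_1,\dots,b_n)$ is used to extend \emph{both} root regions, so in each extended region every $b_j$ with $j\geq 2$ becomes a confluence end node: one branch is a path inside the region reaching $b_j$, the other descends to some $b_i$ with $i<j$ and then follows the chain. After applying the confluence label assignment and deleting the virtual edges, every edge of $\rootRegionSubgraph[r_1]$ and of $\rootRegionSubgraph[r_2]$ lying on a path from the associated common ancestor to $b_j$ — in particular every in-edge of $b_j$ and every edge on the traversal leading toward the lower-indexed boundary nodes — is labeled with $b_j$. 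Hence, by Corollary~\ref{corollary:decomp-labels:in-edges-label-nodes} and Lemma~\ref{lemma:decomplabels:all-paths-are-labeled}, the LP subformulations on the traversal leading to any boundary node are indexed by label mappings that fix the images of $b_2,\dots,b_n$, and Constraint~(\ref{alg:lp:novel:forbidding-nodes-in-sub-lps}) forces the node variables of each $b_j$ in those subformulations to vanish off that fixed image.

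The core step is to transfer $m^B$ from $r_1$ to $r_2$. The node mapping variables $y^u_{b_j}$ are global, shared by the subformulations of both regions through Constraint~(\ref{alg:lp:novel:node-to-sub-node-mapping}), and the chain makes the boundary nodes a linearly ordered family of confluence end nodes; via the running-intersection ordering $\labelsetOrderRegion[r_2]$ and the continuity Constraint~(\ref{eq:lp:novel-rip:gamma-continuity}), the \emph{joint} image of $b_1,\dots,b_n$ is propagated from the first label set of $\labelsetOrderRegion[r_2]$ down to each $b_j$ in exactly the same algebraic way as it is in $r_1$. I would argue that $m_1$ being extractable with non-zero flow means precisely that the $\gamma$- and subformulation variables witnessing the joint boundary image $m^B$ are strictly positive, and that — since these witnesses are expressed only through the shared global variables and through LP constraints identical in form for both regions — the analogous witnesses for $r_2$ are strictly positive as well. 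Granting this, I instruct the algorithm on $\rootRegionSubgraph[r_2]$ to pick, in Line~\ref{algline:rip-decomp:choose-compatible-labelset-mapping}, label-set mappings agreeing with $m^B$ on the boundary nodes they contain, and, in Lines~\ref{algline:rip-decomp:compute-path} and~\ref{algline:rip-decomp:compute-path-rev-edge}, to terminate a computed path at $m^B(b_j)$ whenever the head node is a boundary node. The two failure modes of the correctness proof of Theorem~\ref{thm:decomp_of_rip_orderings} are then ruled out verbatim: an edge mapping cannot conflict with an already-fixed boundary node because the relevant edges carry that node as a label and Constraint~(\ref{alg:lp:novel:forbidding-nodes-in-sub-lps}) applies, and a compatible $\gamma$-choice always exists by the positivity just obtained. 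The extracted $m_2$ then has non-zero flow and satisfies $\restrict[m_2][\rootRegionBoundaryPair[r_1][r_2]] = m^B = \restrict[m_1][\rootRegionBoundaryPair[r_1][r_2]]$.

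The main obstacle is exactly that core step: per-node marginals of the boundary mapping are shared between the two regions for free, but a joint distribution is not determined by its marginals, so one genuinely has to use that the virtual chain encodes the joint boundary image through a running-intersection structure that is identical inside both extended root regions, and that the hypothesis of no edges between boundary nodes prevents intra-region structure from corrupting this encoding. This is also the place where the restriction to tree-like root region extraction orders is implicitly used — each region has a single parent boundary to reconcile, so the pairwise statement proven here is enough to iterate — and it is where the bulk of the careful bookkeeping will live.
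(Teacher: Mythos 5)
Your overall strategy---re-running Algorithm~\ref{alg:decompositionAlgorithm-RIP} on $\rootRegionSubgraph[r_2]$ with every choice forced to agree with $m^B := \restrict[m_1][\rootRegionBoundaryPair[r_1][r_2]]$ and then checking that the two failure modes from the proof of Theorem~\ref{thm:decomp_of_rip_orderings} cannot occur---is reasonable, and your structural observations about the virtual chain are correct. But the proposal has a genuine gap exactly where you yourself locate it, and the justification you offer does not close it. The only variables shared between the two regions' parts of the LP are the global node mapping variables $y^u_{b_j}$ (coupled to each region's subformulations via Constraint~(\ref{alg:lp:novel:node-to-sub-node-mapping})), and these constrain only the per-node \emph{marginals} of the boundary mapping. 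The claim that the witnesses for the \emph{joint} image $m^B$ must be positive among $r_2$'s $\gamma$- and subformulation variables ``because the constraints are identical in form for both regions'' is not an argument: two solutions of identically shaped constraint systems coupled only through shared marginals can realize disjoint sets of joint mappings (e.g.\ mass $\tfrac12$ on each of $(b_1,b_2)\mapsto(u,u)$ and $(v,v)$ in one region versus mass $\tfrac12$ on each of $(u,v)$ and $(v,u)$ in the other). You concede this (``a joint distribution is not determined by its marginals'') and then proceed with ``Granting this, \ldots''---but granting it is granting the theorem.

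The paper avoids any one-shot transfer of the joint witness by inducting on $n$ along the chain ordering $(b_1,\ldots,b_n)$. The base case $n=1$ needs only the shared marginal $y^u_{b_1}>0$, which does transfer for free. The inductive step peels off the last chain node $b_{n+1}$: the induction hypothesis yields a mapping of the subgraph induced by $\rootRegionNodes[r_2]\setminus\{b_{n+1}\}$ agreeing on $b_1,\ldots,b_n$, and the label $b_{n+1}$---which the virtual chain forces onto every edge from the first boundary ancestor onward in \emph{both} extended regions---pins down the image of $b_{n+1}$ relative to the already-matched nodes via Constraint~(\ref{alg:lp:novel:forbidding-nodes-in-sub-lps}) and the continuity Constraint~(\ref{eq:lp:novel-rip:gamma-continuity}), one boundary node at a time. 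To repair your proof you would have to supply precisely this induction (or some other argument showing the chain labels make the joint boundary image recoverable from variables that both regions actually share); as written, the central positivity claim is unsupported. A minor further point: tree-likeness of the root region extraction order plays no role in this pairwise theorem---it is only needed later, in Algorithm~\ref{alg:decompositionAlgorithm-multiroot}, to iterate the pairwise statement---so it cannot be what rescues the core step.
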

\begin{proof}
We show the lemma by induction over the number of boundary nodes $n$.

First, assume $n = 1$, i.e. there is a single boundary node $b_1$. According to Lemma~\ref{lemma:decomposition-rooted-subgraphs}, a mapping $m_1$ for $\reqTopologyRootRegionSubgraph[r_1]$ can be extracted. Let $u := m_1(b_1)$ be the substrate node to which $b_1$ was mapped. Since $b_1$ is reachable from $r_1$ in $\rootRegionSubgraph[r_1]$, it has some incoming edge $e_1 \in \rootRegion[r_1]$, that has also been mapped by the decomposition algorithm. The decomposition algorithm can only map this incoming edge in Line~\ref{algline:rip-decomp:map-edge-head-node} or Line~\ref{algline:rip-decomp:map-edge-head-node-reversed}, when $\subLP[y^u_{b_1}][\EOEdgeToOriginal(e_1), \restrict[m_1][\labelsetEdge[e_1]]] > 0$. By Constraint (\ref{alg:lp:novel:node-to-sub-node-mapping}), this implies that the global node-mapping variable $y^u_{b_1}$ is also non-zero. 

Now consider the execution of Algorithm~\ref{alg:decompositionAlgorithm-RIP} on $\rootRegionSubgraph[r_2]$. Again, the decomposition algorithm succeeds due to Lemma~\ref{lemma:decomposition-rooted-subgraphs}, and $b_1$ is reachable from $r_2$ within $\rootRegionSubgraph[r_2]$ by some edge $e_2 \in \rootRegion[r_2]$. As we have established $y^u_{b_1} > 0$, it also holds that there is some $m_{e_2} \in \MappingSpace[\labelsetEdge[e_2]]$ such that $\subLP[y^u_{b_1}][\EOEdgeToOriginal(e_2), m_{e_2}] > 0$, due to Constraint (\ref{alg:lp:novel:node-to-sub-node-mapping}). Therefore, at some point, the decomposition algorithm will generate a mapping $m_2$, such that $m_2(b_1) = u$.

Assume that the lemma holds for some $n \geq 1$. 

Consider a generalized extraction order with two root regions $\rootRegionSubgraph[r_1]$ and $\rootRegionSubgraph[r_2]$, and boundary size $n+1$, such that $\rootRegionBoundaryPair[r_1][r_2] = \{b_1, \ldots b_n, b_{n+1}\}$. Without loss of generality, let the ordering of the boundary nodes, according to which the artificial edges are added, be $(b_1, \ldots b_{n+1})$, such that we refer to the last boundary node in the chain of artificial edges as $b_{n+1}$. By Lemma~\ref{lemma:decomposition-rooted-subgraphs}, we can extract a mapping for $\rootRegionSubgraph[r_1]$, since it is a rooted subgraph of $\reqDAGOrientation$. We introduce the subgraph $\rootRegionSubgraph[r_2]'$, which is induced by the node set $\rootRegionNodes[r_2]' = \rootRegionNodes[r_2] \setminus \{b_{n+1}\}$. The size of the boundary between $\rootRegionSubgraph[r_1]$ and $\rootRegionSubgraph[r_2]'$ is $n$, and $\rootRegionSubgraph[r_2]'$ is also a rooted subgraph of $\reqDAGOrientation$. Therefore, by the induction hypothesis, a mapping for $\rootRegionSubgraph[r_2]'$ can be extracted. 

Next, we include the boundary node $b_{n+1}$ which was previously omitted. Given that $b_{n+1}$ is appended as the last node to the chain of artificial edges, any artificial edge $e_i := (b_{i}, b_{i+1})$ with $i \leq n$ is labeled with $b_{n+1}$. Once again, this holds because $b_{n+1}$ is reachable from $r_1$ and $r_2$ both directly and via $b_i$, inducing a confluence ending in $b_{n+1}$, which contains $e_i$. Any edge labeled with $b_i$ is also labeled with $b_{n+1}$. $b_{i}$ and $b_{n+1}$ share some common ancestor $a \in \reqDAGAncestorSet[\rootRegionBoundary[r_2]] \cap \rootRegionNodes[r_2]$, as they are both reachable from $r_2$. We define the boundary ancestor set $\boundaryAncestorSet[r_1][r_2][n+1]$ as the set of such first common ancestor nodes between $b_{n+1}$ and any other boundary node.

Let $a_1$ be the first ancestor node that is encountered by the decomposition algorithm for $\rootRegionSubgraph[r_2]'$ in some iteration $k$ in Line~\ref{algline:rip-decomp:retrieve-i-from-q}. Since $a_1$ is the first boundary ancestor, none of the other boundary nodes' mappings have been determined. By definition, there is some edge $e \in \outEdgesDAGOrder{a_1}$ that is labeled with $b_{n+1}$. Therefore, the decomposition algorithm applied to the subgraph $\rootRegionSubgraph[r_2]'$ extracts a mapping for $b_{n+1}$ before or at the same time as any other boundary node is mapped. Let $u := \mappingNodesIteration(b_{n+1})$ be the node to which $b_{n+1}$ is mapped. 

The extended edge labels are a decomposable edge label assignment according to Definition~\ref{def:decomposable-edge-labels}. Therefore the node mapping choice $b_{n+1} \mapsto u$ can only occur if $\subLP[y^u_{b_{n+1}}][e, \labelsetIncoming[b_{n+1}]] > 0$ for $e \in \inEdgesExtractionOrder{b_{n+1}}$. By the local connectivity property, there then also exists a non-zero flow in each LP subformulation associated with some edge $e \in \inEdgesDAGOrder{b_{n+1}}$.
These are precisely the edges in $\rootRegion[r_2] \setminus \rootRegion[r_2]'$. Since $b_{n+1}$ is the last boundary node, it holds that $\outEdgesDAGOrder{b_{n+1}} = \emptyset$. It follows that $\labelsetIncoming[b_{n+1}] = \{b_{n+1}\}$. Therefore, for any $i$ with $(i, b_{n+1}) \in \inEdgesDAGOrder{b_{n+1}}$, a mapping can be extracted, since the node mapping for $i$ was chosen consistently with $b_{n+1} \mapsto u$.
\end{proof}

\begin{remark} (Allowing Edges between Boundary Nodes) \\
In Remark~\ref{remark:multiroot-boundary-edges-restriction}, we restricted the approach to extraction orders in which the boundary nodes were not connected by edges. We can somewhat relax this requirement by allowing edges between boundary nodes, as long as the existing edges can be \emph{extended} to a chain by adding virtual edges, as described in Definition~\ref{def:extended-extraction-order}. In this case, the boundary edge is assigned to one of the two root regions, while a virtual edge must still be added to the other root region to obtain a consistent labeling. It is unclear how boundary edges which cannot be extended to a chain might be handled in this framework. We therefore explicitly forbid extraction orders where edges of the form $\{(b_1, b_2), (b_3, b_2)\}$ are contained for some boundary nodes $\{b_1, b_2 \}$.
\end{remark}

\subsection{The Multi-Root Decomposition Algorithm} \label{sec:multiroot:decomposition-alg}

%TODO%%%%%%%%%%%%%%%%%%%% DECOMPOSITION ALGORITHM MULTIROOT %%%%%%%%%%%%%%%%%%%%%

\begin{figure}[p]
%\scalebox{0.88}{
%\begin{minipage}{1.08\columnwidth}

%\begingroup
\removelatexerror

\begin{algorithm*}[H]

\SetKwInOut{Input}{\small{Input}}\SetKwInOut{Output}{\small{Output}}
\SetKwFunction{ProcessPath}{ProcessPath}{}{}
\SetKwFunction{reverse}{reverse}{}{}
\SetKwFunction{LP}{LP}

\newcommand{\SET}{\textbf{set~}}
\newcommand{\ADD}{\textbf{add~}}
\newcommand{\EACH}{\textbf{each~}}
\newcommand{\DEFINE}{\textbf{define~}}
\newcommand{\AND}{\textbf{and~}}
\newcommand{\LET}{\textbf{let~}}
\newcommand{\WITH}{\textbf{with~}}
\newcommand{\COMPUTE}{\textbf{compute~}}
\newcommand{\FIND}{\textbf{find~}}
\newcommand{\CHOOSE}{\textbf{choose~}}
\newcommand{\DECOMPOSE}{\textbf{decompose~}}
\newcommand{\FORALL}{\textbf{for all~}}
\newcommand{\OBTAIN}{\textbf{obtain~}}
\newcommand{\WITHPROBABILITY}{\textbf{with probability~}}

\caption{Multi-Root decomposition algorithm for solutions of LP \ref{LP:RunningIntersectionProperty}  for generalized extraction orders with tree-like root region extraction orders.}
\label{alg:decompositionAlgorithm-multiroot}

\Input{VNEP-instance $(\substrateTopology, \reqTopology)$, generalized extraction order $\reqDAGOrientationDef$ with tree-like root region extraction order $\RRExtractionOrderDef$, multi-root extraction label set ordering $\labelsetOrderSetMR$, solution $\lpvars$ to Linear Program ~\ref{LP:RunningIntersectionProperty} based on the multi-root confluence edge label assignment}
\Output{Convex combination~$\PotEmbeddings = \{\decomp = (\prob,\mappingRequestIteration)\}_k$ of valid mappings}

  \SET $(\vec{y}_0, \vec{z}_0, \vec{\gamma}_0, \vec{l}_0 ) \gets \lpvars$ \label{algline:mr-decomp:copy-lpvars}\\
  \ForEach{$r \in \rootSet$}{
	 \COMPUTE \text{convex combination of mappings} $\PotEmbeddings_r$ \text{for root region $\rootRegionSubgraphDef[r]$} \text{according to Algorithm \ref{alg:decompositionAlgorithm-RIP} with arguments } $(\VNEPInstance, \rootRegionSubgraph[r], \labelsetOrderSetRegion[r], \lpvars)$ \label{algline:mr-decomp:decomp-root-region} \\
     \SET $\lpvars \gets (\vec{y}_0, \vec{z}_0, \vec{\gamma}_0, \vec{l}_0 )$ \label{algline:mr-decomp:reset-lpvars}\\
  }
  
  \SET $\PotEmbeddings  \gets \emptyset$ \AND $k \gets 1$ \label{algline:mr-decomp:init-D-and-k}\\
  \SET $x \gets 1$ \label{algline:mr-decomp:init-x} \\
  \While{$x > 0$ \label{algline:mr-decomp:while-x-start}}
  { 
  	\SET $\prob \gets 1$ \\
    \SET $\mappingRequestIteration = (\mappingNodesIteration,\mappingEdgesIteration)~\gets (\emptyset,\emptyset)$\\
    
    \SET $\Queue \gets \{ \RRExtractionOrderRoot \}$ \\
    \While{$|\Queue| > 0$ \label{algline:mr-decomp:while-q-start}}{ 
        \CHOOSE $r \in \Queue$ \AND \SET$\Queue \gets \Queue \setminus \{r\}$ \\
        \CHOOSE $(f, \mappingRequestDef) \in \PotEmbeddings_r$ \textbf{such that} $\restrict[\mappingNodes][\rootRegionBoundaryIncoming[r]] = \restrict[\mappingNodesIteration][\rootRegionBoundaryIncoming[r]]$ \label{algline:mr-decomp:choose-mapping} \\
        \SET $\mappingRequestIteration \gets \mappingRequestIteration \cup \mappingRequest$ \label{algline:mr-decomp:extend-mapping} \\
        \If{$f < \prob$}{\SET $\prob \gets f$ }

        \ForEach{$(r, r') \in \outEdgesRR{r}$}{
       		\SET $\Queue \gets \Queue \cup \{ r' \}$\label{algline:mr-decomp:add-neighbor-root-to-Q}
	    }
	}
	
	\ForEach{$r \in \rootSet$}{
		\CHOOSE $(f, \mappingRequest) \in \PotEmbeddings_r$ \textbf{such that} $\mappingRequest = \restrict[\mappingRequestIteration][\rootRegionSubgraph[r]]$ \label{algline:mr-decomp:identify-RR-mapping} \\
		\SET $f \gets f - \prob$ \label{algline:mr-decomp:reduce-mapping-vals}\\
		\If{$f = 0$}{
			\SET$\PotEmbeddings_r \gets \PotEmbeddings_r \setminus (f, \mappingRequest)$ \label{algline:mr-decomp:remove-zero-value-mapping}\\
		}
	}
    \ADD $\decomp = (\prob,\mappingRequestIteration)$ to $\PotEmbeddings$ \\
    \SET $x \gets x - \prob$ \label{algline:mr-decomp:reduce-x}\\
    \SET $k \gets k + 1$ \label{algline:mr-decomp:increment-k}
  }

\KwRet{$\PotEmbeddings$}
\end{algorithm*}
%\endgroup
%\end{minipage}}

\end{figure}
%TODO%%%%%%%%%%%%%%%%%%%% DECOMPOSITION ALGORITHM MULTIROOT %%%%%%%%%%%%%%%%%%%%%

The decomposition algorithm for generalized extraction orders using tree-like root region extraction orders is given as pseudocode in Algorithm~\ref{alg:decompositionAlgorithm-multiroot}. The execution of the algorithm can be divided into two stages: In the first stage, each root region is processed individually, such that a convex combination of requests is obtained for each root region subgraph. In the second stage, the mappings of the root region subgraphs are stitched together to a mapping for the entire request. A correctness proof is given in Theorem~\ref{thm:multiroot:decomp-correctness}, towards the end of this subsection. We will now describe the execution of the algorithm informally.

In the first stage, the algorithm extracts a convex combination of mappings $\PotEmbeddings_r$ for each root region subgraph. By Lemma~\ref{lemma:decomposition-rooted-subgraphs}, this decomposition succeeds for each individual root region. In Line~\ref{algline:mr-decomp:copy-lpvars}, the variables of the linear program are copied. After performing the decomposition for a root region in Line~\ref{algline:mr-decomp:decomp-root-region}, the original values of the variables are restored in Line~\ref{algline:mr-decomp:reset-lpvars}. This step ensures that the decomposition succeeds in adjacent root regions, because the node mapping variables related to the mapping decision for boundary nodes are reduced to zero in Line~\ref{algline:rip-decomp:adapt-variables-one} of Algorithm~\ref{alg:decompositionAlgorithm-RIP}. Note that in a practical implementation, these steps can be avoided by excluding the variables in question from the variable reduction operation in Algorithm~\ref{alg:decompositionAlgorithm-RIP}.

In each iteration of the while-loop starting in Line~\ref{algline:mr-decomp:while-x-start}, the algorithm combines the root region mappings to a mapping for the entire request. Similarly to the decomposition in Algorithms~\ref{alg:decomposition:Algorithm-Novel-AC} and~\ref{alg:decompositionAlgorithm-RIP}, this process follows a graph traversal of the root region extraction order $\RRExtractionOrder$. 

Initially, the local root node $\RRExtractionOrderRoot$, which is also the root node of the root region extraction order, is added to the queue. In each iteration of the while-loop starting in Line~\ref{algline:mr-decomp:while-q-start}, a single root region is processed: A mapping is selected, which agrees with the partial mapping $\mappingRequestIteration$. The existence of such a mapping is guaranteed by Theorem~\ref{thm:local-subgraph-extensibility}. The theorem applies, because the root region extraction order is by assumption tree-like. Therefore, each root region intersects with at most one previously mapped region. The algorithm tracks the minimal mapping value encountered in this process in the variable $\prob$. Once a mapping for the root region has been selected, all adjacent root regions are added to the queue.

Once the entire root region extraction order is processed, the algorithm iterates over the root regions a second time to reduce the mapping value $f$ for each used mapping in the convex combination of mappings. This is analogous to the step of subtracting the minimal encountered variable value from the used variables in Lines~\ref{algline:rip-decomp:adapt-variables-one} and following of Algorithm~\ref{alg:decompositionAlgorithm-RIP}. The decomposition algorithm must ensure that any given mapping of a root region can only be selected while it has a non-zero value. The notation $\mappingRequest = \restrict[\mappingRequestIteration][\rootRegionSubgraph[r]]$ in Line~\ref{algline:mr-decomp:identify-RR-mapping} should be understood as identifying the mapping which was selected for the root region in Line~\ref{algline:mr-decomp:choose-mapping}, i.e. the mapping for $\rootRegionSubgraph[r]$ which agrees with the global mapping $\mappingRequestIteration$ both in terms of node and edge mappings.

\begin{theorem} (Correctness of Algorithm~\ref{alg:decompositionAlgorithm-multiroot}) \label{thm:multiroot:decomp-correctness}\\
Let $\VNEPInstance$ be a VNEP instance with a generalized extraction order $\reqDAGOrientationDef$ with root set $\rootSet$, a multi-root confluence edge label assignment $\labelsetsEdges$, and a multi-root extraction label set ordering $\labelsetOrderSetMR$. Given a solution $\lpvars$ of Linear Program~\ref{LP:RunningIntersectionProperty}, Algorithm~\ref{alg:decompositionAlgorithm-multiroot} returns a complete convex combination of valid mappings. Additionally, the resource allocations of the convex combination are bounded by the values of the corresponding allocation variables, i.e. $\sum_{(\prob, \mappingRequestIteration) \in \PotEmbeddings} \prob \cdot \allocationFunction(\mappingRequestIteration, x, y) \leq a^{x, y}$ holds for each substrate resource $(x, y) \in \substrateResources$.
\end{theorem}
\begin{proof}
We first show that the first stage of the algorithm, i.e. the decomposition of the root region subgraphs, is guaranteed to yield a complete convex combination of mappings for each root region subgraph. Since the LP variables are reset to their original values in Line~\ref{algline:mr-decomp:reset-lpvars}, the executions of Algorithm~\ref{alg:decompositionAlgorithm-RIP} in Line~\ref{algline:mr-decomp:decomp-root-region} for different root region subgraphs do not interfere with one another. Therefore, due to Lemma~\ref{lemma:decomposition-rooted-subgraphs}, each subgraph decomposition succeeds, and for each local root $r \in \rootSet$, the mapping list $\PotEmbeddings_r$ contains a complete convex combination of mappings for the root region subgraph $\rootRegionSubgraph[r]$.

We next consider the second stage of the algorithm, where the mappings obtained for each root region are combined to mappings for the entire request topology. The following invariant holds in each iteration of the while-loop starting in Line~\ref{algline:mr-decomp:while-x-start} for each root region: The value of the variable $x$ is equal to the total remaining value of the mappings for the root region, i.e. $x = \sum_{(f, \mappingRequest) \in \PotEmbeddings_r} f$ holds for each $r\in \rootSet$. The invariant holds initially, since $x=1$ is initialized in Line~\ref{algline:mr-decomp:init-x}, and $\sum_{(f, \mappingRequest) \in \PotEmbeddings_r} f = 1$ holds for each root region, since $\PotEmbeddings_r$ is initially a convex combination of mappings. The invariant is preserved, since in each iteration the same value $\prob$ is subtracted from each $\PotEmbeddings_r$ and from $x$ in Lines~\ref{algline:mr-decomp:reduce-mapping-vals} and~\ref{algline:mr-decomp:reduce-x}, respectively.

Due to this invariant, the initial choose-operation to select a mapping in Line~\ref{algline:mr-decomp:choose-mapping} is guaranteed to succeed for the root region extraction order's root node $\RRExtractionOrderRoot$. Next, we consider the execution of the while-loop starting in Line~\ref{algline:mr-decomp:while-q-start} for some other local root node $r$. Due to the tree-like root region extraction order, the root region subgraph $\rootRegionSubgraph[r]$ shares a boundary with exactly one root region $\rootRegionSubgraph[r']$ which has been processed in a previous iteration, such that $\rootRegionBoundaryIncoming[r] = \rootRegionBoundaryPair[r][r']$ holds. The local subgraph extensibility from Theorem~\ref{thm:local-subgraph-extensibility} then guarantees that some mapping $\mappingRequestDef$ exists for $\rootRegionSubgraph[r]$ in $\PotEmbeddings_r$, such that $\restrict[\mappingNodes][\rootRegionBoundaryPair[r][r']] = \restrict[\mappingNodesIteration][\rootRegionBoundaryPair[r][r']]$ holds, and the choose operation therefore succeeds. Since the root regions are edge-disjoint, it follows that the mapping $\mappingRequestIteration$ for the entire request can be extended by $\mappingRequest$ in Line~\ref{algline:mr-decomp:extend-mapping}. 

Since the root region extraction order is a rooted tree, every root region is processed in some iteration of the inner while-loop, and it follows that in each iteration, the resulting mapping $\mappingRequestIteration$ is defined for the entire request.

We next show that the algorithm terminates: In each iteration of the outer while loop, at least one mapping is removed from its mapping list $\PotEmbeddings_r$ in Line~\ref{algline:mr-decomp:remove-zero-value-mapping}, since $\prob$ tracks the minimal value of all subgraph mappings that are included in the mapping for the full request. The number of distinct mappings in each root region's mapping list is bounded by the number of LP variables associated with the root region. Therefore, after a finite number of iterations, each list $\PotEmbeddings_r$ is empty. Due to the invariant relating $x$ to the remaining mapping values, the condition of the while-loop in Line~\ref{algline:mr-decomp:while-x-start} eventually fails, and the algorithm terminates.

Lastly, we show that the resulting list of mappings for the entire request, $\PotEmbeddings$, is a complete convex combination of mappings, whose resource allocations are bounded by the initial values of the allocation variables. The variable $x$ is initialized with $1$, and in each iteration of the outer while-loop, one of the values $\prob$ with $(\prob, \mappingRequestIteration) \in \PotEmbeddings$ is subtracted from $x$. After the algorithm exits the while-loop, $x=0$ holds, and it follows that $\sum_{k = 1}^{|\PotEmbeddings|} \prob = 1$ holds.
We now consider the bound on the resource allocations. Due to Theorem~\ref{thm:decomp_of_rip_orderings},  the bound holds for each root region separately, i.e. for each $r \in \rootSet$, $\sum_{(\prob, \mappingRequest) \in \PotEmbeddings_r} \prob \cdot \allocationFunction(\mappingRequest, x, y) \leq a^{x, y}$ holds. Since each root region mapping $(f, \mappingRequest) \in \PotEmbeddings$ is included in global mappings with a total mapping value of $f$, the bound also applies to the convex combination of mappings $\PotEmbeddings$ returned by the algorithm. 
\end{proof}

\subsection{Complexity of the Multi-Root Algorithm} \label{sec:multiroot:complexity}

First, note that Theorem~\ref{thm:adapted-alg-lp-size} regarding the size of Linear Program~\ref{LP:RunningIntersectionProperty} can be generalized, as the LP formulation itself is not modified.
\begin{theorem} (Size of Linear Program~\ref{LP:RunningIntersectionProperty} for Generalized Extraction Orders) \label{thm:multiroot:LP-size} \\
Let $\VNEPInstance$ be a VNEP instance and let $\reqDAGOrientationDef$ be a generalized extraction order with root set $\rootSet$. Given the multi-root confluence edge label assignment $\labelsetsEdges$ and a multi-root extraction label set ordering $\labelsetOrderSetMR$, the size of Linear Program~\ref{LP:RunningIntersectionProperty} is bounded by  $\bigO\left(\sum_{r \in \rootSet}|\substrateTopology|^{\labelWidthEQ(\rootRegionSubgraph[r], \labelsetOrderSetRegion[r])} \cdot |\rootRegionSubgraph[r]| \right)$.
\end{theorem}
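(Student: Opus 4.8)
The key observation is that Linear Program~\ref{LP:RunningIntersectionProperty} is defined purely in terms of an extraction order, an edge label assignment, and an extraction label set ordering, and is completely agnostic to whether the underlying graph is rooted. For the generalized extraction order $\reqDAGOrientation$, the multi-root confluence edge label assignment is obtained by restricting, for each local root $r \in \rootSet$, the confluence edge label assignment of the extended root region $\rootRegionSubgraphExtended[r]$ to the edges $\rootRegion[r]$ of the root region (Definition~\ref{def:multiroot:extended-labels}). Since the root regions partition $\reqDAGOrientationEdges$, every variable and every constraint of Linear Program~\ref{LP:RunningIntersectionProperty} is ``local'' to exactly one root region: each subformulation belongs to a unique edge, hence a unique root region; each $\gamma$-variable and each constraint of type (\ref{eq:lp:novel-rip:gamma-to-out-edges})--(\ref{eq:lp:novel-rip:gamma-continuity}) is indexed by a request node $i$ together with its extraction label set ordering $\labelsetOrderSetRegion[r]$, where $r$ is the unique local root with $\outEdgesDAGOrder{i} \subseteq \rootRegion[r]$; and the global node and allocation variables contribute only $\bigO(|\substrateNodes| \cdot |\reqNodes|)$ and $\bigO(|\substrateResources|)$ many variables, which are absorbed into the sum. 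Therefore the overall LP size decomposes as a sum over root regions of the size of the LP formulation one would build for $\rootRegionSubgraph[r]$ alone with label assignment $\labelsetsEdges$ and ordering $\labelsetOrderSetRegion[r]$.

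First I would set up this partition of the variable set and constraint set of Linear Program~\ref{LP:RunningIntersectionProperty} according to which root region each element belongs to, noting that the global $\vec{y}$ and $\vec{a}$ variables plus the constraints (\ref{alg:lp:novel:node-embedding}), (\ref{alg:lp:novel:node-load}), (\ref{alg:lp:novel:edge-load}) are shared but contribute only a polynomial (non-exponential) number of variables and constraints. Next I would invoke Theorem~\ref{thm:adapted-alg-lp-size} applied to each individual root region subgraph $\rootRegionSubgraph[r]$: the portion of the LP restricted to root region $r$ is exactly a copy of Linear Program~\ref{LP:RunningIntersectionProperty} built over $\rootRegionSubgraph[r]$ with the (decomposable, by Lemma~\ref{corollary:ext-edge-labels-are-mr-decomposable}) edge label assignment and the extraction label set ordering $\labelsetOrderSetRegion[r]$, so its size is bounded by $\bigO\big(|\substrateTopology|^{\labelWidthEQ(\rootRegionSubgraph[r], \labelsetOrderSetRegion[r])} \cdot |\rootRegionSubgraph[r]|\big)$. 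Summing over $r \in \rootSet$ gives the claimed bound, absorbing the shared polynomial terms since each $|\substrateTopology|^{\labelWidthEQ(\rootRegionSubgraph[r], \labelsetOrderSetRegion[r])} \cdot |\rootRegionSubgraph[r]|$ already dominates $|\substrateTopology| \cdot |\reqTopology|$ up to the number of root regions, which is itself at most $|\reqNodes|$.

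The main obstacle — really a bookkeeping subtlety rather than a deep difficulty — is verifying that the constraints shared across root regions, in particular Constraint~(\ref{alg:lp:novel:node-to-sub-node-mapping}) which ties a global node variable $y^u_i$ to the subformulation variables of \emph{all} edges incident in $i$ (and such an $i$ on a root region boundary has incident edges from several regions), do not introduce a term that fails to factor through a single region. Here I would argue that this constraint still contributes only $\bigO(|\substrateNodes| \cdot |\reqNodes|)$ many constraints total, each of size bounded by the degree of $i$ times the largest mapping space, which is itself at most $\max_r |\substrateTopology|^{\labelWidthEQ(\rootRegionSubgraph[r], \labelsetOrderSetRegion[r])}$; hence it is dominated by the right-hand side of the claimed bound and need not be attributed to any one region. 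I would also note, as in the proof of Theorem~\ref{thm:adapted-alg-lp-size}, that the label sets in each $\labelsetOrderSetRegion[r]$ are supersets of the corresponding edge label sets, so the count of subformulations per edge in region $r$ is bounded by $|\substrateNodes|^{\labelWidthEQ(\rootRegionSubgraph[r], \labelsetOrderSetRegion[r]) - 1}$, which is exactly the per-region ingredient needed. With these observations the proof is a direct adaptation of Theorem~\ref{thm:adapted-alg-lp-size} carried out region by region.
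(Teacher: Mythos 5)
Your proof is correct and follows essentially the same route as the paper, which simply invokes Theorem~\ref{thm:adapted-alg-lp-size} per root region and sums over the root set. Your additional bookkeeping — partitioning variables and constraints by region and checking that the shared global variables and Constraint~(\ref{alg:lp:novel:node-to-sub-node-mapping}) are absorbed into the bound — is a more careful elaboration of the same argument.
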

\begin{proof}
Due to Theorem~\ref{thm:adapted-alg-lp-size}, the root region subgraph rooted in $r \in \rootSet$ contributes with $\bigO\left(|\substrateTopology|^{\labelWidthEQ(\rootRegionSubgraph[r], \labelsetOrderSetRegion[r])} \cdot |\rootRegionSubgraph[r]| \right)$  to the size of the LP formulation. The result then follows by summation over the root set.
\end{proof}

We next consider the runtime of Algorithm~\ref{alg:decompositionAlgorithm-multiroot}.
\begin{theorem} (Runtime of Algorithm~\ref{alg:decompositionAlgorithm-multiroot} for Generalized Extraction Orders) \label{thm:multiroot:decomp-runtime} \\
Let $\VNEPInstance$ be a VNEP instance and let $\reqDAGOrientationDef$ be a generalized extraction order with root set $\rootSet$. Given the multi-root confluence edge label assignment $\labelsetsEdges$ and a multi-root extraction label set ordering $\labelsetOrderSetMR$, and a solution $\lpvars$ of Linear Program~\ref{LP:RunningIntersectionProperty}, the runtime of Algorithm~\ref{alg:decompositionAlgorithm-multiroot} is bounded by
$\bigO \left(\sum_{r \in \rootSet}  |\substrateTopology|^{2 \cdot \labelWidthEQ(\rootRegionSubgraph[r], \labelsetOrderSetRegion[r]) + 1} \cdot |\rootRegionSubgraph[r]|^3 \right)$.
\end{theorem}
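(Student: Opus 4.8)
The plan is to mirror the structure of the runtime proof for the single-root algorithm (Theorem~\ref{thm:adapted-alg-runtime}), but now account for both stages of Algorithm~\ref{alg:decompositionAlgorithm-multiroot}: the per-root-region decompositions in Lines~\ref{algline:mr-decomp:decomp-root-region}--\ref{algline:mr-decomp:reset-lpvars}, and the stitching loop starting in Line~\ref{algline:mr-decomp:while-x-start}. First I would bound the cost of the first stage: for each $r \in \rootSet$, invoking Algorithm~\ref{alg:decompositionAlgorithm-RIP} on the root region subgraph $\rootRegionSubgraph[r]$ costs $\bigO(|\substrateTopology|^{2\cdot\labelWidthEQ(\rootRegionSubgraph[r], \labelsetOrderSetRegion[r])+1} \cdot |\rootRegionSubgraph[r]|^2)$ by Theorem~\ref{thm:adapted-alg-runtime}, and the copy/reset of the LP variables in Lines~\ref{algline:mr-decomp:copy-lpvars} and~\ref{algline:mr-decomp:reset-lpvars} is linear in the LP size, hence dominated. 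Summing over $\rootSet$ gives $\bigO(\sum_{r\in\rootSet} |\substrateTopology|^{2\cdot\labelWidthEQ(\rootRegionSubgraph[r], \labelsetOrderSetRegion[r])+1} \cdot |\rootRegionSubgraph[r]|^2)$ for stage one, which is already within the claimed bound.

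The bulk of the work is bounding stage two. I would first bound the number of iterations of the outer while-loop in Line~\ref{algline:mr-decomp:while-x-start}: as argued in the correctness proof (Theorem~\ref{thm:multiroot:decomp-correctness}), in each iteration at least one mapping is removed from some $\PotEmbeddings_r$ in Line~\ref{algline:mr-decomp:remove-zero-value-mapping}, and the number of distinct mappings in $\PotEmbeddings_r$ is bounded by the number of LP variables associated with the root region, which by Theorem~\ref{thm:adapted-alg-lp-size} is $\bigO(|\substrateTopology|^{\labelWidthEQ(\rootRegionSubgraph[r], \labelsetOrderSetRegion[r])} \cdot |\rootRegionSubgraph[r]|)$. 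Summing over $\rootSet$ bounds the total number of outer iterations by $\bigO(\sum_{r\in\rootSet}|\substrateTopology|^{\labelWidthEQ(\rootRegionSubgraph[r], \labelsetOrderSetRegion[r])}\cdot|\rootRegionSubgraph[r]|)$. Next I would bound the cost of a single outer iteration: the inner while-loop in Line~\ref{algline:mr-decomp:while-q-start} processes each root region once, and for each the \textbf{choose} operation in Line~\ref{algline:mr-decomp:choose-mapping} scans $\PotEmbeddings_r$ and, for each candidate mapping, checks agreement on the incoming boundary $\rootRegionBoundaryIncoming[r]$ in time $\bigO(|\rootRegionSubgraph[r]|)$; likewise the second loop over $\rootSet$ does an analogous scan in Line~\ref{algline:mr-decomp:identify-RR-mapping}. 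So a single outer iteration costs $\bigO(\sum_{r\in\rootSet}|\substrateTopology|^{\labelWidthEQ(\rootRegionSubgraph[r], \labelsetOrderSetRegion[r])}\cdot|\rootRegionSubgraph[r]|^2)$, and extending $\mappingRequestIteration$ in Line~\ref{algline:mr-decomp:extend-mapping} is subsumed by this. Multiplying the iteration count by the per-iteration cost and coarsely bounding the product of two sums over $\rootSet$ by a single sum (each factor of $|\substrateTopology|^{\labelWidthEQ(\cdot)}$ and $|\rootRegionSubgraph[\cdot]|$ is at most its analogue for the whole order, or one argues term-by-term) yields the stated $\bigO(\sum_{r\in\rootSet}|\substrateTopology|^{2\cdot\labelWidthEQ(\rootRegionSubgraph[r], \labelsetOrderSetRegion[r])+1}\cdot|\rootRegionSubgraph[r]|^3)$ — the extra $|\substrateTopology|$ and $|\rootRegionSubgraph[r]|$ factors compared to the single-root bound coming from the mapping-scan and the boundary-agreement check respectively, and the cube in $|\rootRegionSubgraph[r]|$ from combining the iteration count ($|\rootRegionSubgraph[r]|$) with the per-iteration scan cost ($|\rootRegionSubgraph[r]|^2$).

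Finally I would observe that stage one is dominated by stage two (same exponent in $|\substrateTopology|$, lower polynomial in $|\rootRegionSubgraph[r]|$), so the overall runtime is the stage-two bound, completing the proof.

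The main obstacle I anticipate is the bookkeeping in collapsing the product of the two sums over $\rootSet$ into a single sum of the claimed form: one needs to be careful that cross terms $|\substrateTopology|^{\labelWidthEQ(\rootRegionSubgraph[r], \labelsetOrderSetRegion[r])} \cdot |\substrateTopology|^{\labelWidthEQ(\rootRegionSubgraph[r'], \labelsetOrderSetRegion[r'])}$ for $r \neq r'$ are correctly absorbed, and that the $|\rootRegionSubgraph[r]|^3$ factor is genuinely attributable to a single root region rather than hiding a factor of $|\reqTopology|$. The cleanest route is probably to argue per outer iteration: the iteration count is $\sum_r N_r$ with $N_r := |\substrateTopology|^{\labelWidthEQ(\rootRegionSubgraph[r], \labelsetOrderSetRegion[r])}\cdot|\rootRegionSubgraph[r]|$, and the cost of one iteration is $\sum_r C_r$ with $C_r := |\substrateTopology|^{\labelWidthEQ(\rootRegionSubgraph[r], \labelsetOrderSetRegion[r])}\cdot|\rootRegionSubgraph[r]|^2$, but note $(\sum_r N_r)(\sum_r C_r) \le (\sum_r N_r)(\sum_r C_r)$ can be rebounded by charging each iteration to the specific root region whose mapping was removed, so the total is $\sum_r N_r \cdot (\sum_{r'} C_{r'}) \le \sum_r N_r \cdot |\substrateEdges| \cdot C_{\max}$ — and since each iteration removes a mapping from one particular $\PotEmbeddings_r$, one can instead bound the total as $\sum_r (N_r \cdot \text{cost to rescan all regions})$, and since the rescan is itself $\sum_{r'} C_{r'}$ one still gets a product; the honest statement is that the coarse bound in the theorem tolerates this, so I would present the product-of-sums bound and then note it is at most the single-sum form claimed by pulling the maximum region size out where needed.
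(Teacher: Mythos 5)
Your two-stage structure matches the paper's proof: stage one is bounded per root region via Theorem~\ref{thm:adapted-alg-runtime} and summed over $\rootSet$, and the number of stage-two iterations is bounded by the total number of stored mappings, $\bigO\left(\sum_{r\in\rootSet}|\substrateTopology|^{\labelWidthEQ(\rootRegionSubgraph[r],\labelsetOrderSetRegion[r])}\cdot|\rootRegionSubgraph[r]|\right)$, exactly as in the paper. The gap is in your per-iteration cost for stage two. You implement the \textbf{choose} operation in Line~\ref{algline:mr-decomp:choose-mapping} as a linear scan of $\PotEmbeddings_{r}$, so a single outer iteration costs $\sum_{r'}\bigO(|\substrateTopology|^{\labelWidthEQ(\rootRegionSubgraph[r'],\labelsetOrderSetRegion[r'])}\cdot|\rootRegionSubgraph[r']|^2)$ and the total becomes a product of two sums over $\rootSet$. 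You correctly flag the resulting cross terms as the main obstacle, but the patches you sketch (charging each iteration to the region whose mapping is removed, pulling out a maximum) do not close it: a cross term $|\substrateTopology|^{\labelWidthEQ(\rootRegionSubgraph[r],\cdot)+\labelWidthEQ(\rootRegionSubgraph[r'],\cdot)}\cdot|\rootRegionSubgraph[r]|\cdot|\rootRegionSubgraph[r']|^2$ is \emph{not} in general absorbed by $\sum_{r}|\substrateTopology|^{2\labelWidthEQ(\rootRegionSubgraph[r],\cdot)+1}\cdot|\rootRegionSubgraph[r]|^3$. For instance, if the widths of $r$ and $r'$ differ by $k\geq 4$ and $|\rootRegionSubgraph[r']|$ is on the order of $|\substrateTopology|^{2k/3}\cdot|\rootRegionSubgraph[r]|$, the cross term exceeds both candidate absorbing terms by a factor of order $|\substrateTopology|^{k/3-1}$, which is unbounded. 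So the product-of-sums bound is genuinely weaker than the claimed bound, not merely messier to write down.

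The paper avoids this by a data-structure choice you are missing: each $\PotEmbeddings_r$ is stored in a hash table indexed by the mappings of the boundary nodes, so the choose operation costs $\bigO(|\rootRegionBoundaryIncoming[r]|)$ amortized rather than $\bigO(|\PotEmbeddings_r|\cdot|\rootRegionBoundaryIncoming[r]|)$. The per-iteration work then carries no factor of $|\substrateTopology|^{\labelWidthEQ(\cdot)}$ at all; the exponent of $|\substrateTopology|$ in each term of the final bound comes solely from the iteration count charged to that region, and the claimed single-sum form follows. If you add this observation (or any other device making the per-iteration work polynomial in the request size alone), the remainder of your argument goes through as written.
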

\begin{proof}
We again partition the execution of Algorithm~\ref{alg:decompositionAlgorithm-multiroot} in two stages. In the first stage, the decomposition of each root region is performed in Lines~\ref{algline:mr-decomp:copy-lpvars} to~\ref{algline:mr-decomp:reset-lpvars}. Due to Theorem~\ref{thm:adapted-alg-runtime}, the corresponding runtime of 
\begin{align} \label{eq:mr-decomp-runtime:stage-1}
\bigO\left(\sum_{r \in \rootSet} |\substrateTopology|^{2 \cdot \labelWidthEQ(\rootRegionSubgraph[r], \labelsetOrderSetRegion[r]) + 1} \cdot |\rootRegionSubgraph[r]|^2 \right)
\end{align}
follows directly by summation over the root set.

In the second stage, the mappings of root regions are combined to an overall mapping of the request in Lines~\ref{algline:mr-decomp:init-D-and-k} to~\ref{algline:mr-decomp:increment-k}. In each iteration, (at least) one mapping of one root region subgraph is discarded in Line~\ref{algline:mr-decomp:remove-zero-value-mapping}. The number of iterations is therefore bounded by the total number of mappings contained in the convex combination of each root region. The number of mappings for a single root region subgraph is in turn bounded by the number of LP variables associated with this subgraph. Due to Theorem~\ref{thm:adapted-alg-lp-size}, within a single root region subgraph $\rootRegionSubgraph$, this number is bounded by $\bigO\left(|\substrateTopology|^{\labelWidthEQ(\reqExtractionOrder, \labelsetOrderSet)} \cdot |\reqTopology| \right)$. By summation over the root set, we can bound the total number of mappings in all convex combinations by $\bigO\left( \sum_{r \in \rootSet}|\substrateTopology|^{\labelWidthEQ(\rootRegionSubgraph[r], \labelsetOrderSetRegion[r])} \cdot |\rootRegionSubgraph[r]| \right)$.

Within each iteration of the loop starting in Line~\ref{algline:mr-decomp:while-x-start}, the algorithm performs two traversals of the root region extraction order: First, a mapping is selected for each root region in Lines~\ref{algline:mr-decomp:while-q-start} to~\ref{algline:mr-decomp:add-neighbor-root-to-Q}. In the second traversal of the root region, all values of chosen root region mappings are reduced by $\prob$. For each root region, the choose-operation in Line~\ref{algline:mr-decomp:choose-mapping} can be performed in $\bigO(|\rootRegionBoundary[r]|)$ amortized time if the mappings in $\PotEmbeddings_r$ are stored in a hash table indexed by the mappings of the boundary nodes. The extension of the mapping in Line~\ref{algline:mr-decomp:extend-mapping} takes $\bigO(|\rootRegionSubgraph[r]|)$ time.

Therefore, the second stage of the algorithm has runtime 
\begin{align}\label{eq:mr-decomp-runtime:stage-2}
\bigO\left( \sum_{r \in \rootSet}|\substrateTopology|^{\labelWidthEQ(\rootRegionSubgraph[r], \labelsetOrderSetRegion[r])} \cdot |\rootRegionBoundaryIncoming[r]| \cdot |\rootRegionSubgraph[r]|^2 \right).
\end{align}
Since $|\rootRegionBoundaryIncoming[r]| < |\rootRegionSubgraph[r]|$, and since $|\substrateTopology|^{\labelWidthEQ(\rootRegionSubgraph[r], \labelsetOrderSetRegion[r])}  <  |\substrateTopology|^{2 \cdot \labelWidthEQ(\rootRegionSubgraph[r], \labelsetOrderSetRegion[r]) + 1}$, we can combine (\ref{eq:mr-decomp-runtime:stage-1}) and (\ref{eq:mr-decomp-runtime:stage-2}) to obtain the overall bound.
\end{proof}

We have now shown a bound on the LP size and algorithm runtime for the multi-root algorithm. Note that this is not necessarily a tight bound. The important result is rather that the algorithm is \emph{fixed-parameter tractable} with respect to the extraction label width. We will now discuss the impact of the additional labels introduced by the multi-root confluence edge label assignment. We first show an upper bound for the maximal increase of the extraction label width from using the multi-root confluence edge label assignment.
\begin{lemma} (Increase in Extraction Label Width due to Multi-Root Confluence Edge Labels) \label{lemma:increase-ELW-MR-confluence-edge-labels}\\
Let $\reqDAGOrientationDef$ be a generalized extraction order with root set $\rootSet$. Given a local root node $r \in \rootSet$ with root region subgraph $\rootRegionSubgraphDef[r]$, let $\labelsetsEdges$ be the regular confluence edge label assignment for $\rootRegionSubgraph[r]$ (see Definition~\ref{def:confluence-edge-labels}), and let $\labelsetsEdgesTilde$ be the multi-root confluence label assignment (see Definition~\ref{def:multiroot:extended-labels}) restricted to the root region. Then, it holds that
\begin{align}
\labelWidthEQ(\rootRegionSubgraph[r], \labelsetOrderSetTilde) \leq \labelWidthEQ(\rootRegionSubgraph[r], \labelsetOrderSet) + |\rootRegionBoundary[r]| - 1,
\end{align}
where $\labelsetOrderSet$ and $\labelsetOrderSetTilde$ are the minimal-width extraction label set orderings for $\labelsetsEdges$ and $\labelsetsEdgesTilde$, respectively.
\end{lemma}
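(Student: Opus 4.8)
The plan is to show that the multi-root confluence edge label assignment, restricted to a single root region $\rootRegionSubgraph[r]$, differs from the ordinary confluence edge label assignment only by the addition of at most $|\rootRegionBoundary[r]| - 1$ extra label nodes, and that these extra labels can be absorbed into an existing extraction label set ordering without increasing the width by more than the number of such labels. First I would recall the construction of $\labelsetsEdgesTilde$ from Definition~\ref{def:multiroot:extended-labels}: the extended root region $\rootRegionSubgraphExtended[r]$ is obtained by adding a chain of virtual edges through the boundary nodes $\rootRegionBoundary[r] = \{b_1, \ldots, b_m\}$ (in some fixed ordering), and then the confluence edge label assignment is computed on $\rootRegionSubgraphExtended[r]$. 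Each of the $m-1$ virtual edges $(b_{i}, b_{i+1})$ induces a confluence ending in $b_{i+1}$, since $b_{i+1}$ is reachable from $r$ both directly and via $b_i$. Hence the only \emph{new} label nodes introduced relative to $\labelsetsEdges$ are among $\{b_2, \ldots, b_m\}$, i.e. at most $|\rootRegionBoundary[r]| - 1$ many. (The node $b_1$, being first in the chain, acquires no incoming virtual edge and thus no new confluence ending in it within this construction.)

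Next I would take a minimal-width extraction label set ordering $\labelsetOrderSet = \labelsetOrderRegion[r]$ for the ordinary confluence labels $\labelsetsEdges$, so that $\labelWidthEQ(\rootRegionSubgraph[r], \labelsetOrderSet) = 1 + \max_{L \in \labelsetOrder_i} |L|$ over all $i \in \rootRegionNodes[r]$. The key structural point is that the new labels $b_2, \ldots, b_m$ only ever get propagated, for each, to the edges lying on the induced confluence, hence along the path in $\rootRegionSubgraph[r]$ from the relevant common ancestor down to $b_i$ — and in particular they are appended to edge label sets that already exist. For each node $i$ and each representative label set $\labelsetRepresentative[e] \in \labelsetOrder_i$, I would form $\labelsetRepresentative[e]^+ := \labelsetRepresentative[e] \cup (\labelsetEdgeTilde[e] \setminus \labelsetEdge[e])$, i.e. add to each representative exactly the new boundary labels acquired by edges it represents. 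Since $\labelsetEdgeTilde[e] \setminus \labelsetEdge[e] \subseteq \{b_2, \ldots, b_m\}$, each representative grows by at most $|\rootRegionBoundary[r]| - 1$ nodes, so every label set in the modified ordering has size at most $\max_{L} |L| + |\rootRegionBoundary[r]| - 1$, giving the claimed bound $\labelWidthEQ(\rootRegionSubgraph[r], \labelsetOrderSetTilde) \leq \labelWidthEQ(\rootRegionSubgraph[r], \labelsetOrderSet) + |\rootRegionBoundary[r]| - 1$ once I verify the modified ordering is a valid extraction label set ordering.

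To verify validity I would check the three properties of Definition~\ref{def:extractionLabelSetOrdering} for the augmented ordering. Property~\ref{def:extractionLabelSetOrdering:item:representative-exists} (each out-edge has a representative superset) is immediate by construction, since $\labelsetEdgeTilde[e] \subseteq \labelsetRepresentative[e]^+$ whenever $\labelsetEdge[e] \subseteq \labelsetRepresentative[e]$. Property~\ref{def:extractionLabelSetOrdering:item:first-nonlocal} requires the first element to be the incoming label set $\labelsetIncoming[i]$ under $\labelsetsEdgesTilde$; here I would need to argue that augmenting the first element of $\labelsetOrder_i$ by the new boundary labels that appear on $i$'s in-edges yields exactly $\labelsetIncoming[i]$ under $\labelsetsEdgesTilde$ — this follows from Corollary~\ref{corollary:decomp-labels:in-edges-label-nodes} and the fact (established in Lemma~\ref{corollary:ext-edge-labels-are-mr-decomposable}) that $\labelsetsEdgesTilde$ restricted to the root region is itself a decomposable edge label assignment, so Lemma~\ref{lemma:incomingLabelsUnique} holds for it. The main obstacle — and the step deserving the most care — is Property~\ref{def:extractionLabelSetOrdering:item:running-intersection-ordering}: I must show the augmented sequence still satisfies the running intersection property. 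The clean way is to appeal to the label-set-graph / treewidth correspondence of Lemma~\ref{lemma:extraction-LSO-is-tree-decomp}: adding the new boundary labels enlarges cliques in the label set graph only along paths, so a tree decomposition of the original label set graph can be widened by inserting each $b_i$ into the (connected, by the tree-decomposition property) set of bags corresponding to edges carrying $b_i$, raising each bag size by at most $|\rootRegionBoundary[r]| - 1$; converting back via Lemma~\ref{lemma:extraction-LSO-is-tree-decomp} yields the desired ordering. I would present the argument through that correspondence rather than manipulating the running intersection ordering directly, since the monotonicity of the bound is then transparent.
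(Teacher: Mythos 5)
Your first half coincides with the paper's: you count the newly introduced label nodes, observe that the first boundary node in the chain acquires no new confluence ending in it, and conclude that at most $|\rootRegionBoundary[r]|-1$ new labels $L^+ \subseteq \{b_2, \ldots, b_m\}$ appear. Where you diverge is in how these labels are absorbed into an extraction label set ordering. The paper takes the crudest possible route: it adds the \emph{entire} set $L^+$ to \emph{every} label set of the minimal ordering, i.e. $\labelsetOrderTilde_i := (\labelsetIndexed{1} \cup L^+, \ldots, \labelsetIndexed{n} \cup L^+)$. Because a common set is added to all members, the running intersection property is preserved for free, each out-edge still has a representative superset, and every label set grows by exactly $|L^+| \leq |\rootRegionBoundary[r]|-1$, which yields the bound immediately. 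Your construction is more economical --- you add to each representative only the new labels actually acquired by the edges it represents --- but for this lemma the economy buys nothing, since the bound being proved is the crude one anyway.

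The gap is in your verification of Property~\ref{def:extractionLabelSetOrdering:item:running-intersection-ordering} of Definition~\ref{def:extractionLabelSetOrdering} for the augmented ordering. You assert that the bags of the tree decomposition ``corresponding to edges carrying $b_i$'' form a connected subtree ``by the tree-decomposition property''. That property guarantees connectivity of the bags containing a fixed node of the \emph{original} label set graph; $b_i$ is a new node, so no such guarantee exists, and the representatives of the out-edges newly labelled with $b_i$ may sit in disconnected parts of the tree. Inserting $b_i$ only into those bags can therefore destroy the tree-decomposition (equivalently, the running intersection) property. The argument can be repaired by also inserting $b_i$ into every bag on the connecting paths --- each bag still gains at most $|L^+|$ elements in total, so the width bound survives --- but at that point you have essentially re-derived the paper's uniform augmentation. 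Either state and prove the connectivity claim, or replace the per-representative augmentation by the uniform one.
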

\begin{proof}
Recall that the multi-root confluence edge label assignment $\labelsetsEdgesTilde$ is defined using the extended root region (see Definition~\ref{def:extended-extraction-order}).  In particular, for each adjacent root region $r'$, the pairwise boundary nodes $|\rootRegionBoundaryPair[r][r']|$ are connected with edges according to some ordering $\rootRegionBoundaryPairOrdering[r][r']$.  

We now consider the \emph{maximal} number of confluence end nodes that can be \emph{newly} introduced in the extended root region. This number is maximized, if none of the boundary nodes are confluence end nodes in the original root region subgraph $\rootRegionSubgraph[r]$. We therefore assume that the boundary nodes are not confluence end nodes in the original root region subgraph. 

Let $b_1 \in \rootRegionBoundaryPair[r][r']$ be the first node according to the ordering $\rootRegionBoundaryPairOrdering[r][r']$. If $b_1$ was not originally a confluence end node, it follows that $b_1$ is not a confluence end node in the extended root region subgraph, since at most a single outgoing edge $(b_1, b_2)$ is added to $b_1$. Note that if $b_1$ is a confluence end node in the original root region subgraph, the label $b_1$ is not added to any other edges due to Property~\ref{def:decomposability-extended-edge-labels:labels-end-in-label-node} of Definition~\ref{def:decomposable-edge-labels} stating that $k \not\in \labelsetEdge[e]$ for any $e \in \outEdgesExtractionOrder{e}$.

Next, consider any $b_i$ with $i \in \{2, ..., |\rootRegionBoundaryPair[r][r']|\}$, such that $b_i$ occurs in the $i$-th position in $\rootRegionBoundaryPairOrdering[r][r']$. By definition of the root region boundary, every boundary node is reachable from the root node. Due to the added edge $(b_{i-1}, b_i)$ in the extended root region, a second path via $b_{i-1}$ exists. Therefore, any boundary node other than $b_1$ becomes a confluence end node in the extended root region. 

Therefore, the total number of label nodes in the multi-root confluence edge label assignment may increase by at most $|\rootRegionBoundaryPair[r][r']|-1$ due to the pairwise boundary with the root region associated with $r'$. By adding the contributions of all adjacent root regions, we find that the multi-root confluence edge label assignment may  contain at most $|\rootRegionBoundary[r]| - 1$ more label nodes than the regular confluence edge label assignment. Denote the set of newly introduced label nodes as $L^+$.

Now, we consider the worst case, in which each of these additional label nodes is added to the edge label assignment for every edge $e \in \rootRegion[r]$. Consider the extraction label set ordering $\labelsetOrderSet$ for the confluence edge label assignment. We can obtain an extraction label set ordering $\labelsetOrderSetTilde$ for the multi-root confluence edge label assignment by extending every label set in $\labelsetOrderSet$ with $L^+$. That is, for every node $i \in \rootRegionNodes[r]$ with the label set ordering $\labelsetOrder_i = (\labelsetIndexed{1}, ..., \labelsetIndexed{n})$, we set $\labelsetOrderTilde_i := (\labelsetIndexed{1} \cup L^+, ..., \labelsetIndexed{n} \cup L^+)$. 

Then, $\labelsetOrderSetTilde$ is an extraction label set ordering for the multi-root confluence edge label assignment with width 
\begin{align}
\labelWidthEQ(\rootRegionSubgraph[r], \labelsetOrderSetTilde) = \labelWidthEQ(\rootRegionSubgraph[r], \labelsetOrderSet) + |L^+| \leq \labelWidthEQ(\rootRegionSubgraph[r], \labelsetOrderSet) + |\rootRegionBoundary[r]| - 1,
\end{align}
concluding the proof.
\end{proof}
This result shows that the multi-root approach is particularly useful when the generalized extraction order can be partitioned in such a way that the pairwise boundaries between root regions are small. In particular, if root regions are separable with a single node, the multi-root approach does not impact the extraction label width negatively.

We next compare the multi-root confluence edge label assignment to the confluence label assignment of the conceptually simpler induced extraction order approach presented in Section~\ref{sec:multiroot:induced-EO}. Specifically, we show that the approach of the multi-root algorithm results in a smaller edge label assignment.
\begin{lemma} (Comparison of Multi-Root Confluence Labels to Induced Extraction Order) \label{multiroot:comparison-MRCEL-inducedEO} \\
Let $\reqDAGOrientationDef$ be a generalized extraction order with root set $\rootSet$. Further, let $\labelsetsEdges$ be a multi-root confluence edge label assignment (see Definition~\ref{def:multiroot:extended-labels}) and let $\labelsetsEdgesTilde$ be the confluence edge label assignment of the induced extraction order of $\reqDAGOrientation$. 

Given a local root node $r \in \rootSet$ with root region subgraph $\rootRegionSubgraphDef[r]$, for all edges $e \in \rootRegion[r]$ in the root region, $\labelsetEdge[e] \subseteq \labelsetEdgeTilde[e]$ holds. 
\end{lemma}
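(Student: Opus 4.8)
The plan is to compare the two edge label assignments edge-by-edge, by tracking which label nodes each edge receives in the two constructions. Both constructions are based on the confluence edge label assignment, but applied to different augmentations of the root region: the multi-root confluence labels come from applying Definition~\ref{def:confluence-edge-labels} to the \emph{extended} root region $\rootRegionSubgraphExtended[r]$ (which adds a chain of virtual edges $(b_1, b_2), \ldots, (b_{n-1}, b_n)$ among the boundary nodes, for each adjacent root region), whereas the induced extraction order adds the super-root $\superroot$ with virtual edges $(\superroot, r)$ for each local root $r \in \rootSet$. The claim is that the former labeling is pointwise contained in the latter, restricted to the root region's own edges $\rootRegion[r]$.

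First I would fix an edge $e \in \rootRegion[r]$ and a label node $k \in \labelsetEdge[e]$ in the multi-root confluence edge label assignment, and argue $k \in \labelsetEdgeTilde[e]$. By definition of the confluence edge label assignment on $\rootRegionSubgraphExtended[r]$, $e$ lies on some confluence $\confluence[i][k] \subseteq \reqEdges^\extractionOrderCharacter(\rootRegionSubgraphExtended[r])$ ending in $k$ and starting at some common ancestor $i$ of $k$'s parents. There are two cases according to whether $k$ is an original confluence end node in $\rootRegionSubgraph[r]$ or only becomes one through the virtual chain edges. In the first case, the confluence lies entirely within $\rootRegionSubgraph[r]$ and hence is also present in the induced extraction order (which contains $\rootRegionSubgraph[r]$ as a subgraph with the same orientation), so $k \in \labelsetEdgeTilde[e]$ directly. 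In the second case, $k$ is a boundary node; since $k \in \rootRegionBoundary[r]$, it is reachable from $r$ (within $\rootRegionSubgraph[r]$) and reachable from at least one other local root $r'$. In the induced extraction order, both $r$ and $r'$ are reachable from $\superroot$, so $k$ is a confluence end node there as well, with a confluence through $\superroot$; I would then argue that the particular path from $i$ to $k$ inside $\rootRegion[r]$ carrying $e$ is part of (one branch of) a confluence ending in $k$ in the induced extraction order — here I can invoke Lemma~\ref{lemma:decomplabels:all-paths-are-labeled} (label path continuity) together with Lemma~\ref{corollary:decomp-labels:in-edges-label-nodes} to conclude that every edge on a path to $k$ from a node already labeled with $k$ is labeled with $k$, so $k \in \labelsetEdgeTilde[e]$.

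The main obstacle I expect is the second case: carefully matching up the confluence in the extended root region that makes $k$ a label of $e$ with a corresponding confluence in the induced extraction order. In the extended root region, the ``second branch'' reaching $k$ may go through virtual chain edges $(b_{j}, b_{j+1})$ that do not exist in the induced order, whereas in the induced order the second branch goes through $\superroot$ and a \emph{different} local root. I would handle this by observing that both constructions make $k$ a confluence \emph{end} node, and then appeal to the fact that in \emph{any} decomposable edge label assignment the set of edges labeled with $k$ is exactly the label-induced subgraph $\reqEOLabelSubgraph[k]$, which (by Property~\ref{def:decomposability-extended-edge-labels:rooted-labelinducedgraph} and Lemma~\ref{lemma:decomplabels:all-paths-are-labeled}) contains every edge on every path from its root to $k$. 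Since $e$ lies on such a path within $\rootRegion[r]$, and since the induced extraction order's confluence label assignment is decomposable (Lemma~\ref{lemma:confluence-labels-are-decomp}) with $k$ as a confluence end node whose label-induced subgraph includes all of $\rootRegionSubgraph[r]$'s edges leading to $k$, the containment follows. A secondary technical point to check is that the orientation of $\rootRegionSubgraph[r]$ is inherited unchanged in both the extended root region and the induced extraction order, so that ``path to $k$ within $\rootRegion[r]$'' means the same thing in both settings; this is immediate from Definitions~\ref{def:extended-extraction-order} and~\ref{def:multiroot:induced-extraction-order}, which only \emph{add} edges.
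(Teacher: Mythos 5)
Your overall strategy---fix a label $k \in \labelsetEdge[e]$ and exhibit a witnessing confluence for $k$ in the induced extraction order---is genuinely different from the paper's proof, which argues globally: it observes that both assignments extend the plain confluence labels of $\rootRegionSubgraph[r]$ only by boundary nodes, that $\labelsetEdgeTilde[e]$ contains \emph{every} boundary node reachable from the head of $e$, and that $\labelsetEdge[e]$ contains only some of these. Your Case 1 is fine, though the case split should be on whether the \emph{witnessing} confluence lies entirely inside $\rootRegionSubgraph[r]$, not on whether $k$ happens to be an original confluence end node (a node can be an original end node while the confluence that puts it into $\labelsetEdge[e]$ uses virtual edges).

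The genuine gap is in your Case 2; you flagged it as the main obstacle, but your resolution does not close it. When $k$'s membership in $\labelsetEdge[e]$ is caused by a confluence in the \emph{extended} root region, the branch containing $e$ may itself traverse virtual chain edges, and then $k$ need not be reachable from the head of $e$ using only edges of $\rootRegion[r]$. Concretely, take $\rootRegion[r] = \{(r,a),(a,j),(j,b_1),(a,c),(c,b_2)\}$ with boundary nodes $b_1,b_2$ and chain edge $(b_1,b_2)$: the edge $(a,j)$ receives the label $b_2$ in the extended root region via the confluence with branches $a \to j \to b_1 \to b_2$ and $a \to c \to b_2$, yet in the induced extraction order $b_2$ is unreachable from $j$, so $(a,j)$ lies on no confluence ending in $b_2$ there. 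Your sentence ``since $e$ lies on such a path within $\rootRegion[r]$'' is therefore precisely the unproven assertion, and Lemma~\ref{lemma:decomplabels:all-paths-are-labeled} cannot supply it: that lemma presupposes both that the endpoint of $e$ already belongs to $k$'s label-induced subgraph \emph{under $\labelsetsEdgesTilde$} and that a path from it to $k$ exists, and in configurations like the one above neither holds. (The paper's own proof silently makes the same reachability assumption---that every boundary label of $e$ under the multi-root assignment is reachable from the head of $e$ in the original orientation---so the step you leave open is exactly where the published argument is also thinnest; but as it stands your proof does not establish the containment.)
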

\begin{proof}
First, note that both $\labelsetsEdges$ and $\labelsetsEdgesTilde$ extend the regular confluence edge label assignment for the root region subgraph $\rootRegionSubgraph[r]$ only by adding boundary nodes to the edge label assignment.

Consider some edge $e =(i,j) \in \rootRegion[r]$. In the confluence edge label assignment of the induced extraction order, every boundary node that can be reached from $j$ occurs in the label set $\labelsetEdgeTilde$: Since each boundary node is also reachable from the super-root node $\superroot$ through a node-disjoint path traversing another root region, $e$ lies on a confluence ending in the boundary node.

On the other hand, in the edge label assignment $\labelsetsEdges$, a boundary node $b\in \rootRegionBoundary$ that is reachable from $j$ does not occur as a label in $\labelsetEdge$ if either of the following hold: Firstly, if the boundary node is the first node in the ordering $\rootRegionBoundaryPairOrdering[r][r']$ used to derive the extended extraction order for some $r' \in \rootSet \setminus \{r\}$, it may not be a confluence end node in the extended root region. Secondly, if $j$ is separable from the boundary node, i.e. if a single node $k \in \rootRegionNodes[r]$ exists such that all paths from $j$ to $b$ pass through $k$, then $b\not\in \labelsetEdge$ holds. 

Since $\labelsetEdgeTilde[e]$ contains all boundary nodes reachable from $j$ and $\labelsetEdge[e]$ only contains some of them, it follows that $\labelsetEdge[e] \subseteq \labelsetEdgeTilde[e]$ holds.
\end{proof}
As a consequence, we conclude that the extraction label width using the multi-root confluence edge label assignment is bounded above by the extraction label width of the induced extraction order.
In addition to this result, the induced extraction order adds virtual edges between the super-root node and each local root node $r$, each of which is assigned $|\rootRegionBoundary[r]|$ many labels.

From the proof of Lemma~\ref{multiroot:comparison-MRCEL-inducedEO}, we can also conclude that the use of the multi-root algorithm is particularly useful if the boundary set of the root region is separable from the rest of the root region by a single node. In this case, the additional labels due to the root region boundary nodes are only propagated to the separating node and the remaining root region remains unaffected.

\begin{remark} (Ordering of Boundary Nodes) \\
One important aspect is not considered in the results presented in this section. The multi-root confluence edge label assignment assumes a choice of the ordering of the root region boundary nodes $\rootRegionBoundaryPairOrdering$ in the construction of the extended root region (see Definition~\ref{def:extended-extraction-order}). This ordering impacts the edge label assignment in non-trivial ways. A detailed study of this problem of selecting the boundary node ordering exceeds the scope of this thesis.
\end{remark}

\cleardoublepage

\section{Conclusion} \label{sec:conclusion}

In this section, we first summarize the important results of the thesis, and subsequently discuss possible avenues for future work.

\subsection{Summary of Results}
In this thesis, two extensions to the VNEP approximation algorithm by Rost and Schmid are presented. 
The first extension in Section~\ref{sec:hierarchical-bags} using extraction label set orderings significantly improves upon the base algorithm in terms of size of the underlying linear program and accordingly can drastically reduce the runtime of the algorithm. We show that the resulting algorithm is fixed-parameter tractable with respect to the new extraction label width parameter (cf. Theorems~\ref{thm:adapted-alg-lp-size} and~\ref{thm:adapted-alg-runtime}). Since the extraction label width is bounded by the extraction width which parameterizes the complexity of the base algorithm, the extended algorithm may significantly improve runtimes compared to the base algorithm.

Several results derived by Rost and Schmid for the extraction width are generalized to the new extraction label width parameter. In particular, we consider the class of half-wheel graphs, whose extraction width grows linearly when a suboptimal root placement is imposed, and show that the extraction label width remains a constant value (cf. Lemma~\ref{lemma:label-width-of-half-wheel-graphs}). Another result by Rost and Schmid stating that the  adding a path parallel to an existing edge increases the extraction width by at most the maximal degree of the request topology is generalized. We show that the extraction label width increases by at most one (cf. Theorem~\ref{thm:extraction-width-adding-parallel-paths}). 

The complexity of deriving optimal extraction label set orderings is investigated, and it is shown that finding optimal extraction label set orderings for a specific extraction order is a $\complexityNP$-hard problem (cf. Theorem~\ref{thm:hardness-extraction-label-ordering}).

In the second extension of the algorithm in Section~\ref{sec:multiroot}, the notion of a rooted extraction order is generalized to allow for more general extraction orders containing multiple root regions. The main result is the multi-root decomposition algorithm described in Section~\ref{sec:multiroot:decomposition-alg}, which is also fixed-parameter tractable with respect to the extraction label width (cf. Theorems~\ref{thm:multiroot:LP-size} and~\ref{thm:multiroot:decomp-runtime}). Lastly, the impact of placing additional root nodes on the extraction label width parameter is considered, with the result that each additional root region increases the extraction label width at most by the number of nodes in the root region boundary.

\subsection{Future Work}

A more detailed investigation of the connection between the extraction label width and the request topology, rather than a specific extraction order, might provide further insight into the hardness of computing optimal extraction label set orderings. In particular, the question of whether some connection between the extraction label width and the treewidth of the request topology exists is of high interest. 

The multi-root algorithm requires a specific high-level structure of the extraction order, namely a tree-like root region extraction order. Although a preprocessing procedure is presented (cf. Remark~\ref{remark:multiroot:generalizing-non-treelike}) which ensures that arbitrary multi-root extraction orders have this property, this procedure relies on the induced extraction order, which is shown to result in larger width (cf. Lemma~\ref{multiroot:comparison-MRCEL-inducedEO}). An algorithm that directly supports extraction orders with non-treelike root region extraction orders is therefore of interest. Additionally, the multi-root algorithm presented in this work places restrictions on the existence of edges between the root regions's boundary node, forbidding such edges in configurations which can not be extended to a chain connecting all such boundary nodes. A multi-root algorithm without this restriction could operate on arbitrary acyclic orientations of the request topology.

%Another possible extension might be the use of a bottom-up decomposition procedure. While the presented multi-root algorithm performs the decomposition of each root region in a top-down manner

\cleardoublepage
\bibliographystyle{plain}
\bibliography{references}

\begin{thebibliography}{10}

\bibitem{arnborg1985efficient}
Stefan Arnborg.
\newblock Efficient algorithms for combinatorial problems on graphs with
  bounded decomposability—a survey.
\newblock {\em BIT Numerical Mathematics}, 25(1):1--23, 1985.

\bibitem{arnborg1987complexity}
Stefan Arnborg, Derek~G Corneil, and Andrzej Proskurowski.
\newblock Complexity of finding embeddings in ak-tree.
\newblock {\em SIAM Journal on Algebraic Discrete Methods}, 8(2):277--284,
  1987.

\bibitem{beeri1983desirability}
Catriel Beeri, Ronald Fagin, David Maier, and Mihalis Yannakakis.
\newblock On the desirability of acyclic database schemes.
\newblock {\em Journal of the ACM (JACM)}, 30(3):479--513, 1983.

\bibitem{bodlaender1986classes}
Hans~L. Bodlaender.
\newblock {\em Classes of graphs with bounded tree-width}, volume~86.
\newblock Unknown Publisher, 1986.

\bibitem{bodlaender1996linear}
Hans~L Bodlaender.
\newblock A linear-time algorithm for finding tree-decompositions of small
  treewidth.
\newblock {\em SIAM Journal on computing}, 25(6):1305--1317, 1996.

\bibitem{bodlaender1996arboretum}
Hans~L Bodlaender.
\newblock {\em A partial k-arboretum of graphs with bounded treewidth}, volume
  1996.
\newblock Utrecht University: Information and Computing Sciences, 1996.

\bibitem{bodlaender1995approximating}
Hans~L Bodlaender, John~R Gilbert, Hj{\'a}lmtyr Hafsteinsson, and Ton Kloks.
\newblock Approximating treewidth, pathwidth, frontsize, and shortest
  elimination tree.
\newblock {\em Journal of Algorithms}, 18(2):238--255, 1995.

\bibitem{bodlaender1993pathwidth}
Hans~L Bodlaender and Rolf~H M{\"o}hring.
\newblock The pathwidth and treewidth of cographs.
\newblock {\em SIAM Journal on Discrete Mathematics}, 6(2):181--188, 1993.

\bibitem{chowdhury2009virtual}
NM~Mosharaf~Kabir Chowdhury, Muntasir~Raihan Rahman, and Raouf Boutaba.
\newblock Virtual network embedding with coordinated node and link mapping.
\newblock In {\em INFOCOM 2009, IEEE}, pages 783--791. IEEE, 2009.

\bibitem{courcelle1990monadic}
Bruno Courcelle.
\newblock The monadic second-order logic of graphs. i. recognizable sets of
  finite graphs.
\newblock {\em Information and computation}, 85(1):12--75, 1990.

\bibitem{cowell2006probabilistic}
Robert~G Cowell, Philip Dawid, Steffen~L Lauritzen, and David~J Spiegelhalter.
\newblock {\em Probabilistic networks and expert systems: Exact computational
  methods for Bayesian networks}.
\newblock Springer Science \& Business Media, 2006.

\bibitem{dantzig1951maximization}
George~B Dantzig.
\newblock Maximization of a linear function of variables subject to linear
  inequalities.
\newblock {\em New York}, 1951.

\bibitem{fagin1983degrees}
Ronald Fagin.
\newblock Degrees of acyclicity for hypergraphs and relational database
  schemes.
\newblock {\em Journal of the ACM (JACM)}, 30(3):514--550, 1983.

\bibitem{fischer2013virtual}
Andreas Fischer, Juan~Felipe Botero, Michael~Till Beck, Hermann De~Meer, and
  Xavier Hesselbach.
\newblock Virtual network embedding: A survey.
\newblock {\em IEEE Communications Surveys \& Tutorials}, 15(4):1888--1906,
  2013.

\bibitem{gill1986projected}
Philip~E Gill, Walter Murray, Michael~A Saunders, John~A Tomlin, and Margaret~H
  Wright.
\newblock On projected newton barrier methods for linear programming and an
  equivalence to karmarkar’s projective method.
\newblock {\em Mathematical programming}, 36(2):183--209, 1986.

\bibitem{jensen1996introduction}
Finn~V Jensen.
\newblock {\em An introduction to Bayesian networks}, volume 210.
\newblock UCL press London, 1996.

\bibitem{karmarkar1984new}
Narendra Karmarkar.
\newblock A new polynomial-time algorithm for linear programming.
\newblock In {\em Proceedings of the sixteenth annual ACM symposium on Theory
  of computing}, pages 302--311. ACM, 1984.

\bibitem{khachiyan1980polynomial}
Leonid~G Khachiyan.
\newblock Polynomial algorithms in linear programming.
\newblock {\em USSR Computational Mathematics and Mathematical Physics},
  20(1):53--72, 1980.

\bibitem{lauritzen1988local}
Steffen~L Lauritzen and David~J Spiegelhalter.
\newblock Local computations with probabilities on graphical structures and
  their application to expert systems.
\newblock {\em Journal of the Royal Statistical Society. Series B
  (Methodological)}, pages 157--224, 1988.

\bibitem{maier1983theory}
David Maier.
\newblock {\em The theory of relational databases}, volume~11.
\newblock Computer science press Rockville, 1983.
\newblock Available at
  \url{http://web.cecs.pdx.edu/~maier/TheoryBook/TRD.html}.

\bibitem{matouvsek1991algorithms}
Ji{\v{r}}{\'\i} Matou{\v{s}}ek and Robin Thomas.
\newblock Algorithms finding tree-decompositions of graphs.
\newblock {\em Journal of Algorithms}, 12(1):1--22, 1991.

\bibitem{mehraghdam2014specifying}
Sevil Mehraghdam, Matthias Keller, and Holger Karl.
\newblock Specifying and placing chains of virtual network functions.
\newblock In {\em Cloud Networking (CloudNet), 2014 IEEE 3rd International
  Conference on}, pages 7--13. IEEE, 2014.

\bibitem{nielsen2009bayesian}
Thomas~Dyhre Nielsen and Finn~Verner Jensen.
\newblock {\em Bayesian networks and decision graphs}.
\newblock Springer Science \& Business Media, 2009.

\bibitem{raghavan1987randomized}
Prabhakar Raghavan and Clark~D Tompson.
\newblock Randomized rounding: a technique for provably good algorithms and
  algorithmic proofs.
\newblock {\em Combinatorica}, 7(4):365--374, 1987.

\bibitem{reed1992computing-treewidth}
Bruce~A Reed.
\newblock Finding approximate separators and computing tree width quickly.
\newblock In {\em Proceedings of the twenty-fourth annual ACM symposium on
  Theory of computing}, pages 221--228. ACM, 1992.

\bibitem{robertson1986treewidth}
Neil Robertson and Paul~D. Seymour.
\newblock Graph minors. ii. algorithmic aspects of tree-width.
\newblock {\em Journal of algorithms}, 7(3):309--322, 1986.

\bibitem{rost:charting-complexity-of-vne-2018}
Matthias Rost and Stefan Schmid.
\newblock {Charting the Complexity Landscape of Virtual Network Embeddings}.
\newblock In {\em Proceedings IFIP Networking}, 2018.
\newblock Extended technical report available at arXiv:1801.03162 [cs.NI].
  [Online]. Available: \url{http://arxiv.org/abs/1801.03162}.

\bibitem{rostSchmidFPTApproximations}
Matthias Rost and Stefan Schmid.
\newblock {(FPT-)Approximation Algorithms for the Virtual Network Embedding
  Problem}.
\newblock Technical Report arXiv:1803.04452 [cs.NI], March 2018.

\bibitem{rost:vne-approx-leveraging-rand-round-2018}
Matthias Rost and Stefan Schmid.
\newblock {Virtual Network Embedding Approximations: Leveraging Randomized
  Rounding}.
\newblock In {\em Proceedings IFIP Networking 2018}, 2018.
\newblock Extended technical report available at arXiv:1803.03622 [cs.NI].
  [Online]. Available: \url{http://arxiv.org/abs/1803.03622}.

\bibitem{rost2014:its_about_time}
Matthias Rost, Stefan Schmid, and Anja Feldmann.
\newblock It's about time: On optimal virtual network embeddings under temporal
  flexibilities.
\newblock In {\em Parallel and Distributed Processing Symposium, 2014 IEEE 28th
  International}, pages 17--26. IEEE, 2014.

\end{thebibliography}

\end{document}